    \newwrite\bibnotes
    \def\bibnotesext{Notes.bib}
\write\bibnotes{@CONTROL{REVTEX41Control}}
\write\bibnotes{@CONTROL{%
    apsrev41Control,author="08",editor="1",pages="1",title="0",year="0"}}
\write\@auxout{\string\citation{apsrev41Control}}%
\definecolor{mylinkcolor}{rgb}{0,0,0.8} 
\newtheorem{lemma}{Lemma}
\newtheorem{theorem}{Theorem}
\newtheorem{conjecture}{Conjecture}
\newtheorem{corollary}{Corollary}
\theoremstyle{definition}
\newtheorem{definition}{Definition}
\newtheorem{protocol}{Protocol}
\newcommand{\tr}{\mathrm{tr}}
\newcommand{\ket}[1]{| #1 \rangle}
\newcommand{\bra}[1]{\langle #1 |}
\newcommand{\ketbra}[2]{\ket{#1}\!\bra{#2}}
\newcommand{\proj}[1]{\ket{#1}\!\bra{#1}}
\newcommand{\id}{\openone}
\newcommand{\cD}{\mathcal{D}}
\newcommand{\cC}{\mathcal{C}}
\newcommand{\cE}{\mathcal{E}}
\newcommand{\cF}{\mathcal{F}}
\newcommand{\cH}{\mathcal{H}}
\newcommand{\cI}{\mathcal{I}}
\newcommand{\cM}{\mathcal{M}}
\newcommand{\cN}{\mathcal{N}}
\newcommand{\cP}{\mathcal{P}}
\newcommand{\cQ}{\mathcal{Q}}
\newcommand{\cS}{\mathcal{S}}
\newcommand{\cT}{\mathcal{T}}
\newcommand{\cR}{\mathcal{R}}
\newcommand{\ot}{\otimes}
\newcommand{\bin}{\mathrm{bin}}
\newcommand{\scorefunction}{S}
\newcommand{\score}{\omega}
\newcommand{\Ext}{\mathrm{Ext}}
\newcommand{\Var}{\mathrm{Var}}
\newcommand{\Min}{\mathrm{Min}}
\newcommand{\Max}{\mathrm{Max}}
\newcommand{\Freq}{\mathrm{Freq}}
\newcommand{\rate}{\mathrm{rate}}
\DeclareMathOperator{\e}{e}
\newcommand{\dd}{\mathrm{d}} 
\newcommand{\ii}{\mathrm{i}} 
\newcommand{\comment}[1]{}
\def\M{\ensuremath\mathcal}
\def\B{\ensuremath\mathbf}
\begin{document}
\title{Improved device-independent randomness expansion rates using two sided randomness}
\author{Rutvij Bhavsar}
\affiliation{Department of Mathematics, University of York, Heslington, York, YO10 5DD, United Kingdom}
\author{Sammy Ragy}
\affiliation{Department of Mathematics, University of York, Heslington, York, YO10 5DD, United Kingdom}
\author{Roger Colbeck}
\email{roger.colbeck@york.ac.uk}
\affiliation{Department of Mathematics, University of York, Heslington, York, YO10 5DD, United Kingdom}
\date{$5^{\text{th}}$ January 2026}

\begin{abstract}
  A device-independent randomness expansion protocol aims to take an initial random string and generate a longer one, where the security of the protocol does not rely on knowing the inner workings of the devices used to run it.  In order to do so, the protocol tests that the devices violate a Bell inequality and one then needs to bound the amount of extractable randomness in terms of the observed violation. The entropy accumulation theorem lower bounds the extractable randomness of a protocol with many rounds in terms of the single-round von Neumann entropy of any strategy achieving the observed score. Tight bounds on the von Neumann entropy are known for the one-sided randomness (i.e., where the randomness from only one party is used) when using the Clauser-Horne-Shimony-Holt (CHSH) game. Here we investigate the possible improvement that could be gained using the two-sided randomness. We generate upper bounds on this randomness by attempting to find the optimal eavesdropping strategy, providing analytic formulae in two cases. We additionally compute lower bounds that outperform previous ones and can be made arbitrarily tight (at the expense of more computation time). These bounds get close to our upper bounds, and hence we conjecture that our upper bounds are tight. We also consider a modified protocol in which the input randomness is recycled. This modified protocol shows the possibility of rate gains of several orders of magnitude based on recent experimental parameters, making device-independent randomness expansion significantly more practical. It also enables the locality loophole to be closed while expanding randomness in a way that typical spot-checking protocols do not.
\end{abstract}

\maketitle
\section{Introduction}
Random numbers have a wide variety of uses. In some applications, only the distribution of the random numbers matters, while in others it is important that the generated numbers are also private, for instance when used for cryptography.  According to our current understanding of physics, generating fundamentally random numbers requires quantum processes. While it is easy to come up with a quantum process that can in theory generate random numbers, given a candidate quantum random number generator, verifying that it is indeed generating random numbers and at what rate is a difficult task that usually requires detailed understanding of the physical device. Any mismatch between the model of the device used in the security proof and the real device could in principle be exploited by an adversary.

Device-independent protocols aim to circumvent the mismatch problem by designing the protocol to abort unless the devices used are performing sufficiently well, and without needing to know the internal mechanism by which they operate.  In the context of device-independent randomness expansion (DIRE) the main idea is that the ability to violate a Bell inequality implies that the devices doing so must be generating randomness~\cite{ColbeckThesis,CK2}. Thus, in a sense, the protocol self-tests~\cite{MayersYao} the devices during its operation, leading to enhanced security.  Although challenging to accomplish, recently the first experimental demonstrations of DIRE were performed~\cite{LZL&,Shalm_rand,LLR&}, following earlier experiments considering randomness generation~\cite{PAMBMMOHLMM,BKGZM&,LZL&gen}. On the theoretical side, an increasingly sophisticated series of proofs~\cite{VV,MS1,MS2} led to the current state of the art~\cite{ZFK,ARV}. In this work we consider the proofs based on the entropy accumulation theorem~\cite{DFR,DF}, which establishes a lower bound on the amount of randomness after many Bell tests in terms of the von Neumann entropy of a single round achieving the observed score (see Appendix~\ref{app:EAT} for a precise statement). Bounds on the von Neumann entropy will hence be a focus of this work.

In a DIRE protocol, randomly chosen inputs are made to two separated devices in such a way that each device cannot learn the input chosen by the other. We use $X$ and $Y$ to label the inputs and $A$ and $B$ to label the outputs, and consider an adversary with side information $E$. This side information could be quantum, i.e., the general strategy has an adversary hold the $E$ part of a state $\rho_{A'B'E}$ where the $A'$ and $B'$ systems are held by the devices. Each input $X$ to the first device corresponds to a measurement on $A'$ giving outcome $A$ and, analogously, for each input $Y$ to the other device there is a corresponding measurement on $B'$ with outcome $B$. There are two quantities of interest, both of which depend on the post-measurement state: the first is the score in some non-local game, which is a function of the conditional distribution $p_{AB|XY}$; the second is the von Neumann entropy of either one or both of the outputs.  More precisely, we want to express the minimum von Neumann entropy as a function of the score. In this work we study six such von Neumann entropies: $H(AB|X=0,Y=0,E)$, $H(AB|XYE)$, $H(AB|E)$, $H(A|X=0,Y=0,E)$, $H(A|XYE)$ and $H(A|E)$\footnote{In the one sided cases, conditioning on $Y$ is irrelevant, but we keep this for notational symmetry.}.

The difficulty in bounding these stems from the fact that a priori there is no upper bound on the dimension of the systems $A'$, $B'$ and $E$ that need to be considered. For instance, for some Bell inequalities it is known that the maximum quantum violation cannot be achieved if $A'$ and $B'$ are finite dimensional~\cite{Slofstra}, and there is evidence that this is true even in the case where $X$ and $Y$ are binary and where $A$ and $B$ have three possible values~\cite{PV}.  However, in the case where $A$, $B$, $X$ and $Y$ are all binary, Jordan's lemma~\cite{Jordan} implies that there is no loss in generality in considering a convex combination of strategies in which $A'$ and $B'$ are two-dimensional.  This observation was used to give a tight analytic lower bound on $H(A|X=0,Y=0,E)$ in terms of the CHSH score~\cite{PABGMS} and is also crucial for the present work. This analytic bound was also extended to a family of CHSH-like inequalities in~\cite{WAP}. As well as Jordan's lemma, such bounds rely on a series of additional simplifications. In cases where such simplifications do not hold, alternative ways to lower bound the single-round von Neumann entropy are needed.  One way is to bound it using the single-round min entropy, which can be optimized at a particular level of the semi-definite hierarchy~\cite{NPA2}. Although this method can generate numerically certified lower bounds for arbitrary protocols~\cite{BRC}, there is a significant loss in tightness when moving to the single-round min entropy. More recent methods give tighter computational bounds on the von Neumann entropy~\cite{SBVTRS,BFF,BFF2022}. Forming more direct and tighter bounds is a key open problem in the field of device-independence and useful for increasing the practicality of device-independent tasks.

We discuss when each of the six entropic quantities is of most interest in Section~\ref{sec:signif}. In summary, the one-sided entropies are most-relevant in the context of device independent quantum key distribution (DIQKD), while using both outputs is useful for DIRE. The quantities that are conditioned on $X=0$ and $Y=0$ are useful for spot-checking protocols in which particular fixed measurements (taken here to be the $X=0$ and $Y=0$ measurements) are used to generate randomness/key, and where the other measurements are used only rarely to check that the devices are behaving honestly. The quantities conditioned on $X$ and $Y$ are useful for analysing protocols in which the generation rounds involve multiple settings, where the input randomness is recycled, and can also be used to close the locality loophole in randomness generation. The remaining quantities, $H(AB|E)$ and $H(A|E)$, tell us something about the fundamental randomness based on the Bell violation, and the second of these can be useful for DIQKD protocols in which the measurement settings in the generation rounds are chosen randomly and to which an adversary, Eve, does not have access.

In Section~\ref{sec:rates} we discuss the computation of the von Neumann entropy bounds before giving numerically generated upper bound curves for each of the six quantities when using the CHSH game as the Bell test in Section~\ref{sec:num}. For $H(A|XYE)$ and $H(AB|XYE)$ we additionally give conjectured analytic forms for the curves. We also compute lower bounds in the cases of $H(A|XYE)$ and $H(AB|X=0,Y=0,E)$ that are tighter than those previously known and in principle converge to these quantities by increasing the accuracy of the computation. We use our bounds with the EAT to show how the potential improvement carries over to finite statistics. To do so, in Section~\ref{sec:protocols} we consider not only the usual spot-checking type protocol, but also modified protocols. The first modification is to replace the spot-check with a biased random number generator, and the second is to remove the biasing while recycling the input randomness in order to still enable expansion. Taking the experimental conditions from~\cite{LLR&}, using the two sided randomness and randomness recycling gives a rate increase of several orders of magnitude with our new bounds.

\section{The significance of various entropic quantities}\label{sec:signif}
In this section we discuss the significance of the six entropic quantities given above in the context of DIRE, noting that the one-sided quantities are also useful for DIQKD. To do so we first describe the general structure of the raw randomness generation part of a spot-checking and non-spot-checking DIRE protocol. A more complete description of the protocols is in Section~\ref{sec:protocols}.  

In a protocol without spot-checking, there are two untrusted devices, and in every round their inputs $X_i$ and $Y_i$ are generated according to some distribution $p_{XY}$. Often two independent random number generators are used for this, so that $p_{XY}=p_Xp_Y$.  The generated numbers are used as inputs to the devices, which return two outputs $A_i$ and $B_i$ respectively. This is repeated for $n$ rounds generating the raw randomness ${\bf A}$, ${\bf B}$, where the bold font denotes the concatenation of all the outputs.

In a spot-checking protocol there is an additional step in which each round is declared to be either a test round ($T_i=1$) or a generation round ($T_i=0$), where test rounds occur with probability $\gamma$, which is typically small. On test rounds $X_i$ and $Y_i$ are generated according to some distributions $p_{XY}$. On generation rounds, $X_i$ and $Y_i$ are set according to some other distribution -- in this work we use the deterministic distribution $X_i=Y_i=0$. These are used as inputs to the devices, which return two outputs $A_i$ and $B_i$ respectively. The rationale behind using a spot-checking protocol is that randomness is required to perform a Bell test and it is desirable to be able to run the protocol with a smaller requirement on the amount of input randomness required. Choosing whether to test or not requires roughly $H_\bin(\gamma)$ bits of randomness per round\footnote{Here $H_\bin$ denotes the binary entropy.}, so choosing $\gamma$ small enough leads to an overall saving.  Furthermore, protocols often discard the input randomness, in which case for many Bell tests spot-checking is necessary in order to achieve expansion. In the CHSH game, for instance, if $p_{XY}$ is chosen uniformly, each test round requires $2$ bits of randomness, but the amount of two-sided randomness output by the quantum strategy with the highest possible winning probability is only $1+H_\bin(\frac{1}{2}(1+\frac{1}{\sqrt{2}}))\approx1.60$ bits. However, as we discuss later, the input randomness need not be discarded.

In the case of small $\gamma$, almost every round is a generation round and hence an eavesdropper wishes to guess the outputs for the inputs $X=0$ and $Y=0$. The entropy $H(AB|X=0,Y=0,E)$ is thus the relevant quantity for spot-checking DIRE protocols.  The one-sided quantity $H(A|X=0,Y=0,E)$ has often been used instead because of the existing analytic bound for this~\cite{PABGMS,WAP}, but, because this ignores one of the outputs, it is wasteful as an estimate of the generated randomness. For DIQKD protocols, on the other hand, the one-sided entropy is the relevant quantity, since to make key Alice's and Bob's strings should match. We also remark that these quantities can be useful bounds for protocols without spot-checking if the distribution $p_{XY}$ is heavily biased towards $X=Y=0$.

The quantities $H(AB|XYE)$ and $H(A|XYE)$ are useful for protocols without spot checking. One might imagine, for example, using a source of public randomness, such as a randomness beacon to choose the inputs to the protocol, in which case $X$ and $Y$ become known to the adversary (but are not known before the devices are prepared).  In this case, rather than being interested in randomness expansion, the task is to turn public randomness into private randomness in a device-independent way. One can also use $H(AB|XYE)$ and $H(A|XYE)$ in protocols when the input randomness is recycled. In this case we are really interested in $H(ABXY|E)$, but, because $X$ and $Y$ are chosen independently of $E$, this can be expanded as $H(XY)+H(AB|XYE)$. Hence $H(AB|XYE)$ is the relevant quantity in this case as well. The one-sided quantity $H(A|XYE)$ could also be used for DIQKD without spot-checking.

In addition we consider the quantities $H(AB|E)$ and $H(A|E)$.  The second of these could be useful for QKD protocols in which the key generation rounds do not have a fixed input and where Alice and Bob do not publicly reveal their measurement choices during the protocol.  For instance, the sharing of these choices could be encrypted using an initial key, analogously to a suggested defence against memory attacks~\cite{bckone}\footnote{Note that such protocols would only be useful if more key is generated than is required, so the protocol we are thinking of here is really quantum key expansion. Furthermore, the results presented in Figure~\ref{fig:rates} show that the use of $H(A|E)$ only gives a minor advantage over $H(A|XYE)$.}. $H(AB|E)$ would be a useful quantity for randomness generation in a protocol without spot-checking and in which $X$ and $Y$ are kept private after running the protocol and not used in the overall output.  When such protocols are based on the CHSH inequality, they cannot allow expansion.  These quantities can also be thought of as quantifying the fundamental amount of randomness obtainable from a given Bell violation. Although we have computed the graphs for $H(AB|E)$ and $H(A|E)$, existing versions of the EAT cannot be directly applied to them --- see Appendix~\ref{app:EAT}.

\section{Rates for CHSH-based protocols}\label{sec:rates}
We calculate various one sided and two sided rates for protocols based on the CHSH game, which involves trying to violate the CHSH Bell inequality~\cite{CHSH}. In this game, each party makes a binary input and receives a binary output and the game is won if $A\oplus B=XY$. We define the CHSH score by
$$\frac{1}{4}\left(\sum_ap_{AB|00}(a,a)+\sum_ap_{AB|01}(a,a)+\sum_ap_{AB|10}(a,a)+\sum_ap_{AB|11}(a,a\oplus1)\right)\,,$$
which is the probability of winning the CHSH game when the inputs are chosen at random\footnote{Note that even if nonuniform distributions of inputs are used when running protocols, in this work the CHSH score is always defined as here.}.
Classical strategies can win this game with probability at most $3/4$, while quantum strategies can get as high as $\frac{1}{2}\left(1+\frac{1}{\sqrt{2}}\right)\approx0.85$. For a fixed CHSH game score, $\score$, we wish to compute the minimum von Neumann entropy over all strategies achieving that score. In this context a strategy comprises three Hilbert spaces $\cH_{A'}$, $\cH_{B'}$ and $\cH_E$, POVMs $\{M_{a|x}\}_a$ on $\cH_{A'}$ for both $x=0$ and $x=1$, POVMs $\{N_{b|y}\}_b$ on $\cH_{B'}$ for both $y=0$ and $y=1$, and a state $\rho_{A'B'E}$ on $\cH_{A'}\ot\cH_{B'}\ot\cH_E$.  Given a strategy, and a distribution $p_{XY}$ there is an associated channel $\cN$ that acts on $A'B'$ and takes the state to the post-measurement state, i.e.,
$$\tau_{ABXYE}=(\cN\ot\cI_E)(\rho_{A'B'E})=\sum_{abxy}p_{XY}(x,y)\proj{a}_A\ot\proj{b}_B\ot\proj{x}_X\ot\proj{y}_Y\ot\tr_{A'B'}\left((M_{a|x}\ot N_{b|y}\ot \id_E)\rho_{A'B'E}\right)\,,$$
where $\cI_E$ is the identity channel on $E$. The entropic quantities we consider all pertain to this state\footnote{In the cases where we condition on $X=0$ and $Y=0$, we can project this state onto $\proj{0}_X\ot\proj{0}_Y$ and renormalize --- see Appendix~\ref{app:simp} for more detail.}. Note also that the score is a function of $\tau$, which we denote $\scorefunction(\tau)$ -- in particular, $p_{AB|xy}(a,b)=\tr\left((M_{a|x}\ot N_{b|y}\ot \id_E)\rho_{A'B'E}\right)$.

For each of the six entropic quantities previously discussed we consider the infimum over all strategies that achieve a given score.  We use this to define a set of curves.  We write $F_{AB|XYE}(\score,p_{XY})=\inf H(AB|XYE)_{\tau}$, where the infimum is over all strategies for which $\scorefunction(\tau)=\score$. In the same way we define $F_{AB|E}(\score,p_{XY})$, $F_{A|XYE}(\score,p_{XY})$ and $F_{A|E}(\score,p_{XY})$, replacing the objective function by the corresponding entropy. We also define $F_{AB|00E}(\score)$ and $F_{A|00E}(\score)$ analogously, noting that these are independent of $p_{XY}$. Furthermore, if we write $F_{AB|XYE}(\score)$ etc.\ (i.e., leaving out the $p_{XY}$), we refer to the case where $p_{XY}$ is uniform over $X$ and $Y$. For a more precise writing of these optimizations, see~\eqref{opt_1}.

We also consider a related set of functions $G_{AB|XYE}(\score,p_{XY})$, $G_{AB|E}(\score,p_{XY})$ etc.\ that are defined analogously, but while optimizing over a smaller set of allowed strategies. More precisely, the $G$ functions are defined by restricting $\cH_{A'}$ and $\cH_{B'}$ to be two dimensional and $\cH_E$ to have dimension 4, taking $\rho_{A'B'E}$ to be pure with $\rho_{A'B'}$ diagonal in the Bell basis, and taking the POVMs to be projective measurements onto states of the form $\cos(\alpha)\ket{0}+\sin(\alpha)\ket{1}$ (see~\eqref{opt_simp2}).

It turns out that $F_{AB|00E}(\score)=G_{AB|00E}(\score)$, $F_{A|00E}(\score)=G_{A|00E}(\score)$, and that in each of the other four cases $F$ is formed from $G$ by taking the convex lower bound. The underlying reason for this is that Jordan's lemma~\cite{Jordan} implies that in the case of Bell inequalities with two inputs and two outputs, any strategy is equivalent to a convex combination of strategies in which $A'$ and $B'$ are qubit systems. This means that if we solve the qubit case, the general case follows by taking the convex lower bound\footnote{The convex lower bound is not needed in the cases where we condition on $X=0$ and $Y=0$ because $G$ is already convex in these cases.}. Such an argument was made in~\cite{PABGMS} and we give the details in Appendix~\ref{app:Proof1}.

A note on notation: in this work we measure entropies in bits, taking $\log$ to represent the logarithm base 2, and $\ln$ for the natural logarithm where needed.

\section{Numerically computing rates}\label{sec:num}
The optimizations that define the $G$ functions can be expressed in terms of $7$ real parameters ($3$ to specify the state and $4$ to choose the measurements). They are hence amenable to numerical optimizations. We note also that except in the cases $G_{AB|E}(\score,p_{XY})$ and $G_{A|E}(\score,p_{XY})$ we can remove an additional parameter. We discuss simple ways to write the entropic expressions in Appendix~\ref{app:simp}.

\subsection{Upper bounds}
We obtain upper bounds by using numerical solvers that attempt to compute $G$ (these give upper bounds because the computations are not guaranteed to converge).  Our program for computing $G$ runs in $N$ iterations. In each iteration the program starts by making a random guess for the parameters from within the valid range. It then uses sequential quadratic programming to minimize $\bar{H}((\cN\ot\cI_E)(\rho_{A'B'E}))$ subject to the CHSH score being fixed [here $\bar{H}$ is a placeholder for one of the entropic quantities of interest]. On each iteration, the program arrives at a candidate for the minimum value, and we run $N\approx10^4$ iterations to arrive at the conjectured minimum value for $\bar{H}((\cN\ot\cI_E)(\rho_{A'B'E}))$. The numerical optimization is performed in Python using the sequential least squares programming (SLSQP) solver in SciPy. The curves obtained were found to match those generated by solving numerically in {\sc Mathematica}.

Since these optimizations are not guaranteed to converge, the generated curves are upper bounds on the infima. Some confidence of their tightness comes from the smoothness of the curves, the consistency across different numerical solvers, and that the generated points match the known analytic tight bound in the case $H(A|X=0,Y=0,E)$. They also closely match the numerical lower bounds we computed for $G_{A|XYE}$ and $G_{AB|X=0,Y=0,E}$ discussed in Section~\ref{sec:lower}.

$G_{AB|00E}$ and $G_{A|00E}$ are convex functions, and hence $F=G$ for these. For the other cases we generate the graphs in the case where $p_{XY}$ is uniform, observing that each of the $G$ curves starts with a concave part and switches to convex for larger CHSH scores. Since the minimum entropy is always zero for classical scores, each of the $G$ curves approach $0$ as $\score$ approaches $3/4$. Each of the $F$ curves can be found from $G$ by finding the tangent to $G$ that passes through $(3/4,0)$.  We call the score at which this tangent is taken $\score^*$, defined by $(\score^*-3/4)G'(\score^*)=G(\score^*)$. We then have
\begin{equation}
    F(\score) =     \begin{cases}
    G'(\score^*) (\score -\frac{3}{4}) & \text{if}\ \score \leq \score^* \\
    G(\score) & \text{otherwise} \\
    \end{cases}\,.
\end{equation}
We give estimates for $\score^*$ for each of the cases below. In essence, what this means is that for $\score<\score^*$ the optimal strategy for Eve is to either use a deterministic classical strategy with score $3/4$ or a strategy that achieves score $\score^*$, mixing these such that the average score is $\score$. Eve can remember which strategy she used, and hence the entropy from her perspective is also the convex mixture of the entropies of the endpoints.

Figure~\ref{fig:rates} shows the curves we obtained for the functions $F$ in each of the six cases. Note that, except in the cases where we condition on $X=0$ and $Y=0$, the graphs all have linear sections as a result of taking the convex lower bound. In Appendix~\ref{app:graphs} we show the graphs for $G$ together with those for $F$. The approximate coordinate of the top of the linear segment for $F_{AB|E}$ is $(0.8523,1.8735)$ and for $F_{A|E}$ it is $(0.8505,0.967)$. Note also that $1+H_\bin\left(\frac{1}{2}+\frac{1}{\sqrt{32}}\right)\approx1.908$ is the maximum value on the graph $F_{AB|E}(\score)$.

\begin{figure}
\includegraphics[width=\textwidth]{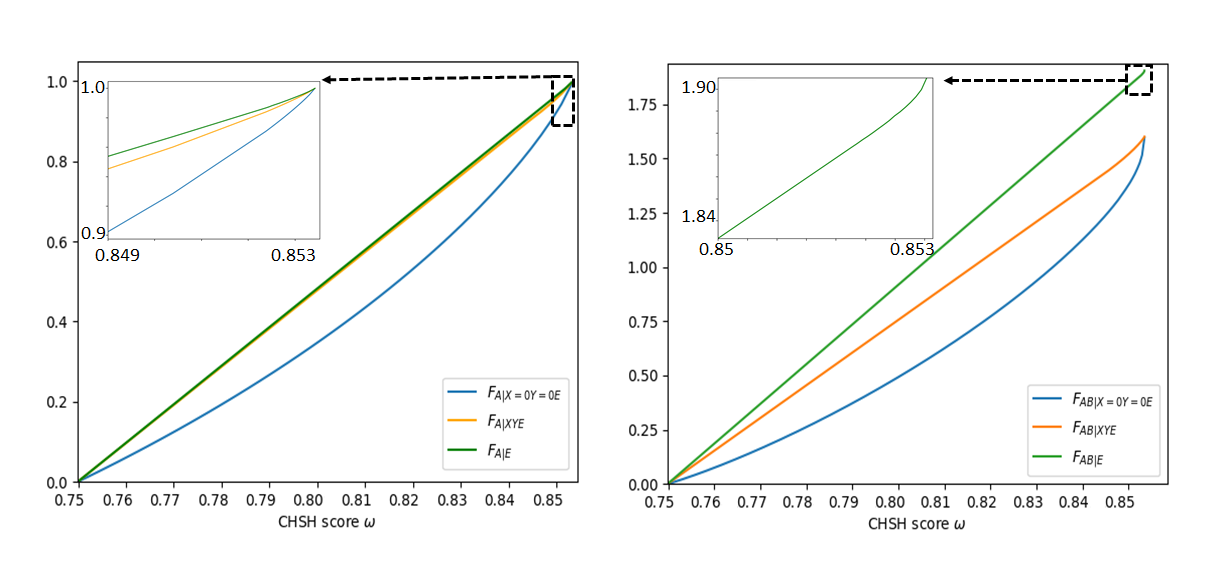}\\
\phantom{m}(a)\hspace{26em}(b)
\caption{Graphs of the rates for (a) the one-sided and (b) the two-sided randomness with uniformly chosen inputs. Each of these curves has a non-linear part and the blue curves do not have a linear part.}
\label{fig:rates}
\end{figure}

By examining the parameters that come out of the numerical optimizations we have the following.

\begin{lemma}\label{lem:ABgXYE}
Consider the curve $g_1(\score)=1+H_\bin(\score)-2H_\bin(\frac{1}{2}+\frac{2\score-1}{\sqrt{2}})$. $F_{AB|XYE}(\score)$ can be upper bounded in terms of $g_1$ as follows
\begin{equation}\label{eq:main_result}
F_{AB|XYE}(\score)\leq\begin{cases}g_1(\score)&\score_{AB|XYE}^*\leq\score\leq\frac{1}{2}\left(1+\frac{1}{\sqrt{2}}\right) \\
g_1'(\score_{AB|XYE}^*)(\score-3/4)&3/4\leq\score\leq\score_{AB|XYE}^*\end{cases}\,.
\end{equation}
where $\score_{AB|XYE}^*\approx0.84403$ is the solution to $g'_1(\score)(\score-3/4)=g_1(\score)$. Note that $g_1(\score_{AB|XYE}^*)\approx1.4186$ and the maximum value reached is $1+H_\bin(1/2+1/(2\sqrt{2}))\approx1.601$.
\end{lemma}
\begin{proof}
We first consider an upper bound on $G_{AB|XYE}(\score)$. In Appendix~\ref{app:simp} we give a parameterization of a two-qubit state (with parameters $R$, $\theta$ and $\delta$) and measurements (with parameters $\alpha_0$, $\alpha_1$, $\beta_0$ and $\beta_1$) before computing an expression for $H(AB|XYE)$ in terms of these (cf.~\eqref{eq:HABgXYE}).  We also obtain an expression for the CHSH score (cf.~\eqref{eq:score}).  Choosing $R=\sqrt{2}(2\score-1)$, $\theta=0$, $\delta=R^2/4$, $\alpha_0=0$, $\alpha_1=\pi/4$, $\beta_0=\pi/8$, $\beta_1=-\pi/8$ we find a score $\score$, and calculating $H(AB|XYE)$ we obtain $H(AB|XYE)=g_1(\score)$ and hence $G_{AB|XYE}(\score)\leq g_1(\score)$. Since, $G_{AB|XYE}(3/4)=0$, and $F_{AB|XYE}$ is formed from $G_{AB|XYE}$ by taking the convex lower bound, we establish the claim.
\end{proof}

\begin{lemma}\label{lem:AgXYE}
Consider the curve $g_2(\score)=1-H_\bin\left(\frac{1}{2}+\frac{2\score-1}{\sqrt{2}}\right)$.
$F_{A|XYE}(\score)$ is upper bounded by the convex lower bound of $g_2(\score)$. In other words,
\begin{equation}
F_{A|XYE}(\score)\leq\begin{cases}g_2(\score)&\score_{A|XYE}^*\leq\score\leq\frac{1}{2}\left(1+\frac{1}{\sqrt{2}}\right) \\
g_2'(\score_{A|XYE}^*)(\score-3/4)&3/4\leq\score\leq\score_{A|XYE}^*\end{cases}\,.
\end{equation}
where $\score_{A|XYE}^*\approx0.84698$ is the solution to $g'_2(\score)(\score-3/4)=g_2(\score)$. Note that $g_2(\score_{A|XYE}^*)\approx0.92394$.
\end{lemma}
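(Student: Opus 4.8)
The plan is to mirror exactly the structure of the proof of Lemma~\ref{lem:ABgXYE}, since the two statements are parallel: in both cases we exhibit an explicit single-round qubit strategy whose score is $\score$ and whose relevant von Neumann entropy equals the claimed curve, thereby certifying $G_{A|XYE}(\score)\leq g_2(\score)$, and then invoke the fact (established in the earlier discussion and in Appendix~\ref{app:Proof1}) that $F_{A|XYE}$ is the convex lower bound of $G_{A|XYE}$. The key observation making this work is that $H(A|XYE)$ should be obtainable from the same parameterized family used for $H(AB|XYE)$ by simply tracing out the $B$ output, and indeed $g_2$ is exactly the ``$-2H_\bin(\tfrac{1}{2}+\tfrac{2\score-1}{\sqrt2})$'' term of $g_1$ halved and the ``$H_\bin(\score)$'' term dropped, which is consistent with a one-sided restriction of the two-sided computation.

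First I would recall the parameterization from Appendix~\ref{app:simp}: the two-qubit state with parameters $R,\theta,\delta$, the measurements with $\alpha_0,\alpha_1,\beta_0,\beta_1$, and the closed-form expressions for $H(A|XYE)$ (cf.~\eqref{eq:HAgXYE}, the one-sided analogue of~\eqref{eq:HABgXYE}) and for the CHSH score (cf.~\eqref{eq:score}). Second, I would substitute the \emph{same} choice of parameters that worked in Lemma~\ref{lem:ABgXYE}, namely $R=\sqrt2(2\score-1)$, $\theta=0$, $\delta=0$, $\alpha_0=0$, $\alpha_1=\pi/2$, $\beta_0=\pi/8$, $\beta_1=-\pi/8$. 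With $\theta=\delta=0$ and $R=\sqrt2(2\score-1)$ one checks that the score expression evaluates to $\score$ (this is identical to the check already done for $g_1$, so nothing new is required here), and then one evaluates the one-sided entropy expression, which should collapse to $1-H_\bin(\tfrac{1}{2}+\tfrac{2\score-1}{\sqrt2})=g_2(\score)$. This establishes $G_{A|XYE}(\score)\leq g_2(\score)$, and combined with $G_{A|XYE}(3/4)=0$ and the convex-lower-bound relationship between $F$ and $G$ yields the stated piecewise bound, with $\score_{A|XYE}^*$ the unique tangent point solving $g_2'(\score)(\score-3/4)=g_2(\score)$.

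The only genuinely new piece of work relative to Lemma~\ref{lem:ABgXYE} is verifying that the one-sided entropy expression $H(A|XYE)$, evaluated at these parameters, really does reduce to $g_2(\score)$. The main obstacle I anticipate is therefore bookkeeping rather than conceptual: confirming that with $\alpha_0=0,\alpha_1=\pi/2$ the two $X$-settings give marginals on $A$ that combine (after averaging over uniform $X,Y$ and subtracting the conditional entropy given $E$) into precisely the single $H_\bin(\tfrac12+\tfrac{2\score-1}{\sqrt2})$ term, with the ``$1$'' coming from the uniformity of the $A$ marginal for at least one setting. Since this is a direct specialization of the already-verified $H(AB|XYE)$ computation with the $B$ register traced out, I expect it to go through cleanly; the tangent-point value $\score_{A|XYE}^*\approx0.84698$ and $g_2(\score^*)\approx0.92394$ then follow by solving the stated transcendental equation numerically, exactly as in the previous lemma.
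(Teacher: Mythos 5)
Your proposal is correct and is essentially identical to the paper's own proof, which simply states that the argument of Lemma~\ref{lem:ABgXYE} carries over verbatim with $g_1$ replaced by $g_2$ and the same choice of state and measurements; your evaluation of the one-sided entropy expression from Appendix~\ref{app:AgXYE} at those parameters (where $\theta=0$ makes $g(\alpha_x)=\tfrac12(1+R)$ independent of $x$, yielding $1-H_\bin(\tfrac12+\tfrac{2\score-1}{\sqrt2})$) is exactly the ``same proof with $g_2$'' that the paper invokes. The only cosmetic point is that there is no equation labelled as you cite for the one-sided entropy; the relevant expression is the unnumbered formula for $H(A|XYE)$ in Appendix~\ref{app:AgXYE}.
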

\begin{proof}
The proof is the same as for Lemma~\ref{lem:ABgXYE}, except that $g_1$ is replaced by $g_2$ --- the choice of state and measurements remains the same.
\end{proof}

\subsection{Lower bounds}\label{sec:lower}
Lemma~\ref{lem:AgXYE} gives an upper bound on the one-sided randomness using an explicit strategy. However, for security proofs a lower bound is needed. We compute such lower bounds for $G_{A|XYE}$ and $G_{AB|X=0,Y=0,E}$. The relevant computations can be found in Appendices~\ref{app:AgXYE} and~\ref{app: two sided proof} (Corollary~\ref{cor:onesidelower} and Lemma~\ref{lem: twosided lower bound}) and are displayed in Fig.~\ref{fig:rates_conj} alongside the upper bounds and a lower bound from~\cite{BFF2022}.

The idea is behind our lower bounds is as follows. We first show that for every fixed value of $\score$ the functions $G_{A|XYE}(\score)$ and $G_{AB|X=0,Y=0,E}(\score)$ can each be expressed as a minimization over $3$ real parameters. For fixed $\score$ we compute the values of the objective function on a grid of points comprising these parameters. By bounding the derivative of the objective function within the cuboids generated by the grid we establish a lower bound on the function over the possible parameters. The lower bound we generate can in principle be made arbitrarily good by decreasing the grid spacing (at the expense of taking more time to evaluate).

Given lower bounds on $G_{A|XYE}(\score)$ and $G_{AB|X=0,Y=0,E}(\score)$ for a finite set of values of $\score$, we can get lower bounds for all values of $\score$ by using that the $G$ functions are monotonically increasing in $\score$, so we have $G(\score)\leq G(\score-\nu)$, where $\nu$ is the spacing between the finite set of values of $\score$. Hence, we consider forming lower bounds as above for a set of values $\M{W}=\{\score_1,\score_2,\ldots\}$ in the range $(3/4,(1/2)(1+(1/2)^{1/2})]$. We can then consider the points $\{(\score_1,0),(\score_2,G(\score_1)),(\score_3,G(\score_2)),\ldots\}$, i.e., where each is shifted one place. Taking the convex lower bound of these shifted points gives a convex lower bound for $G_{A|XYE}$ and $G_{AB|X=0,Y=0,E}$.  By taking more points in the set $\M{W}$ tighter lower bounds can be obtained. 

Lower bounds generated in this way are shown in Fig.~\ref{fig:rates_conj}, and can be seen to be close to the upper bounds. In Fig.~\ref{fig:rates_conj}(b) we also compare with a lower bound on $G_{AB|X=0,Y=0,E}$ from~\cite{BFF2022}. The lower bounds from our technique can be improved by refining the partition of the domain at the expense of increasing the computational time required. As seen in Fig.~\ref{fig:rates_conj}(b), refining the partition moves the lower bound closer to the upper bound, leading us to the following conjecture.
\begin{conjecture}\label{conj:ABgXYE}
The upper bounds in Lemmas~\ref{lem:ABgXYE} and~\ref{lem:AgXYE} are tight.
\end{conjecture}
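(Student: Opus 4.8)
\emph{Proof proposal.} The statement to be proved is that $F_{AB|XYE}(\score)=g_1(\score)$ and $F_{A|XYE}(\score)=g_2(\score)$ on their respective convex portions, the linear segments then following automatically from the convex-lower-bound construction. Lemmas~\ref{lem:ABgXYE} and~\ref{lem:AgXYE} already supply the ``$\leq$'' direction through explicit qubit strategies, so the entire content of the conjecture is the matching lower bound. By Jordan's lemma (Appendix~\ref{app:Proof1}) it suffices to prove this lower bound over qubit strategies, i.e.\ to show $G_{AB|XYE}(\score)\geq g_1(\score)$ and $G_{A|XYE}(\score)\geq g_2(\score)$ for $\score\geq\score^*$; the convex lower bound then reproduces $F$ exactly. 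The plan is therefore to carry out the constrained minimization of the single-round entropy over the seven-parameter family $(R,\theta,\delta,\alpha_0,\alpha_1,\beta_0,\beta_1)$ of Appendix~\ref{app:simp}, at fixed CHSH score $\scorefunction(\tau)=\score$, and to prove that the minimum coincides with the value attained by the strategy exhibited in the lemmas.

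First I would attack the one-sided bound $g_2$. Since $X$ is classical and $Y$ is irrelevant once $B'$ is traced out, $H(A|XYE)=\tfrac12 H(A|X{=}0,E)+\tfrac12 H(A|X{=}1,E)$, a uniform average of two single-setting one-sided entropies, each of exactly the type bounded in~\cite{PABGMS}. A naive termwise application of that bound gives only $H(A|XYE)\geq$ [the single-setting value], which is strictly weaker than $g_2$ (the two already differ at $\score=3/4$); the extra content of $g_2$ is precisely that a single strategy cannot simultaneously minimize the randomness of both of Alice's measurements, so the settings must be treated jointly. For a qubit state with Eve holding the purification, each term is a binary entropy of Eve's optimal guessing probability, an explicit function of $\rho_{A'E}$ and the angle $\alpha_x$. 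I would then reduce the parameter count by symmetry and convexity---taking the state Bell-diagonal ($\theta=\delta=0$), fixing the global phase $\alpha_0=0$, and arguing the two settings enter symmetrically so that $\alpha_1=\pi/2$---collapsing the problem to the score-determined radius $R=\sqrt2(2\score-1)$ together with the Bob angles, which I would pin down by the KKT conditions against the score constraint and verify to equal $\beta_0=\pi/8$, $\beta_1=-\pi/8$. Showing that this stationary point is the \emph{global} minimum, rather than merely a local one, is the real work.

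For the two-sided bound $g_1$ I would try to leverage the one-sided result through the chain rule $H(AB|XYE)=H(A|XYE)+H(B|AXYE)$, noting the algebraic identity $g_1=2g_2-1+H_\bin(\score)$, which suggests that at the optimum the residual term contributes exactly $H_\bin(\score)-H_\bin\!\left(\tfrac12+\tfrac{2\score-1}{\sqrt2}\right)$. The difficulty is that minimizing a sum of two conditional entropies is not the same as minimizing each separately, because the CHSH constraint couples Alice's and Bob's marginals; the chain-rule route yields the bound only if one can show the \emph{same} strategy is simultaneously optimal for both pieces. Absent that, I would fall back on a direct minimization of the explicit expression~\eqref{eq:HABgXYE} over the reduced parameter set, again using symmetry to eliminate $\theta$, $\delta$ and the global phase before the stationarity analysis.

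The principal obstacle in both cases is that the von Neumann entropy is not jointly convex in the state and measurement parameters, so global optimality cannot be read off from the first-order conditions alone and a naive Lagrangian analysis will only certify local minima. To make the lower bound rigorous I would pursue one of two routes to a global certificate: either (i) find an explicit change of variables, in the spirit of the Bell-diagonal and qubit reductions above, that renders the reduced objective convex in a single effective parameter, so that the unique stationary point is automatically global; or (ii) construct a dual certificate matching $g_i(\score)$---for instance a closed-form feasible solution of the Brown--Fawzi--Fawzi semidefinite relaxation of the conditional von Neumann entropy~\cite{BFF}, or of the Lagrange dual of the qubit optimization---whose value equals the upper bound of the lemmas. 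Route (ii) is the more likely to succeed in closed form and would simultaneously confirm that the numerical optimizations of the preceding section have converged to the true infimum. I expect constructing such an analytic dual, valid uniformly across the convex portion $\score\in[\score^*,\tfrac12(1+\tfrac1{\sqrt2})]$, to be the decisive and hardest step.
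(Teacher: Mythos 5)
This statement is labelled a \emph{Conjecture} in the paper precisely because the authors do not prove it: the only support offered is the numerical optimization of Section~IV (about $10^4$ runs of SLSQP over the seven-parameter qubit family), and the paper is explicit that, strictly speaking, the curves are therefore only upper bounds. Your submission is likewise not a proof but a plan. You correctly identify the logical structure --- the ``$\leq$'' direction is already supplied by the explicit strategies in Lemmas~\ref{lem:ABgXYE} and~\ref{lem:AgXYE}, Jordan's lemma reduces the ``$\geq$'' direction to showing $G_{AB|XYE}(\score)\geq g_1(\score)$ and $G_{A|XYE}(\score)\geq g_2(\score)$ on the convex portion, and the convex lower bound then handles $\score<\score^*$ --- and your algebraic observations (the identity $g_1=2g_2-1+H_\bin(\score)$, the fact that a termwise application of the bound of~\cite{PABGMS} to $\tfrac12 H(A|X{=}0,E)+\tfrac12 H(A|X{=}1,E)$ is strictly weaker than $g_2$) are accurate and diagnose correctly why the one-sided single-setting result does not transfer. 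But every step that would actually establish the lower bound is deferred: the symmetry reductions $\theta=\delta=0$, $\alpha_0=0$, $\alpha_1=\pi/2$ are asserted rather than derived (note that the paper's own reductions in Appendix~\ref{app:simp} only justify $\delta=\delta^*$, not $\theta=0$), the KKT analysis is not performed, and you yourself flag that certifying \emph{global} rather than local optimality --- via a convexifying change of variables or a closed-form dual certificate in the sense of~\cite{BFF} --- is ``the decisive and hardest step'' and is not carried out.

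So the gap is not a subtle flaw in an argument; it is that no argument is given for the direction that constitutes the entire content of the conjecture. That said, as a roadmap your proposal is sound and goes beyond what the paper offers (which is purely numerical evidence): the chain-rule decomposition $H(AB|XYE)=H(A|XYE)+H(B|AXYE)$ together with the identity relating $g_1$ and $g_2$ is a genuinely useful observation about where the extra $H_\bin(\score)-H_\bin\bigl(\tfrac12+\tfrac{2\score-1}{\sqrt2}\bigr)$ must come from, and the dual-certificate route (ii) is the most plausible path to an actual proof. If you pursue this, be aware that the coupling you mention --- that the minimizer of the sum need not minimize each summand --- is exactly the obstruction that has kept this open, and that any completed argument must also rule out minimizers with $\theta\neq 0$, where the expression for $H(AB|XYE)$ in~\eqref{eq:HABgXYE} depends on all four measurement angles through the $\epsilon_{xy}$ and the reduction to a single effective parameter is no longer available.
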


One technique for generating finite key rates is to use so-called min-tradeoff functions, which are affine lower bounds on the von Neumann entropy (see Appendix~\ref{app:EAT} and~\cite{DFR,DF}). Based on an experimental setup, an affine lower bound on $G_{AB|X=0,Y=0,E}$ can be generated at the value of $\score$ corresponding to that obtainable in the experiment, hence a refined partition can be used, based only on values of $\score$ close to the experimental value.

\begin{figure}[h!]
\includegraphics[width=\textwidth]{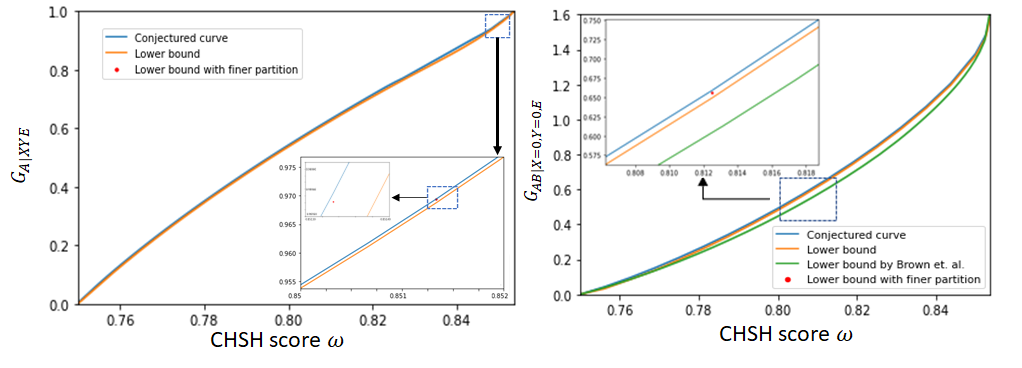}\\
\phantom{m}(a)\hspace{26em}(b)
\caption{Graphs of the conjectured rates and lower bounds for (a) $G_{A|XYE}$ (b) $G_{AB|X=0,Y=0,E}$  with uniformly chosen inputs. For $G_{AB|X=0,Y=0,E}$ we also show a lower bound from Brown et al.~\cite{BFF2022}. We also demonstrate that the lower bound for $G_{AB|X=0,Y=0,E}$ can be tightened by refining the partitioning of the domain for a specific point (due to the increased computation time, we did not do this throughout).}
\label{fig:rates_conj}
\end{figure}

\section{Protocols for randomness expansion}\label{sec:protocols}
In this section we discuss CHSH-based protocols for DIRE of both the spot-checking and non spot-checking types. We pick specific protocols for concreteness, but there are many possible variations. For instance, the protocols we discuss condense the observed statistics to a single score, but this is not necessary, and in some cases and for some sets of experimental conditions it can be advantageous to use multiple scores~\cite{BRC,TSGPL}.

Before getting to the protocols, we first describe the setup, assumptions and security definition. Although DIRE requires no assumptions on how the devices used operate, the setup for DIRE involves a user who performs the protocol within a secure laboratory, from which information cannot leak. Individual devices can also be isolated within their own sub-laboratory and the user can ensure that these devices only learn the information necessary for the protocol (in particular, they cannot learn any inputs given to other devices).  The user has access to a trusted classical computer and an initial source (or sources) of trusted randomness.

The quantum devices used for the protocol are only limited by the laws of quantum theory and may share arbitrary entanglement with each other and with an adversary. However, they cannot communicate with each other, or to the adversary after the protocol starts. Furthermore, we assume they are kept isolated after the protocol (cf.\ the discussion in Appendix~\ref{app:compos}).

For security of the protocols, we use a composable security definition. Consider a protocol with output $Z$ and use $\Omega$ to denote the event that it does not abort. The protocol is $(\epsilon_S,\epsilon_C)$-secure if
\begin{enumerate}
\item $\frac{1}{2}p_{\Omega} || \rho_{ZE|\Omega} - \frac{1}{d_Z}\id_Z \ot \rho_{E|\Omega} ||_1 \leq \epsilon_S$,
  where $E$ represents all the systems held by an adversary and $d_Z$ is the dimension of system $Z$; and
  \item There exists a quantum strategy such that $p_{\Omega}\geq1-\epsilon_C$.
\end{enumerate}
Here $\epsilon_S$ is called the soundness error, and $\epsilon_C$ is the completeness error.

\subsection{CHSH-based spot-checking protocol for randomness expansion}
We now describe a spot-checking protocol for randomness expansion. It uses a central biased random number generator $R_T$ and two other random number generators, $R_A$ and $R_B$ that are near each of the devices used to run the protocol.

\begin{protocol}\label{prot:spotcheck}{\bf (Spot-checking protocol)}
\phantom{a}\\

\noindent{\bf Parameters:}\\
$n$ -- number of rounds\\
$\gamma$ -- test probability\\
$\score_{\exp}$ -- expected CHSH score\\
$\delta$ -- confidence width for the score
\begin{enumerate}
    \item Set $i=1$ for the first round, or increase $i$ by 1.
    \item Use $R_T$ to choose $T_i\in\{0,1\}$ where $T_i=1$ occurs with probability $\gamma$.
    \item If $T_i=1$ (test round), $R_A$ is used to choose $X_i$ uniformly, which is input to one device giving output $A_i$. Likewise $R_B$ is used to choose $Y_i$ uniformly, which is input to the other device giving output $B_i$. Set $U_i=1$ if $A_i\oplus B_i=X_iY_i$ and $U_i=0$ otherwise.
    \item If $T_i=0$ (generation round), the devices are given inputs $X_i=Y_i=0$, and return the outputs $A_i$ and $B_i$. Set $U_i=\bot$.
    \item Return to Step~1 unless $i=n$.
    \item Calculate the number of rounds in which $U_i=0$ occurred, and abort the protocol if this is larger than $n\gamma(1-\score_{\exp}+\delta)$.
    \item \label{st:7} Process the concatenation of all the outputs with a quantum-proof strong extractor $\Ext$ to yield $\Ext({\bf AB},{\bf R})$, where ${\bf R}$ is a random seed for the extractor. Since a strong extractor is used, the final outcome can be taken to be the concatenation of ${\bf R}$ and $\Ext({\bf AB},{\bf R})$.
\end{enumerate}
\end{protocol}

There are a few important points to take into account when running the protocol. Firstly, it is crucial that each device only learns its own input and not the value of the other input, or of $T_i$. If this is not satisfied it is easy for devices to pass the protocol without generating randomness. Secondly, for implementations in which devices can fail to record outcomes when they should, it is important to close the detection loophole, which can be done by assigning an outcome, say $0$, when a device fails to make a detection and otherwise using the same protocol.

In order to run the protocol, some initial randomness is needed to choose which rounds are test rounds, to choose the inputs in the test rounds and to seed the extractor. Since the extractor randomness forms part of the final output, it is not consumed in the protocol, so for considering the rate at which the protocol consumes randomness we can work out the amount of uniform randomness needed to supply the inputs.  Using the rounded interval algorithm~\cite{HaoHoshi} to make the biased random number generator, $n(H_\bin(\gamma)+2\gamma)+3$ is the expected amount of input randomness required. To achieve expansion the number of output bits must be greater than this. We use the entropy accumulation theorem (EAT) to lower bound the amount of output randomness. Asymptotically the relevant quantity is $H(AB|X=0,Y=0,E)$. The quantity $H(A|X=0,Y=0,E)$ acts as a lower bound for this, and can be used in its place if convenient, for instance in analyses that are more straightforward with an analytic curve.

\subsection{CHSH-based protocols without spot-checking}
In this section we discuss two such protocols.  Protocol~\ref{prot:biased} uses two biased local random number generators to choose the inputs on each round.  Protocol~\ref{prot:nonspotcheck} eliminates the bias, but also recycles the input randomness.  Recycling the input randomness is necessary when unbiased random number generators are used, since otherwise more randomness is required to run the protocol than is generated. Protocol~\ref{prot:nonspotcheck} gives the highest randomness generation rate of all the protocols we discuss.

\begin{protocol}\label{prot:biased} {\bf (Protocol with biased local random number generators)}
\phantom{a}\\

\noindent\textbf{Parameters}:\\
$n$ -- number of rounds \\
$\zeta^A$ -- probability of 1 for random number generator $R_A$ (taken to be below $1/2$)\\ 
$\zeta^B$ -- probability of 1 for random number generator $R_B$ (taken to be below $1/2$)\\ 
$\omega_{\text{exp}}$ -- expected CHSH score. \\
$\delta$ -- confidence widths for each score. 
\begin{enumerate}
    \item Set $i=1$ for the first round, or increase $i$ by 1.
    \item Use $R_A$ to choose $X_i\in\{0,1\}$, which is input to one of the devices giving output $A_i\in\{0,1\}$. Likewise use $R_B$ to generate $Y_i\in\{0,1\}$, which is input to the other device giving output $B_i\in\{0,1\}$. Here $X_i=1$ occurs with probability $\zeta^A$ and $Y_i=1$ occurs with probability $\zeta^B$. Set $U_i=(X_i,Y_i,1)$ if $A_i\oplus B_i=X_iY_i$ and $U_i=(X_i,Y_i,0)$ otherwise.
    \item Return to Step~1 unless $i=n$.
    \item \label{st:omega} Compute the value
    \begin{eqnarray}\label{eq:omega}
    \score=\frac{1}{4}\sum_{x,y} \frac{|\{i:U_i=(x,y,1)\}|}{n p_{X}(x) p_Y(y)}
    \end{eqnarray}
    and abort the protocol if $\score<\score_{\exp}-\delta$.
    Here $p_{X}(1)=\zeta^A$, $p_{X}(0)=1-\zeta^A$, $p_{Y}(1)=\zeta^{B}$ and $p_{Y}(0)=1-\zeta^{B}$.
    \item Process the concatenation of all the outputs with a quantum-proof strong extractor $\Ext$ to yield $\Ext({\bf AB},{\bf R})$, where ${\bf R}$ is a random seed for the extractor. Since a strong extractor is used, the final outcome can be taken to be the concatenation of ${\bf R}$ and $\Ext({\bf AB},{\bf R})$.
\end{enumerate}
\end{protocol}

Note that the quantity $|\{i:U_i=(x,y,1)\}|/(n p_{X}(x) p_Y(y))$ in~\eqref{eq:omega} is an estimate of the probability of winning the CHSH game for inputs $X=x$ and $Y=y$, and hence the $\score$ computed in Step~\ref{st:omega} is an estimate of the CHSH value that would be observed if the same setup was used but with $X$ and $Y$ chosen uniformly.

The input randomness required per round in this protocol is roughly $H_\bin(\zeta^A)+H_\bin(\zeta^B)$.
To quantify the amount of output randomness (before randomness extraction is performed), in the asymptotic limit similar to the spot checking protocol, the relevant operational quantity is the von Neumann entropy $H(AB|XYE)$. Expansion hence cannot be achieved if $H(AB|XYE)-H_\bin(\zeta^A)-H_\bin(\zeta^B)<0$, which places constraints on the pairs of possible $(\zeta^A,\zeta^B)$. For $\zeta^A$ and $\zeta^B$ smaller than $1/2$, the quantity $H(AB|XYE)-H_\bin(\zeta^A)-H_\bin(\zeta^B)$ increases as $\zeta^A$ and $\zeta^B$ decrease, and hence we want to take these to be small. They only need to be large enough to ensure that $X=1,Y=1$ occurs often enough to give a good estimate of the empirical score.

Since
\begin{eqnarray}
H(AB|XYE) &=&  \sum_{xy} p_{XY}(x,y)H(AB|XYE) \nonumber\\
&\geq&\min_{x,y}H(AB|X=x,Y=y,E)   \label{Bakchodi} 
\end{eqnarray}
we can use the bounds formed for $H(AB|X=0,Y=0,E)$ instead\footnote{There is nothing special about the choice $X=0$ and $Y=0$ when computing the bounds for $H(AB|X=0,Y=0,E)$.}, albeit with a loss of entropy (this loss of entropy is small if $\zeta_A$ and $\zeta_B$ are small).

One reason for using Protocol~\ref{prot:biased} rather than Protocol~\ref{prot:spotcheck} is that the former enables the locality loophole to be closed while expanding randomness.  In order to perform the Bell tests as part of a device-independent protocol we need to make inputs to two devices in such a way that neither device knows the input of the other. One way to ensure this is by using independent random number generators on each side of the experiment, and ensuring the the outcome of each device is given at space-like separation from the production of the random input to the other. Although space-like separation can provide a guarantee (within the laws of physics) that each device does not know the input of the other\footnote{Provided we have a reasonable way to give a time before which the output of $R_A$ and $R_B$ did not exist.}, in a cryptographic setting it is necessary to assume a secure laboratory to prevents any unwanted information leaking from inside the lab to an eavesdropper. The same mechanism by which the lab is shielded from the outside world can be used to shield devices in the lab from one another and hence can prevent communication between the two devices during the protocol.  However, although unnecessary for cryptographic purposes, it is interesting to consider closing the locality loophole while expanding randomness.

This is not possible in a typical spot-checking protocol, where a central random number generator is used to decide whether a round is a test round or not. Considering Protocol~\ref{prot:spotcheck}, the locality loophole can be readily closed during the test rounds, but the use of the central random number generator means that, if one is worried that hidden communication channels are being exploited, there is a loophole that the devices could behave differently on test rounds and generation rounds. For instance, measurement devices that know whether a round is a test or generation round could supply pre-programmed outputs in generation rounds, while behaving honestly in test rounds. Thus, spot-checking protocols do not enable fully closing the locality loophole while expanding randomness.

When using Protocol~\ref{prot:biased} with $\zeta^A=\zeta^B=\zeta$, the main difference to Protocol~\ref{prot:spotcheck} is that the distribution of $X$ and $Y$ is $((1-\zeta)^2,\zeta(1-\zeta),\zeta(1-\zeta),\zeta^2)$ rather than $(1-3\gamma/4,\gamma/4,\gamma/4,\gamma/4)$. In the analysis this manifests itself in the statistics, and the much lower probability of $X=1,Y=1$ requires an adjustment of $\delta$ to achieve the same error parameters for the protocol. A comparison between the output rates for Protocols~\ref{prot:spotcheck} and~\ref{prot:biased} is shown in Figures~\ref{fig:EAT_n} and~\ref{fig:EAT_score}.\bigskip

\begin{protocol}\label{prot:nonspotcheck} {\bf (Protocol with recycled input randomness)}
\phantom{a}\\

\noindent\textbf{Parameters}:\\
$n$ -- number of rounds \\
$\score_{\exp}$ -- expected CHSH score. \\
$\delta$ -- confidence width.
\begin{enumerate}
    \item Set $i=1$ for the first round, or increase $i$ by 1.
    \item Use $R_A$ to choose $X_i\in\{0,1\}$ uniformly, serving as the input to one of the devices giving output $A_i\in\{0,1\}$. Likewise use $R_B$ to generate $Y_i\in\{0,1\}$ uniformly, which is input to the other device giving output $B_i\in\{0,1\}$. Set $U_i=1$ if $A_i\oplus B_i=X_iY_i$ and $U_i=0$ otherwise.
    \item Return to Step~1 unless $i=n$.
    \item Count the number of rounds for which $U_i=0$ occurred and abort the protocol if this is above $n(1-\score_{\exp}+\delta)$.
    \item Process the concatenation of all the inputs and outputs with a quantum-proof strong extractor $\Ext$ to yield $\Ext({\bf ABXY},{\bf R})$, where ${\bf R}$ is a random seed for the extractor. Since a strong extractor is used, the final outcome can be taken to be the concatenation of ${\bf R}$ and $\Ext({\bf ABXY},{\bf R})$.
\end{enumerate}
\end{protocol}

An important difference in this protocol compared to Protocols~\ref{prot:spotcheck} and~\ref{prot:biased} is in the extraction step, which now extracts randomness from the input strings ${\bf X}$ and ${\bf Y}$ as well as the outputs. Without recycling the inputs expansion would not be possible in Protocol~\ref{prot:nonspotcheck}. With the modification the relevant quantity to decide the length of the output is $H(ABXY|E)$, and so $H(AB|XYE)=H(ABXY|E)-H(XY)=H(ABXY|E)-2$ is the relevant quantity for calculating the rate of expansion. Note that in order to reuse the input in a composable way, it also needs to be run through an extractor~\cite{CK2} (for a discussion of why it is important to do so and a few more composability-related issues, see Appendix~\ref{app:compos}).

We could also consider an adaptation of Protocol~\ref{prot:spotcheck} in which the input randomness is recycled, forming Protocol~\ref*{prot:spotcheck}$'$ from Protocol~\ref{prot:spotcheck} by replacing Step~\ref{st:7} by
\begin{enumerate}
\item[\ref*{st:7}$'$.] Process the concatenation of all the inputs and outputs with a quantum-proof strong extractor $\Ext$ to yield $\Ext({\bf ABXY},{\bf R})$, where ${\bf R}$ is a random seed for the extractor. Since a strong extractor is used, the final outcome can be taken to be the concatenation of ${\bf R}$ and $\Ext({\bf ABXY},{\bf R})$.
\end{enumerate}
In this case, as the number of rounds, $n$, increases the advantage gained by this modification decreases, becoming negligible asymptotically. This is because as $n$ increases, the value of $\gamma$ required to give the same overall security tends to zero, and hence the amount of input randomness required becomes negligible. Note that recycling the input randomness in Protocol~\ref{prot:biased} in the case where $\zeta^A=\zeta^B=1/2$ is equivalent to Protocol~\ref{prot:nonspotcheck}. 

Like Protocol~\ref{prot:biased}, Protocol~\ref{prot:nonspotcheck} also allows the locality loophole to be closed if on each round $i$, the random choice $X_i$ is space-like separated from the output $B_i$ and the random choice $Y_i$ is space-like separated from the output $A_i$.

\begin{figure}
\includegraphics[width=\textwidth]{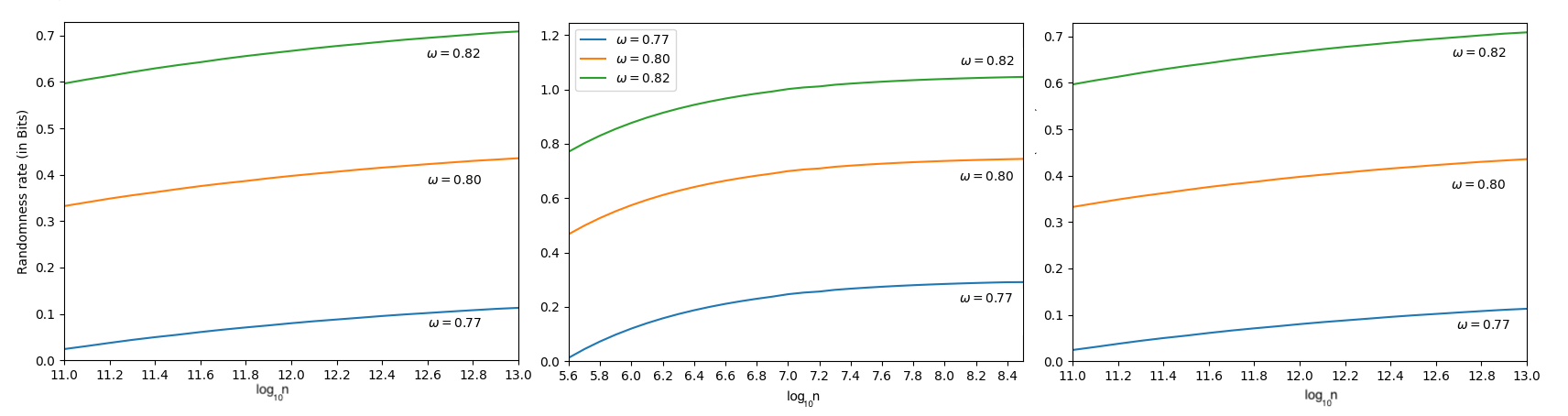}\\
\phantom{m}(a)\hspace{16em}(b)\hspace{16em}(c)
\caption{Graphs of the net rate of certifiable randomness according to the EAT for (a) the spot checking protocol (Protocol~\ref{prot:spotcheck}), (b) the protocol with recycled input randomness (Protocol~\ref{prot:nonspotcheck}), and (c) the protocol with biased local random number generators (Protocol~\ref{prot:biased}), showing the variation with the number of rounds for three different scores, $\score$.  The error parameters used were $\epsilon_S=3.09\times10^{-12}$ and $\epsilon_C=10^{-6}$. For each point on the curve (a) an optimization over $\gamma$ was performed to maximize the randomness; similarly, the values of $\zeta^A=\zeta^B$ were optimized over to generate the curves in (c).}
\label{fig:EAT_n}
\end{figure}

\begin{figure}
\includegraphics[width=\textwidth]{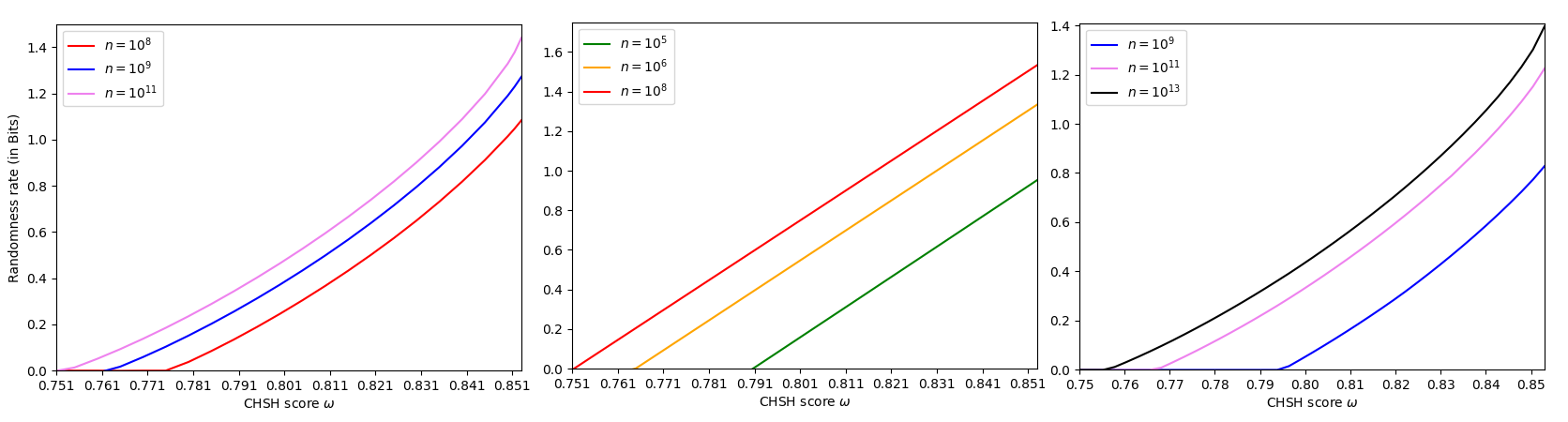} \\
\phantom{m}(a)\hspace{16em}(b)\hspace{16em}(c)
\caption{Graphs of the net rate of certifiable randomness according to the EAT for (a) the spot checking protocol (Protocol~\ref{prot:spotcheck}), (b) the protocol with recycled input randomness (Protocol~\ref{prot:nonspotcheck}), and (c) the protocol with biased local random number generators (Protocol~\ref{prot:biased}), showing the variation with the CHSH score $\score$. The round numbers, $n$, are indicated in the legend. The error parameters used were $\epsilon_S=3.09\times10^{-12}$ and $\epsilon_C=10^{-6}$. As in Figure~\ref{fig:EAT_n}, the values of $\gamma$ (for (a)) and $\zeta^A=\zeta^B$ (for (c)) were optimized over for each point.}
\label{fig:EAT_score}
\end{figure}

In each of the protocols, the parameter $\delta$ should be chosen depending on the desired completeness error. For the spot-checking protocol, the relation between the two is discussed in~\cite[Supplementary Information I~D]{LLR&}.  The analysis there can be applied to the protocol with recycled input randomness by setting $\gamma=1$ and the protocol with biased local random number generators is discussed in Appendix~\ref{app:comp}. The chosen soundness error affects the length of the extractor output, and, if chosen too small the output length becomes zero (see Appendix~\ref{app:EAT} for further discussion).

Figures~\ref{fig:EAT_n} and~\ref{fig:EAT_score} show how the amount of certifiable randomness varies with the score, $\score$, and round number, $n$. Note that in the cases where the rate curves of Figure~\ref{fig:rates} have linear sections, they are linear for most of their ranges.  Extending the linear part to the full range of quantum scores makes it easier to use the EAT while only resulting in a small drop in rate for scores close to the maximum quantum value. Note that, as mentioned above, strictly the numerical curves we gave for the von Neumann entropy are upper bounds; the curves in Figures~\ref{fig:EAT_n} and~\ref{fig:EAT_score} are generated under the assumption that these upper bounds are tight. [In the case of the spot-checking protocol (Protocol~\ref{prot:spotcheck}) we could use our lower bound instead. This would result in a small down-shifting of the curves, but mean that the bounds are provably reliable.]

To demonstrate the increased practicality of the two-sided curves, we use the parameters from a recent experiment~\cite{LLR&} with Protocol~\ref{prot:spotcheck}. There a score of just over $0.752$ was obtained, for which it would require about $9\times10^{10}$ rounds to achieve expansion using Protocol~\ref{prot:spotcheck} with $\gamma=3.383\times10^{-4}$, $\epsilon_S=3.09\times10^{-12}$, $\epsilon_C=10^{-6}$ and taking the one-sided randomness~\cite{LLR&}.  Using Protocol~\ref{prot:nonspotcheck} instead, and taking the two-sided randomness for the same score and error parameters (assuming Conjecture~\ref{conj:ABgXYE} holds) allows expansion for $n\gtrsim 8\times10^7$, significantly increasing the practicality. For instance, the main experiment of~\cite{LLR&} was based on a spot-checking protocol and took $19.2$ hours; the use of Protocol~\ref{prot:nonspotcheck} instead would allow the same amount of expansion in about $60$ seconds (this time holds under the assumption that the same repetition rate of the experiment can be met in the non-spot checking protocol\footnote{In some experiments, the rate at which we can switch between the two measurements is relatively slow, and hence when using Protocol~\ref{prot:nonspotcheck}, where switching is required on most rounds, the switching rate dominates, slightly increasing the time.}). Protocol~\ref{prot:biased}, however, produces lower randomness rates compared to the spot-checking protocol. This is partly because more input randomness is required, and also because the completeness error has a worse behaviour. Protocol~\ref{prot:biased} is hence useful when inputs are not recycled and when closing the locality loophole is desirable. 

When discussing randomness expansion we have considered the figure of merit to be the amount of expansion per entangled pair shared.  An alternative figure of merit is the ratio of the final randomness to the initial randomness, i.e., here we are considering how much randomness we can get from a given amount of initial randomness.  For the latter figure of merit, Protocol~\ref{prot:nonspotcheck} is no longer optimal, since the amount of expansion cannot exceed the amount of input randomness.  For the other two protocols the ratio of output randomness to input randomness can be made much higher by taking either $\gamma$ or $\zeta^A\zeta^B$ to be small.

\section{Discussion}
In this paper, we have given numerical bounds on various conditional von Neumann entropies that are relevant for CHSH-based device-independent protocols and discussed when each can be applied.  We have investigated their implications using explicit protocols, comparing the finite statistics rates using the EAT, showing use of two-sided randomness has the potential to make a big difference. We also looked at protocols beyond the usual spot checking type.  The first removes the spot checking to allow expansion while closing the locality loophole, and the second recycles the input randomness, so allowing expansion while performing a CHSH test on every round.

It remains an open question to find an analytic form for $F_{AB|X=0,Y=0,E}$, $F_{A|E}$ and $F_{AB|E}$.  Since the curves $F_{A|E}$ and $F_{AB|E}$ are linear for all but the very highest (experimentally least achievable) scores, in these cases not much is lost by extending the line to all scores forming a lower bound that tightly covers all of the experimentally relevant cases.  On the other hand $F_{AB|X=0,Y=0,E}$ is a convex curve throughout and hence a tight analytic form would be particularly useful in this case. Our initial analysis suggests that form of the parameters achieving the optima for these is sufficiently complicated that any analytic expression would not be compact. A reasonably tight analytic lower bound for $F_{AB|X=0,Y=0,E}$ could also be useful for theoretical analysis. Note also that the bound $F_{A|E}\geq F_{A|XYE}$ appears to be fairly tight (see Figure~\ref{fig:rates}(a)) so the analytic form for $F_{A|XYE}$ can be used to bound $F_{A|E}$ with little loss.  Another open problem is to find a concrete scenario in which $F_{AB|E}$ is directly useful.

The use of Jordan's lemma in this work prevents the techniques used being extended to general protocols, and finding improved ways to bound the conditional von Neumann entropy numerically in general cases remains of interest. For example, protocols that use three inputs for one party can allow up to 2 bits of randomness per entangled pair (see, e.g.~\cite{BRC}), and a way to tightly lower bound the von Neumann entropy in this case would further ease the experimental burden required to demonstrate DIQKD in the lab.\bigskip

\noindent{\bf Additional note:} While completing this work a new paper appeared~\cite{TSBSRSL}, which also found numerical curves for $H(A|XYE)$ and used them with some new protocols to generate improved rates for DIQKD.

\acknowledgements
We thank Erik Woodhead for useful comments and Peter Brown for sharing his data used in Figure~\ref{fig:rates_conj}. This work was supported by the Quantum Communications Hub of the UK Engineering and Physical Sciences Research Council (EPSRC) (grant nos.\ EP/M013472/1 and EP/T001011/1).

\appendix
\section{Additional graphs}\label{app:graphs}
Figure~\ref{fig:gXYE}(a) gives the graphs of $F_{AB|XYE}(\score)$ and $G_{AB|XYE}(\score)$, while Figure~\ref{fig:gXYE}(b) shows those for $F_{A|XYE}(\score)$ and $G_{A|XYE}(\score)$. In each case the $G$ graphs have a concave and convex part and the $F$ graphs are formed by taking the convex lower bound. For these cases the points at which the tangents are taken are $\score^*_{AB|XYE}\approx0.8440$ and $\score^*_{A|XYE}\approx0.8470$.

Figure~\ref{fig:gE}(a) gives the graphs of $F_{AB|E}(\score)$ and $G_{AB|E}(\score)$, while Figure~\ref{fig:gE}(b) shows those for $F_{A|XYE}(\score)$ and $G_{A|XYE}(\score)$. Again, in each case the $G$ graphs have a concave and convex part and the $F$ graphs are formed by taking the convex lower bound.  For these cases the points at which the tangents are taken are $\score^*_{A|E}\approx0.8505$ and $\score^*_{AB|E}\approx0.8523$.

\begin{figure}
\includegraphics[scale=0.50]{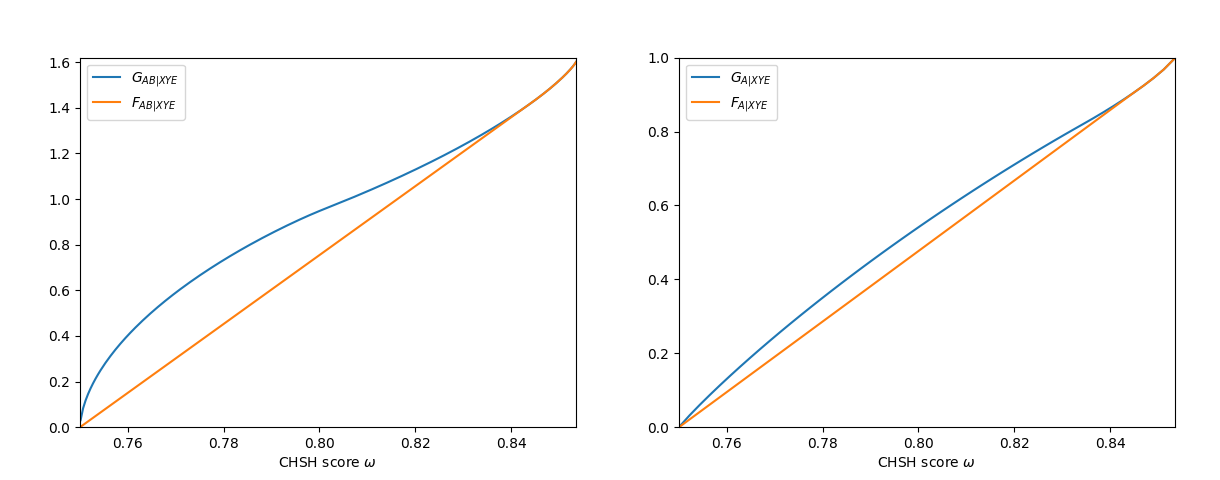} \\
(a)\hspace{24em}(b)
\caption{ (a) Two-sided and (b) one-sided entropy curves conditioned on $X$, $Y$ and $E$ with uniform input distribution.}
\label{fig:gXYE}
\end{figure}

\begin{figure}
\includegraphics[scale=0.50]{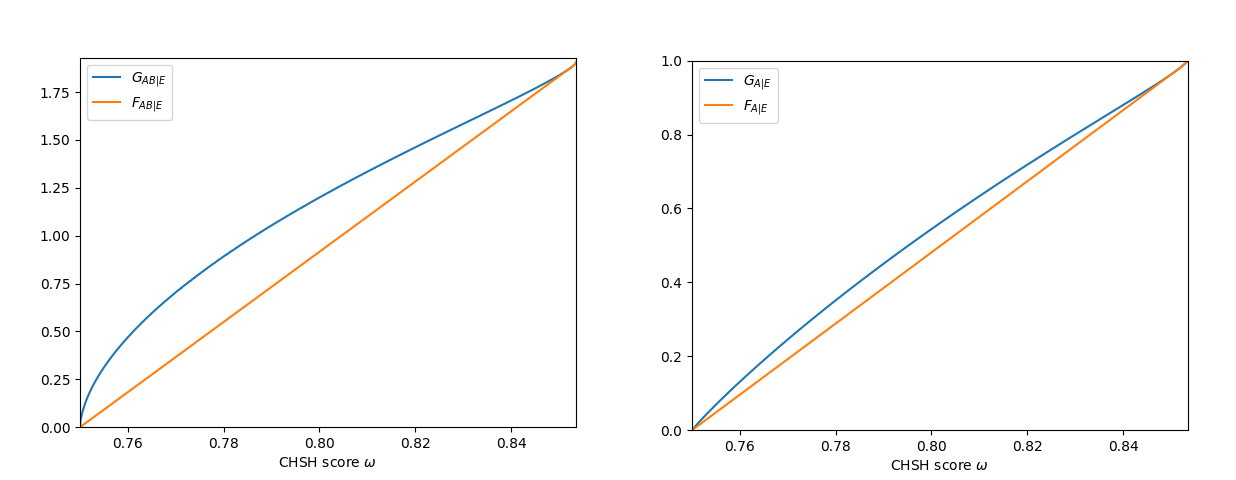} \\
(a)\hspace{24em}(b)
\caption{(a) Two-sided and (b) one-sided entropy curves conditioned on $E$ with uniform input distribution.}
\label{fig:gE}
\end{figure}

\section{Simplifying the strategy}\label{app:Proof1}
Given a Hilbert space $\cH$, we use $\cP(\cH)$ to be the set of positive semi-definite operators on $\cH$, and $\cS(\cH)$ to be the set of density operators, i.e., elements of $\cP$ with trace 1. The pure states on $\cH$ (elements of $\cS(\cH)$ with rank 1) will be denoted $\cS_P(\cH)$. A POVM on $\cH$ is a set of positive operators $\{E_i\}_i$ with $E_i\in\cP(\cH)$  for all $i$ and $\sum_i E_i=\id_\cH$, where $\id_\cH$ is the identity operator on $\cH$. A projective measurement on $\cH$ is a POVM on $\cH$ where $E_i^2=E_i$ for all $i$. We define the Bell states
\begin{eqnarray}
\ket{\Phi_0}&=&\frac{1}{\sqrt{2}}\left(\ket{00}+\ket{11}\right)\label{bell1}\\
\ket{\Phi_1}&=&\frac{1}{\sqrt{2}}\left(\ket{00}-\ket{11}\right)\\
\ket{\Phi_2}&=&\frac{1}{\sqrt{2}}\left(\ket{01}+\ket{10}\right)\\
\ket{\Phi_3}&=&\frac{1}{\sqrt{2}}\left(\ket{01}-\ket{10}\right)\,,\label{bell4}
\end{eqnarray}
and use $\sigma_1=\ketbra{1}{0}+\ketbra{0}{1}$, $\sigma_2=\ii\ketbra{1}{0}-\ii\ketbra{0}{1}$ and $\sigma_3=\proj{0}-\proj{1}$ as the three Pauli operators.

In this section we make a series of simplifications of the form of the optimization. The argument given broadly follows the logic of~\cite{PABGMS} (see also~\cite{WAP} for an alternative).

\begin{definition}
  A \emph{single-round CHSH measurement strategy} is a tuple $(\cH_{A'},\cH_{B'},\{M_{a|x}\}_{x,a}, \{N_{b|y}\}_{y,b})$, where $\cH_{A'}$ and $\cH_{B'}$ are Hilbert spaces, and $\{M_{a|x}\}_a$ is a POVM on $\cH_{A'}$ for each $x\in\{0,1\}$ and likewise $\{N_{b|y}\}_b$ is a POVM on $\cH_{B'}$ for each $y\in\{0,1\}$. In the case that all the POVMs are projective we will call this a \emph{single-round CHSH projective measurement strategy}.
\end{definition}

\begin{definition}
  A \emph{single-round CHSH strategy} is a single-round CHSH measurement strategy together with a state $\rho_{A'B'E}\in\cH_{A'}\ot\cH_{B'}\ot\cH_E$, where $\cH_E$ is an arbitrary Hilbert space.
\end{definition}
Note that in a device-independent scenario, such a strategy can be chosen by the adversary.

\begin{definition}
  Given a single-round CHSH measurement strategy and a distribution $p_{XY}$ over the settings $X$ and $Y$, the \emph{associated CHSH channel} is defined by
$$\cN:\cS(\cH_{A'}\ot\cH_{B'})\to\cS(\cH_A\ot\cH_B\ot\cH_X\ot\cH_Y): \sigma\mapsto\sum_{abxy}p_{XY}(x,y)\proj{a}\ot\proj{b}\ot\proj{x}\ot\proj{y}\tr\left((M_{a|x}\ot N_{b|y})\sigma\right)\,,$$
where $\cH_A$, $\cH_B$, $\cH_X$ and $\cH_Y$ are two dimensional Hilbert spaces.
The union of the sets of associated CHSH channels for all single-round CHSH measurement strategies for some fixed input distribution $p_{XY}$ is denoted $\cC(p_{XY})$. The union of the sets of associated CHSH channels for all single-round CHSH projective measurement strategies is denoted $\cC_{\Pi}(p_{XY})$.
\end{definition}
The output of the associated CHSH channel is classical, and $\cH_A\ot\cH_B\ot\cH_X\ot\cH_Y$ stores the outcomes and the chosen measurements.  We will usually apply this channel to the $AB$ part of a tripartite system, giving
\begin{equation}\label{eq:post-meas}
(\cN\ot\cI_E)(\rho_{A'B'E})=\sum_{abxy}p_{XY}(x,y)p_{AB|xy}(a,b)\proj{a}\ot\proj{b}\ot\proj{x}\ot\proj{y}\ot\tau^{a,b,x,y}_E\,,
\end{equation}
where $\tau^{a,b,x,y}_E\in\cS(\cH_E)$ for each $a,b,x,y$ (it is the normalization of $\tr_{A'B'}\left((M_{a|x}\ot N_{b|y}\ot \id_E)\rho_{A'B'E}\right)$).

Note that the CHSH score, which we denote $\scorefunction((\cN\ot\cI_E)(\rho_{A'B'E}))$ does not depend on the distribution $p_{XY}$ of input settings.

We will be interested in optimization problems of the form
\begin{align}
  F(\score,p_{XY})=&\inf_{\cR}  \bar{H}((\cN\ot\cI_E)(\rho_{A'B'E}))\text{  where  }\label{opt_1}\\\nonumber
  &\cR=\{(\cN,\rho_{A'B'E}):\cN\in\cC(p_{XY}),\,\scorefunction((\cN\ot\cI_E)(\rho_{A'B'E}))=\score\}
\end{align}
where $\cH_E$ is an arbitrary Hilbert space and the spaces $\cH_{A'}$ and $\cH_{B'}$ are those from the chosen element of $\cC(p_{XY})$, i.e., the set $\cR(\score)$ runs over all possible dimensions of these spaces, and $\omega$ is some fixed real number.  Here $\bar{H}$ can be any one of the  following entropic quantities defined on the state $(\cN\ot\cI_E)(\rho_{A'B'E})$: $H(AB|X=0,Y=0,E)$, $H(AB|XYE)$, $H(AB|E)$, $H(A|X=0,Y=0,E)$, $H(A|XYE)$ or $H(A|E)$. We consider the family of optimizations in this work and many of the arguments that follow are independent of this choice.

\subsection{Reduction to projective measurements}
In this section, we conclude that there is no loss in generality in assuming that the devices perform projective measurements. More precisely, we prove the following lemma.
\begin{lemma}\label{lem:proj}
  The sets
  \begin{eqnarray*}
    \cT_1&:=&\{(\cN\ot\cI_E)(\rho_{A'B'E}):\cN\in\cC(p_{XY}), \rho_{A'B'E}\in\cS(\cH_{A'}\ot\cH_{B'}\ot\cH_E)\}\text{  and}\\
    \cT_2&:=&\{(\cN\ot\cI_E)(\rho_{A'B'E}):\cN\in\cC_\Pi(p_{XY}), \rho_{A'B'E}\in\cS(\cH_{A'}\ot\cH_{B'}\ot\cH_E)\}
    \end{eqnarray*}
are identical.
\end{lemma}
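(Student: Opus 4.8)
The plan is to prove the two inclusions separately. The inclusion $\cT_2\subseteq\cT_1$ is immediate, since every projective measurement is in particular a POVM, so $\cC_\Pi(p_{XY})\subseteq\cC(p_{XY})$ and hence $\cT_2\subseteq\cT_1$. The content of the lemma is the reverse inclusion $\cT_1\subseteq\cT_2$, for which I would use Naimark's dilation theorem, with the one twist needed to keep the underlying state fixed.

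First I would fix a single-round CHSH strategy with POVMs $\{M_{a|x}\}$, $\{N_{b|y}\}$ and a state $\rho_{A'B'E}$, giving an arbitrary element of $\cT_1$. I would adjoin fixed ancilla systems $\cH_{A''}$ and $\cH_{B''}$, each two-dimensional (one dimension per outcome) and prepared in a fixed state $\ket{0}$, enlarging the state to $\rho_{A'B'E}\ot\proj{0}_{A''}\ot\proj{0}_{B''}$; the new device spaces are $\cH_{A'}\ot\cH_{A''}$ and $\cH_{B'}\ot\cH_{B''}$. The crucial point is that the ancilla state is independent of the settings; what carries the setting dependence is the measurement. For each $x$ I would define the Stinespring isometry $V_x:\cH_{A'}\to\cH_{A'}\ot\cH_{A''}$ by $V_x\ket{\psi}=\sum_a\sqrt{M_{a|x}}\ket{\psi}\ot\ket{a}$, which satisfies $V_x^\dagger V_x=\sum_a M_{a|x}=\id$, and extend it to a unitary $U_x$ on $\cH_{A'}\ot\cH_{A''}$ (possible since the target space has dimension at least that of the source). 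The projective measurement for setting $x$ is then $\tilde M_{a|x}=U_x^\dagger(\id_{A'}\ot\proj{a}_{A''})U_x$, which is genuinely projective and sums to the identity, and depends on $x$ only through $U_x$, which is legitimate. I would treat the $B$ side identically.

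The verification step is to check that this projective strategy applied to the enlarged state reproduces exactly the same output of the CHSH channel as in~\eqref{eq:post-meas}. The computation reduces to the single identity $\bra{0}_{A''}\tilde M_{a|x}\ket{0}_{A''}=V_x^\dagger(\id\ot\proj{a})V_x=M_{a|x}$, obtained using $U_x\ket{0}_{A''}=V_x$, together with its analogue on the $B$ side. Consequently, for every operator $W$ on $\cH_{A'}\ot\cH_{B'}\ot\cH_E$,
$$\tr_{A'A''B'B''}\!\left[(\tilde M_{a|x}\ot\tilde N_{b|y}\ot\id_E)(W\ot\proj{0}_{A''}\ot\proj{0}_{B''})\right]=\tr_{A'B'}\!\left[(M_{a|x}\ot N_{b|y}\ot\id_E)W\right].$$
Taking $W=\rho_{A'B'E}$ shows that each conditional operator $\tau^{a,b,x,y}_E$ is unchanged, so the two post-measurement states agree term by term, establishing $\cT_1\subseteq\cT_2$ and hence equality.

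I expect the main obstacle to be conceptual rather than computational: the textbook Naimark dilation realises a POVM by an isometry acting on the state, which would make the dilated state depend on the setting and is therefore not of the required form, where the state is fixed and only the measurements vary. The resolution, which is the key idea, is to push the setting-dependent isometry into a setting-dependent unitary absorbed into the projective measurement operators, while the adjoined ancilla stays in the fixed state $\ket{0}$. Once this is set up the remaining identities are routine. A minor point worth recording is that a common two-dimensional ancilla suffices for both values of $x$ (and of $y$), since all the POVMs here are binary.
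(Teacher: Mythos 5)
Your proposal is correct and follows essentially the same route as the paper: the trivial inclusion $\cT_2\subseteq\cT_1$, and for the converse a Naimark dilation in which a fixed ancilla $\proj{0}_{A''}\ot\proj{0}_{B''}$ is adjoined to the state while the setting-dependent isometry $V_x=\sum_a\sqrt{M_{a|x}}\ot\ket{a}$ is absorbed into a unitary defining the projectors $\tilde M_{a|x}=U_x^\dagger(\id\ot\proj{a})U_x$, so that the post-measurement state~\eqref{eq:post-meas} is reproduced term by term. The key observation you flag --- that the dilation must leave the state setting-independent and push all setting dependence into the projective measurements --- is exactly the point of the paper's construction as well.
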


This is a corollary of Naimark's theorem, which we state in the following way.
\begin{theorem}[Naimark's theorem]\label{thm:naimark}
  Let $\{E_i\}_i$ be a POVM on $\cH$. There exists a Hilbert space $\cH'$ and a projective measurement $\{\Pi_i\}_i$ on $\cH\ot\cH'$ such that for any $\rho\in\cS(\cH)$
  $$\sum_i\proj{i}\tr(\rho E_i)=\sum_i\proj{i}\tr(\Pi_i(\rho\ot\proj{0}))\,.$$
\end{theorem}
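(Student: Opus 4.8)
The plan is to prove Naimark's theorem by an explicit Stinespring-type dilation: I will construct the required projective measurement on $\cH\ot\cH'$ from the POVM by first building a dilation isometry and then converting it into the ancilla-prepared form demanded by the statement via a single unitary. Throughout I assume the POVM has finitely many outcomes $\{E_i\}_{i=1}^n$, which is all that is needed for the CHSH application. Since each $E_i\in\cP(\cH)$ is positive semi-definite, I can write $E_i=M_i^\dagger M_i$ with $M_i=\sqrt{E_i}$, so that the completeness relation $\sum_i E_i=\id_\cH$ becomes $\sum_i M_i^\dagger M_i=\id_\cH$.

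First I would take $\cH'$ to be an $n$-dimensional Hilbert space with orthonormal basis $\{\ket{i}\}_{i=1}^n$ indexed by the outcomes, and fix an arbitrary unit vector $\ket{0}\in\cH'$ to play the role of the ancilla. I define the candidate dilation isometry $V\colon\cH\to\cH\ot\cH'$ by $V\ket{\psi}=\sum_i (M_i\ket{\psi})\ot\ket{i}$. The completeness relation gives $\bra{\psi}V^\dagger V\ket{\psi}=\sum_i\bra{\psi}M_i^\dagger M_i\ket{\psi}=\braket{\psi}{\psi}$, so $V$ is indeed an isometry. Recording the adjoint action $V^\dagger(\ket{\phi}\ot\ket{k})=M_k^\dagger\ket{\phi}$, a short computation then shows $V^\dagger(\id_\cH\ot\proj{i})V=M_i^\dagger M_i=E_i$, i.e.\ the sharp measurement $\{\id_\cH\ot\proj{i}\}_i$ pulled back through $V$ reproduces the POVM.

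The remaining step is to pass from the isometry $V$ to the stated ancilla-prepared form. Let $W\colon\cH\to\cH\ot\cH'$ be the embedding $W\ket{\psi}=\ket{\psi}\ot\ket{0}$, which is also an isometry. Since $V$ and $W$ share the same finite-dimensional domain and codomain, their ranges have equal dimension, as do the orthogonal complements of the ranges in $\cH\ot\cH'$; hence there is a unitary $U$ on $\cH\ot\cH'$ with $UW=V$, obtained by sending $W\ket{\psi}\mapsto V\ket{\psi}$ and acting as any fixed unitary between the two orthogonal complements. I then set $\Pi_i:=U^\dagger(\id_\cH\ot\proj{i})U$. These are orthogonal projectors because conjugation by a unitary preserves projectors, and $\sum_i\Pi_i=U^\dagger(\id_\cH\ot\sum_i\proj{i})U=\id_{\cH\ot\cH'}$ since $\{\ket{i}\}$ is a complete basis of $\cH'$, so $\{\Pi_i\}_i$ is a genuine projective measurement. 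Finally, for any $\rho\in\cS(\cH)$ I compute $\tr(\Pi_i(\rho\ot\proj{0}))=\tr(\rho\,W^\dagger\Pi_i W)=\tr(\rho\,V^\dagger(\id_\cH\ot\proj{i})V)=\tr(\rho E_i)$; multiplying by $\proj{i}$ and summing over $i$ yields the claimed identity.

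I expect the main obstacle to be the clean justification of the unitary $U$ linking $V$ and $W$, together with the bookkeeping that keeps $\{\ket{i}\}$ a complete orthonormal basis of $\cH'$ (so that the $\Pi_i$ sum to the identity) while simultaneously reserving a fixed ancilla vector $\ket{0}$. In finite dimension the existence of $U$ is immediate from the equality of the deficiency dimensions of $V$ and $W$, the only care being to confirm the codomain is large enough, which holds since $\dim(\cH\ot\cH')=n\dim\cH\geq\dim\cH$. The two algebraic identities $V^\dagger(\id_\cH\ot\proj{i})V=E_i$ and $W^\dagger\Pi_i W=E_i$ are routine once the adjoint actions on product vectors are recorded.
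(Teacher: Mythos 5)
Your proof is correct and follows essentially the same route as the paper's: both construct the isometry $V=\sum_i\sqrt{E_i}\ot\ket{i}$, extend it to a unitary $U$ satisfying $U(\ket{\psi}\ot\ket{0})=V\ket{\psi}$, and define $\Pi_i=U^\dagger(\id\ot\proj{i})U$. The only difference is that you spell out (via the dimension count on range complements, in the finite-dimensional setting) the existence of the unitary extension, which the paper simply asserts.
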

\begin{proof}
  We can directly construct this measurement as follows.  Consider the isometry $V:\cH\to\cH\ot\cH'$ given by $V=\sum_i\sqrt{E_i}\ot\ket{i}$, and let $U$ be the extension of $V$ to a unitary with the property that $U(\ket{\psi}\ot\ket{0})=\sum_i\sqrt{E_i}\ket{\psi}\ot\ket{i}$ for any $\ket{\psi}\in\cH$.  This construction ensures that the channels
  \begin{eqnarray*}
    \cE:\rho&\mapsto&\sum_i\proj{i}\tr(E_i\rho)\text{  and}\\
\cE':\rho&\mapsto&\sum_i\proj{i}\tr\left((\id\ot\proj{i})U(\rho\ot\proj{0})U^\dagger\right)
  \end{eqnarray*}
  are identical.  The second of these can be rewritten
  $$\rho\mapsto\sum_i\proj{i}\tr\left(\Pi_i(\rho\ot\proj{0})\right)\,,$$
where we take $\Pi_i=U^\dagger(\id\ot\proj{i})U$, as required.
\end{proof}

\begin{proof}[Proof of Lemma~\ref{lem:proj}]
  By definition $\cT_2\subseteq\cT_1$. For the other direction, consider a state $\rho_{A'B'E}$ and POVMs $\{M_{a|x}\}_{a,x}$ and $\{N_{b|y}\}_{b,y}$ forming a single-round CHSH strategy in $\cT_1$.   We use the construction in the proof of Theorem~\ref{thm:naimark} to generate the projectors $\Pi^A_{a|x}$ and $\Pi^B_{b|y}$ as Naimark extensions of the POVMs. Instead of creating the state $\rho_{A'B'E}$, the state $\rho_{A'B'E}\ot\proj{0}_{A''}\ot\proj{0}_{B''}$ is created instead, where the projectors $\Pi^A_{a|x}$ act on $A'A''$ and $\Pi^B_{b|y}$ act on $B'B''$. Since the latter is a strategy in $\cT_2$ leading to the same post-measurement state~\eqref{eq:post-meas}, we have $\cT_1\subseteq\cT_2$, which completes the proof.
\end{proof}

\subsection{Reduction to convex combinations of qubit strategies}\label{app: convex combinations of qubit strategies}
This is a consequence of Jordan's lemma~\cite{Jordan} and is a special feature that applies only because the Bell inequality has two inputs and two outputs for each party.

\begin{lemma}[Jordan's lemma]\label{lem:Jordan}
Let $A_1$  and $A_2$ be two Hermitian operators on $\cH$ with eigenvalues $\pm1$, then we can decompose $\cH=\bigoplus_{\alpha}\cH_{\alpha}$ such that $A_1$ and $A_2$ preserve the subspaces $\cH_\alpha$, and where each $\cH_\alpha$ has dimension at most 2.
\end{lemma}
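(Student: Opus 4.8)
The plan is to prove Jordan's lemma by collapsing the two involutions into a single unitary whose eigenspaces organise the whole decomposition. First I would record the elementary consequences of the hypotheses: since $A_1$ and $A_2$ are Hermitian with eigenvalues $\pm1$, each satisfies $A_i^2=\id$, so that $A_i=A_i^\dagger=A_i^{-1}$ is unitary as well as Hermitian. I then set $U=A_1A_2$, which is unitary, and compute $A_iUA_i=U^\dagger$ for $i=1,2$ (indeed $A_1UA_1=A_2A_1=(A_1A_2)^\dagger$, and likewise for $A_2$). In other words, conjugation by either $A_i$ sends $U$ to its inverse.

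Next I would invoke the spectral theorem to write $U=\sum_\lambda \lambda P_\lambda$ as an orthogonal sum over its eigenprojectors $P_\lambda$, with eigenspaces $\cH_\lambda$ and $|\lambda|=1$. From $UA_i=A_iU^\dagger$ one gets $A_iP_\lambda A_i=P_{\bar\lambda}$, so each $A_i$ maps $\cH_\lambda$ onto $\cH_{\bar\lambda}$. For the real eigenvalues $\lambda=\pm1$ this means $A_i$ preserves $\cH_\lambda$, and there the identity $A_1A_2=\lambda\id$ forces $A_2=\lambda A_1$; hence $A_1$ and $A_2$ commute on $\cH_\lambda$ and it splits into one-dimensional common eigenspaces. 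For a non-real eigenvalue $\lambda$ I would pair $\cH_\lambda$ with $\cH_{\bar\lambda}$: given $\ket{v}\in\cH_\lambda$, the relation $A_1A_2\ket{v}=\lambda\ket{v}$ gives $A_2\ket{v}=\lambda A_1\ket{v}$, while $A_1\ket{v}\in\cH_{\bar\lambda}$ gives $A_2A_1\ket{v}=\bar\lambda\ket{v}$ and $A_1^2\ket{v}=\ket{v}$. Thus $\mathrm{span}\{\ket{v},A_1\ket{v}\}$ is a subspace of dimension at most two that is invariant under both $A_1$ and $A_2$, which is exactly the building block we want.

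To assemble these into an honest orthogonal direct sum I would fix an orthonormal basis $\{\ket{v_k}\}$ of $\cH_\lambda$ (for each conjugate pair with $\lambda$ non-real) and take the blocks $\mathrm{span}\{\ket{v_k},A_1\ket{v_k}\}$. The $\ket{v_k}$ are orthonormal in $\cH_\lambda$, the vectors $A_1\ket{v_k}$ are orthonormal in $\cH_{\bar\lambda}$ because $A_1$ is unitary, and $\cH_\lambda\perp\cH_{\bar\lambda}$ since they are distinct eigenspaces of the normal operator $U$; hence the blocks are mutually orthogonal and exhaust $\cH_\lambda\oplus\cH_{\bar\lambda}$. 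Collecting the one-dimensional blocks coming from $\lambda=\pm1$ together with the two-dimensional blocks from each conjugate pair yields the required decomposition $\cH=\bigoplus_\alpha\cH_\alpha$ with $\dim\cH_\alpha\le2$ and every $\cH_\alpha$ preserved by both $A_1$ and $A_2$.

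The substance here is bookkeeping rather than depth, and I expect the one genuine obstacle to be ensuring that the invariant two-dimensional blocks can be chosen mutually orthogonal, so that we obtain a true orthogonal decomposition rather than merely a spanning family of invariant subspaces; this is precisely what the orthogonality of distinct eigenspaces of $U$ together with the unitarity of $A_1$ provides. I am implicitly taking $\cH$ to be finite-dimensional (or, more generally, assuming $U$ admits a decomposition into genuine eigenspaces), which is the setting relevant to the device-independent application.
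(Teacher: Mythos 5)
Your argument is correct. Note, however, that the paper does not prove this lemma at all --- it is stated as a known result with a citation to Jordan's 1875 paper --- so there is no in-paper proof to compare against. Your route via the unitary $U=A_1A_2$, the intertwining relation $A_iUA_i=U^\dagger$, and the pairing of the eigenspaces $\cH_\lambda$ and $\cH_{\bar\lambda}$ is the standard modern proof, and you handle the one point that is usually glossed over (choosing the two-dimensional invariant blocks to be mutually orthogonal, via orthogonality of distinct eigenspaces of the normal operator $U$ and unitarity of $A_1$) correctly. Your closing caveat is also apt: the argument as written needs $U$ to have a genuine eigenspace decomposition, i.e.\ finite dimension or at least pure point spectrum; in the fully general Hilbert-space setting one would need a direct-integral version, but for the application in the paper this restriction is harmless.
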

\begin{corollary}\label{cor:proj}
Let $\Pi_1$ and $\Pi_2$ be two projections on $\cH$. We can decompose $\cH=\bigoplus_{\alpha}\cH_{\alpha}$ such that $\Pi_1$, $\id-\Pi_1$, $\Pi_2$ and $\id-\Pi_2$ preserve the subspaces $\cH_\alpha$, and where each $\cH_\alpha$ has dimension at most 2.
\end{corollary}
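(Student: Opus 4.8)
The plan is to deduce this directly from Jordan's lemma (Lemma~\ref{lem:Jordan}) by replacing each projection with an associated reflection operator. The key observation is that a projection and a Hermitian operator with eigenvalues $\pm1$ are related by an affine transformation that leaves the block structure of any direct-sum decomposition untouched, so the decomposition supplied by Jordan's lemma for the reflections will serve for the projections as well.

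Concretely, I would first set $A_i = 2\Pi_i - \id$ for $i\in\{1,2\}$. Since $\Pi_i$ is a projection its spectrum lies in $\{0,1\}$, so each $A_i$ is Hermitian with spectrum contained in $\{-1,+1\}$, and the pair $A_1,A_2$ therefore satisfies the hypothesis of Lemma~\ref{lem:Jordan}. Applying that lemma yields a decomposition $\cH=\bigoplus_\alpha\cH_\alpha$ with each $\dim\cH_\alpha\leq 2$ on which both $A_1$ and $A_2$ act block-diagonally, i.e.\ $A_i\cH_\alpha\subseteq\cH_\alpha$ for every $\alpha$.

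Finally, I would invert the substitution, writing $\Pi_i=\tfrac{1}{2}(A_i+\id)$ and $\id-\Pi_i=\tfrac{1}{2}(\id-A_i)$. Because each $\cH_\alpha$ is invariant under both $A_i$ and $\id$, and invariance of a subspace is preserved under linear combinations of operators, each $\cH_\alpha$ is also invariant under $\Pi_i$ and under $\id-\Pi_i$. This is exactly the required conclusion, so the very same decomposition $\cH=\bigoplus_\alpha\cH_\alpha$ witnesses the claim.

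I do not expect any genuine obstacle here: the entire content is the elementary remark that $\Pi\mapsto 2\Pi-\id$ is a bijection between projections and $\pm1$-reflections that commutes with restriction to invariant subspaces. The only point worth a moment's check is that the degenerate cases $\Pi_i\in\{0,\id\}$ cause no difficulty, which they do not, since then $A_i=\mp\id$ still has spectrum in $\{-1,+1\}$ and Lemma~\ref{lem:Jordan} applies verbatim.
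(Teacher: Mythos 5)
Your proposal is correct and follows exactly the paper's own route: pass to the reflections $A_i=2\Pi_i-\id$, invoke Jordan's lemma, and recover invariance under $\Pi_i$ and $\id-\Pi_i$ from invariance under $A_i$ and $\id$. The only (harmless) difference is that you spell out the degenerate cases $\Pi_i\in\{0,\id\}$, which the paper leaves implicit.
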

\begin{proof}
Apply Jordan's lemma to the Hermitian operators $A_1=2\Pi_1-\id$ and $A_2=2\Pi_2-\id$ with eigenvalues $\pm1$, and consider $\ket{\psi}\in\cH_\alpha$ for some $\alpha$.  By construction $A_1\ket{\psi}\in\cH_\alpha$ from which it follows that $\Pi_1\ket{\psi}\in\cH_\alpha$, and hence also $(\id-\Pi_1)\ket{\psi}\in\cH_\alpha$. Thus, $\Pi_1$ and $\id-\Pi_1$ preserve the subspace; likewise $\Pi_2$ and $\id-\Pi_2$.
\end{proof}

This implies the following
\begin{lemma}
  Let $\cC_{2\times2}(p_{XY})$ be the set of CHSH channels associated with the single-round CHSH projective measurement strategies where each of the four projectors $M_{a|x}$ is block diagonal with $2\times2$ blocks, and each of the four projectors $N_{b|y}$ is block diagonal with $2\times2$ blocks.
  The sets
  \begin{eqnarray*}
    \cT_2&:=&\{(\cN\ot\cI_E)(\rho_{A'B'E}):\cN\in\cC_\Pi(p_{XY}), \rho_{A'B'E}\in\cS(\cH_{A'}\ot\cH_{B'}\ot\cH_E)\}\text{  and}\\
    \cT_3&:=&\{(\cN\ot\cI_E)(\rho_{A'B'E}):\cN\in\cC_{2\times2}(p_{XY}), \rho_{A'B'E}\in\cS(\cH_{A'}\ot\cH_{B'}\ot\cH_E)\}
    \end{eqnarray*}
are identical.
\end{lemma}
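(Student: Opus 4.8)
The plan is to establish the two inclusions $\cT_3 \subseteq \cT_2$ and $\cT_2 \subseteq \cT_3$ separately, the first being immediate and the second following from Corollary~\ref{cor:proj}. The inclusion $\cT_3 \subseteq \cT_2$ holds by definition: every channel in $\cC_{2\times2}(p_{XY})$ is associated with a projective measurement strategy and hence lies in $\cC_\Pi(p_{XY})$, so the corresponding post-measurement states already belong to $\cT_2$.

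For the reverse inclusion I would start from an arbitrary element of $\cT_2$, given by a projective strategy with operators $\{M_{a|x}\}$ on $\cH_{A'}$, $\{N_{b|y}\}$ on $\cH_{B'}$, and a state $\rho_{A'B'E}$. The crucial observation is that, because each party has only two binary measurements, party A's measurements are fully specified by the two projections $\Pi_1^A := M_{0|0}$ and $\Pi_2^A := M_{0|1}$ (using $M_{1|x} = \id - M_{0|x}$), and similarly for party B. Applying Corollary~\ref{cor:proj} to $\Pi_1^A$ and $\Pi_2^A$ yields a decomposition $\cH_{A'} = \bigoplus_\alpha \cH_{A',\alpha}$ into subspaces of dimension at most two that are simultaneously preserved by all four operators $M_{0|0}, M_{1|0}, M_{0|1}, M_{1|1}$; performing the analogous step on $\cH_{B'}$ gives a corresponding decomposition there. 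Choosing bases adapted to these decompositions makes every $M_{a|x}$ and every $N_{b|y}$ block diagonal, which is exactly the structure required by $\cC_{2\times2}(p_{XY})$.

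The only gap to fill is that Corollary~\ref{cor:proj} produces blocks of dimension \emph{at most} two, whereas $\cC_{2\times2}(p_{XY})$ asks for blocks of dimension exactly two. I would close this by pairing up the one-dimensional invariant subspaces: on any such block each projector acts as a scalar in $\{0,1\}$, so grouping two one-dimensional blocks together yields a genuine two-dimensional block on which all four operators are diagonal and hence still projective. If the number of one-dimensional blocks is odd, one extra dummy dimension on which $\rho_{A'B'E}$ has no support can be appended to $\cH_{A'}$ (or $\cH_{B'}$); since both $\cT_2$ and $\cT_3$ range over Hilbert spaces of arbitrary dimension, this enlargement is harmless.

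Finally, I would emphasize that none of these steps alters the measurement operators or the state — they merely exhibit the same objects in a block-diagonal basis — so the post-measurement state~\eqref{eq:post-meas} is literally unchanged. Hence the reorganized strategy lies in $\cC_{2\times2}(p_{XY})$ and produces the very same element of $\cT_3$ that the original produced in $\cT_2$, establishing $\cT_2 \subseteq \cT_3$ and, combined with the first inclusion, the claimed equality. I expect the main (though minor) obstacle to be the bookkeeping that converts the ``at most two'' blocks into exact $2\times2$ blocks; all the conceptual content is already supplied by Jordan's lemma through Corollary~\ref{cor:proj}.
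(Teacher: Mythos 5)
Your proposal is correct and follows essentially the same route as the paper: apply Corollary~\ref{cor:proj} to $M_{0|0}$, $M_{0|1}$ (and likewise on $B'$) to obtain the block structure, then absorb any $1\times1$ blocks by pairing them or appending a dummy dimension unsupported by the state. The paper's proof is a condensed version of exactly this argument, including the same treatment of the one-dimensional blocks.
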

\begin{proof}
This follows by applying Corollary~\ref{cor:proj} to the projectors $M_{0|0}$ and $M_{0|1}$ to get the blocks on $\cH_{A'}$ and to the projectors $N_{0|0}$ and $N_{0|1}$ to get the blocks on $\cH_{B'}$. Although some of the blocks may be $1\times1$, we can collect these together and treat them as a $2\times2$ block, or add an extra dimension to the space (on which the state has no support) to achieve all $2\times2$ blocks.
\end{proof}
We can also make the state only have support on the $2\times2$ blocks.
\begin{lemma}
  The sets
  \begin{eqnarray*}
    \cT_3&:=&\{(\cN\ot\cI_E)(\rho_{A'B'E}):\cN\in\cC_{2\times2}(p_{XY}), \rho_{A'B'E}\in\cS(\cH_{A'}\ot\cH_{B'}\ot\cH_E)\}\text{  and}\\
    \cT_4&:=&\{(\cN\ot\cI_E)(\rho_{A'B'E}):\cN\in\cC_{2\times2}(p_{XY}), \rho_{A'B'E}\in\cS_{2\times2}(\cH_{A'}\ot\cH_{B'}\ot\cH_E)\}
    \end{eqnarray*}
are identical. Here $\cS_{2\times2}(\cH_{A'}\ot\cH_{B'}\ot\cH_E)$ is the subset of $\cS(\cH_{A'}\ot\cH_{B'}\ot\cH_E)$ such that $\rho_{A'B'E}\in\cS_{2\times2}(\cH_{A'}\ot\cH_{B'}\ot\cH_E)$ implies
  $$\rho_{A'B'E}=\sum_{\alpha,\beta}(\Pi^\alpha_{A'}\ot \Pi^\beta_{B'}\ot\id_E)\rho_{A'B'E}(\Pi^\alpha_{A'}\ot \Pi^\beta_{B'}\ot\id_E)\,,$$
where $\{\Pi^{\alpha}\}_\alpha$ are projectors onto the $2\times2$ diagonal blocks.
\end{lemma}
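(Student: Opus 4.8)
The plan is to prove the two inclusions separately. The inclusion $\cT_4\subseteq\cT_3$ is immediate, since $\cS_{2\times2}(\cH_{A'}\ot\cH_{B'}\ot\cH_E)\subseteq\cS(\cH_{A'}\ot\cH_{B'}\ot\cH_E)$ while the admissible channels are the same set in both cases. Hence all the content lies in establishing $\cT_3\subseteq\cT_4$.

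For that direction the idea is to introduce the block-dephasing map associated with the decompositions $\{\Pi^\alpha\}$ on $A'$ and $B'$. Given any $\rho_{A'B'E}\in\cS(\cH_{A'}\ot\cH_{B'}\ot\cH_E)$, I would set $\tilde\rho_{A'B'E}=\sum_{\alpha,\beta}(\Pi^\alpha_{A'}\ot\Pi^\beta_{B'}\ot\id_E)\rho_{A'B'E}(\Pi^\alpha_{A'}\ot\Pi^\beta_{B'}\ot\id_E)$ and first verify that $\tilde\rho_{A'B'E}$ is a legitimate element of $\cS_{2\times2}$. It is manifestly positive semi-definite (a sum of conjugations of a positive operator); it satisfies the defining block-diagonal identity because the $\Pi^\alpha$ are mutually orthogonal projectors, so applying the dephasing a second time reproduces $\tilde\rho_{A'B'E}$; and its trace equals that of $\rho_{A'B'E}$ because $\sum_\alpha\Pi^\alpha_{A'}=\id_{A'}$ and $\sum_\beta\Pi^\beta_{B'}=\id_{B'}$.

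The heart of the argument is to show that $\cN$ does not distinguish $\rho_{A'B'E}$ from $\tilde\rho_{A'B'E}$, i.e.\ $(\cN\ot\cI_E)(\rho_{A'B'E})=(\cN\ot\cI_E)(\tilde\rho_{A'B'E})$; this immediately places the post-measurement state of any $\cT_3$ strategy in $\cT_4$. Looking at~\eqref{eq:post-meas}, it suffices to compare the unnormalised conditional operators $\tr_{A'B'}((M_{a|x}\ot N_{b|y}\ot\id_E)\rho_{A'B'E})$ for the two states. Because $\cN\in\cC_{2\times2}(p_{XY})$, each $M_{a|x}$ and each $N_{b|y}$ is block diagonal in its $\{\Pi^\alpha\}$ decomposition, so $M_{a|x}\ot N_{b|y}\ot\id_E=\sum_{\alpha,\beta}(\Pi^\alpha_{A'}\ot\Pi^\beta_{B'}\ot\id_E)(M_{a|x}\ot N_{b|y}\ot\id_E)(\Pi^\alpha_{A'}\ot\Pi^\beta_{B'}\ot\id_E)$. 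Substituting this sandwiched form and then using that the partial trace $\tr_{A'B'}$ is cyclic with respect to operators of the form $W_{A'B'}\ot\id_E$, I would move the leftmost projector factor through the trace to the far right, so that the two projector blocks come to surround $\rho_{A'B'E}$ rather than the measurement operators. This rewrites the expression as $\tr_{A'B'}((M_{a|x}\ot N_{b|y}\ot\id_E)\tilde\rho_{A'B'E})$, giving the desired equality.

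The step I expect to be the main, if modest, obstacle is the partial-trace manipulation: unlike the full trace, $\tr_{A'B'}$ is cyclic only under operators acting trivially on $E$, so I would state and apply this restricted cyclicity carefully, checking on matrix elements and confirming that the projectors being moved indeed act as $\id_E$ on the $E$ factor (which they do by construction). Everything else---positivity, idempotence of the dephasing, and trace preservation---is routine. Once the conditional operators agree for every $(a,b,x,y)$, the full states in~\eqref{eq:post-meas} agree term by term, completing $\cT_3\subseteq\cT_4$ and hence the lemma.
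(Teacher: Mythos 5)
Your proposal is correct and follows essentially the same route as the paper: both use the block-diagonality of the measurement operators to write $M_{a|x}\ot N_{b|y}$ as a sum of projector-sandwiched terms and then cycle the projectors (which act trivially on $E$) through the partial trace onto the state, showing the post-measurement state is unchanged when $\rho_{A'B'E}$ is replaced by its block-dephased version. The extra checks you flag (positivity, idempotence, trace preservation, and the restricted cyclicity of $\tr_{A'B'}$) are routine and correctly handled.
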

\begin{proof}
Consider a state $\rho_{A'B'E}$ and sets of projectors $\{M_{a|x}\}_{a,x}$ and $\{N_{b|y}\}_{b,y}$ from the set $\cT_3$.
  For brevity, write $\Pi^{\alpha,\beta}=\Pi^\alpha_{A'}\ot \Pi^\beta_{B'}$. Then, since
  $$M_{a|x}\ot N_{b|y}=\sum_{\alpha,\beta}(\Pi^\alpha_{A'}\ot \Pi^\beta_{B'})(M_{a|x}\ot N_{b|y})(\Pi^\alpha_{A'}\ot \Pi^\beta_{B'})\,,$$
  we have
  \begin{eqnarray*}
    \tr_{A'B'}((M_{a|x}\ot N_{b|y}\ot\id)\rho_{A'B'E})&=&\tr_{A'B'}\left(\sum_{\alpha,\beta}(\Pi^{\alpha,\beta}\ot\id_E)(M_{a|x}\ot N_{b|y}\ot\id)(\Pi^{\alpha,\beta}\ot\id_E)\rho_{A'B'E}\right)\\
                                                     &=&\tr_{A'B'}\left(\sum_{\alpha,\beta}(M_{a|x}\ot N_{b|y}\ot\id)(\Pi^{\alpha,\beta}\ot\id_E)\rho_{A'B'E}(\Pi^{\alpha,\beta}\ot\id_E)\right)\\
    &=&\tr_{A'B'}((M_{a|x}\ot N_{b|y}\ot\id)\rho'_{A'B'E})\,,
  \end{eqnarray*}
where $\rho'_{A'B'E}=\sum_{\alpha,\beta}(\Pi^{\alpha,\beta}\ot\id_E)\rho_{A'B'E}(\Pi^{\alpha,\beta}\ot\id_E)$. Thus, if we replace $\rho_{A'B'E}$ by $\rho'_{A'B'E}$ we obtain the same post-measurement state~\eqref{eq:post-meas}. Hence $\cT_3\subseteq\cT_4$, and, since the other inclusion is trivial, $\cT_3=\cT_4$.
\end{proof}

\begin{lemma}
Let $\cN\in\cC_{2\times2}(p_{XY})$ and $\rho_{A'B'E}\in\cS_{2\times2}(\cH_{A'}\ot\cH_{B'}\ot\cH_E)$. The state $(\cN\ot\cI_E)(\rho_{A'B'E})$ can be formed as a convex combination of states $(\cN_\lambda\ot\cI_E)(\rho^\lambda_{A''B''E})$, where for each $\lambda$, the channel $\cN_\lambda$ is that associated with a single-round measurement strategy with two 2-dimensional Hilbert spaces and distribution $p_{XY}$.
\end{lemma}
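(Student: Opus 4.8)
The plan is to index the convex combination by pairs $\lambda=(\alpha,\beta)$, where $\alpha$ runs over the $2\times2$ diagonal blocks on $\cH_{A'}$ and $\beta$ over those on $\cH_{B'}$, and to show that each such block pair evolves independently under the channel. Writing $\cH^\alpha_{A'}=\Pi^\alpha_{A'}\cH_{A'}$ and $\cH^\beta_{B'}=\Pi^\beta_{B'}\cH_{B'}$, each is two-dimensional and may be identified with fixed qubit spaces $\cH_{A''}$, $\cH_{B''}$. The hypothesis $\rho_{A'B'E}\in\cS_{2\times2}$ says exactly that $\rho_{A'B'E}=\sum_{\alpha,\beta}\rho^{\alpha,\beta}_{A'B'E}$ with $\rho^{\alpha,\beta}_{A'B'E}=(\Pi^\alpha_{A'}\ot\Pi^\beta_{B'}\ot\id_E)\rho_{A'B'E}(\Pi^\alpha_{A'}\ot\Pi^\beta_{B'}\ot\id_E)$, each term supported on $\cH^\alpha_{A'}\ot\cH^\beta_{B'}\ot\cH_E$. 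Setting $p_\lambda=\tr(\rho^{\alpha,\beta}_{A'B'E})$ gives $\sum_\lambda p_\lambda=1$; after discarding the zero-weight blocks I would define the normalized qubit states $\rho^\lambda_{A''B''E}=\rho^{\alpha,\beta}_{A'B'E}/p_\lambda$.

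Next I would construct the qubit strategies. Since $\cN\in\cC_{2\times2}(p_{XY})$, each $M_{a|x}$ is block-diagonal and hence commutes with every $\Pi^\alpha_{A'}$; defining $M^\alpha_{a|x}=\Pi^\alpha_{A'}M_{a|x}\Pi^\alpha_{A'}$ (and analogously $N^\beta_{b|y}$), I would check these restrict to genuine projective measurements on the qubit blocks. Idempotency $(M^\alpha_{a|x})^2=M^\alpha_{a|x}$ follows from $M_{a|x}^2=M_{a|x}$ and $[M_{a|x},\Pi^\alpha_{A'}]=0$, while completeness $\sum_a M^\alpha_{a|x}=\Pi^\alpha_{A'}=\id_{\cH^\alpha_{A'}}$ follows from $\sum_a M_{a|x}=\id$. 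For each $\lambda$ this yields a single-round CHSH projective measurement strategy on two two-dimensional spaces, and I denote the channel associated with it (for the same distribution $p_{XY}$) by $\cN_\lambda$.

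Finally I would assemble the post-measurement state by direct computation. Using $M_{a|x}\Pi^\alpha_{A'}=M^\alpha_{a|x}$ together with the block support of each $\rho^{\alpha,\beta}_{A'B'E}$, the conditional side-information operators split block by block, $\tr_{A'B'}\left((M_{a|x}\ot N_{b|y}\ot\id_E)\rho_{A'B'E}\right)=\sum_{\alpha,\beta}\tr_{A''B''}\left((M^\alpha_{a|x}\ot N^\beta_{b|y}\ot\id_E)\rho^{\alpha,\beta}_{A'B'E}\right)$, with no cross-block contributions surviving because both the measurements and $\rho_{A'B'E}$ are block-diagonal. Substituting this into~\eqref{eq:post-meas} and writing $\rho^{\alpha,\beta}_{A'B'E}=p_\lambda\rho^\lambda_{A''B''E}$ then gives $(\cN\ot\cI_E)(\rho_{A'B'E})=\sum_\lambda p_\lambda(\cN_\lambda\ot\cI_E)(\rho^\lambda_{A''B''E})$, which is the claimed convex decomposition.

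The content here is bookkeeping rather than a conceptual hurdle, so I do not anticipate a genuine obstacle. The only points requiring care are verifying that the restricted operators stay idempotent and complete on each block—so that $\cN_\lambda$ really lies in the qubit channel set—and that the trace over $A'B'$ factorizes over blocks without interference. Both are immediate consequences of the commutation $[M_{a|x},\Pi^\alpha_{A'}]=0$ guaranteed by membership in $\cC_{2\times2}(p_{XY})$ together with the $\cS_{2\times2}$ support condition.
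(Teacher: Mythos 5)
Your proposal is correct and follows essentially the same route as the paper's proof: decompose the state into the $2\times2$ block pairs $(\alpha,\beta)$, restrict the block-diagonal measurements to each block via $M^\alpha_{a|x}=\Pi^\alpha_{A'}M_{a|x}\Pi^\alpha_{A'}$, observe that the trace factorizes with no cross-block terms, and reinterpret each block pair as a qubit strategy. The only additions are your explicit checks of idempotency, completeness, and the zero-weight blocks, which the paper leaves implicit.
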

\begin{proof}
Since $\rho_{A'B'E}\in\cS_{2\times2}(\cH_{A'}\ot\cH_{B'}\ot\cH_E)$ the $2\times2$ block structure means we can write
$\rho_{A'B'E}=\sum_{\alpha,\beta}p_{\alpha,\beta}\rho^{\alpha,\beta}_{A'B'E}$, where $p_{\alpha,\beta}\rho^{\alpha,\beta}_{A'B'E}=(\Pi^{\alpha,\beta}\ot\id_E)\rho_{A'B'E}(\Pi^{\alpha,\beta}\ot\id_E)$ and $\tr(\rho^{\alpha,\beta}_{A'B'E})=1$ for all $\alpha$ and $\beta$. Likewise, taking $M^\alpha_{a|x}=\Pi_{A'}^\alpha M_{a|x}\Pi_{A'}^\alpha$ and $N^\beta_{b|y}=\Pi^\beta_{B'} N_{b|y}\Pi^\beta_{B'}$ we can write $M_{a|x}=\sum_\alpha M^\alpha_{a|x}$ and $N_{b|y}=\sum_\beta N^\beta_{b|y}$. In terms of these we have
\begin{eqnarray*}
  \tr_{A'B'}((M_{a|x}\ot N_{b|y}\ot\id)\rho_{A'B'E})&=&\sum_{\alpha,\beta}p_{\alpha,\beta}\tr_{A'B'}((M^\alpha_{a|x}\ot N^\beta_{b|y}\ot\id)\rho^{\alpha,\beta}_{A'B'E})
\end{eqnarray*}
We can then associate a value of $\lambda$ with each pair $(\alpha,\beta)$, replace each $\rho^{\alpha,\beta}_{A'B'E}$ by a state on $A''B''E$ in which $A''$ and $B''$ are two-dimensional (the support of $\rho^{\alpha,\beta}_{A'B'}$ has dimension at most 4), and likewise replace the projectors by qubit projectors. In terms of these we have
\begin{equation*}
  (\cN\ot\cI_E)(\rho_{A'B'E})=\sum_\lambda p_\lambda(\cN_\lambda\ot\cI_E)(\rho^\lambda_{A''B''E})\,.\qedhere
  \end{equation*}
\end{proof}
In other words, any post-measurement state~\eqref{eq:post-meas} that can be generated in the general case, can also be generated if Eve sends a convex combination of two qubit states, and where the measurements used by the separated devices depend on the state sent. Eve could realise such a strategy in practice by using pre-shared randomness.  We can proceed to consider strategies in which qubits are shared between the two devices, and then consider the mixture of such strategies after doing so.

\subsection{Qubit strategies}
In this section we consider the single-round CHSH measurement strategies in which $\cH_{A'}$ and $\cH_{B'}$ are two-dimensional and the measurements are rank-1 projectors. Given a distribution $p_{XY}$ we use $\cC_{\Pi_1,2}(p_{XY})$ to denote the set of associated CHSH channels.  We restrict to rank-1 projectors because if one of the projectors is identity it is not possible to achieve a non-classical CHSH score, and the non-classical scores are the ones of interest.
\begin{lemma}
Consider a single-round CHSH measurement strategy for which one of the POVM elements is identity and let $\cN$ be the associated CHSH channel.  For any state $\rho_{A'B'}$ on which $\cN$ can act we have $\scorefunction(\cN(\rho_{A'B'}))\leq3/4$.
\end{lemma}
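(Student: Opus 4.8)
The plan is to show that an identity POVM element collapses one measurement setting into a deterministic output, after which the correlations become classical and the bound $3/4$ follows from the classical value of the CHSH game already quoted in the excerpt. First I would unpack the hypothesis: if some $M_{a^{*}|x^{*}}=\id$ (the case of an $N_{b|y}=\id$ being symmetric under swapping the two devices), then $\sum_a M_{a|x^{*}}=\id$ together with $M_{a|x^{*}}\geq 0$ forces $M_{a|x^{*}}=0$ for all $a\neq a^{*}$. Hence for setting $x^{*}$ the first device outputs $a^{*}$ deterministically, independently of $\rho_{A'B'}$, and the only genuine measurement remaining on that device is the POVM $\{M_{a|x'}\}_a$ for the other setting $x'=1-x^{*}$.

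The key step is to treat the outcome of this single genuine measurement as shared classical randomness. Concretely, I would set a hidden variable $\lambda=a$ with weight $p_{\lambda}=\tr((M_{a|x'}\ot\id)\rho_{A'B'})$ and associate with it the conditional state $\rho_{B'}^{\lambda}=\tr_{A'}((M_{a|x'}\ot\id)\rho_{A'B'})/p_{\lambda}$ on the second device. One then defines a local model in which the first device outputs $a^{*}$ on setting $x^{*}$ and $\lambda$ on setting $x'$, while the second device outputs $b$ with probability $\tr(N_{b|y}\rho_{B'}^{\lambda})$. I would then verify that this reproduces every conditional distribution: on the genuine setting the product $p_{\lambda}\tr(N_{b|y}\rho_{B'}^{\lambda})$ collapses back to $\tr((M_{a|x'}\ot N_{b|y})\rho_{A'B'})$, while on the deterministic setting summing over $\lambda$ recovers the correct marginal $\tr((\id\ot N_{b|y})\rho_{A'B'})$. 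Since these correlations are generated by classical shared randomness, they cannot beat the classical CHSH value, giving $\scorefunction(\cN(\rho_{A'B'}))\leq 3/4$.

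As a self-contained alternative I would give a short direct computation, assuming projective measurements (justified by Lemma~\ref{lem:proj}). Writing $\scorefunction=\tfrac12+\tfrac18\langle\hat S\rangle$ with the CHSH operator $\hat S=A_0(B_0+B_1)+A_1(B_0-B_1)$, the deterministic setting forces, say, $A_0=c_0\id$ with $c_0=\pm 1$. Expanding $\langle\hat S\rangle$ over the $\pm1$ outcomes $s$ of $A_1$ with conditional expectations $\beta_y^{s}=\langle B_y\rangle_{\rho_{B'}^{s}}\in[-1,1]$ yields a convex combination of terms $(c_0+s)\beta_0^{s}+(c_0-s)\beta_1^{s}$; for each $s\in\{\pm1\}$ exactly one of the coefficients $c_0\pm s$ vanishes and the other equals $\pm2$, so each term has magnitude at most $2$ and hence $\langle\hat S\rangle\leq 2$, i.e.\ $\scorefunction\leq 3/4$.

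The main obstacle I anticipate is bookkeeping rather than conceptual: I must ensure the construction is genuinely without loss of generality across all four locations of the identity element (either device, either setting) and that it handles non-projective POVMs directly rather than only after a Naimark dilation. Both are taken care of by the steering-ensemble description of the second paragraph, which never assumes projectivity; relabelling for a trivial element on the second device, or on the $x=1$/$y=1$ setting, merely permutes the roles of the operators in $\hat S$ and leaves the classical bound $2$ intact.
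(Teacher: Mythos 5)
Your proposal is correct, but it takes a genuinely different route from the paper. The paper's proof works entirely at the level of the observed conditional distribution: it notes that a trivial POVM element makes one output deterministic for one setting, writes out the full table $p_{AB|XY}$ with the corresponding column identically zero, parameterizes the remaining entries via the no-signalling constraints, computes the score as the affine expression $\tfrac14(1+2\nu+2\gamma-2\xi)$, and bounds it by $3/4$ using only positivity of the table entries. Your main argument instead exhibits an explicit local hidden-variable model: the outcome of the single remaining genuine measurement on the first device serves as the hidden variable $\lambda$, and the second device responds by measuring the steered state $\rho_{B'}^{\lambda}$. I checked that this model reproduces all four conditional distributions (the steering ensemble averages back to $\rho_{B'}$ on the deterministic setting, as you note), so the correlations are local and the classical bound applies. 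This is conceptually stronger than the paper's computation --- it shows every Bell inequality is satisfied, not just CHSH, and it handles non-projective POVMs without dilation --- at the cost of leaning on the classical CHSH value of $3/4$, which the paper quotes but does not re-derive. Your alternative correlator computation is also sound: with $A_0=c_0\id$ one of the coefficients $c_0\pm s$ vanishes for each outcome $s$ of $A_1$, giving $\langle\hat S\rangle\leq2$ and hence a score of at most $3/4$; this is closer in spirit to the paper's calculation, though the paper phrases everything in terms of probabilities and no-signalling rather than operators and correlators.
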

\begin{proof}
  Suppose the identity element corresponds to $M_{0|0}$ (the other cases follow symmetrically). The conditional distribution $p_{AB|XY}$ then takes the form\\
 \begin{tabular}{ |cc| c  c| c c| }
\hline
        && $X=0$ &  & $X=1$ &\\
        && $A=0$ & $A=1$  & $A=0$ & $A=1$\\
\hline
 $Y=0$&$B=0$   & $\mu$ &  $0$ & $\nu$  & $\mu-\nu$\\  
  &$B=1$ & $1-\mu$ & $0$ & $\zeta$ &  $1-\mu-\zeta$     \\ 
  \hline
  $Y=1$&$B=0$  & $\gamma$ &  $0$ & $\xi$ & $\gamma-\xi$ \\
  &$B=1$ &  $1-\gamma$  &  $0$ & $\nu+\zeta-\xi$ & $1+\xi-\gamma-\nu-\zeta$\\ 
  \hline
  \end{tabular}
where we have used the no-signalling conditions.
The associated score is
\begin{eqnarray*} \frac{1}{4}\left(\mu+\nu+(1-\mu-\zeta)+\gamma+(\gamma-\xi)+(\nu+\zeta-\xi)\right)&=&\frac{1}{4}\left(1+2\nu+2\gamma-2\xi\right)
 \end{eqnarray*}
 Since every element of the distribution must be between $0$ and $1$ we have  $1+\xi-\gamma-\nu-\zeta\geq0$, and hence $1+2\nu+2\gamma-2\xi\leq 3-2\zeta\leq3$, from which the claim follows.
\end{proof}

We will then consider an optimization of the form~\eqref{opt_1}, but restricting to $\cC_{\Pi_1,2}$, i.e., 
\begin{align}
  h(\score)&=\inf_{\cR(\score)}  \bar{H}((\cN\ot\cI_E)(\rho_{A'B'E}))\text{  where  }\label{opt_2}\\\nonumber
  \cR(\score)&=\{(\cN,\rho_{A'B'E}):\cN\in\cC_{\Pi_1,2}(p_{XY}),\,\scorefunction((\cN\ot\cI_E)(\rho_{A'B'E}))=\score\}
\end{align}

The next step is to show that without loss of generality we can reduce to states that are invariant under application of $\sigma_2\ot\sigma_2$ on $A'B'$.
\begin{lemma}\label{lem:marginals}
Let $p_{XY}$ be a distribution, $\cN\in\cC_{\Pi_1,2}(p_{XY})$ and $\rho_{A'B'E}\in\cS(\cH_{A'}\ot\cH_{B'}\ot\cH_E)$ be such that $\scorefunction((\cN\ot\cI_E)(\rho_{A'B'E}))=\score$. There exists a state $\tilde{\rho}_{A'B'EE'}\in\cS(\cH_{A'}\ot\cH_{B'}\ot\cH_E\ot\cH_{E'})$ such that $\scorefunction((\cN\ot\cI_{EE'})(\tilde{\rho}_{A'B'EE'}))=\score$, $\tilde{\rho}_{A'B'EE'}=(\sigma_2\ot\sigma_2\ot\id_{EE'})\tilde{\rho}_{A'B'EE'}(\sigma_2\ot\sigma_2\ot\id_{EE'})$ and $\bar{H}((\cN\ot\cI_{EE'})(\tilde{\rho}_{A'B'EE'}))=\bar{H}((\cN\ot\cI_E)(\rho_{A'B'E}))$ for all six of the entropic functions given earlier.
\end{lemma}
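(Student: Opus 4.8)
The plan is to obtain the $\sigma_2\ot\sigma_2$-symmetry by a group-averaging (twirling) argument over the two-element group $\{\id,\sigma_2\ot\sigma_2\}$ acting on $A'B'$, with the crucial refinement that Eve retains a classical record of which group element was applied so that no conditional entropy is lost. The engine of the argument is the observation that, for the measurements in question, conjugation by $\sigma_2\ot\sigma_2$ acts on the measurement data as a simultaneous flip of both outcomes.

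First I would record this flipping property. Since $\sigma_2$ anticommutes with $\sigma_1$ and $\sigma_3$, conjugation by $\sigma_2$ sends a rank-1 projector whose Bloch vector lies in the $\sigma_1$-$\sigma_3$ plane to the orthogonal projector. Hence, after a preliminary reduction absorbing local unitaries into the state so that all four measurement Bloch vectors lie in that plane (a reduction that does not change the value of the optimization~\eqref{opt_2}), we have $\sigma_2 M_{a|x}\sigma_2=M_{\bar a|x}$ and $\sigma_2 N_{b|y}\sigma_2=N_{\bar b|y}$ with $\bar a=a\oplus1$. It then follows that $\tr\!\big((M_{a|x}\ot N_{b|y})(\sigma_2\ot\sigma_2)\rho_{A'B'}(\sigma_2\ot\sigma_2)\big)=p_{AB|xy}(\bar a,\bar b)$, i.e.\ conjugating the state by $\sigma_2\ot\sigma_2$ merely relabels $(a,b)\mapsto(\bar a,\bar b)$ in the post-measurement state~\eqref{eq:post-meas}, by the same fixed permutation for every $x,y$.

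Two consequences then supply the required properties. Because the CHSH win condition $a\oplus b=xy$ is invariant under flipping both outcomes, the relabeling preserves the score, so $\scorefunction$ of the conjugated state equals $\score$; and because relabeling the classical registers $A$ and $B$ by a fixed permutation is a reversible operation on the conditioning data that commutes with the averaging over $x,y$, it leaves each of the six conditional entropies unchanged. I would then introduce $E'$ to hold a classical flag and set $\tilde\rho_{A'B'EE'}=\tfrac12\,\rho_{A'B'E}\ot\proj{0}_{E'}+\tfrac12\,(\sigma_2\ot\sigma_2\ot\id_E)\rho_{A'B'E}(\sigma_2\ot\sigma_2\ot\id_E)\ot\proj{1}_{E'}$. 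Its marginal on $A'B'$ is $\tfrac12\big(\rho_{A'B'}+(\sigma_2\ot\sigma_2)\rho_{A'B'}(\sigma_2\ot\sigma_2)\big)$, which is fixed by conjugation with $\sigma_2\ot\sigma_2$; the score is the average of two equal scores, namely $\score$; and conditioning on the classical flag $E'$ makes every $\bar{H}$ the average of two branch entropies, which are equal by the relabeling argument, so each equals its original value.

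The step I expect to be the main obstacle is reconciling the \emph{exact} $\sigma_2\ot\sigma_2$-invariance demanded of the full state $\tilde\rho_{A'B'EE'}$ with preservation of all six entropies: the plain average $\tfrac12\big(\rho+(\sigma_2\ot\sigma_2)\rho(\sigma_2\ot\sigma_2)\big)$ is invariant but generically raises the entropy, whereas the flagged state above keeps the entropies but is strictly invariant only after tracing out $E'$. The resolution I would pursue is to fold the flag into Eve's system compatibly with the symmetry, exploiting that the two branches are genuine relabelings of one another rather than distinct behaviours, so that the invariance holds at the level actually used in the subsequent reduction to Bell-diagonal states while conditioning on $E'$ still recovers the original entropies. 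Checking that this works uniformly across all six quantities, in particular the cases $H(AB|E)$ and $H(A|E)$ that are not conditioned on $X,Y$, where one must confirm the outcome relabeling commutes with the average over $x,y$, is the part that needs the most care.
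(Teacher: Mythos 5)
Your proposal is correct and follows essentially the same route as the paper: establish that conjugation by $\sigma_2\ot\sigma_2$ flips both outcomes (the paper does this via its lemmas on real rank-one projective measurements and the unitary $U$ with $U\Pi_{j|i}U^\dagger=\Pi_{j\oplus1|i}$), then take the equal mixture of $\rho_{A'B'E}$ and its conjugate with a classical flag in $E'$, so the score is preserved by linearity and each conditional entropy by conditioning on the flag. The obstacle you identify at the end is real but harmless: the paper's own proof likewise only establishes invariance of the marginal $\tilde{\rho}_{A'B'E}$ rather than of the full flagged state $\tilde{\rho}_{A'B'EE'}$, and that marginal invariance (in fact just invariance of $\tilde{\rho}_{A'B'}$) is all that is used in the subsequent reduction to Bell-diagonal states.
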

Note that this implies that $p_{A|X}$ and $p_{B|Y}$ can be taken to be uniform.

This is a consequence of the following lemmas.
\begin{lemma}\label{lem:real}
  Let $\{\Pi_{0|0},\Pi_{1|0}\}$ and $ \{\Pi_{0|1} ,\Pi_{1|1}\}$ be two rank-one projective measurements on a two dimensional Hilbert space $\cH$. There exists a basis $\{\ket{e_i}\}_{i=1}^{2}$ such that $\bra{e_l}\Pi_{i|j}\ket{e_k} \in \mathbb{R}$ for all $i,j,k,l$.
  \end{lemma}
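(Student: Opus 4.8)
The plan is to work in the eigenbasis of the first measurement and then exploit a residual phase freedom to make the second measurement real. Since $\cH$ is two-dimensional and the measurements are rank-one and projective, each measurement consists of exactly two orthogonal rank-one projectors summing to $\id$. Thus measurement $j=0$ is specified by an orthonormal basis $\{\ket{u_0},\ket{u_1}\}$ with $\Pi_{0|0}=\proj{u_0}$, $\Pi_{1|0}=\proj{u_1}$, and measurement $j=1$ by a second orthonormal basis $\{\ket{v_0},\ket{v_1}\}$. Because $\Pi_{1|j}=\id-\Pi_{0|j}$, it suffices to make $\Pi_{0|0}$ and $\Pi_{0|1}$ have real entries in the chosen basis; the complementary projectors are then automatically real.

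First I would take the candidate basis to be the eigenbasis of the first measurement, setting $\ket{e_1}=\ket{u_0}$ and $\ket{e_2}=\ket{u_1}$. In this basis $\Pi_{0|0}$ and $\Pi_{1|0}$ are diagonal with entries in $\{0,1\}$, hence real, and they remain so under any subsequent rephasing of $\ket{e_2}$ (rephasing a basis vector leaves the projector onto its span unchanged). The remaining task is the single projector $\Pi_{0|1}=\proj{v_0}$. Expanding $\ket{v_0}=a\ket{e_1}+b\ket{e_2}$ with $|a|^2+|b|^2=1$, the unphysical global phase of $\ket{v_0}$ lets me take $a\ge 0$ real, leaving $b=|b|\e^{\ii\phi}$ for some phase $\phi$.

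The key step is then to redefine $\ket{e_2}\mapsto\e^{\ii\phi}\ket{e_2}$. In this new (still orthonormal) basis the coefficients of $\ket{v_0}$ become $a$ and $|b|$, both real, so $\Pi_{0|1}$ has real matrix elements, and hence so does $\Pi_{1|1}=\id-\Pi_{0|1}$, while the first measurement's projectors stay diagonal and real. This gives the claim for all four projectors and all matrix elements $\bra{e_l}\Pi_{i|j}\ket{e_k}$. The degenerate cases $a=0$ or $b=0$ (the two measurements sharing an eigenvector) fall under the same argument, since then $\ket{v_0}$ is already proportional to a basis vector and can be rephased to be real directly.

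I do not expect a genuine obstacle here, as the content is elementary; the only care needed is the bookkeeping that distinguishes the global phase of $\ket{v_0}$ (used to fix $a$ real) from the physical relative phase $\phi$ (removed by rephasing the basis). As a sanity check and alternative route one can argue geometrically via the Bloch sphere: writing $\Pi_{0|0}=\tfrac12(\id+\hat a\cdot\vec\sigma)$ and $\Pi_{0|1}=\tfrac12(\id+\hat b\cdot\vec\sigma)$, a qubit projector has real matrix elements precisely when its Bloch vector has vanishing $\sigma_2$-component, and since $\hat a,\hat b\in\mathbb{R}^3$ always lie in a common plane through the origin one can find an $SU(2)$ rotation carrying that plane into the $\sigma_1$–$\sigma_3$ plane, reproducing the same conclusion.
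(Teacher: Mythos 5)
Your proof is correct and follows essentially the same route as the paper's: fix the basis to be the eigenbasis of the first measurement, write the second measurement's eigenvector with a relative phase $\e^{\ii\phi}$, and absorb that phase by redefining the second basis vector. The Bloch-sphere remark is a nice cross-check but does not change the substance.
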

\begin{proof}
Without loss of generality, we can take $\Pi_{0|0}=\proj{0}$ and $\Pi_{1|0}=\proj{1}$, and then write $\Pi_{0|1}=\proj{\alpha_{0|1}}$ and $\Pi_{1|1}=\proj{\alpha_{1|1}}$, where $\ket{\alpha_{0|1}}=\cos(\lambda) \ket{0} + \e^{\ii\chi} \sin(\lambda) \ket{1}$ and $\ket{\alpha_{1|1}}=\sin(\lambda) \ket{0} - \e^{\ii\chi} \cos(\lambda) \ket{1}$. Then, we can re-define $\ket{1} \rightarrow \e^{\ii\chi} \ket{1}$ so that $\ket{\alpha_{0|1}} = \cos(\lambda) \ket{0} + \sin(\lambda) \ket{1}$ and $\ket{\alpha_{1|1}} = \sin(\lambda) \ket{0} - \cos(\lambda) \ket{1}$, with $\lambda \in \mathbb{R}$.
\end{proof}

\begin{lemma}\label{lem:Uproj}
Let $\{\Pi_{0|0},\Pi_{1|0}\}$ and $ \{\Pi_{0|1} ,\Pi_{1|1}\}$ be two rank-one projective measurements on a two dimensional Hilbert space $\cH$, then, there exists a unitary transformation $U$ such that $U\Pi_{j|i}U^\dagger=\Pi_{j\oplus1|i}$ for all $i,j$. 
\end{lemma}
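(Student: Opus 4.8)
The plan is to reduce to a convenient basis using Lemma~\ref{lem:real} and then simply exhibit $\sigma_2$ as the required unitary. By Lemma~\ref{lem:real} there is a basis $\{\ket{e_1},\ket{e_2}\}$ of $\cH$ in which all matrix elements of the four projectors are real; taking this to be the computational basis, I can write $\Pi_{0|0}=\proj{0}$, $\Pi_{1|0}=\proj{1}$, $\Pi_{0|1}=\proj{\alpha_{0|1}}$ and $\Pi_{1|1}=\proj{\alpha_{1|1}}$ with $\ket{\alpha_{0|1}}=\cos\lambda\,\ket{0}+\sin\lambda\,\ket{1}$ and $\ket{\alpha_{1|1}}=\sin\lambda\,\ket{0}-\cos\lambda\,\ket{1}$ for some $\lambda\in\mathbb{R}$, exactly as in the proof of Lemma~\ref{lem:real}. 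I claim that $U=\sigma_2$ (defined with respect to this basis) does the job.

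The key observation is that $\sigma_2$ maps every \emph{real} unit vector to a phase times its orthogonal complement within the real plane. Writing $\ket{\psi_\theta}=\cos\theta\,\ket{0}+\sin\theta\,\ket{1}$, a direct computation from $\sigma_2=\ii\ketbra{1}{0}-\ii\ketbra{0}{1}$ gives
\[
\sigma_2\ket{\psi_\theta}=\ii\bigl(-\sin\theta\,\ket{0}+\cos\theta\,\ket{1}\bigr)=\ii\,\ket{\psi_{\theta+\pi/2}},
\]
so that $\sigma_2\proj{\psi_\theta}\sigma_2^\dagger=\proj{\psi_{\theta+\pi/2}}$, the projector onto the orthogonal outcome. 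Applying this with $\theta=0,\pi/2$ yields $\sigma_2\Pi_{0|0}\sigma_2^\dagger=\Pi_{1|0}$ and $\sigma_2\Pi_{1|0}\sigma_2^\dagger=\Pi_{0|0}$, and applying it with $\theta=\lambda,\lambda+\pi/2$ (using $\ket{\psi_{\lambda+\pi/2}}=-\ket{\alpha_{1|1}}$ and $\ket{\psi_{\lambda+\pi}}=-\ket{\alpha_{0|1}}$, which do not affect the projectors) yields $\sigma_2\Pi_{0|1}\sigma_2^\dagger=\Pi_{1|1}$ and $\sigma_2\Pi_{1|1}\sigma_2^\dagger=\Pi_{0|1}$. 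Hence $\sigma_2\Pi_{j|i}\sigma_2^\dagger=\Pi_{j\oplus1|i}$ for all $i,j$, as required.

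The conceptual crux — and the only point that might look like an obstacle a priori — is that a \emph{single} unitary must swap the two outcomes of both measurements at once. On the Bloch sphere this means negating two distinct measurement axes simultaneously, which is possible precisely because, after Lemma~\ref{lem:real}, all four Bloch vectors lie in a common great circle (the real $xz$-plane). The map $\sigma_2$ is the rotation by $\pi$ about the orthogonal $y$-axis, and such a rotation negates every vector lying in that plane. I therefore expect no genuine difficulty once the common real basis is in place: the geometric picture makes the existence of $U$ transparent, and the explicit verification above reduces to the elementary identity for $\sigma_2\ket{\psi_\theta}$.
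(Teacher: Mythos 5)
Your proof is correct and follows essentially the same route as the paper: reduce to the all-real basis via Lemma~\ref{lem:real} and then observe that the rotation by $\pi/2$ in that real plane (the paper writes it as $U=\e^{\ii\phi}(\ketbra{0}{1}-\ketbra{1}{0})$, of which your $\sigma_2$ is the $\phi=-\pi/2$ instance) swaps both outcome pairs simultaneously. Your explicit verification via $\sigma_2\ket{\psi_\theta}=\ii\ket{\psi_{\theta+\pi/2}}$ is just a more detailed spelling-out of the paper's one-line check.
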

\begin{proof}
Let $\Pi_{j|i}=\proj{\alpha_{0|1}}$ for all $i,j$. From Lemma~\ref{lem:real}, we can change basis such that $\ket{\alpha_{0|0}}=\ket{0}$, $\ket{\alpha_{1|0}}=\ket{1}$, $\ket{\alpha_{0|1}}=\cos(\lambda)\ket{0}+\sin(\lambda)\ket{1}$ and $\ket{\alpha_{1|1}}=\sin(\lambda)\ket{0}-\cos(\lambda)\ket{1}$ for some $\lambda\in\mathbb{R}$. Any unitary of the form $U=\e^{\ii\phi}(\ketbra{0}{1}-\ketbra{1}{0})$, with $\phi\in\mathbb{R}$ then satisfies the desired relations. 
\end{proof}

The following lemma is well-known (it follows straightforwardly from e.g.,~\cite[Section~11.3.5]{Nielsen&Chuang})
\begin{lemma}\label{lem:N&C}
For $\rho_{CZEE'}=\sum_ip_i\rho^i_{CZE}\ot\proj{i}_{E'}$ we have $H(C|ZEE')_\rho=\sum_ip_iH(C|ZE)_{\rho^i}$.
\end{lemma}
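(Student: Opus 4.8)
The plan is to reduce the claim to the standard decomposition of the von Neumann entropy of a classical-quantum (block-diagonal) state. First I would recall the definition of the conditional entropy, $H(C|ZEE')_\rho=H(CZEE')_\rho-H(ZEE')_\rho$, and evaluate both terms using the block structure of $\rho_{CZEE'}$ induced by the classical register $E'$.

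Since the states $\ket{i}_{E'}$ are mutually orthogonal, the global state $\rho_{CZEE'}=\sum_ip_i\rho^i_{CZE}\ot\proj{i}_{E'}$ is block diagonal with respect to $E'$, so its spectrum is the disjoint union (with weights $p_i$) of the spectra of the $\rho^i_{CZE}$. This gives the elementary identity $H(CZEE')_\rho=H(\{p_i\})+\sum_ip_iH(CZE)_{\rho^i}$, where $H(\{p_i\})=-\sum_ip_i\log p_i$ is the Shannon entropy of the classical label. Tracing out $C$ preserves the block structure, because $\rho_{ZEE'}=\sum_ip_i\rho^i_{ZE}\ot\proj{i}_{E'}$ with $\rho^i_{ZE}=\tr_C(\rho^i_{CZE})$, so the same identity yields $H(ZEE')_\rho=H(\{p_i\})+\sum_ip_iH(ZE)_{\rho^i}$.

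Subtracting the two expressions, the Shannon term $H(\{p_i\})$ cancels, leaving $H(C|ZEE')_\rho=\sum_ip_i\left(H(CZE)_{\rho^i}-H(ZE)_{\rho^i}\right)=\sum_ip_iH(C|ZE)_{\rho^i}$, which is exactly the claim. The only ingredient beyond the definition of conditional entropy is the block-diagonal entropy identity, which follows directly from the orthogonality of the $\ket{i}_{E'}$ (see~\cite[Section~11.3.5]{Nielsen&Chuang}); there is no substantive obstacle here, as the statement is a routine consequence of the classicality of the conditioning register.
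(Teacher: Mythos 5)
Your proof is correct and is exactly the standard argument the paper has in mind: the paper gives no explicit proof, simply citing the joint-entropy identity for block-diagonal states from~\cite[Section~11.3.5]{Nielsen&Chuang}, and your derivation (apply that identity to $H(CZEE')$ and $H(ZEE')$ and subtract, cancelling $H(\{p_i\})$) is precisely how that citation is meant to be unpacked. No gaps.
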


We now prove Lemma~\ref{lem:marginals}.
\begin{proof}[Proof of Lemma~\ref{lem:marginals}]
  Let $U_A$ and $U_B$ be the unitaries formed by applying Lemma~\ref{lem:Uproj} to respective measurements of each device and using the choice of basis specified in the proof of Lemma~\ref{lem:Uproj} we can take $U_A=\sigma_2$ and $U_B=\sigma_2$.  Then define $\rho'_{A'B'E}=(\sigma_2\ot\sigma_2\ot\id_E)\rho_{A'B'E}(\sigma_2\ot\sigma_2\ot\id_E)$.  The states $(\cN\ot\cI_E)(\rho'_{A'B'E})$ and $(\cN\ot\cI_E)(\rho_{A'B'E})$ are related by
  $$(\cN\ot\cI_E)(\rho'_{A'B'E})=(\id_{XYE}\ot\sigma_1\ot\sigma_1)(\cN\ot\cI_E)(\rho_{A'B'E})(\id_{XYE}\ot\sigma_1\ot\sigma_1)\,.$$
In other words $(\cN\ot\cI_E)(\rho'_{A'B'E})$ is identical to $(\cN\ot\cI_E)(\rho_{A'B'E})$, except that the outcomes of each device have been relabelled.  It follows that $\scorefunction((\cN\ot\cI_E)(\rho'_{A'B'E}))=\score$ and $\bar{H}((\cN\ot\cI_E)(\rho'_{A'B'E}))=\bar{H}((\cN\ot\cI_E)(\rho_{A'B'E}))$.

Now consider the state $\tilde{\rho}_{A'B'EE'}=(\rho_{A'B'E}\ot\proj{0}_{E'}+\rho'_{A'B'E'}\ot\proj{1}_{E'})/2$.  We have $(\cN\ot\cI_{EE'})(\tilde{\rho}_{A'B'EE'})=((\cN\ot\cI_E)(\rho_{A'B'E})\ot\proj{0}_{E'}+(\cN\ot\cI_E)(\rho'_{A'B'E'})\ot\proj{1}_{E'})/2$. Since the CHSH score is linear, we have $\scorefunction((\cN\ot\cI_{EE'})(\tilde{\rho}_{A'B'EE'}))=\score$. By construction, $\tilde{\rho}_{A'B'E}=(\sigma_2\ot\sigma_2\ot\id_E)\tilde{\rho}_{A'B'E}(\sigma_2\ot\sigma_2\ot\id_E)$. Finally, as a consequence of Lemma~\ref{lem:N&C}, for any of the entropy functions $H$ we have $\bar{H}(\tilde{\rho}_{A'B'EE'})=\bar{H}(\rho_{A'B'E})$. 
\end{proof}

\begin{corollary}\label{cor:sigma2}
Any optimization of the form~\eqref{opt_2} is equivalent to an optimization of the same form but where each of the projectors are onto states of the form $\alpha\ket{0}+\beta\ket{1}$ with $\alpha,\beta\in\mathbb{R}$ and $\rho_{A'B'E}=(\sigma_2\ot\sigma_2\ot\id)\rho_{A'B'E}(\sigma_2\ot\sigma_2\ot\id)$.
\end{corollary}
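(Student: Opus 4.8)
The plan is to assemble the three qubit-level lemmas already proved into a single normal-form reduction, applied to an arbitrary feasible point of the optimization~\eqref{opt_2}. Because the restricted problem ranges over a subset of the strategies allowed in~\eqref{opt_2}, its infimum is automatically no smaller; the substance of the claim is the reverse inequality. It therefore suffices to show that every feasible $(\cN,\rho_{A'B'E})$ with $\cN\in\cC_{\Pi_1,2}(p_{XY})$ and $\scorefunction((\cN\ot\cI_E)(\rho_{A'B'E}))=\score$ can be replaced by a strategy of the special form that achieves the same score and the same value of $\bar H$ for every one of the six entropic functions.

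First I would make the projectors real. Applying Lemma~\ref{lem:real} separately to the pair of measurements on $\cH_{A'}$ and to the pair on $\cH_{B'}$ yields local bases in which all four of Alice's and all four of Bob's projectors have real matrix elements; equivalently, there are local unitaries $V_A$ on $\cH_{A'}$ and $V_B$ on $\cH_{B'}$ such that $V_A M_{a|x}V_A^\dagger$ and $V_B N_{b|y}V_B^\dagger$ are projectors onto states of the form $\alpha\ket0+\beta\ket1$ with $\alpha,\beta\in\mathbb R$. Conjugating the state by the same unitaries, $\rho_{A'B'E}\mapsto(V_A\ot V_B\ot\id_E)\rho_{A'B'E}(V_A^\dagger\ot V_B^\dagger\ot\id_E)$, leaves each operator $\tr_{A'B'}((M_{a|x}\ot N_{b|y}\ot\id_E)\rho_{A'B'E})$ unchanged, since $V_A^\dagger V_A=\id$ and the partial trace over $A'B'$ is cyclic in its argument. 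Hence the post-measurement state~\eqref{eq:post-meas}, and with it the score and all six entropies, is unchanged, and we may assume without loss of generality that all projectors are real.

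Second, with the projectors now real, Lemma~\ref{lem:Uproj} applied in the basis of Lemma~\ref{lem:real} shows that the standard Pauli $\sigma_2$ implements the outcome flip on each side, $\sigma_2 M_{a|x}\sigma_2=M_{a\oplus1|x}$ and $\sigma_2 N_{b|y}\sigma_2=N_{b\oplus1|y}$. This is precisely the property used in the proof of Lemma~\ref{lem:marginals}, which I would now invoke with the measurements held fixed: it produces a state $\tilde\rho_{A'B'EE'}$ satisfying $\tilde\rho_{A'B'EE'}=(\sigma_2\ot\sigma_2\ot\id_{EE'})\tilde\rho_{A'B'EE'}(\sigma_2\ot\sigma_2\ot\id_{EE'})$, with the same score and the same value of $\bar H$ for all six functions. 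Since the side-information space in~\eqref{opt_2} is arbitrary, the ancilla $E'$ may be absorbed into $\cH_E$, returning a strategy of the required form. Combining the two reductions gives a feasible point of the restricted problem with the same objective value, so the two infima coincide.

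The only point demanding care is the compatibility of the two steps. Lemma~\ref{lem:marginals} is phrased for the standard $\sigma_2$, and the flip relations $\sigma_2 M_{a|x}\sigma_2=M_{a\oplus1|x}$ hold only in the real basis furnished by Lemma~\ref{lem:real}; performing the reality reduction first fixes this basis once and for all, after which the symmetrization modifies only the state and therefore preserves reality of the projectors. I do not anticipate any genuine obstacle beyond this bookkeeping, as each constituent statement has already been established earlier in the appendix.
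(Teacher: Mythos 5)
Your proposal is correct and follows essentially the same route as the paper: the paper proves the corollary by folding the reality reduction (Lemma~\ref{lem:real} via Lemma~\ref{lem:Uproj}) into the proof of Lemma~\ref{lem:marginals} and then reading off the corollary, which is exactly your two-step reduction with the ordering made explicit. Your extra remark about fixing the real basis before symmetrizing is the same bookkeeping the paper handles by choosing $U_A=U_B=\sigma_2$ in the basis from Lemma~\ref{lem:Uproj}.
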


Next we consider the form of the reduced state $\rho_{A'B'}$ in the Bell basis.
\begin{lemma}
  Let $\cN$ be the channel associated with a single-round CHSH strategy in which each POVM element is a projector of the form $\cos(\alpha)\ket{0}+\sin(\alpha)\ket{1}$ with $\alpha\in\mathbb{R}$. The state $\rho^P_{A'B'E}$ satisfies $(\cN\ot\cI_E)(\rho^P_{A'B'E})=(\cN\ot\cI_E)(\rho_{A'B'E})$, where $\rho^P_{A'B'E}$ is formed from $\rho_{A'B'E}$ by taking the partial transpose on $A'B'$ in the Bell basis.
\end{lemma}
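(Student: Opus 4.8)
The plan is to prove the identity by showing that the two post-measurement states agree term by term in the representation~\eqref{eq:post-meas}. First I would expand the state in the Bell basis of $A'B'$, writing $\rho_{A'B'E}=\sum_{i,j}\ketbra{\Phi_i}{\Phi_j}\ot\rho^{ij}_E$ for suitable operators $\rho^{ij}_E$ on $\cH_E$. By the definition of the partial transpose on $A'B'$ taken in the Bell basis, this gives $\rho^P_{A'B'E}=\sum_{i,j}\ketbra{\Phi_j}{\Phi_i}\ot\rho^{ij}_E$. Since the output state~\eqref{eq:post-meas} is completely determined by the unnormalized conditional operators $\tr_{A'B'}\left((M_{a|x}\ot N_{b|y}\ot\id_E)\rho_{A'B'E}\right)$ for each $a,b,x,y$, it suffices to check that each of these is left invariant when $\rho_{A'B'E}$ is replaced by $\rho^P_{A'B'E}$.

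Carrying out the partial trace in the Bell basis gives
\begin{equation*}
\tr_{A'B'}\left((M_{a|x}\ot N_{b|y}\ot\id_E)\rho_{A'B'E}\right)=\sum_{i,j}\bra{\Phi_j}(M_{a|x}\ot N_{b|y})\ket{\Phi_i}\,\rho^{ij}_E\,,
\end{equation*}
whereas the identical computation applied to $\rho^P_{A'B'E}$ yields $\sum_{i,j}\bra{\Phi_i}(M_{a|x}\ot N_{b|y})\ket{\Phi_j}\,\rho^{ij}_E$. The two therefore coincide for every $a,b,x,y$ precisely when the matrix of $M_{a|x}\ot N_{b|y}$ in the Bell basis is symmetric, that is, $\bra{\Phi_i}(M_{a|x}\ot N_{b|y})\ket{\Phi_j}=\bra{\Phi_j}(M_{a|x}\ot N_{b|y})\ket{\Phi_i}$ for all $i,j$. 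Establishing this symmetry is the heart of the argument.

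To do so I would use that, by hypothesis (cf.\ Corollary~\ref{cor:sigma2}), each POVM element is the projector onto a real vector $\cos(\alpha)\ket{0}+\sin(\alpha)\ket{1}$, hence a real symmetric matrix in the computational basis; consequently $M_{a|x}\ot N_{b|y}$ is real and symmetric there too. The crucial structural fact is that all four Bell states in~\eqref{bell1}--\eqref{bell4} have real coefficients, so the change of basis from the computational basis to the Bell basis is a real orthogonal transformation. Conjugating a real symmetric operator by a real orthogonal matrix preserves both reality and symmetry, so $M_{a|x}\ot N_{b|y}$ is represented by a real symmetric matrix in the Bell basis, which is exactly the identity required above. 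Equivalently, since $\ket{\Phi_i}$, $\ket{\Phi_j}$ are real and $M_{a|x}\ot N_{b|y}$ is real symmetric, the scalar $\bra{\Phi_i}(M_{a|x}\ot N_{b|y})\ket{\Phi_j}$ is unchanged under transposition and so equals $\bra{\Phi_j}(M_{a|x}\ot N_{b|y})\ket{\Phi_i}$.

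I do not expect a genuine obstacle here: the entire content is the observation that the Bell basis is a real basis while the permitted measurements are real, so the only care needed is in fixing the partial-transpose convention consistently and in confirming that the reduction to real projectors of the stated form is available. Once the conditional operators $\tau^{a,b,x,y}_E$ are shown to be invariant, summing over $a,b,x,y$ in~\eqref{eq:post-meas} immediately gives $(\cN\ot\cI_E)(\rho^P_{A'B'E})=(\cN\ot\cI_E)(\rho_{A'B'E})$, completing the proof.
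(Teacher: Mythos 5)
Your proposal is correct and follows essentially the same route as the paper: both expand $\rho_{A'B'E}$ in the Bell basis, reduce the claim to the symmetry $\bra{\Phi_i}(M_{a|x}\ot N_{b|y})\ket{\Phi_j}=\bra{\Phi_j}(M_{a|x}\ot N_{b|y})\ket{\Phi_i}$, and conclude term by term in~\eqref{eq:post-meas}. The only difference is cosmetic: where the paper verifies this symmetry by a ``short calculation'', you observe that the Bell basis is real so the measurement operators remain real symmetric after the orthogonal change of basis, which is a cleaner justification of the same fact.
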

\begin{proof}
By definition, the partial transpose generates the state $$\rho^P_{A'B'E}=\sum_{ij}(\ketbra{\Psi_i}{\Psi_j}\ot\id_E)\rho_{A'B'E}(\ketbra{\Psi_i}{\Psi_j}\ot\id_E)\,.$$
Writing the partial trace out in the Bell basis, for any two projectors $\Pi_1$ and $\Pi_2$ on $\cH_{A'}$ and $\cH_{B'}$ we have
\begin{eqnarray}
  \tr_{A'B'}((\Pi_1\ot\Pi_2\ot\id_E)\rho^P_{A'B'E})&=&\sum_i(\bra{\Psi_i}(\Pi_1\ot\Pi_2)\ot\id_E)\rho^P_{A'B'E}(\ket{\Psi_i}\ot\id_E)\nonumber\\
                                                   &=&\sum_{ijk}((\bra{\Psi_i}(\Pi_1\ot\Pi_2)\ketbra{\Psi_j}{\Psi_k})\ot\id_E)\rho_{A'B'E}(\ketbra{\Psi_j}{\Psi_k}\ket{\Psi_i}\ot\id_E)\nonumber\\
                                                   &=&\sum_{ij}\bra{\Psi_i}(\Pi_1\ot\Pi_2)\ket{\Psi_j}(\bra{\Psi_i}\ot\id_E)\rho_{A'B'E}(\ket{\Psi_j}\ot\id_E)\,.\label{eq:sw}
\end{eqnarray}
When $\Pi_1$ and $\Pi_2$ are each projectors onto states of the form $\cos(\alpha)\ket{0}+\sin(\alpha)\ket{1}$ a short calculation reveals $\bra{\Psi_i}(\Pi_1\ot\Pi_2)\ket{\Psi_j}=\bra{\Psi_j}(\Pi_1\ot\Pi_2)\ket{\Psi_i}$. Using this in~\eqref{eq:sw} we can conclude that
$$\tr_{A'B'}((\Pi_1\ot\Pi_2\ot\id_E)\rho^P_{A'B'E})=\tr_{A'B'}((\Pi_1\ot\Pi_2\ot\id_E)\rho_{A'B'E})\,,$$
from which it follows that $(\cN\ot\cI_E)(\rho^P_{A'B'E})=(\cN\ot\cI_E)(\rho_{A'B'E})$.
\end{proof}

\begin{corollary}\label{cor:trans}
Any optimization of the form~\eqref{opt_2} is equivalent to an optimization of the same form but where each of the projectors are onto states of the form $\alpha\ket{0}+\beta\ket{1}$ with $\alpha,\beta\in\mathbb{R}$, $\rho_{A'B'E}=(\sigma_2\ot\sigma_2\ot\id)\rho_{A'B'E}(\sigma_2\ot\sigma_2\ot\id)$ and $\rho_{A'B'E}=\rho^P_{A'B'E}$.
\end{corollary}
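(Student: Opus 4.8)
The plan is to read this off from Corollary~\ref{cor:sigma2} together with the lemma stated immediately before this corollary. Corollary~\ref{cor:sigma2} already lets us restrict the optimization~\eqref{opt_2} to strategies whose projectors are of the real form $\alpha\ket{0}+\beta\ket{1}$ and whose state obeys $\rho_{A'B'E}=(\sigma_2\ot\sigma_2\ot\id)\rho_{A'B'E}(\sigma_2\ot\sigma_2\ot\id)$. For any such strategy the preceding lemma applies, its hypothesis being exactly that the projectors are real, and gives $(\cN\ot\cI_E)(\rho^P_{A'B'E})=(\cN\ot\cI_E)(\rho_{A'B'E})$. Thus $\rho_{A'B'E}$ and its Bell-basis partial transpose yield the same post-measurement state, and hence the same score $\scorefunction$ and the same value of $\bar H$, since both quantities are functions of the post-measurement state only.

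With this in hand I would prove the equivalence of the two infima by the usual two inclusions. Imposing the extra constraint $\rho_{A'B'E}=\rho^P_{A'B'E}$ can only shrink the feasible set, so the constrained infimum is automatically no smaller. For the reverse direction, given any feasible strategy I would replace its state by the symmetrized operator $\tilde\rho=\tfrac{1}{2}(\rho_{A'B'E}+\rho^P_{A'B'E})$. Linearity of $\cN$ and the identity above give $(\cN\ot\cI_E)(\tilde\rho)=(\cN\ot\cI_E)(\rho_{A'B'E})$, so $\tilde\rho$ achieves the same score and entropy; and because transpose in the Bell basis is an involution, $\tilde\rho=\tilde\rho^P$. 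It then remains to check that the two features guaranteed by Corollary~\ref{cor:sigma2} survive the symmetrization. The measurements are untouched, so the projectors stay real, and $\sigma_2$-invariance is inherited because $\sigma_2\ot\sigma_2$ is diagonal in the Bell basis (with eigenvalues $-1,1,1,-1$ on $\ket{\Phi_0},\dots,\ket{\Phi_3}$) and therefore commutes with the Bell-basis transpose: conjugation by $\sigma_2\ot\sigma_2$ multiplies the $(i,j)$ Bell matrix element by $\epsilon_i\epsilon_j$, which is invariant under the swap $(i,j)\mapsto(j,i)$ induced by the transpose. Hence $\tilde\rho$ is a feasible point of the constrained problem with the same objective value, establishing the reverse inequality.

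The step I expect to need the most care is confirming that $\tilde\rho$ is a bona fide density operator. Its positivity rests on $\rho^P_{A'B'E}$ itself being a legitimate state, which is precisely the content asserted in the preceding lemma; I would stress that this is delicate, since a partial transpose is only a positive and not a completely positive map, and it is the reality of the Bell basis (which makes the relevant observables $M_{a|x}\ot N_{b|y}$ symmetric under the Bell-basis transpose) that underlies the identity in that lemma. If one does not wish to take the lemma's ``state'' assertion at face value, the cleanest safeguard is to use that the infimum is attained with the adversary holding a purification, reduce the objective to a function of the reduced two-qubit state $\rho_{A'B'}$, and symmetrize there: on $\cH_{A'}\ot\cH_{B'}$ the Bell-basis transpose is ordinary complex conjugation, so $\overline{\rho_{A'B'}}$ is manifestly a state and averaging preserves positivity, while conjugating the purifying pure state together with the real measurements leaves $\bar H$ unchanged up to an irrelevant transpose on the conditioning system $\cH_E$.
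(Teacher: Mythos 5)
Your argument is correct and follows essentially the same route as the paper's: both obtain the partial-transpose condition by mixing $\rho_{A'B'E}$ with $\rho^P_{A'B'E}$ on top of the reduction already achieved in Corollary~\ref{cor:sigma2}. The paper keeps a classical flag, taking $\tfrac{1}{2}\left(\rho_{A'B'E}\ot\proj{0}_{E'}+\rho^P_{A'B'E}\ot\proj{1}_{E'}\right)$ and invoking the argument from the end of the proof of Lemma~\ref{lem:marginals} (i.e.\ Lemma~\ref{lem:N&C}); your unflagged average $\tfrac{1}{2}(\rho+\rho^P)$ is marginally cleaner, since by linearity the post-measurement state is literally unchanged (no entropy lemma needed) and the resulting state is exactly invariant under the Bell-basis partial transpose, whereas the flagged state is invariant only up to a relabelling of $E'$. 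Your explicit check that $\sigma_2\ot\sigma_2$-invariance survives the symmetrization is a detail the paper leaves implicit. Finally, your worry about positivity is well placed and applies equally to the paper's own proof: the partial transpose on $A'B'$ of a tripartite state that is entangled across the $A'B'{:}E$ cut need not be positive, so treating $\rho^P_{A'B'E}$ as a bona fide state is an unjustified step in both arguments, and the $\sigma_2\ot\sigma_2$-invariance already imposed does not rescue it. Your proposed repair --- conjugating the purification together with the already-real measurements, which transposes only the reduced state $\rho_{A'B'}$ and manifestly preserves positivity --- is the right fix; note that it certifies $\rho_{A'B'}=\rho_{A'B'}^T$ rather than the full condition $\rho_{A'B'E}=\rho^P_{A'B'E}$, but that weaker conclusion is all the subsequent Bell-diagonalization lemma actually uses.
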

\begin{proof}
We established the invariance under $(\sigma_2\ot\sigma_2\ot\id)$ in Corollary~\ref{cor:sigma2}. Since $(\cN\ot\cI_E)(\rho^P_{A'B'E})=(\cN\ot\cI_E)(\rho_{A'B'E})$, if Eve uses the state $(\rho_{A'B'E}\ot\proj{0}_{E'}+\rho^P_{A'B'E}\ot\proj{1}_{E'})/2$, then, by the same argument used at the end of the proof of Lemma~\ref{lem:marginals}, the entropy and scores are unchanged while the state satisfies the required conditions.
\end{proof}

The next step is to show that the state on $A'B'$ can be taken to come from the set of density operators that are diagonal in the Bell basis.  We define
\begin{align}
\cS_B:=&\left\{\lambda_0\proj{\Phi_0}+\lambda_1\proj{\Phi_1}+\lambda_2\proj{\Phi_2}+\lambda_3\proj{\Phi_3}:1\geq\lambda_0\geq\lambda_3\geq0,\,1\geq\lambda_1\geq\lambda_2\geq0,\vphantom{\sum_i}\right.\nonumber\\
    &\ \ \left.\lambda_0-\lambda_3\geq\lambda_1-\lambda_2,\sum_i\lambda_i=1\right\}\,,\label{eq:bellset}
\end{align}
where the states $\{\ket{\phi_i}\}_i$ are defined by~\eqref{bell1}--\eqref{bell4}.

\begin{lemma}
Any optimization of the form~\eqref{opt_2} is equivalent to an optimization of the same form but where each of the projectors are onto states of the form $\cos(\alpha)\ket{0}+\sin(\alpha)\ket{1}$ with $\alpha\in\mathbb{R}$ and $\rho_{A'B'}\in\cS_B$.
\end{lemma}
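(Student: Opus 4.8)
The plan is to build on Corollary~\ref{cor:trans}, which already lets us assume that the measurements are rank-$1$ real projectors onto $\cos\alpha\ket{0}+\sin\alpha\ket{1}$, that $\rho_{A'B'E}$ is invariant under $\sigma_2\ot\sigma_2$ on $A'B'$, and that $\rho_{A'B'E}=\rho^P_{A'B'E}$ (partial transpose in the Bell basis). Starting from an arbitrary feasible point of~\eqref{opt_2} of this form, I would first bring the reduced state $\rho_{A'B'}$ into Bell-diagonal form, and then use relabelling symmetries to reorder its eigenvalues so that the inequalities defining $\cS_B$ in~\eqref{eq:bellset} hold. Throughout, every operation I apply will be a local unitary on $A'B'$ absorbed into the measurements, so that the post-measurement state~\eqref{eq:post-meas}, and hence the score and all six entropies, are left unchanged; this is what makes the restricted optimization equivalent to the original.

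For the diagonalization, note that $\sigma_2\ot\sigma_2$ is diagonal in the Bell basis with eigenvalues $(-1,+1,+1,-1)$ on $(\Phi_0,\Phi_1,\Phi_2,\Phi_3)$, so invariance under $\sigma_2\ot\sigma_2$ forces $\rho_{A'B'}$ to be block diagonal with respect to $\mathrm{span}\{\Phi_0,\Phi_3\}\oplus\mathrm{span}\{\Phi_1,\Phi_2\}$. Hermiticity together with $\rho_{A'B'}=\rho^P_{A'B'}$ makes each $2\times2$ block \emph{real symmetric}. I would then apply a local $y$-rotation $R_A(\phi)\ot R_B(\psi)$, with $R(\phi)=\cos(\phi/2)\id-\ii\sin(\phi/2)\sigma_2$. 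Such a rotation is real orthogonal, so it maps a projector onto $\cos\alpha\ket{0}+\sin\alpha\ket{1}$ to one of the same form (with $\alpha\mapsto\alpha+\phi/2$), keeping the strategy in $\cC_{\Pi_1,2}(p_{XY})$. A short computation shows $R_A(\phi)\ot R_B(\psi)$ acts on the Bell basis as an independent $SO(2)$ rotation by $(\phi-\psi)/2$ on $\mathrm{span}\{\Phi_0,\Phi_3\}$ and by $(\phi+\psi)/2$ on $\mathrm{span}\{\Phi_1,\Phi_2\}$; since the two angles are independent I can simultaneously diagonalize both real-symmetric blocks. Conjugating both the state and the measurements by this unitary leaves~\eqref{eq:post-meas} unchanged, and the resulting feasible strategy has $\rho_{A'B'}$ Bell-diagonal with the same objective value.

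It remains to fix the labelling of the eigenvalues $\lambda_0,\dots,\lambda_3$. The assignment of $\lambda_i$ to $\ket{\Phi_i}$ is free up to permutations realised by local operations preserving the real-projective form. A $\pi/2$ rotation confined to a single block (the choice $\theta_-=\pi/2,\theta_+=0$, i.e.\ $R_A(\pi/2)\ot R_B(-\pi/2)$, and symmetrically for the other block) swaps $\lambda_0\leftrightarrow\lambda_3$ or $\lambda_1\leftrightarrow\lambda_2$; and $\sigma_3\ot\id$ permutes the Bell states as $\Phi_0\leftrightarrow\Phi_1$, $\Phi_2\leftrightarrow\Phi_3$, i.e.\ it exchanges the ordered pairs $(\lambda_0,\lambda_3)$ and $(\lambda_1,\lambda_2)$ and thereby swaps the gaps $\lambda_0-\lambda_3$ and $\lambda_1-\lambda_2$ while preserving each within-block ordering. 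These generate the order-eight group stabilising the partition $\{\{\Phi_0,\Phi_3\},\{\Phi_1,\Phi_2\}\}$ (the dihedral group $D_4$), whose action on the probability simplex has exactly $2^3=8$ chambers; $\cS_B$ is precisely one of them. Concretely, I would apply the two within-block swaps to secure $\lambda_0\geq\lambda_3$ and $\lambda_1\geq\lambda_2$, and then, if $\lambda_0-\lambda_3<\lambda_1-\lambda_2$, apply $\sigma_3\ot\id$ to restore $\lambda_0-\lambda_3\geq\lambda_1-\lambda_2$ without disturbing the first two inequalities. The constraints $\lambda_i\geq0$ and $\sum_i\lambda_i=1$ hold automatically for a density operator, so the state lands in $\cS_B$.

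I expect the main difficulty to lie in the two structural facts rather than in any estimate. First, one must check cleanly that $R_A(\phi)\ot R_B(\psi)$ decouples into the two independent block rotations, which is exactly what allows simultaneous diagonalization \emph{within the real class} (and explains why $\rho=\rho^P$ is needed: without it the off-diagonal block entries could be complex and a real $SO(2)$ rotation would not diagonalize them). Second, one must verify that the local operations implementing the eigenvalue permutations generate precisely the order-eight symmetry group for which the three inequalities of $\cS_B$ form a fundamental domain, and in particular that the direction of the gap inequality $\lambda_0-\lambda_3\geq\lambda_1-\lambda_2$ is the one preserved by the residual cross-block symmetry. The bookkeeping of how each rotation and Pauli permutes $\{\ket{\Phi_i}\}$ — most conveniently tracked via the induced sign patterns of $(\langle\sigma_1\ot\sigma_1\rangle,\langle\sigma_2\ot\sigma_2\rangle,\langle\sigma_3\ot\sigma_3\rangle)$ — is the crux of the argument.
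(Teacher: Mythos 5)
Your proposal is correct and follows essentially the same route as the paper: starting from the reduction to real rank-one projectors with $\rho_{A'B'}$ invariant under $\sigma_2\ot\sigma_2$ and equal to its Bell-basis partial transpose, you diagonalize the two real symmetric $2\times2$ blocks by local real ($y$-axis) rotations, which act as independent $SO(2)$ rotations on $\mathrm{span}\{\Phi_0,\Phi_3\}$ and $\mathrm{span}\{\Phi_1,\Phi_2\}$, and then use the residual $\pi/2$ block rotations together with $\sigma_3\ot\id$ to order the eigenvalues into $\cS_B$. The paper does exactly this, only phrased as an explicit computation of the transformed off-diagonal entries and the four angle solutions, whereas you package the same steps in slightly more structural (group-theoretic) language.
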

\begin{proof}
  From Corollary~\ref{cor:trans}, we have that $\rho_{A'B'}$ can be taken to be invariant under $\sigma_2\ot\sigma_2$. Hence we can write
  \begin{equation}\label{eq:bell_form}
\rho_{A'B'} = \left(\begin{array}{cccc}\lambda'_0&0&0&r_1\\0&\lambda'_1&r_2&0\\0&r_2^*&\lambda'_2&0\\r_1^*&0&0&\lambda'_3\end{array}\right)\,
    \end{equation}
where the matrix is expressing the coefficients in the Bell basis. That $\rho_{A'B'}=\rho_{A'B'}^T$ then implies that $r_1$ and $r_2$ are real. Note that in order that $\rho_{A'B'}$ is a positive operator we require $r_1^2\leq\lambda'_0\lambda'_3$ and $r_2^2\leq\lambda'_1\lambda'_2$.

    Let $U_\theta=\cos(\theta/2)\proj{0}+\sin(\theta/2)\ketbra{0}{1}-\sin(\theta/2)\ketbra{1}{0}+\cos(\theta/2)\proj{1}$, so that $U_\theta$ preserves the set $\{\cos(\alpha)\ket{0}+\sin(\alpha)\ket{1}:\alpha\in\mathbb{R}\}$. We proceed to show that for any state of the form~\eqref{eq:bell_form} with $r_1$ and $r_2$ real, there exist values of $\theta_A$ and $\theta_B$ such that
  $$\rho'_{A'B'} = (U_{\theta_A}\ot U_{\theta_B})\rho_{A'B'}(U^\dagger_{\theta_A}\ot U^\dagger_{\theta_B})$$
  is diagonal in the Bell basis.  We can compute the form of $\rho'_{A'B'}$ in the Bell basis.  This has the same form as~\eqref{eq:bell_form}, but with $r_1$ replaced by $r_1\cos(\theta_A-\theta_B)+\frac{\lambda'_0-\lambda'_3}{2}\sin(\theta_A-\theta_B)$ and $r_2$ replaced by
  $r_2\cos(\theta_A+\theta_B)+\frac{\lambda'_2-\lambda'_1}{2}\sin(\theta_A+\theta_B)$. To make these zero we need to choose $\theta_A$ and $\theta_B$ such that $\cos^2(\theta_A-\theta_B)=\frac{(\lambda'_0-\lambda'_3)^2}{(\lambda'_0-\lambda'_3)^2+4r_1^2}$ and $\cos^2(\theta_A+\theta_B)=\frac{(\lambda'_1-\lambda'_2)^2}{(\lambda'_1-\lambda'_2)^2+4r_2^2}$. If we write
  \begin{align*}
    \phi_1=\cos^{-1}\left(\frac{\lambda'_0-\lambda'_3}{\sqrt{(\lambda'_0-\lambda'_3)^2+4r_1^2}}\right),\quad&\phi_2=\cos^{-1}\left(\frac{\lambda'_1-\lambda'_2}{\sqrt{(\lambda'_1-\lambda'_2)^2+4r_2^2}}\right),\\
    \zeta_A=\frac{\phi_1+\phi_2}{2}\quad\text{and}\quad&\zeta_B=\frac{\phi_1-\phi_2}{2}
\end{align*}    
then we can express the four solutions
$$(\theta_A,\theta_B)=(\zeta_A,\zeta_B),\,(\zeta_A+\pi/2,\zeta_B-\pi/2),\,(\zeta_A+\pi/2,\zeta_B+\pi/2),\,(\zeta_A+\pi,\zeta_B)\,.$$
Each of these brings the state into the form $\rho_{A'B'}=\lambda_0\proj{\Phi_0}+\lambda_1\proj{\Phi_1}+\lambda_2\proj{\Phi_2}+\lambda_3\proj{\Phi_3}$.
The difference between the first two of these is an exchange of $\lambda_0$ with $\lambda_3$, the difference between the first and the third is an exchange of $\lambda_1$ with $\lambda_2$ and the difference between the first and the fourth is an exchange of $\lambda_0$ with $\lambda_3$ and of $\lambda_1$ with $\lambda_2$. It follows that we can ensure $\lambda_0\geq\lambda_3$ and $\lambda_1\geq\lambda_2$.  Finally, if $\lambda_0-\lambda_3<\lambda_1-\lambda_2$ we can apply $\sigma_3\ot\id$ to the resulting state, which simultaneously switches $\lambda_0$ with $\lambda_1$ and $\lambda_2$ with $\lambda_3$, while again preserving the set $\{\cos(\alpha)\ket{0}+\sin(\alpha)\ket{1}:\alpha\in\mathbb{R}\}$.
\end{proof}

The culmination of this section is the following.
\begin{lemma}\label{lem:bellred}
  For given $p_{XY}$, let
  \begin{eqnarray*}
    \cR_1(\score)&:=&\{(\cN,\rho_{A'B'E}):\cN\in\cC_{\Pi_1,2}(p_{XY}),\, \rho_{A'B'E}\in\cS(\cH_{A'}\ot\cH_{B'}\ot\cH_E),
              \,\scorefunction((\cN\ot\cI_E)(\rho_{A'B'E}))=\score\}\quad\text{and}\\
    \cR_2(\score)&:=&\{(\cN,\rho_{A'B'E}):\cN\in\cC_{\Pi_1,2}(p_{XY}),\, \rho_{A'B'E}\in\cS(\cH_{A'}\ot\cH_{B'}\ot\cH_E),
\,\rho_{A'B'}\in\cS_B,\,\scorefunction((\cN\ot\cI_E)(\rho_{A'B'E}))=\score\}\,.
  \end{eqnarray*}
We have $\inf_{\cR_2(\score)}\bar{H}((\cN\ot\cI_E)(\rho_{A'B'E}))=\inf_{\cR_1(\score)}  \bar{H}((\cN\ot\cI_E)(\rho_{A'B'E}))$ for any of the six entropy functions $\bar{H}$.
\end{lemma}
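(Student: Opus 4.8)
The plan is to recognize that essentially all of the substantive work has already been done in the preceding corollaries and lemma, so the statement follows from a squeezing argument between three nested feasible sets. Let $P(\score)$ denote the feasible set appearing in the lemma immediately above, namely those elements of $\cR_1(\score)$ for which, in addition, every projector is onto a real state $\cos(\alpha)\ket{0}+\sin(\alpha)\ket{1}$ with $\alpha\in\mathbb{R}$ and $\rho_{A'B'}\in\cS_B$. The key observation is the chain of inclusions
\begin{equation*}
P(\score)\subseteq\cR_2(\score)\subseteq\cR_1(\score),
\end{equation*}
which holds because $\cR_2$ is obtained from $\cR_1$ by imposing only $\rho_{A'B'}\in\cS_B$, while $P$ imposes this \emph{and} the reality constraint on the projectors (real-coefficient qubit projectors being a subset of all rank-1 qubit projectors in $\cC_{\Pi_1,2}(p_{XY})$).

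First I would note that taking an infimum over a smaller set cannot decrease its value, so the inclusions give
\begin{equation*}
\inf_{P(\score)}\bar H\;\geq\;\inf_{\cR_2(\score)}\bar H\;\geq\;\inf_{\cR_1(\score)}\bar H
\end{equation*}
for each of the six entropic functions $\bar H$. On the other hand, the lemma immediately preceding this one asserts precisely that the optimization over $\cR_1(\score)$ (which is exactly the problem~\eqref{opt_2}) is equivalent to the optimization over $P(\score)$, i.e.\ $\inf_{P(\score)}\bar H=\inf_{\cR_1(\score)}\bar H$. Combining this equality with the displayed chain squeezes $\inf_{\cR_2(\score)}\bar H$ between two equal quantities, yielding $\inf_{\cR_2(\score)}\bar H=\inf_{\cR_1(\score)}\bar H$, as claimed.

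The one point that needs care — and where I would focus the verification — is that the preceding equivalence $\inf_{P}\bar H=\inf_{\cR_1}\bar H$ is legitimate for all six entropies simultaneously, not merely for one. This is exactly where Corollaries~\ref{cor:sigma2} and~\ref{cor:trans} and Lemma~\ref{lem:marginals} do the real work: each reduction (real projectors, $\sigma_2\ot\sigma_2$-invariance, partial-transpose invariance, Bell-diagonalization via $U_{\theta_A}\ot U_{\theta_B}$, and eigenvalue reordering via $\sigma_3\ot\id$) is implemented by a local unitary on $A'B'$ together with at most a coherent copy of a classical flag into the adversary's system, and the statements of those results already certify that the CHSH score is preserved and that $\bar H$ is unchanged for every one of the six quantities (invoking Lemma~\ref{lem:N&C} for the flag-copying steps, and using that $\cH_E$ is arbitrary so the ancilla enlargements are harmless).

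Since no single deep estimate is required, I expect no genuine obstacle; the main risk is purely bookkeeping. Concretely, I would check that the reductions compose so that the endpoint state simultaneously satisfies all the defining inequalities of $\cS_B$, namely $\lambda_0\geq\lambda_3$, $\lambda_1\geq\lambda_2$ and $\lambda_0-\lambda_3\geq\lambda_1-\lambda_2$ (achieved via the four choices of $(\theta_A,\theta_B)$ together with $\sigma_3\ot\id$), while the measurements remain rank-1 qubit projectors, so that the endpoint indeed lies in $P(\score)\subseteq\cR_2(\score)$ and the squeeze closes.
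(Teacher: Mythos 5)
Your proposal is correct and matches the paper's intent: the paper gives no separate proof of this lemma, presenting it as the culmination of the preceding chain of reductions (Corollaries~\ref{cor:sigma2} and~\ref{cor:trans} and the Bell-diagonalization lemma), and your sandwich argument $P(\score)\subseteq\cR_2(\score)\subseteq\cR_1(\score)$ combined with $\inf_{P}\bar H=\inf_{\cR_1}\bar H$ is exactly the right way to make that implicit step explicit. No gap.
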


\subsection{Reduction to pure states}
Here we show that it is sufficient to restrict any of the optimizations we are interested in to pure states.
\begin{lemma}\label{lem:pure}
  For given $p_{XY}$ let $\cR_2(\score)$ be as in Lemma~\ref{lem:bellred} and consider
  \begin{eqnarray*}
    \cR_3(\score)&:=&\{(\cN,\rho_{A'B'E}):\cN\in\cC_{\Pi_1,2}(p_{XY}),\, \rho_{A'B'E}\in\cS_P(\cH_{A'}\ot\cH_{B'}\ot\cH_E),\,\rho_{A'B'}\in\cS_B,
                      \,\scorefunction((\cN\ot\cI_E)(\rho_{A'B'E}))=\score\}\,.
    \end{eqnarray*}
We have $\inf_{\cR_3(\score)}  \bar{H}((\cN\ot\cI_E)(\rho_{A'B'E}))=\inf_{\cR_2(\score)}  \bar{H}((\cN\ot\cI_E)(\rho_{A'B'E}))$ for any of the six entropy functions $\bar{H}$.
\end{lemma}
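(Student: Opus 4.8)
The plan is to prove the two inequalities between the infima separately. Since every pure state is in particular a density operator, $\cR_3(\score)\subseteq\cR_2(\score)$, and hence $\inf_{\cR_3(\score)}\bar{H}\geq\inf_{\cR_2(\score)}\bar{H}$ is immediate. All the work is in the reverse inequality, for which the natural tool is purification: given any $(\cN,\rho_{A'B'E})\in\cR_2(\score)$, I would introduce an ancilla $\cH_{E''}$ and a purification $\rho_{A'B'EE''}\in\cS_P(\cH_{A'}\ot\cH_{B'}\ot\cH_E\ot\cH_{E''})$ of $\rho_{A'B'E}$, then regard $EE''$ as Eve's new side-information system (permissible since $\cH_E$ is arbitrary).

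First I would check that this purified strategy lies in $\cR_3(\score)$. The measurement strategy, and hence the channel $\cN\in\cC_{\Pi_1,2}(p_{XY})$, is untouched, so it remains admissible. The reduced state on $A'B'$ is unchanged, $\tr_{EE''}(\rho_{A'B'EE''})=\tr_E(\rho_{A'B'E})=\rho_{A'B'}$, so it still lies in $\cS_B$. Because the CHSH score depends only on $\rho_{A'B'}$ and the measurements (through $p_{AB|xy}(a,b)=\tr((M_{a|x}\ot N_{b|y})\rho_{A'B'})$), it is preserved, $\scorefunction((\cN\ot\cI_{EE''})(\rho_{A'B'EE''}))=\score$. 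Thus $(\cN,\rho_{A'B'EE''})\in\cR_3(\score)$.

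The second step is to show the entropy cannot increase under purification, for each of the six functions. Writing the post-measurement state as $(\cN\ot\cI_{EE''})(\rho_{A'B'EE''})$ and tracing out $E''$ recovers $(\cN\ot\cI_E)(\rho_{A'B'E})$, since $\cN$ acts trivially on $EE''$. In every case $E$ appears in the conditioning, so handing Eve the extra register $E''$ only enlarges what she conditions on. By the fact that conditioning cannot increase the conditional von Neumann entropy (strong subadditivity), e.g.\ $H(AB|XYEE'')\leq H(AB|XYE)$ and $H(A|EE'')\leq H(A|E)$, the purified strategy achieves $\bar{H}\leq\bar{H}((\cN\ot\cI_E)(\rho_{A'B'E}))$. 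Taking infima gives $\inf_{\cR_3(\score)}\bar{H}\leq\inf_{\cR_2(\score)}\bar{H}$, and combining with the trivial inclusion yields equality.

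The one place needing care is the two quantities conditioned on $X=0,Y=0$, where the entropy is evaluated on the projected-and-renormalised state rather than the full post-measurement state. Here I would note that projecting onto $\proj{0}_X\ot\proj{0}_Y$ and renormalising commutes with tracing out $E''$, so the projected state for the purified strategy is again an extension (with Eve now holding $EE''$) of the projected state for the original one; the same ``conditioning reduces entropy'' bound then applies verbatim. I expect this bookkeeping around the conditioned cases to be the only genuinely fiddly point; the remainder is the standard observation that handing the adversary a purification can only help her.
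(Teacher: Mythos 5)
Your proposal is correct and follows essentially the same route as the paper: the trivial inclusion $\cR_3\subseteq\cR_2$ for one direction, and purification together with the data-processing/strong-subadditivity bound $H(C|ZEE'')\leq H(C|ZE)$ for the other, noting that the score and the reduced state on $A'B'$ are unaffected. Your extra care with the $X=0,Y=0$ conditioned cases is a point the paper's one-line proof leaves implicit, but it does not change the argument.
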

\begin{proof}
Since $\cR_3\subset\cR_2$, we have $\inf_{\cR_3(\score)}  \bar{H}((\cN\ot\cI_E)(\rho_{A'B'E}))\geq\inf_{\cR_2(\score)}  \bar{H}((\cN\ot\cI_E)(\rho_{A'B'E}))$. For the other direction, consider a state $\rho_{A'B'E}$ from the set $\cR_2$, and let $\rho_{A'B'EE'}$ be its purification. Using the strong subadditivity of the von-Neumann entropy, $H(C|ZEE')\leq H(C|ZE)$, a new strategy in which the only change is that Eve holds a purification of $\rho_{A'B'E}$ cannot increase any of the entropic quantities of interest and makes no change to the score. Thus, $\inf_{\cR_3(\score)}  \bar{H}((\cN\ot\cI_E)(\rho_{A'B'E}))\leq\inf_{\cR_2(\score)}  \bar{H}((\cN\ot\cI_E)(\rho_{A'B'E}))$.
\end{proof}

Note that this also means that we can restrict $\cH_E$ to be 4 dimensional.

\section{Simplifications of qubit strategies for specific entropic quantities}\label{app:simp}
In this section, we compute expressions for each of the entropies of interest, based on the simplifications from the previous section. In other words, we are considering the optimizations
\begin{align}
G(\score,p_{XY}):=\min\ &\bar{H}((\cN\ot\cI_E)(\rho_{A'B'E}))\label{opt_simp}\\
&\cH_{A'}=\cH_{B'}=\mathbb{C}^2,\ \cH_E=\mathbb{C}^4,\ \rho_{A'B'E}\in\cS_P(\cH_{A'}\ot\cH_{B'}\ot\cH_E),\ \rho_{A'B'}\in\cS_B\nonumber\\
                                                   &\cN:\sigma_{A'B'}\mapsto\sum_{abxy}p_{XY}(x,y)\proj{a}\ot\proj{b}\ot\proj{x}\ot\proj{y}\tr\left(\left(\proj{\phi^A_{a|x}}\ot\proj{\phi^B_{b|y}}\right)\sigma_{A'B'}\right)\nonumber\\
                         &\ket{\phi^A_{a|x}}=\cos(\alpha_{a|x})\ket{0}+\sin(\alpha_{a|x})\ket{1}\quad\text{and}\quad\ket{\phi^B_{b|y}}=\cos(\beta_{b|y})\ket{0}+\sin(\beta_{b|y})\ket{1}\label{opt_simp2}\\
&\alpha_{1|x}=\pi/2+\alpha_{0|x}\quad\text{and}\quad\beta_{1|x}=\pi/2+\beta_{0|x}\\
                         &\scorefunction((\cN\ot\cI_E)(\rho_{A'B'E}))=\score.
\end{align}
For convenience we sometimes use $\alpha_x=\alpha_{0|x}$ and $\beta_x=\beta_{0|x}$.

Let
\begin{eqnarray}
  \tau_{ABXYE}&=&(\cN\ot\cI_E)(\rho_{A'B'E})\nonumber\\
&=&\sum_{abxy}p_{XY}(x,y)\proj{a}\ot\proj{b}\ot\proj{x}\ot\proj{y}\ot\tr_{A'B'}\left(\left(\proj{\phi^A_{a|x}}\ot\proj{\phi^B_{b|y}}\ot\id_E\right)\rho_{A'B'E}\right)\nonumber\\
&=&\sum_{abxy}p_{XY}(x,y)p_{AB|xy}(a,b)\proj{a}\ot\proj{b}\ot\proj{x}\ot\proj{y}\ot\tau_E^{abxy}\label{eq:tau}\,,
\end{eqnarray}
where $\{\tau_E^{abxy}\}$ are normalized.

We make a few initial observations.

Consider $p_{AB|xy}(a,b)\tau_E^{abxy}=\tr_{A'B'}\left(\left(\proj{\phi^A_{a|x}}\ot\proj{\phi^B_{b|y}}\ot\id_E\right)\rho_{A'B'E}\right)$. Since $\rho_{A'B'E}$ is pure, we can use the Schmidt decomposition to write $\rho_{A'B'E}=\proj{\Phi}_{A'B'E}$, where
$$\ket{\Phi}_{A'B'E}=\sum_i\sqrt{\lambda_i}\ket{\Psi_i}\ot\ket{i}\,$$
where $\{\ket{i}\}$ is an orthonormal basis for $\cH_E$. We have
\begin{eqnarray*}
p_{AB|xy}(a,b)\tau_E^{abxy}=\sum_{ij}\sqrt{\lambda_i\lambda_j}\left(\bra{\phi^A_{a|x}}\ot\bra{\phi^B_{b|y}}\right)\ket{\Psi_i}\bra{\Psi_j}\left(\ket{\phi^A_{a|x}}\ot\ket{\phi^B_{b|y}}\right)\ketbra{i}{j}=\proj{\zeta^{abxy}}\,,
\end{eqnarray*}
where
\begin{eqnarray}\label{eq:zeta}
\ket{\zeta^{abxy}}=\!\sum_i\sqrt{\lambda_i}(\bra{\phi^A_{a|x}}\ot\bra{\phi^B_{b|y}})\ket{\Psi_i}\ket{i}\!\quad\!\text{and}\!\quad\! p_{AB|xy}(a,b)=\!\sum_i\lambda_i\left|(\bra{\phi^A_{a|x}}\ot\bra{\phi^B_{b|y}})\ket{\Psi_i}\right|^2\,.
\end{eqnarray}
Hence $\tau_E^{abxy}$ is pure for each $a,b,x,y$. Note also that
\begin{eqnarray}
  \left(\bra{\phi^A_{a|x}}\ot\bra{\phi^B_{b|y}}\right)\ket{\Psi_0}&=&\frac{\cos(\beta_{b|y}-\alpha_{a|x})}{\sqrt{2}}\label{eq:innerp1}\\
  \left(\bra{\phi^A_{a|x}}\ot\bra{\phi^B_{b|y}}\right)\ket{\Psi_1}&=&\frac{\cos(\beta_{b|y}+\alpha_{a|x})}{\sqrt{2}}\\
  \left(\bra{\phi^A_{a|x}}\ot\bra{\phi^B_{b|y}}\right)\ket{\Psi_2}&=&\frac{\sin(\beta_{b|y}+\alpha_{a|x})}{\sqrt{2}}\\
  \left(\bra{\phi^A_{a|x}}\ot\bra{\phi^B_{b|y}}\right)\ket{\Psi_3}&=&\frac{\sin(\beta_{b|y}-\alpha_{a|x})}{\sqrt{2}}\label{eq:innerp4}
\end{eqnarray}

Because $\tau_{ABXYE}$ is formed from $\rho_{A'B'E}$ without acting on $E$, we have $H(E)_\tau=H(E)_\rho$, and because $\rho_{A'B'E}$ is pure, $H(E)_\rho=H(A'B')_\rho=H(\{\lambda_0,\lambda_1,\lambda_2,\lambda_3\})$.\footnote{We use $H$ for both the von Neumann and Shannon entropies; if a list of probabilities is given as the argument to $H$ it signifies the Shannon entropy.} For the same reason, $\sum_{ab}p_{AB}(a,b)\tau_E^{abxy}=\tau_E$ for all $x,y$.

\begin{lemma}\label{lem:ent_simp}
For $\sigma_{AXE}=\sum_{ax}p_{AX}(a,x)\proj{a}\ot\proj{x}\ot\sigma^{a,x}_E$, we have
$$H(A|XE)=H(A|X)+\sum_{ax}p_{AX}(a,x)H(\sigma^{a,x}_E)-\sum_xp_X(x)H\left(\sum_ap_{A|x}(a)\sigma^{a,x}_E\right)\,.$$
\end{lemma}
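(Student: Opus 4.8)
The plan is to reduce everything to the standard formula for the von Neumann entropy of a block-diagonal classical-quantum state: if $\rho = \sum_i p_i \proj{i} \ot \rho_i$ with the $\ket{i}$ orthonormal and each $\rho_i$ normalized, then $H(\rho) = H(\{p_i\}) + \sum_i p_i H(\rho_i)$, where the first term is a Shannon entropy. Since $A$ and $X$ are classical registers, $\sigma_{AXE}$ is exactly of this block-diagonal form with index $(a,x)$, so this formula does the bulk of the work. First I would apply it to the full state to obtain $H(AXE) = H(AX) + \sum_{ax} p_{AX}(a,x) H(\sigma^{a,x}_E)$, where $H(AX)$ denotes the Shannon entropy of the joint distribution $p_{AX}$.

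Next I would compute the marginal on $XE$. Tracing out $A$ gives $\sigma_{XE} = \sum_x p_X(x)\proj{x}\ot\left(\sum_a p_{A|x}(a)\sigma^{a,x}_E\right)$, using $p_{AX}(a,x) = p_X(x)\, p_{A|x}(a)$; the bracketed operator is a normalized density operator since $\sum_a p_{A|x}(a) = 1$. This state is again block-diagonal, now in $x$ alone, so the same formula yields $H(XE) = H(X) + \sum_x p_X(x)\, H\left(\sum_a p_{A|x}(a)\sigma^{a,x}_E\right)$.

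Finally I would write $H(A|XE) = H(AXE) - H(XE)$, substitute the two expressions, and invoke the Shannon chain rule $H(AX) - H(X) = H(A|X)$ to collapse the classical terms. The von Neumann pieces combine directly into $\sum_{ax} p_{AX}(a,x) H(\sigma^{a,x}_E) - \sum_x p_X(x)\, H\left(\sum_a p_{A|x}(a)\sigma^{a,x}_E\right)$, which is precisely the claimed identity.

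There is no genuine obstacle here; the only points requiring care are (i) confirming that the off-diagonal blocks vanish so that the decomposition formula legitimately applies, which holds because $A$ and $X$ enter only through the mutually orthogonal projectors $\proj{a}$ and $\proj{x}$, and (ii) keeping the normalizations of the conditional states $\sigma^{a,x}_E$ and $\sum_a p_{A|x}(a)\sigma^{a,x}_E$ straight so that the Shannon and von Neumann contributions separate cleanly.
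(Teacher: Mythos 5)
Your proposal is correct and follows essentially the same route as the paper: both expand $H(A|XE)=H(AXE)-H(XE)$ via the standard entropy decomposition for block-diagonal classical--quantum states and then collapse the classical terms with the Shannon chain rule $H(AX)-H(X)=H(A|X)$. The paper's proof is just a more condensed version of the same two-line computation.
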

\begin{proof}
We have 
\begin{align*}
    H(A|XE)&=H(AXE)-H(XE)=H(AX)+\sum_{ax}p_{AX}(a,x)H(E|A=a,X=x)-H(X)-\sum_xp_X(x)H(E|X=x)\\
    &=H(A|X)+\sum_{ax}p_{AX}(a,x)\left(H(\sigma^{a,x}_E)-H\left(\sum_{a'}p_{A|x}(a')\sigma^{a',x}_E\right)\right)\,.\qedhere
\end{align*}
\end{proof}

We can parameterize the Bell diagonal state in the following way:
\begin{eqnarray}
  \lambda_0&=&\frac{1}{4}+\frac{R\cos(\theta)}{2}+\delta\label{param1}\\
  \lambda_1&=&\frac{1}{4}+\frac{R\sin(\theta)}{2}-\delta\label{param2}\\
  \lambda_2&=&\frac{1}{4}-\frac{R\sin(\theta)}{2}-\delta\\
  \lambda_3&=&\frac{1}{4}-\frac{R\cos(\theta)}{2}+\delta\label{param4}
\end{eqnarray}
where $0\leq R\leq1$, $0\leq\theta\leq\pi/4$ if $R\leq1/\sqrt{2}$, or $0\leq\theta\leq\pi/4-\cos^{-1}(1/(R\sqrt{2}))$ if $R>1/\sqrt{2}$ and $-1/4+R\cos(\theta)/2\leq\delta\leq1/4-R\sin(\theta)/2$.

\begin{lemma}\label{lem:delta_opt}
For $R>1/\sqrt{2}$, $\max_\delta H(\{\lambda_0,\lambda_1,\lambda_2,\lambda_3\})$ is achieved when $\delta=\delta^*:=\frac{R^2\cos(2\theta)}{4}$.
\end{lemma}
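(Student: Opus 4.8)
The plan is to treat $H=-\sum_{i=0}^{3}\lambda_i\log\lambda_i$ as a function of the single variable $\delta$ (with $R$ and $\theta$ held fixed), locate its stationary point, and then check that this point is a genuine maximum lying in the admissible range of $\delta$.

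First I would differentiate. From~\eqref{param1}--\eqref{param4} one has $\mathrm{d}\lambda_0/\mathrm{d}\delta=\mathrm{d}\lambda_3/\mathrm{d}\delta=1$ and $\mathrm{d}\lambda_1/\mathrm{d}\delta=\mathrm{d}\lambda_2/\mathrm{d}\delta=-1$, and since $\sum_i\mathrm{d}\lambda_i/\mathrm{d}\delta=0$ the additive constants drop out, giving
\begin{equation*}
\frac{\mathrm{d}H}{\mathrm{d}\delta}=-\log\frac{\lambda_0\lambda_3}{\lambda_1\lambda_2}\,.
\end{equation*}
The stationarity condition is therefore $\lambda_0\lambda_3=\lambda_1\lambda_2$. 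Writing each side as a difference of squares, $\lambda_0\lambda_3=(\tfrac14+\delta)^2-\tfrac{R^2\cos^2\theta}{4}$ and $\lambda_1\lambda_2=(\tfrac14-\delta)^2-\tfrac{R^2\sin^2\theta}{4}$, the $\delta^2$ terms cancel upon equating, leaving a linear equation whose solution is exactly $\delta^*=\tfrac{R^2\cos(2\theta)}{4}$.

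Next I would confirm that this stationary point is a maximum. Because each $\lambda_i$ is affine in $\delta$ and the Shannon entropy is concave in the probability vector, $H$ is concave in $\delta$; equivalently, $\mathrm{d}^2H/\mathrm{d}\delta^2=-\sum_i\lambda_i^{-1}<0$ wherever all $\lambda_i>0$. Hence the unique zero of the derivative is the global maximiser over any interval of $\delta$ containing it.

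The only step that genuinely invokes $R>1/\sqrt2$, and the part I expect to be the main obstacle, is verifying that $\delta^*$ lies in the admissible interval $-\tfrac14+\tfrac{R\cos\theta}{2}\le\delta\le\tfrac14-\tfrac{R\sin\theta}{2}$. Substituting $\delta^*$ and using $\cos(2\theta)=1-2\sin^2\theta=2\cos^2\theta-1$, the upper bound rearranges to $2R^2\sin^2\theta-2R\sin\theta+(1-R^2)\ge0$ and the lower bound to the same quadratic with $\sin\theta$ replaced by $\cos\theta$. The quadratic $q(t)=2R^2t^2-2Rt+(1-R^2)$ has real roots $t_\pm=\tfrac{1\pm\sqrt{2R^2-1}}{2R}$ precisely when $R>1/\sqrt2$, and a direct computation gives $\sin(\pi/4-\phi_0)=t_-$ and $\cos(\pi/4-\phi_0)=t_+$, where $\phi_0=\cos^{-1}(1/(R\sqrt2))$. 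The admissible range $0\le\theta\le\pi/4-\phi_0$ then forces $\sin\theta\le t_-$ and $\cos\theta\ge t_+$, so both evaluations of $q$ fall outside the interval between its roots and are non-negative. This places $\delta^*$ in the admissible interval and completes the argument.
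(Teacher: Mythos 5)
Your proposal is correct and takes essentially the same route as the paper's proof: the same stationarity condition giving $\delta^*=\tfrac{R^2\cos(2\theta)}{4}$, and the identical quadratic $2R^2t^2-2Rt+1-R^2$ evaluated at $t=\sin\theta$ and $t=\cos\theta$ for the range check, using $R>1/\sqrt{2}$ to place these outside the interval between the roots. The only (cosmetic, and slightly cleaner) difference is that you certify the maximum via global concavity of $H$ in $\delta$ --- the composition of the concave Shannon entropy with an affine map --- rather than evaluating the second derivative at $\delta^*$ as the paper does.
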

\begin{proof}
One can compute the derivative of $H(\{\lambda_0,\lambda_1,\lambda_2,\lambda_3\})$ with respect to $\delta$ to see that it is $0$ only for $\delta=\delta^*:=\frac{R^2\cos(2\theta)}{4}$. 

We next check that $\delta^*$ is in the valid range of $\delta$. The condition $\delta^*\leq1/4-R\sin(\theta)/2$ rearranges to
  $2R^2\sin^2(\theta)-2R\sin(\theta)+1-R^2\geq0$. The roots of the quadratic equation $2R^2x^2-2Rx+1-R^2$ are at $x=\frac{1}{2R}(1\pm\sqrt{2R^2-1})$. For $R\geq\frac{1}{\sqrt{2}}$ the roots are real\footnote{If $R<\frac{1}{\sqrt{2}}$ there are no real roots and the condition always holds.}. Our condition on $\theta$ implies that $0\leq\sin(\theta)\leq\frac{1}{2R}(1-\sqrt{2R^2-1})$, hence taking $x=\sin(\theta)$ we are always to the left of the first root and so $\delta^*\leq1/4-R\sin(\theta)/2$. A similar argument shows $\delta^*\geq-1/4+R\cos(\theta)/2$.

We can then compute the double derivative of $H(\{\lambda_0,\lambda_1,\lambda_2,\lambda_3\})$ with respect to $\delta$ and evaluate it at $\delta^*$.  This gives $-\frac{32}{\ln(2)(2-4R^2+R^4+R^4\cos(4\theta))}$, which can be shown to be negative for $R>1/\sqrt{2}$ and any valid $\theta$ using a similar argument to that above.
\end{proof}

Using this state and the specified measurements, the probability table for the observed distribution has the form given in the table below (whose entries correspond to $p_{AB|XY}$):\smallskip

\begin{center}
  \begin{tabular}{|c |c| c  c| c c| }
  \hline
  &&$X=0$&&$X=1$&\\
       & & $A=0$ & $A=1$  & $A=0$ & $A=1$\\
\hline
$Y=0$& $B=0$   & $\epsilon_{00}$ &  \vphantom{$\frac{\sum^a}{f}$} $\frac{1}{2} -\epsilon_{00} $ & $\epsilon_{10}$ &  $\frac{1}{2} -\epsilon_{10} $\\  
&  $B=1$ & \vphantom{$\frac{\sum^a}{f}$} $\frac{1}{2} - \epsilon_{00}$ &   $\epsilon_{00}$ & $\frac{1}{2} - \epsilon_{10}$ &   $\epsilon_{10}$     \\ 
  \hline
$Y=1$&  $B=0$  & $\epsilon_{01}$ &  \vphantom{$\frac{\sum^a}{f}$} $\frac{1}{2} -\epsilon_{01} $ & $\frac{1}{2} - \epsilon_{11}$ &   $\epsilon_{11}$ \\
&  $B=1$ &  \vphantom{$\frac{\sum^a}{f}$}$\frac{1}{2} - \epsilon_{01}$ &   $\epsilon_{01}$& $\epsilon_{11}$ &  $\frac{1}{2} -\epsilon_{11} $\\ 
  \hline
\end{tabular}\end{center}\smallskip
where
\begin{align*}
  \epsilon_{00}&=\frac{1}{4}\left(1+R\cos(\theta)\cos(2(\alpha_0-\beta_0))+R\sin(\theta)\cos(2(\alpha_0+\beta_0))\right)\\
  \epsilon_{01}&=\frac{1}{4}\left(1+R\cos(\theta)\cos(2(\alpha_0-\beta_1))+R\sin(\theta)\cos(2(\alpha_0+\beta_1))\right)\\
  \epsilon_{10}&=\frac{1}{4}\left(1+R\cos(\theta)\cos(2(\alpha_1-\beta_0))+R\sin(\theta)\cos(2(\alpha_1+\beta_0))\right)\\
  \epsilon_{11}&=\frac{1}{4}\left(1-R\cos(\theta)\cos(2(\alpha_1-\beta_1))-R\sin(\theta)\cos(2(\alpha_1+\beta_1))\right)\,.
\end{align*}

Note that
\begin{align}
\scorefunction(\tau_{ABXY})=&(\epsilon_{00}+\epsilon_{01}+\epsilon_{10}+\epsilon_{11})/2\nonumber\\
    =&\frac{1}{2}+\frac{R\cos(\theta)}{8}\bigl(\cos(2 (\alpha_0 - \beta_0))+ 
   \cos(2 (\alpha_0 - \beta_1))+ 
   \cos(2 (\alpha_1 - \beta_0))- 
    \cos(2 (\alpha_1 - \beta_1))\bigr)\nonumber\\
    &+\frac{R\sin(\theta)}{8}\bigl( 
   \cos(2 (\alpha_0 + \beta_0))+ 
   \cos(2 (\alpha_0 + \beta_1))+ 
   \cos(2 (\alpha_1 + \beta_0))- 
   \cos(2 (\alpha_1 + \beta_1))\bigr)\,,\label{eq:score}
\end{align}
which is independent of $\delta$.

\subsection{H(A|X=0,Y=0,E)}
This case was already covered in~\cite{PABGMS} where it was solved analytically (see also~\cite{WAP} for a slight generalization).
\begin{lemma}\label{lem:Ag00E}
For $3/4\leq\omega\leq\frac{1}{2}(1+\frac{1}{\sqrt{2}})$ the solution to the optimization problem~\eqref{opt_simp} when $\bar{H}=H(A|X=0,Y=0,E)$ is $1-H_\bin\left(\frac{1}{2}(1+\sqrt{16\score(\score-1)+3})\right)$.
\end{lemma}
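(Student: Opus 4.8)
The plan is to evaluate the reduced optimization~\eqref{opt_simp} for $\bar{H}=H(A|X=0,Y=0,E)$ directly, and to recover the closed form of~\cite{PABGMS} after translating between the CHSH value and the score $\score$. First I would apply Lemma~\ref{lem:ent_simp} to the $cq$-state $\tau'_{AE}=\sum_a p_{A|00}(a)\proj{a}\ot\tau^a_E$, where $\tau^a_E=\tfrac{1}{p_{A|00}(a)}\sum_b p_{AB|00}(a,b)\tau^{ab00}_E$. Since the $A$-marginal is uniform (Lemma~\ref{lem:marginals}) and $\sum_a p_{A|00}(a)\tau^a_E=\rho_E$, this gives
\[
H(A|X=0,Y=0,E)=1+\tfrac12 H(\tau^0_E)+\tfrac12 H(\tau^1_E)-H(\{\lambda_0,\lambda_1,\lambda_2,\lambda_3\})\,.
\]
The crucial simplification is that, because $\rho_{A'B'E}$ is pure and Alice's $x=0$ POVM elements are rank-one projectors, conditioning on her outcome $a$ leaves $B'E$ in a pure state $\propto(\bra{\phi^A_{a|0}}\ot\id_{B'E})\ket{\Phi}$. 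Hence $\tau^a_E$ has the same spectrum as the conditional qubit state $\tau^a_{B'}$ on $B'$, so $H(\tau^a_E)=H_\bin(q_a)$ with $q_a$ an eigenvalue of $\tau^a_{B'}$; the inner products needed to compute $q_a$ are read off from~\eqref{eq:zeta}--\eqref{eq:innerp4}.

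The key structural observation is that this objective depends only on the state parameters $(R,\theta,\delta)$ and on Alice's angle $\alpha_0$: the system $B'$ is traced out, so Bob's angles $\beta_0,\beta_1$ and Alice's second angle $\alpha_1$ do not enter $H(A|X=0,Y=0,E)$ at all, although they do appear in the score~\eqref{eq:score}. This decouples the problem. For any fixed state and $\alpha_0$ the objective is fixed, while $\alpha_1,\beta_0,\beta_1$ are free to be tuned so as to realise any score up to the maximal CHSH value attainable by that state. I would therefore recast the minimisation as: minimise $1+\tfrac12 H_\bin(q_0)+\tfrac12 H_\bin(q_1)-H(\{\lambda_0,\lambda_1,\lambda_2,\lambda_3\})$ over Bell-diagonal states (and over $\alpha_0$) whose maximal attainable score is at least $\score$, and one checks that at the optimum this constraint is saturated.

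It then remains to carry out this low-dimensional optimisation and show the minimum equals $1-H_\bin\left(\tfrac12\left(1+\sqrt{16\score(\score-1)+3}\right)\right)$; this is precisely the calculation performed in~\cite{PABGMS}. As a consistency check one verifies the endpoints: at $\score=3/4$ the classically correlated state $\tfrac12(\proj{00}+\proj{11})$ (i.e.\ $\lambda_0=\lambda_1=1/2$) with $\alpha_0=0$ gives $q_0,q_1\in\{0,1\}$ and $H(\{\lambda_0,\lambda_1,\lambda_2,\lambda_3\})=1$, hence objective $0$, while its maximal score is exactly $3/4$; at $\score=\tfrac12(1+\tfrac{1}{\sqrt2})$ the maximally entangled state gives objective $1$. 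Finally, writing the CHSH value as $S=8\score-4\in[2,2\sqrt2]$ and using $(S/2)^2-1=16\score(\score-1)+3$ converts the bound $1-H_\bin(\tfrac12(1+\sqrt{(S/2)^2-1}))$ of~\cite{PABGMS} into the stated expression. The main obstacle is the optimisation over the Bell-diagonal state: establishing that the closed form is a global rather than merely local minimum. For this I would either reproduce the Lagrangian analysis of~\cite{PABGMS} or invoke their result after checking that the reductions of Appendix~\ref{app:Proof1} bring our problem into exactly their form.
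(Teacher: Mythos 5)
Your proposal follows essentially the same route as the paper's proof: the reduction via Lemma~\ref{lem:ent_simp} to $1+\tfrac12\sum_a H(\tau^a_E)-H(\{\lambda_0,\lambda_1,\lambda_2,\lambda_3\})$, the observation that this objective depends only on $(R,\theta,\delta,\alpha_0)$ while the remaining angles serve only to realise the score (the paper's Lemma~\ref{lem:Ag00Escore} shows the maximal score of a state with parameter $R$ is $\tfrac12+\tfrac{R}{2\sqrt2}$, so the constraint is saturated at the optimum exactly as you argue), and the conversion $S=8\score-4$ giving $16\score(\score-1)+3$ are precisely the steps taken in Appendix~C. The one piece you defer to~\cite{PABGMS} --- the global optimisation over $(R,\theta,\delta,\alpha_0)$ --- is carried out explicitly in Lemma~\ref{lem:Ag00Eent}: $\alpha_0=0$ is optimal because the coefficient of $\cos(4\alpha_0)$ is nonnegative on $\cS_B$, the $\delta$-optimisation reduces to Lemma~\ref{lem:delta_opt} since $\lambda_0+\lambda_1$ is $\delta$-independent, and the $\theta$-derivative has a definite sign so the minimum sits at the boundary $\theta=\pi/4-\cos^{-1}(1/(R\sqrt2))$; invoking~\cite{PABGMS} for this step is legitimate once the problem has been brought to their form, which your reductions do achieve.
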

For completeness we give a proof here as well. We first show that the maximum CHSH score for a Bell diagonal state depends only on $R$.
\begin{lemma}\label{lem:Ag00Escore}
Given a state $\rho_{A'B'E}$ with $\rho_{A'B'}$ parameterized as in~\eqref{param1}--\eqref{param4}, if $\cN$ satisfies the requirements of the optimization problem~\eqref{opt_simp}, then $\tau_{ABXYE}=(\cN\ot\cI_E)(\rho_{A'B'E})$ satisfies $\scorefunction(\tau_{ABXYE})\leq\frac{1}{2}+\frac{R}{2\sqrt{2}}$, and there exists a channel $\cN$ achieving equality.
\end{lemma}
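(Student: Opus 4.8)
The plan is to reduce the maximization over measurements to the standard planar CHSH optimization controlled by the Bell-basis correlations of $\rho_{A'B'}$. First I would rewrite the score: introducing the $\pm1$ observables $A_x=\proj{\phi^A_{0|x}}-\proj{\phi^A_{1|x}}$ and $B_y=\proj{\phi^B_{0|y}}-\proj{\phi^B_{1|y}}$, the agreement probabilities $p_{AB|xy}(\text{agree})=(1+\langle A_xB_y\rangle)/2$ give $\scorefunction(\tau_{ABXYE})=\tfrac12+\tfrac18 S$ with $S=\langle A_0B_0\rangle+\langle A_0B_1\rangle+\langle A_1B_0\rangle-\langle A_1B_1\rangle$ (the sign matching the $xy=1$ winning condition $A\oplus B=1$). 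Because \eqref{opt_simp} restricts to projectors of the real form $\cos(\alpha)\ket0+\sin(\alpha)\ket1$, each observable has Bloch vector in the $\sigma_1$--$\sigma_3$ plane, $\hat a_x=(\sin2\alpha_x,\cos2\alpha_x)$ and $\hat b_y=(\sin2\beta_y,\cos2\beta_y)$, so only the $\sigma_1\sigma_1$ and $\sigma_3\sigma_3$ correlations enter and the $\sigma_2\sigma_2$ term (which carries the $\delta$-dependence) never appears --- consistent with the $\delta$-independence of \eqref{eq:score}.

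Next I would read off the two relevant correlations from \eqref{bell1}--\eqref{bell4}. Since $\langle\sigma_1\sigma_1\rangle$ on $\Phi_0,\dots,\Phi_3$ is $(+1,-1,+1,-1)$ and $\langle\sigma_3\sigma_3\rangle$ is $(+1,+1,-1,-1)$, the parameterization \eqref{param1}--\eqref{param4} gives $t_1:=\langle\sigma_1\sigma_1\rangle=\lambda_0-\lambda_1+\lambda_2-\lambda_3=R(\cos\theta-\sin\theta)$ and $t_2:=\langle\sigma_3\sigma_3\rangle=\lambda_0+\lambda_1-\lambda_2-\lambda_3=R(\cos\theta+\sin\theta)$, with every $\delta$ cancelling. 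Writing $\langle A_xB_y\rangle=t_1\sin2\alpha_x\sin2\beta_y+t_2\cos2\alpha_x\cos2\beta_y=\hat a_x^{\mathsf T}T\hat b_y$ for the planar correlation matrix $T=\mathrm{diag}(t_1,t_2)$, we have $S=\hat a_0^{\mathsf T}T(\hat b_0+\hat b_1)+\hat a_1^{\mathsf T}T(\hat b_0-\hat b_1)$.

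Then the optimization is routine. Maximizing over the unit vectors $\hat a_0,\hat a_1$ gives $S\le|T(\hat b_0+\hat b_1)|+|T(\hat b_0-\hat b_1)|$; decomposing the orthogonal vectors $\hat b_0\pm\hat b_1$ as $2\cos\phi\,\hat u$ and $2\sin\phi\,\hat v$ with $\hat u\perp\hat v$ and maximizing over $\phi$ yields $S\le2\sqrt{|T\hat u|^2+|T\hat v|^2}=2\sqrt{\mathrm{tr}(T^{\mathsf T}T)}=2\sqrt{t_1^2+t_2^2}$, using $\hat u\hat u^{\mathsf T}+\hat v\hat v^{\mathsf T}=I$. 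Finally $t_1^2+t_2^2=R^2[(\cos\theta-\sin\theta)^2+(\cos\theta+\sin\theta)^2]=2R^2$, so $S\le2\sqrt2\,R$ and $\scorefunction(\tau_{ABXYE})\le\tfrac12+\tfrac{R}{2\sqrt2}$, independent of $\theta$ and $\delta$.

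For the equality statement I would exhibit a saturating channel: each inequality above is attained by genuine planar unit vectors (take $\hat u,\hat v$ along the $\sigma_1,\sigma_3$ axes, fix $\phi$ by $\tan\phi=|t_1|/|t_2|$ to determine $\hat b_0,\hat b_1$, then set $\hat a_0\parallel T(\hat b_0+\hat b_1)$ and $\hat a_1\parallel T(\hat b_0-\hat b_1)$), and every such unit vector corresponds to a valid rank-one real projective measurement with antipodal outcomes, i.e.\ automatically $\alpha_{1|x}=\pi/2+\alpha_{0|x}$. The resulting $\theta$- and $R$-dependent angles thus give an $\cN$ in the allowed set with $\scorefunction=\tfrac12+R/(2\sqrt2)$; at $\theta=0$ this recovers the familiar angles $\alpha_0=0,\alpha_1=\pi/2,\beta_0=\pi/8,\beta_1=-\pi/8$. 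I expect the only delicate point to be the bookkeeping of signs and branches when aligning $\hat a_x$ with $T(\hat b_0\pm\hat b_1)$, so that equality (and not its negation) is achieved for every $\theta$; the bound itself is immediate once the problem is reduced to the planar correlation matrix $T$.
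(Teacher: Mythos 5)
Your argument is correct and reaches the same bound, but it is packaged differently from the paper's proof. The paper works directly with the explicit score formula~\eqref{eq:score}: it groups the terms containing $\alpha_0$ and those containing $\alpha_1$, applies $r\cos\phi+t\sin\phi\leq\sqrt{r^2+t^2}$ to each group, and then closes with one more Cauchy--Schwarz-type step, with equality exhibited by the explicit angles $\alpha_0=0$, $\alpha_1=\pi/4$, $\beta_0=\pi/8-\theta/2$, $\beta_1=-\pi/8+\theta/2$. You instead recast the problem in the coordinate-free correlation-matrix (Horodecki-criterion) language: the restriction to real projectors confines all Bloch vectors to the $\sigma_1$--$\sigma_3$ plane, the Bell-diagonal state contributes only the diagonal correlations $t_1=R(\cos\theta-\sin\theta)$ and $t_2=R(\cos\theta+\sin\theta)$ (with $\delta$ cancelling, consistent with the $\delta$-independence of~\eqref{eq:score}), and the standard two-step maximization yields $S\leq2\sqrt{\tr(T^{\mathsf T}T)}=2\sqrt{2}\,R$. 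The two proofs perform the same underlying optimization --- your maximization over $\hat a_0,\hat a_1$ is the paper's first inequality, and your $\tr(T^{\mathsf T}T)$ identity is its second --- but your version makes it structurally transparent why the answer depends only on $R$ and not on $\theta$ or $\delta$, which the paper's trigonometric manipulation obscures. The one slip is exactly in the bookkeeping you flagged: your construction puts $\hat a_1$ along the $\sigma_1$ axis, which in the paper's parameterization (projector onto $\cos(\alpha)\ket{0}+\sin(\alpha)\ket{1}$, hence Bloch angle $2\alpha$) is $\alpha_1=\pi/4$, not $\alpha_1=\pi/2$; with $\alpha_1=\pi/2$ the two Alice observables become antiparallel and the score drops to $\tfrac12+R/(4\sqrt{2})$. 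This is a translation error in the final line only and does not affect the validity of the construction or the bound.
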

\begin{proof}
Consider the score function~\eqref{eq:score}. Collecting all the terms involving $\alpha_0$ and $\alpha_1$ and some manipulation gives
\begin{align}
\scorefunction(\tau_{ABXYE})=&\frac{1}{2}+\frac{R}{2\sqrt{2}}\cos(\beta_0-\beta_1)\left[\sin(2\alpha_0)\sin(\beta_0+\beta_1)\cos(\frac{\pi}{4}+\theta)+\cos(2\alpha_0)\cos(\beta_0+\beta_1)\sin(\frac{\pi}{4}+\theta)\right]\nonumber\\
&+\frac{R}{2\sqrt{2}}\sin(\beta_0-\beta_1)\left[\sin(2\alpha_1)\cos(\beta_0+\beta_1)\cos(\frac{\pi}{4}+\theta)-\cos(2\alpha_1)\sin(\beta_0+\beta_1)\sin(\frac{\pi}{4}+\theta)\right],\label{eq:scoresimp}
\end{align}
where we have used $\cos(\theta)+\sin(\theta)=\sqrt{2}\sin(\pi/4+\theta)$ and $\cos(\theta)-\sin(\theta)=\sqrt{2}\cos(\pi/4+\theta)$.
  For brevity we write $\bar{\theta}=\frac{\pi}{4}+\theta$. We then use that for $r,t,\phi\in\mathbb{R}$ we have $r\cos(\phi)+t\sin(\phi)\leq\sqrt{r^2+t^2}$ with equality if $\phi$ is chosen such that $r\cos(\phi)+t\sin(\phi)\geq0$ and $r\sin(\phi)=t\cos(\phi)$.  This allows us to form the bound
  \begin{align}
\scorefunction(\tau_{ABXYE})\leq&\frac{1}{2}+\frac{R}{2\sqrt{2}}\left(|\cos(\beta_0-\beta_1)|\sqrt{\sin^2(\beta_0+\beta_1)\cos^2(\bar{\theta})+\cos^2(\beta_0+\beta_1)\sin^2(\bar{\theta})}\right.\nonumber\\
                           &\left.+|\sin(\beta_0-\beta_1)|\sqrt{\cos^2(\beta_0+\beta_1)\cos^2(\bar{\theta})+\sin^2(\beta_0+\beta_1)\sin^2(\bar{\theta})}\right)\label{eq:score1}\\
    \phantom{\scorefunction}\leq&\frac{1}{2}+\frac{R}{2\sqrt{2}}\,.
  \end{align}
Choosing $\alpha_0=0$, $\alpha_1=\pi/4$, $\beta_0=\frac{\pi}{8}-\frac{\theta}{2}$, $\beta_1=-\frac{\pi}{8}+\frac{\theta}{2}$ achieves equality (for instance).
\end{proof}
It follows that $\score>3/4$ is only possible if $R>1/\sqrt{2}$.

We now turn to the entropy. In this case we trace out $B$ from the state $\tau'$ in Section~\ref{app:AB00E} to give
$\tau'_{AE}=\sum_ap_{A|00}(a)\proj{a}\ot\sum_bp_{B|a00}(b)\tau_E^{ab00}$, so that $H(A|X=0,Y=0,E)_\tau=H(A|E)_{\tau'}$. Using Lemma~\ref{lem:ent_simp} we have
\begin{align*}
H(A|E)_{\tau'}&=H(A)_{\tau'}+\sum_ap_{A|00}(a)H\left(\sum_bp_{B|a00}(b)\tau_E^{ab00}\right)-H\left(\sum_{ab}p_{AB|00}(a,b)\tau_E^{ab00}\right)\\
&=1+\sum_a\frac{1}{2}H\left(\sum_bp_{B|a00}(b)\tau_E^{ab00}\right)-H(E)_{\rho}\\
              &=1+\sum_a\frac{1}{2}H\left(\sum_b2p_{AB|00}(a,b)\tau_E^{ab00}\right)-H(E)_{\rho}\,,
\end{align*}
where we have used the fact that $p_{A|00}(a)=1/2$ for $a=0,1$. The eigenvalues of $\sum_b2p_{AB|00}(a,b)\tau_E^{ab00}$ turn out to be
$$\frac{1}{2}\left(1\pm\sqrt{2(\lambda_0-\lambda_3)(\lambda_1-\lambda_2)\cos(4\alpha_0)+(\lambda_0-\lambda_3)^2+(\lambda_1-\lambda_2)^2}\right)\,,$$
independently of $a$.  Hence, we can write $H(A|E)_{\tau'}$ in terms of the Bell diagonal state using
\begin{align*}
\sum_a\frac{1}{2}H\left(\sum_b2p_{AB|00}(a,b)\tau_E^{ab00}\right)&=H_\bin\left(\frac{1}{2}\left(1+\sqrt{2(\lambda_0-\lambda_3)(\lambda_1-\lambda_2)\cos(4\alpha_0)+(\lambda_0-\lambda_3)^2+(\lambda_1-\lambda_2)^2}\right)\right)\\
  H(E)_\rho&=H(\{\lambda_0,\lambda_1,\lambda_2,\lambda_3\})
\end{align*}
Having established this, we show the following.
\begin{lemma}\label{lem:Ag00Eent}
Let $\rho_{A'B'E}$ be pure with $\cH_{A'}=\cH_{B'}=\mathbb{C}^2$, and let $\rho_{A'B'}$ be a Bell diagonal state parameterized by~\eqref{param1}--\eqref{param4} with $R>1/\sqrt{2}$. Let $\tau$ be the state defined by~\eqref{eq:tau}. Then $$H(A|X=0,Y=0,E)_\tau\geq1+H_\bin\left(\frac{1}{2}\left(1+\sqrt{2R^2-1}\right)\right),$$
where equality is achievable for $\alpha_0=0$.
\end{lemma}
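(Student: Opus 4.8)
The plan is to start from the closed form for $H(A|X=0,Y=0,E)_\tau$ assembled immediately above the statement, namely $1+H_\bin\!\left(\tfrac12\big(1+\sqrt{2(\lambda_0-\lambda_3)(\lambda_1-\lambda_2)\cos(4\alpha_0)+(\lambda_0-\lambda_3)^2+(\lambda_1-\lambda_2)^2}\big)\right)-H(\{\lambda_0,\lambda_1,\lambda_2,\lambda_3\})$, and to minimise it over the only parameters it depends on, namely $\alpha_0$, $\theta$ and $\delta$. First I would substitute the parameterisation~\eqref{param1}--\eqref{param4}, which gives $\lambda_0-\lambda_3=R\cos(\theta)$ and $\lambda_1-\lambda_2=R\sin(\theta)$ independently of $\delta$, so that the argument of the square root collapses to $R^2\big(1+\sin(2\theta)\cos(4\alpha_0)\big)$.

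Since this middle term is independent of $\delta$ while the subtracted term $H(\{\lambda_i\})$ is not, the entropy is lower bounded by replacing $H(\{\lambda_i\})$ with its $\delta$-maximum, which by Lemma~\ref{lem:delta_opt} is attained at $\delta=\delta^*=\tfrac14 R^2\cos(2\theta)$; this bound holds for every $\delta$ and is tight at $\delta=\delta^*$. The key step is then an algebraic identity: at $\delta=\delta^*$ the four eigenvalues factor as $\{\lambda_i\}=\{\tfrac{a}{2}\tfrac{b}{2},\tfrac{a}{2}\tfrac{c}{2},\tfrac{d}{2}\tfrac{b}{2},\tfrac{d}{2}\tfrac{c}{2}\}$ with $a=1+R(\cos(\theta)+\sin(\theta))$, $d=2-a$, $b=1+R(\cos(\theta)-\sin(\theta))$ and $c=2-b$, i.e.\ the spectrum is the product of the two Bernoulli distributions $\{\tfrac{a}{2},\tfrac{d}{2}\}$ and $\{\tfrac{b}{2},\tfrac{c}{2}\}$. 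Hence $H(\{\lambda_i\})=H_\bin\!\big(\tfrac12(1+R(\cos(\theta)+\sin(\theta)))\big)+H_\bin\!\big(\tfrac12(1+R(\cos(\theta)-\sin(\theta)))\big)$.

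Next I would optimise over $\alpha_0$: because $\sin(2\theta)\ge0$ on the allowed range, the square-root argument is largest at $\cos(4\alpha_0)=1$, i.e.\ $\alpha_0=0$, which minimises the middle $H_\bin$ term; using $1+\sin(2\theta)=(\cos(\theta)+\sin(\theta))^2$ this term becomes exactly $H_\bin\!\big(\tfrac12(1+R(\cos(\theta)+\sin(\theta)))\big)$ and cancels the first summand of $H(\{\lambda_i\})$. This leaves the clean expression $H(A|X=0,Y=0,E)=1-H_\bin\!\big(\tfrac12(1+R(\cos(\theta)-\sin(\theta)))\big)$. Finally I would minimise over $\theta\in[0,\,\pi/4-\cos^{-1}(1/(R\sqrt2))]$: since $R(\cos(\theta)-\sin(\theta))$ decreases in $\theta$ but stays positive, its value approaches $1/2$ from above as $\theta$ grows and $H_\bin$ correspondingly increases, so the minimum of $1-H_\bin$ sits at the upper endpoint $\theta_{\max}$, where a short computation gives $R(\cos(\theta)-\sin(\theta))=\sqrt{2R^2-1}$. (The optimisation therefore produces $1-H_\bin(\tfrac12(1+\sqrt{2R^2-1}))$, the form consistent with Lemma~\ref{lem:Ag00E} through $16\score(\score-1)+3=2R^2-1$.)

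The main obstacle is spotting the product-distribution factorisation of the Bell-diagonal spectrum at $\delta=\delta^*$ together with the exact cancellation it produces against the middle term at $\alpha_0=0$; once these are in hand, the remaining one-variable optimisation over $\theta$ is routine monotonicity, with the constraint boundary $\theta_{\max}$ (where the admissible $\delta$-interval degenerates to the single point $\delta^*$) supplying the extremal value $\sqrt{2R^2-1}$ and hence the achievability claim at $\alpha_0=0$.
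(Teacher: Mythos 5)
Your proposal is correct and follows essentially the same route as the paper: both lower bound the middle term by taking $\alpha_0=0$ (so that it becomes $H_\bin(\lambda_0+\lambda_1)$), both reduce to $\delta=\delta^*$ via Lemma~\ref{lem:delta_opt}, and both locate the minimum over $\theta$ at the upper endpoint $\theta=\pi/4-\cos^{-1}(1/(R\sqrt2))$; your product factorisation of the spectrum at $\delta^*$, giving $H(\{\lambda_i\})=H_\bin(\lambda_0+\lambda_1)+H_\bin(\lambda_0+\lambda_2)$ and the clean form $1-H_\bin\bigl(\tfrac12(1+R(\cos\theta-\sin\theta))\bigr)$, is exactly the simplification the paper uses implicitly, since its stated $\theta$-derivative is precisely the derivative of that expression. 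One remark: your final answer $1-H_\bin\bigl(\tfrac12(1+\sqrt{2R^2-1})\bigr)$ is the correct one — the ``$1+H_\bin$'' appearing in the lemma statement (and in the paper's own proof) is a sign typo, as confirmed by the formula in Lemma~\ref{lem:Ag00E}.
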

\begin{proof}
We note that
  \begin{align*}
\sum_a\frac{1}{2}H\left(\sum_b2p_{AB|00}(a,b)\tau_E^{ab00}\right)&\geq H_\bin\left(\frac{1}{2}\left(1+\sqrt{2(\lambda_0-\lambda_3)(\lambda_1-\lambda_2)+(\lambda_0-\lambda_3)^2+(\lambda_1-\lambda_2)^2}\right)\right)\\
  &=H_\bin(\lambda_0+\lambda_1)\,,
\end{align*}
with equality for $\alpha_0=0$.
Hence
\begin{equation}\label{eq:ent45}
  H(A|E)_{\tau'}\geq 1+H_\bin(\lambda_0+\lambda_1)-H(\{\lambda_0,\lambda_1,\lambda_2,\lambda_3\})\,.
  \end{equation}

Using the parameterization of~\eqref{param1}--\eqref{param4} we have $\lambda_0+\lambda_1=1/2(1+R(\cos(\theta)+\sin(\theta)))$. Thus, the minimum of $H(A|E)_{\tau'}$ over $\delta$ is achieved for $\delta=\delta^*$ (as in the case $H(AB|X=0,Y=0,E)$). Taking $\delta=\delta^*$ and differentiating the resulting expression with respect to $\theta$ yields
$$\frac{R}{2}(\cos(\theta)+\sin(\theta))\log\left(\frac{1-R\cos(\theta)+R\sin(\theta)}{1+R\cos(\theta)-R \sin(\theta)}\right)\,.$$
Since $\cos(\theta)+\sin(\theta)=\sqrt{2}\sin(\pi/4+\theta)$, the leading factor is always positive over our range of $\theta$.  The logarithm term is always negative, except for $\theta=\pi/4$ where it reaches zero. Thus, the minimum over $\theta$ is always obtained at the largest possible $\theta$, i.e., $\theta=\pi/4-\cos^{-1}\left(1/(R\sqrt{2})\right)$.

With this substitution the right hand side of~\eqref{eq:ent45} reduces to
$$1+H_\bin\left(\frac{1}{2}\left(1+\sqrt{2R^2-1}\right)\right)\,,$$
establishing the claim.
\end{proof}

Lemma~\ref{lem:Ag00E} is then a corollary of Lemmas~\ref{lem:Ag00Escore} and~\ref{lem:Ag00Eent}.
\begin{proof}[Proof of Lemma~\ref{lem:Ag00E}]
  From Lemma~\ref{lem:Ag00Eent} we have
  $$H(A|X=0,Y=0,E)_\tau\geq1+H_\bin\left(\frac{1}{2}\left(1+\sqrt{2R^2-1}\right)\right)\,.$$
  However, Lemma~\ref{lem:Ag00Escore} then implies
  \begin{eqnarray*}
    H(A|X=0,Y=0,E)_\tau&\geq&1+H_\bin\left(\frac{1}{2}\left(1+\sqrt{4(2\score-1)^2-1}\right)\right)\\
    &=&1+H_\bin\left(\frac{1}{2}\left(1+\sqrt{16\score(\score-1)+3)}\right)\right)\,,
  \end{eqnarray*}
where we use the fact that $H_\bin(p)$ is decreasing and concave for $p\geq1/2$.
  Equality is achievable by taking $\alpha_0=0$, $\alpha_1=\pi/4$, $\beta_0=\frac{\pi}{8}-\frac{\theta}{2}$, $\beta_1=-\frac{\pi}{8}+\frac{\theta}{2}$.  
\end{proof}

We use this case to gain confidence in our numerics, since we can make a direct comparison to the analytic curve.

\subsection{H(A|XYE)}\label{app:AgXYE}
In this case Lemma~\ref{lem:ent_simp} gives
\begin{eqnarray*}
  H(A|XYE)&=&H(A|XY)+\sum_{axy}p_{AXY}(a,x,y)H\left(\sum_bp_{B|axy}(b)\tau_E^{abxy}\right)\\
  &&-\sum_{xy}p_{XY}(x,y)H\left(\sum_{ab}p_{AB|xy}(a,b)\tau_E^{abxy}\right)\\
        &=&1+\sum_{axy}p_{XY}(x,y)p_{A|xy}(a)H\left(\sum_b2p_{AB|xy}(a,b)\tau_E^{abxy}\right)-H(E)\,.
\end{eqnarray*}
The eigenvalues of $\sum_b2p_{AB|xy}(a,b)\tau_E^{abxy}$ can be computed to be
$$\frac{1}{2}\left(1\pm\sqrt{2(\lambda_0-\lambda_3)(\lambda_1-\lambda_2)\cos(4\alpha_x)+(\lambda_0-\lambda_3)^2+(\lambda_1-\lambda_2)^2}\right)\,,$$
independently of $a,y$. If we define $$g(\alpha):=\frac{1}{2}\left(1+\sqrt{2(\lambda_0-\lambda_3)(\lambda_1-\lambda_2)\cos(4\alpha)+(\lambda_0-\lambda_3)^2+(\lambda_1-\lambda_2)^2}\right)$$
then
\begin{align*}
H(A|XYE)&=1+\sum_xp_X(x)H_\bin(g(\alpha_x))-H(E)\,.
\end{align*}
Using the parameterization~\eqref{param1}--\eqref{param4} we have
$$g(\alpha)=\frac{1}{2}\left(1+R\sqrt{1+\sin(2\theta)\cos(4\alpha)}\right).$$
Since this is independent of $\delta$, we can again use $\delta=\delta^*$ (cf.\ Lemma~\ref{lem:delta_opt}) to remove one parameter when minimizing $H(A|XYE)$.

We now restrict to the case where $p_X(x)=1/2$ for $x=0,1$. Since $H(E)$ is independent of $\{\alpha_x\}$, we can consider the optimization
\begin{align}
  \min_{\alpha_0,\alpha_1,\beta_0,\beta_1} &\ H_\bin(g(\alpha_0))+H_\bin(g(\alpha_1)) \nonumber\\
\text{subject to } &\ \scorefunction(\tau_{ABXYE})=\score\label{eq:redopt}
\end{align}
for some fixed values of $\score$, $R$ and $\theta$.

We proceed to make a series of simplifications of this optimization.
\begin{lemma}
  The optimization~\eqref{eq:redopt} is equivalent to
\begin{align}
  \min_{u,v} &\ H_\bin(g((v+u)/4))+H_\bin(g((v-u)/4))\nonumber\\
  \mathrm{subject\ to\ }\ &\ S(u,v):=\frac{1}{2}+\frac{R}{4}\left(\cos(u/2)\sqrt{1+\cos(v)\sin(2\theta)}+\sin(u/2)\sqrt{1-\cos(v)\sin(2\theta)}\right)=\score\label{eq:redopt2}\\
  &\ 0\leq u\leq\pi\nonumber
\end{align}
\end{lemma}
\begin{proof}
Noting that the objective function in~\eqref{eq:redopt} is independent of Bob's angles ($\beta_0$ and $\beta_1$), analogously to the derivation of~\eqref{eq:score1} we can bound the score function using
\begin{align*}
S(\tau_{ABXYE})\leq&\ \frac{1}{2}+\frac{R}{2\sqrt{2}}\left(\left|\cos(\alpha_0-\alpha_1)\right|\sqrt{\sin^2(\alpha_0+\alpha_1)\cos^2(\bar{\theta})+\cos^2(\alpha_0+\alpha_1)\sin^2(\bar{\theta})}\right.\\                           &\left.+\left|\sin(\alpha_0-\alpha_1)\right|\sqrt{\cos^2(\alpha_0+\alpha_1)\cos^2(\bar{\theta})+\sin^2(\alpha_0+\alpha_1)\sin^2(\bar{\theta})}\right).
\end{align*}
We now substitute $v/2=\alpha_0+\alpha_1$ and $u/2=\alpha_0-\alpha_1$ and rearrange (recalling that $\bar{\theta}=\pi/4+\theta$) to give
\begin{align*}
S(\tau_{ABXYE})\leq&\frac{1}{2}+\frac{R}{4}\left(\left|\cos(u/2)\right|\sqrt{1+\cos(v)\sin(2\theta)}+\left|\sin(u/2)\right|\sqrt{1-\cos(v)\sin(2\theta)}\right).
\end{align*}
Since $G_{A|XYE}(\score)$ is monotonically increasing in $\score$ (cf.\ Appendix~\ref{app:mono}), it follows that we wish to choose the angles to achieve the largest possible score function.

We first note that if $\cos(u/2)<0$ we can make the substitution $\alpha_0\mapsto \pi/2+\alpha_1$ and $\alpha_1\mapsto\alpha_0-\pi/2$ which maintains the objective function, constraint, $v$ and $\sin(u/2)$, while changing the sign of $\cos(u/2)$. In addition, if $\sin(u/2)<0$, the substitution $\alpha_0\mapsto\alpha_1$ and $\alpha_1\mapsto\alpha_0$ maintains the objective function, constraint, $v$ and $\cos(u/2)$ while changing the sign of $\sin(u/2)$. It follows that the maximum of $H_\bin(g((v+u)/4))+H_\bin(g((v-u)/4))$ for fixed score is obtained when $$S(\tau_{ABXYE})=S(u,v):=\frac{1}{2}+\frac{R}{4}\left(\cos(u/2)\sqrt{1+\cos(v)\sin(2\theta)}+\sin(u/2)\sqrt{1-\cos(v)\sin(2\theta)}\right)$$
and when both $\cos(u/2)\geq0$ and $\sin(u/2)\geq0$, or, alternatively $0\leq u\leq\pi$.
\end{proof}
\begin{lemma}
  In the optimization~\eqref{eq:redopt2} we can restrict to $0\leq u\leq\pi/2$ and $0\leq v\leq\pi/2$ without affecting the result.
\end{lemma}
\begin{proof}
Consider a $u$ that satisfies $0\leq u\leq\pi$. If $\cos(u/2)\geq\sin(u/2)$ then $0\leq u\leq\pi/2$. Otherwise, consider $u\mapsto\pi-u$, $v\mapsto\pi-v$. This maintains the constraint and the value of the objective function and hence the optimal value can still be obtained, but now with $\cos(u/2)\geq\sin(u/2)$, so we can assume $0\leq u\leq\pi/2$.

For the restriction on $v$, first note that transforming $v\mapsto-v$ has no affect on either the objective function or the constraint so we can take $\sin(v)\geq0$, or $0\leq v\leq\pi$. If $v>\pi/2$, then $\cos(v+u)<0$. Furthermore, $\cos(v-u)+\cos(v+u)=2\cos(v)\cos(u)\leq0$ and hence $\cos(v-u)\leq|\cos(v+u)|$.  Let $\bar{v}=\pi-v$. We have $$S(u,v)-S(u,\bar{v})=\frac{R}{4}\left(\left(\sin(u/2)-\cos(u/2)\right)\left(\sqrt{1-\cos(v)\sin(2\theta)}-\sqrt{1+\cos(v)\sin(2\theta)}\right)\right)\leq0\,,$$
where the inequality follows from $\cos(v)<0$, $\sin(2\theta)\geq0$ and $\cos(u/2)\geq\sin(u/2)$. Hence, the mapping $v\mapsto\pi-v$ increases $S(u,v)$.

Consider now the effect on the objective function $$J(u,v):=H_\bin\left(\frac{1}{2}\left(1+R\sqrt{1+\sin(2\theta)\cos(v+u)}\right)\right)+H_\bin\left(\frac{1}{2}\left(1+R\sqrt{1+\sin(2\theta)\cos(v-u)}\right)\right).$$
Note that each binary entropy term decreases as its cosine term increases. We have
$$J(u,\bar{v}):=H_\bin\left(\frac{1}{2}\left(1+R\sqrt{1-\sin(2\theta)\cos(v+u)}\right)\right)+H_\bin\left(\frac{1}{2}\left(1+R\sqrt{1-\sin(2\theta)\cos(v-u)}\right)\right).$$

If $\cos(v-u)\leq0$ then $J(u,\bar{v})\leq J(u,v)$, so the transformation decreases the objective function.

On the other hand, if $\cos(v-u)\geq0$ we must have $\cos(v-u)\leq|\cos(v+u)|$ and hence $$\sqrt{1-\sin(2\theta)\cos(v+u)}\geq\sqrt{1+\sin(2\theta)\cos(v-u)}\geq\sqrt{1-\sin(2\theta)\cos(v-u)}\geq\sqrt{1+\sin(2\theta)\cos(v+u)}.$$
Thus, $J(u,\bar{v})\leq J(u,v)$ and the transformation decreases the objective function.

Thus, in both cases the transformation decreases the objective function while increasing the score. Using the monotonicity of $G_{A|XYE}(\score)$ with the score $\score$ (cf.\ Appendix~\ref{app:mono}), it follows that we can further reduce the objective function while bringing the score back to its original level.
\end{proof}

Let us turn to the constraint. We have
$$\score=\frac{1}{2}+\frac{R}{4}\left(\cos(u/2)\sqrt{1+\cos(v)\sin(2\theta)}+\sin(u/2)\sqrt{1-\cos(v)\sin(2\theta)}\right)=\frac{1}{2}+\frac{R}{2\sqrt{2}}\cos(u/2-\phi)\,,$$
where $\cos(\phi)=\sqrt{(1+\cos(v)\sin(2\theta))/2}$ and $\sin(\phi)=\sqrt{(1-\cos(v)\sin(2\theta))/2}$. We can rearrange this to $\cos(u/2-\phi)=\sqrt{2}(2\score-1)/R$ and hence there are two possibilities for $u$: \begin{equation}\label{eq:uforms}
u_{\pm}=2\cos^{-1}\sqrt{(1+\cos(v)\sin(2\theta))/2}\pm2\cos^{-1}(\sqrt{2}(2\score-1)/R).
\end{equation}
We can hence remove the constraint and consider the optimizations
\begin{align*}
  \min_v\ J(u_{\pm}(v),v)
\end{align*}

Summarizing the above analysis we have the following.
\begin{corollary}\label{cor:onesidelower}
Let $\score \in(3/4,(1+1/\sqrt{2})/2]$ and $K(R,\theta)=1-H(\{\lambda_0(R,\theta),\lambda_1(R,\theta),\lambda_2(R,\theta), \lambda_3(R,\theta)\})$, where $\{\lambda_i(R,\theta)\}_{i=0}^3$ are given by~\eqref{param1}--\eqref{param4} with $\delta=\delta^*$. Defining $\M{D}_\score=\{(R,\theta,v):R\in[\sqrt{2}(2\score-1),1],\ \theta\in[0,\frac{\pi}{4}-\cos^{-1}\left(1/(\sqrt{2}R)\right)],\ v\in[0,\frac{\pi}{2}]\}$, we have
\begin{eqnarray}\label{One-sided-final opt}
  G_{A|XYE}(\score) &=&\min_{\M{D}_\score , u\in\{u_+,u_-\}}J(u(v),v)/2+K(R,\theta)\,.
\end{eqnarray}
\end{corollary}

\subsubsection{Monotonicity of $K(R,\theta)$}\label{app: Monotonicty of K}
The following monotonicity properties of the function $K(R,\theta)$ will be useful later. 
\begin{lemma}\label{lem: monotonicity of K(R)}
For any $\score\in(3/4,(1+1/\sqrt{2})/2]$, and $(R,\theta)\in \M{D}_{\score}$ we have $\partial_{R} K(R,\theta)\geq0$.
\end{lemma}
\begin{proof}
Note that 
\begin{equation*}
    \frac{\lambda_0 \lambda_3}{\lambda_1 \lambda_2} = 1 
\end{equation*}
and $\lambda_0 > \lambda_3$ and $\lambda_1 > \lambda_2$. We differentiate $K(R, \theta)$ with respect to $R$ 
\begin{eqnarray}
\partial_{R} K(R , \theta) &=&  \sum_{i} \Big( \log(\lambda_i) + \frac{1}{\ln2} \Big) \frac{\partial \lambda_i}{\partial R} \nonumber \\ 
&=& \frac{1}{2}\left(\left(\cos(\theta)+R\cos(2\theta)\right)(\log\lambda_0+\frac{1}{\ln2})+\left(\sin(\theta)-R\cos(2\theta)\right)(\log\lambda_1+\frac{1}{\ln2})-\right.\nonumber\\
&&\left.\left(\sin(\theta)+R\cos(2\theta)\right)(\log\lambda_2+\frac{1}{\ln2})-\left(\cos(\theta)-R\cos(2\theta)\right)(\log\lambda_3+\frac{1}{\ln2})\right)   \nonumber \\ 
&=& \frac{1}{2}\left(\log\Big(\frac{\lambda_0}{\lambda_3} \Big) \cos(\theta) + \log\Big(\frac{\lambda_1}{\lambda_2} \Big)\sin(\theta)+ R\cos(2 \theta)\log\Big(\frac{\lambda_0 \lambda_3}{\lambda_1 \lambda_2} \Big)\right) \nonumber \\
&=& \log\Big(\frac{\lambda_0}{\lambda_3} \Big) \cos(\theta) + \log\Big(\frac{\lambda_1}{\lambda_2} \Big)\sin(\theta) \geq0 \nonumber 
\end{eqnarray}
as claimed.
\end{proof}
We derive a similar result for monotonicity of $K(R,\theta)$ with respect to $\theta$
\begin{lemma}\label{lem: monotonicity of K(th)}
For any $\score\in(3/4,(1+1/\sqrt{2})/2]$, and $(R,\theta)\in \M{D}_{\score}$ we have $\partial_{\theta} K(R,\theta)\geq0$.
\end{lemma}
\begin{proof}
We differentiate $K(R,\theta)$ with respect to $\theta$ 
\begin{eqnarray}
\partial_{\theta} K(R , \theta) &=&  \sum_{i} \Big( \log(\lambda_i) + \frac{1}{\ln2} \Big) \frac{\partial \lambda_i}{\partial \theta} \nonumber \\ 
&=&  \sum_{i} \Big( \log(\lambda_i) \frac{\partial \lambda_i}{\partial \theta}\Big) + \frac{1}{\ln2}\sum_{i} \frac{\partial \lambda_i}{\partial \theta}  \nonumber \\ 
&=& \sum_{i} \Big( \log(\lambda_i) \frac{\partial \lambda_i}{\partial \theta}\Big)  \nonumber \\ 
&=&\frac{1}{2}\left(-\log\Big(\frac{\lambda_0}{\lambda_3} \Big) R\sin(\theta) + \log\Big(\frac{\lambda_1}{\lambda_2} \Big)R\cos(\theta) - R^2 \sin(2 \theta)\log\Big(\frac{\lambda_0 \lambda_3}{\lambda_1 \lambda_2} \Big)\right) \nonumber \\
&=& -\frac{R\sin(\theta)}{2} \log\Big(\frac{\lambda_0}{\lambda_3} \Big) + \frac{R\cos(\theta)}{2} \log\Big(\frac{\lambda_1}{\lambda_2} \Big)   \nonumber
\end{eqnarray}
We now consider the function 
\begin{eqnarray}
f(R, \theta) = \frac{2\ln2}{R}\partial_\theta K(R,\theta)=- \sin(\theta)\ln\left(\frac{\lambda_0(R, \theta)}{\lambda_3(R, \theta)}\right)+\cos(\theta)\ln\left(\frac{\lambda_0(R, \theta)}{\lambda_3(R, \theta)}\right)
\end{eqnarray}
Note that $\score\in(3/4,(1+1/\sqrt{2})/2]$ implies $1/\sqrt{2}<R\leq1$ and $0\leq\theta\leq\pi/4-\cos^{-1}(1/(R\sqrt{2}))$, or $0\leq\theta\leq\pi/4$, $1/\sqrt{2}<R\leq1/(\cos(\theta)+\sin(\theta))$. We can extend the domain of $f(R,\theta)$ to $0\leq\theta\leq\pi/4$, $0 \leq R \leq 1/(\cos(\theta) + \sin(\theta))$. \\ 
Taking derivative of $f$ with respect to $R$ gives 
\begin{eqnarray}
\partial_{R}f(R, \theta) = \frac{2 R^2 \sin (4 \theta)}{(1 - R^2(\cos(\theta) + \sin(\theta))^2) (1 - R^2(\cos(\theta) - \sin(\theta))^2 )}.
\end{eqnarray}
Thus, $\partial_{R}f(R , \theta) > 0$ whenever $\theta \in [0 ,\frac{\pi}{4}]$. We can then infer that $\partial_{\theta}K(R, \theta) = \frac{R}{2\ln2}f(R, \theta) \geq\frac{R}{2\ln2} f(0 , \theta) = 0$. 
\end{proof}

\subsubsection{Lower bounding the objective function}\label{app: lower bound using grids}
In this section, we propose a method to compute a lower bound on the function $G_{.|.E}$ by partitioning the domain. We start by considering an abstract version of the problem, which has the form
\begin{eqnarray}
 \min_{\B{x} \in \M{D}}  Q(\B{x})
\end{eqnarray}
where $\M{D} \subset \mathbb{R}^{n}$ is a compact set and $Q: \M{D} \mapsto \mathbb{R}$ is bounded.  Furthermore, we assume we know an upper bound $M$ such that $Q(\B{x})\leq M$ for all $\B{x}\in\M{D}$.

We use the notation $\M{C}_{\B{a},\B{b}}=[a_1,b_1]\times[a_2,b_2]\times\cdots\times[a_n,b_n]$, i.e., $\M{C}_{\B{a},\B{b}}$ is a hyper-cuboid with $\B{a}$ and $\B{b}$ as two opposite vertices. Then let $\M{C}\supseteq\M{D}$ be any hyper-cuboid that completely contains $\M{D}$.
We say $\M{P}=\{\M{C}_{\B{a}^i,\B{b}^i}\}_i$ is a partition of $\M{C}$ if 
\begin{eqnarray}
\M{C} = \bigcup_{i} \M{C}_{\B{a}^{i}, \B{b}^{i}} 
\end{eqnarray}
where $\{\M{C}_{\B{a}^{i}, \B{b}^{i}}\}_i$ are cuboids whose intersection has zero volume.

The main idea behind our lower bounds is to find lower bounds on $Q(\B{x})$ that hold on each cuboid and then to take the minimum of all the lower bounds.  In some cases these bounds are formed by starting from a corner and using bounds on the derivatives of $Q(\B{x})$ on the cuboid to form a bound that holds across the cuboid.  In other cases, we use monotonicity arguments to imply that evaluation at one of the corners lower bounds the whole cuboid. Some of our cuboids lie entirely outside the original domain $\M{D}$. To save calculation we assign the known upper bound on the function as the upper bound on cuboids in our partition that lie outside of $\M{D}$.

\subsubsection{Obtaining a lower bound on $G_{A|XYE}$}
We now return to our optimization problem~\eqref{One-sided-final opt}. It is convenient to switch parameterization to use $\eta := \cos^{-1}(\sqrt{2}(2 \score - 1)/ R)$ instead of $R$. Taking $\B{x}=(\eta,\theta,v)$ we rewrite~\eqref{One-sided-final opt} as \begin{eqnarray}\label{eqn: Optimization problem on grids}
G_{A|XYE}(\score) = \min_{\mathbf{x} \in \M{D}_\score,u\in\{u_+,u_-\}} F_1(\eta , \theta, v) + F_2(\eta , \theta , v) + K(R(\eta), \theta) 
\end{eqnarray}
where
\begin{eqnarray}
F_{1}(\eta, \theta , v) &=& \frac{1}{2}H_{\bin} \Big(\frac{1}{2} + \frac{R(\eta)}{2} \sqrt{1 + \cos(u(v) + v) \sin(2 \theta)}\Big) \nonumber\\ 
F_{2}(\eta, \theta , v) &=& \frac{1}{2}H_{\bin} \Big(\frac{1}{2} + \frac{R(\eta)}{2} \sqrt{1 + \cos(u(v) - v) \sin(2 \theta)}\Big) \nonumber\\ 
R(\eta)                 &=& \frac{\sqrt{2}(2 \score - 1)}{ \cos(\eta)}\,. \nonumber
\end{eqnarray}
Here the domain $\M{D}_\score$ is the set 
\begin{eqnarray}\label{eqn: Domain for one sided rate}
\M{D}_\score = \left\{ (\eta , \theta , v) : \eta\in[0,\cos^{-1}\big(\sqrt{2}(2 \score - 1)\big)] , \theta\in[0,\frac{\pi}{4}-\cos^{-1}(\cos(\eta)/(4\score-2))], v\in[0,\frac{\pi}{2}]\right\}.
\end{eqnarray} 
Define a cuboid $\M{C} \supseteq \M{D}_{\score}$ as $[0 , \cos^{-1}\big(\sqrt{2}(2 \score - 1)\big)]\times[0 , \frac{\pi}{4} - \cos^{-1}\big(1/(4\score-2)\big)]\times[0,\frac{\pi}{2}]$. We then partition $\M{C}$ as follows.  We take $\{\eta_{i}\}_{i=0}^{N+1}$ to be such that $0 = \eta_0<\eta_1<\eta_2 \cdots <\eta_{N+1} = \cos^{-1} \big( 
\sqrt{2}(2\score-1)\big)$. Similarly define $\{\theta^{(i)}_j\}_{j=0}^{M(i)+1}$ be such that $0=\theta^{(i)}_{0}<\theta^{(i)}_{1}<\cdots<\theta^{(i)}_{M(i)+1}=\frac{\pi}{4} - \cos^{-1}\big(1/(4\score-2)\big)$ and $\{v_{k}^{(i,j)}\}_{k = 0}^{P(i,j)+1} $ be such that $0 = v^{(i,j)}_{0}<v^{(i,j)}_{1}< v^{(i,j)}_{2} \cdots < v^{(i,j)}_{P(i,j)+ 1} = \frac{\pi}{2}$.  Thus $\M{C} = \bigcup_{i,j,k}\M{C}_{i,j,k}$, where, to streamline the notation, we have used $\M{C}_{i,j,k}:=\M{C}_{\B{x}_{i, j , k}, \B{x}_{i +1, j +1, k+1}}$ with $\B{x}_{i,j,k} := (\eta_i, \theta^{(i)}_j, v^{(i,j)}_k)$.

From~\eqref{eq:uforms} there are two possible functional forms of $u_{\pm}$. Taking derivatives we find
\begin{eqnarray}
\partial_{\eta} u_{\pm} &=&  \pm 2\nonumber \\ 
\partial_{\theta} u_{\pm} &=& -2 \frac{\cos (2 \theta ) \cos (v)}{\sqrt{1-\sin ^2(2 \theta ) \cos ^2(v)}}  \in [-2 , 0] \label{eq:uderivs} \\ 
\partial_{v}u_{\pm} &=& \frac{\sin (2 \theta ) \sin (v)}{ \sqrt{1-\sin ^2(2 \theta ) \cos ^2(v)}} \in [0 ,1]. \nonumber
\end{eqnarray}

We return to the problem of deriving an upper bound on the functions $F_{1}$ and $F_{2}$. To do so, we first need bounds on the functions $\cos(u_{\pm} \pm v) \sin(2 \theta)$. Our bounds use Taylor's theorem, which we first state for convenience.
\begin{theorem}[Taylor]\label{thm:Taylor}
Let $\mathcal{D}\subseteq\mathbb{R}^n$ be compact and $f:\mathcal{D}\to\mathbb{R}$ be differentiable on $\mathcal{D}$, then for all ${\bf a,x}\in\mathcal{D}$ there exists ${\bf x'}\in\mathcal{D}$ such that
\[f({\bf x})=f({\bf a})+\nabla f\big|_{{\bf x'}}\cdot({\bf x}-{\bf a}).
\]
\end{theorem}
Since $\sin(2 \theta)$ is always positive and increasing in $\theta$ in our domain, we get the following result for any $(\eta , \theta , v) \in \M{C}_{i,j,k}$.
\begin{eqnarray}\label{eqn: Taylors theorem for H(A|XYE)}
 \cos(u(\B{x}) \pm v) \sin(2 \theta) \leq \begin{cases}\max_{\B{x} \in \M{C}_{i,j,k}} \Big (\cos(u(\B{x}) \pm v) \Big)  \sin(2 \theta_{i+1}) & \text{if} \max_{\B{x} \in \M{C}_{i,j,k}} \Big( \cos(u(\B{x}_{i,j,k}) \pm v) \Big) > 0 \\ 
    \max_{\B{x} \in \M{C}_{i,j,k}} \Big(\cos(u(\B{x}) \pm v) \Big)  \sin(2 \theta_{i})    &  \text{if} \max_{\B{x} \in \M{C}_{i,j,k}} \Big( \cos(u(\B{x}_{i,j,k}) \pm v) \Big) < 0.
 \end{cases}
\end{eqnarray}
Let $\B{x}=(\eta,\theta,v) \in \M{C}_{i,j,k}$ and, for brevity, write $g_{\pm,y}({\bf x})=u_{\pm}(\B{x})+(-1)^yv$ with $y\in\{0,1\}$.  Then, by Taylor's theorem (cf.\ Theorem~\ref{thm:Taylor}), there exists ${\bf x'}\in\M{C}_{i,j,k}$ such that
\begin{equation}\label{eq:cosg}
  \cos(g_{\pm,y}(\B{x}))=\cos(g_{\pm,y}(\B{x}_{i,j,k})) + \partial_\eta\cos(g_{\pm,y}(\B{x}))\big|_{{\bf x'}}(\eta-\eta_i)+
\partial_\theta\cos(g_{\pm,y}(\B{x}))\big|_{{\bf x'}}(\theta-\theta_j^{(i)}) + \partial_v\cos(g_{\pm,y}(\B{x}))\big|_{{\bf x'}}(v-v_k^{(i,j)}).
\end{equation}
We upper bound this by upper bounding each of the partial derivatives on $\M{C}_{i,j,k}$:
\begin{eqnarray*}
  \partial_\eta\cos(g_{\pm,y}(\B{x}))&=&-\sin(g_{\pm,y}(\B{x}))\partial_\eta u_{\pm}\\
  \partial_\theta\cos(g_{\pm,y}(\B{x}))&=&-\sin(g_{\pm,y}(\B{x}))\partial_\theta u_{\pm}\\
  \partial_v\cos(g_{\pm,y}(\B{x}))&=&-\sin(g_{\pm,y}(\B{x}))(\partial_v u_{\pm}+(-1)^y).
\end{eqnarray*}
We have bounded the derivatives of $u$ on $\M{C}_{i,j,k}$ in~\eqref{eq:uderivs}.

We now consider the different cases.  Firstly, suppose $\max_{\B{x}\in\M{C}_{i,j,k}}\left[-\sin(g_{\pm,y}(\B{x}))\right]\geq0$.

Consider the terms in~\eqref{eq:cosg}. Using the bounds in~\eqref{eq:uderivs}, we have
\begin{align*}
  \partial_\eta\cos(g_{+,y}(\B{x}))\big|_{{\bf x'}}(\eta-\eta_i)&\leq2\max_{\B{x}\in\M{C}_{i,j,k}}\left[-\sin(g_{+,y}(\B{x}))\right](\eta_{i+1}-\eta_i)\\                                    \partial_\theta\cos(g_{+,y}(\B{x}))\big|_{{\bf x'}}(\theta-\theta_j^{(i)})&\leq0\\
  \partial_v\cos(g_{+,y}(\B{x}))\big|_{{\bf x'}}(v-v_k^{(i,j)})&\leq\begin{cases}2\max_{\B{x}\in\M{C}_{i,j,k}}\left[-\sin(g_{+,0}(\B{x}))\right](v_{k+1}^{(i,j)}-v_k^{(i,j)})&y=0\\0&y=1\end{cases}.
\end{align*}
Similarly,
\begin{align*}
  \partial_\eta\cos(g_{-,y}(\B{x}))\big|_{{\bf x'}}(\eta-\eta_i)&\leq0\\                                    \partial_\theta\cos(g_{-,y}(\B{x}))\big|_{{\bf x'}}(\theta-\theta_j^{(i)})&\leq0\\
  \partial_v\cos(g_{-,y}(\B{x}))\big|_{{\bf x'}}(v-v_k^{(i,j)})&\leq\begin{cases}2\max_{\B{x}\in\M{C}_{i,j,k}}\left[-\sin(g_{-,0}(\B{x}))\right](v_{k+1}^{(i,j)}-v_k^{(i,j)})&y=0\\0&y=1\end{cases}.
\end{align*}
Combining all of these, and bounding the $-\sin(g_{\pm,y}(\B{x}))$ terms by $1$, we find
\begin{align}
\cos(g_{+,0}(\B{x}))&\leq\cos(g_{+,0}(\B{x}_{i,j,k}))+2(\eta_{i+1}-\eta_i)+2(v_{k+1}^{(i,j)}-v_k^{(i,j)})\nonumber\\
  \cos(g_{+,1}(\B{x}))&\leq\cos(g_{+,1}(\B{x}_{i,j,k}))+2(\eta_{i+1}-\eta_i)\label{eq:gplus1}\\
  \cos(g_{-,0}(\B{x}))&\leq\cos(g_{-,0}(\B{x}_{i,j,k}))+2(v_{k+1}^{(i,j)}-v_k^{(i,j)})\nonumber\\
  \cos(g_{-,1}(\B{x}))&\leq\cos(g_{-,1}(\B{x}_{i,j,k}))\nonumber
\end{align}

Secondly, in the case $\max_{\B{x}\in\M{C}_{i,j,k}}\left[-\sin(g_{\pm,y}(\B{x}))\right]\leq0$, we have
\begin{align*}
  \partial_\eta\cos(g_{+,y}(\B{x}))\big|_{{\bf x'}}(\eta-\eta_i)&\leq0\\                                    \partial_\theta\cos(g_{+,y}(\B{x}))\big|_{{\bf x'}}(\theta-\theta_j^{(i)})&\leq-2\min_{\B{x}\in\M{C}_{i,j,k}}\left[-\sin(g_{\pm,y}(\B{x}))\right](\theta^{(i)}_{j+1}-\theta^{(i)}_j)\\
  \partial_v\cos(g_{+,y}(\B{x}))\big|_{{\bf x'}}(v-v_k^{(i,j)})&\leq\begin{cases}0&y=0\\-\min_{\B{x}\in\M{C}_{i,j,k}}\left[-\sin(g_{+,0}(\B{x}))\right](v_{k+1}^{(i,j)}-v_k^{(i,j)})&y=1\end{cases}.
\end{align*}
and
\begin{align*}
  \partial_\eta\cos(g_{-,y}(\B{x}))\big|_{{\bf x'}}(\eta-\eta_i)&\leq-2\min_{\B{x}\in\M{C}_{i,j,k}}\left[-\sin(g_{\pm,y}(\B{x}))\right](\eta_{i+1}-\eta_i)\\                                    \partial_\theta\cos(g_{-,y}(\B{x}))\big|_{{\bf x'}}(\theta-\theta_j^{(i)})&\leq-2\min_{\B{x}\in\M{C}_{i,j,k}}\left[-\sin(g_{\pm,y}(\B{x}))\right](\theta^{(i)}_{j+1}-\theta^{(i)}_j)\\
  \partial_v\cos(g_{-,y}(\B{x}))\big|_{{\bf x'}}(v-v_k^{(i,j)})&\leq\begin{cases}0&y=0\\-\min_{\B{x}\in\M{C}_{i,j,k}}\left[-\sin(g_{+,0}(\B{x}))\right](v_{k+1}^{(i,j)}-v_k^{(i,j)})&y=1\end{cases}.
\end{align*}

Combining all of these, and bounding the $-\sin(g_{\pm,y}(\B{x}))$ terms by $-1$, we find
\begin{align}
\cos(g_{+,0}(\B{x}))&\leq\cos(g_{+,0}(\B{x}_{i,j,k}))+2(\theta^{(i)}_{j+1}-\theta^{(i)}_j)\nonumber\\
  \cos(g_{+,1}(\B{x}))&\leq\cos(g_{+,1}(\B{x}_{i,j,k}))+2(\theta^{(i)}_{j+1}-\theta^{(i)}_j)+(v_{k+1}^{(i,j)}-v_k^{(i,j)})\label{eq:gplus2}\\
  \cos(g_{-,0}(\B{x}))&\leq\cos(g_{-,0}(\B{x}_{i,j,k}))+2(\eta_{i+1}-\eta_i)+2(\theta^{(i)}_{j+1}-\theta^{(i)}_j)\nonumber\\
  \cos(g_{-,1}(\B{x}))&\leq\cos(g_{-,1}(\B{x}_{i,j,k}))+2(\eta_{i+1}-\eta_i)+2(\theta^{(i)}_{j+1}-\theta^{(i)}_j)+(v_{k+1}^{(i,j)}-v_k^{(i,j)}).\nonumber
\end{align}

Combining~\eqref{eq:gplus1} and~\eqref{eq:gplus2} we define
\begin{align*}
  \Delta_{+,0}&=\max(2(\theta^{(i)}_{j+1}-\theta^{(i)}_j),\ 2(\eta_{i+1}-\eta_i)+2(v_{k+1}^{(i,j)}-v_k^{(i,j)}))\\
\Delta_{+,1}&=\max(2(\eta_{i+1}-\eta_i),\ 2(\theta^{(i)}_{j+1}-\theta^{(i)}_j)+(v_{k+1}^{(i,j)}-v_k^{(i,j)})\\
\Delta_{-,0}&=\max(2(v_{k+1}^{(i,j)}-v_k^{(i,j)}),\ 2(\eta_{i+1}-\eta_i)+2(\theta^{(i)}_{j+1}-\theta^{(i)}_j))\\
  \Delta_{-,1}&=2(\eta_{i+1}-\eta_i)+2(\theta^{(i)}_{j+1}-\theta^{(i)}_j)+(v_{k+1}^{(i,j)}-v_k^{(i,j)}).
\end{align*}

We hence have the following bounds for any $\B{x}\in\M{C}_{i,j,k}$:
\begin{eqnarray}
\cos(g_{+,y}(\mathbf{x})) \sin(2 \theta) \leq \zeta^{i,j,k}_{+,y} := \begin{cases} 
\big(\cos(g_{+,y}(\B{x}_{i,j,k})) + \Delta_{+,y}  \big) \sin(2 \theta^{(i)}_{j}) 
& \text{ if } \big(\cos(g_{+,y}(\B{x}_{i,j,k})) + \Delta_{+,y}  \big) < 0 \\
\big(\cos(g_{+,y}(\B{x}_{i,j,k})) + \Delta_{+,y}\big) \sin(2 \theta^{(i)}_{j+1}) & 
 \text{ otherwise }
\end{cases} \nonumber 
\end{eqnarray}

\begin{eqnarray}
\cos(g_{-,y}(\mathbf{x})) \sin(2 \theta) \leq \zeta^{i,j,k}_{-,y} := \begin{cases}  
\big(\cos(g_{-,y}(\B{x}_{i,j,k})) + \Delta_{-,y}   \big) \sin(2 \theta^{(i)}_{j}) 
& \text{ if } \cos(g_{-,y}(\B{x}_{i,j,k})) + \Delta_{-,y} < 0 \\
\big(\cos(g_{-,y}(\B{x}_{i,j,k})) + \Delta_{-,y}   \big) \sin(2 \theta^{(i)}_{j+1}) & 
 \text{ otherwise }
\end{cases}.\nonumber  
\end{eqnarray}

With this established we return to the optimization problem~\eqref{eqn: Optimization problem on grids}. We define the objective function $Q(\eta,\theta,v):=F_1(\eta,\theta,v)+F_2(\eta,\theta,v)+K(R(\eta),\theta)$, which we want to optimize over $\M{D}_{\score}$ and $u\in\{u_+,u_-\}$.

\begin{lemma}
Let $\M{P} = \bigcup_{i,j,k} \M{C}_{i,j,k}$ be a partition of $\M{C}$ as specified above. Define $g_{i,j,k}$ and $h_{i,j,k}$ as follows
\begin{eqnarray}
g_{i,j,k} &:=&  \frac{1}{2}H_{\bin} \left(\frac{1}{2} + \frac{R(\eta_{i+1})}{2} \sqrt{1 + \zeta_{+,0}^{i,j,k}}\right) + \frac{1}{2} H_{\bin} \left(\frac{1}{2} + \frac{R(\eta_{i+1})}{2} \sqrt{1 + \zeta^{i,j,k}_{+,1} }\right) + K\left( R(\eta_{i}) , \theta^{(i)}_{j} \right) \nonumber\\ 
    h_{i,j,k} &:=&  \frac{1}{2}H_{\bin} \left(\frac{1}{2} + \frac{R(\eta_{i+1})}{2} \sqrt{1 + \zeta_{-,0}^{i,j,k} }\right) + \frac{1}{2} H_{\bin} \left(\frac{1}{2} + \frac{R(\eta_{i+1})}{2} \sqrt{1 + \zeta^{i,j,k}_{-,1} }\right)  + K\left(R(\eta_i) , \theta^{(i)}_{j} \right). \nonumber
\end{eqnarray}
Let $M \in \mathbb{R}$ be any upper bound on $Q$, i.e., $M \geq \max_{\B{x} \in \M{D}} Q(\B{x})$. Then
\begin{equation}\label{eq:def_f}
    Q(\B{x})\geq f_{i,j,k}:=\begin{cases}\min\{g_{i,j,k}, h_{i,j,k}\}&\text{ if }\B{x}\in\M{C}_{i,j,k}\text{ such that }\M{C}_{i,j,k} \cap \M{D} \ne \emptyset\\
 M & \text{ otherwise.}
\end{cases}
\end{equation}
\end{lemma}
\begin{proof}
From Lemmas~\ref{lem: monotonicity of K(R)} and~\ref{lem: monotonicity of K(th)} we know that $\partial_{R} K>0$ and $\partial_{\theta} K>0$. In addition, $\partial_{\eta} K(R(\eta), \theta)  = \frac{\sqrt{2}(2 \score - 1) \sin(\eta )}{ \cos^2(\eta)} \partial_{R} K(R , \theta)$. Positivity of $\partial_{\eta} K$ and $\partial_{\theta} K$, implies  $K( R(\eta) , \theta) \geq K( R(\eta_{i}) , \theta^{(i)}_{j})$ within $\M{C}_{i,j,k}$. Furthermore, $H_{\bin}(\frac{1}{2} + \frac{x}{2})$ is decreasing for $x \geq 0$. Since $R(\eta) \sqrt{1 + \cos(v \pm u) \sin(2 \theta)} >0$,
\begin{align*}
H_{\bin}\left(\frac{1}{2} + \frac{R(\eta)}{2} \sqrt{1 + \cos(u_{+} \pm v) \sin(2 \theta)}\right) &\geq H_{\bin}\left(\frac{1}{2} + \frac{R(\eta_{i+1})}{2} \sqrt{1 + \cos(u_{+} \pm v) \sin(2 \theta)}\right)\\
&=H_{\bin}\left(\frac{1}{2} + \frac{R(\eta_{i+1})}{2} \sqrt{1 + \cos(g_{+,(1\mp1)/2}) \sin(2 \theta)}\right)\\
&\geq H_{\bin}\left(\frac{1}{2} + \frac{R(\eta_{i+1})}{2} \sqrt{1 + \zeta^{i,j,k}_{+,(1\mp1)/2}}\right).
\end{align*}

Similarly,
$$H_{\bin}\left(\frac{1}{2} + \frac{R(\eta_{i+1})}{2} \sqrt{1 + \cos(u_{-} \pm v) \sin(2 \theta)}\right) \geq H_{\bin}\left(\frac{1}{2} + \frac{R(\eta_{i+1})}{2} \sqrt{1 + \zeta^{i,j,k}_{-,(1\mp1)/2}}\right),$$
which establishes the claim.
\end{proof}
Combining the results in this section we obtain the following corollary. 
\begin{corollary}
Let $\score \in (\frac{3}{4}, \frac{1}{2} + \frac{1}{2 \sqrt{2}}]$ be fixed. Let $\M{D}_{\score}$ be defined as in \eqref{eqn: Domain for one sided rate} and $\M{P} = \bigcup_{i,j,k} \M{C}_{i,j,k}$ be any partition of the cuboid $\M{C} = [0 , \cos^{-1}\big(\sqrt{2}(2 \score - 1)\big)]\times[0 , \frac{\pi}{4} - \cos^{-1}\big((4 \score -2)^{-1}\big)]\times[0 , \frac{\pi}{2}]$. Then 
\begin{eqnarray}
G_{A|XYE}(\score) \geq \min_{i,j,k} f_{i,j,k}.
\end{eqnarray}
where $f_{i,j,k}$ are defined in~\eqref{eq:def_f}.
\end{corollary}
This means that for fixed $\score$ we can lower bound the randomness by evaluating $f_{i,j,k}$ at all grid points in the relevant cuboid and taking the minimum.  This is how our numerical algorithm works (note that the lower bound gets tighter as the number of grid points is increased).

\subsection{Lower bounding $F_{A|XYE}$}
In the previous section, we derived a technique to lower bound the function $G_{A|XYE}(\score)$ for a fixed value of the score, $\score$. In Appendix~\ref{app: convex combinations of qubit strategies} we show that the asymptotic rate $F_{A|XYE}$ can be computed by taking the convex lower bound on $G_{A|XYE}$. In this section, we construct a lower bound on the function $F_{A|XYE}$ using a lower bound on $G_{A|XYE}$. 

We start with a general lemma.
\begin{lemma}\label{lem:FLB}
Let $a$ and $b$ be real numbers, $a<b$ and $\Tilde{G}:[a,b]\to\mathbb{R}$ be a lower bound on $G:[a,b]\to\mathbb{R}$. Let $\Tilde{F}[a,b]\to\mathbb{R}$ and $F[a,b]\to\mathbb{R}$ be convex lower bounds on $\Tilde{G}$ and $G$ respectively. Then $\Tilde{F}$ is a lower bound on $F$.
\end{lemma}
\begin{proof}
Let $M_{\score_{0}}$ be the set of probability measures on the interval $[a,b]$ satisfying $\int d \mu(\score) \score = \score_0$.
\begin{eqnarray}
F(\score_0) = \inf_{\mu \in M_{\score_0}} \int d \mu(\score) G(\score)  
\end{eqnarray}
Since $G(\score) \geq \Tilde{G}(\score)$ for every value of $\score\in[a,b]$, for every measure $\mu \in M_{\score}$ we must have that $\int d \mu(\score) G(\score) \geq \int d \mu(\score) \Tilde{G}(\score)$. Thus
\begin{equation*}
F(\score_0) := \inf_{\mu \in \score_{0}}\int d \mu(\score) G(\score)  \geq \inf_{\mu \in \score_0} \int d \mu(\score) \Tilde{G}(\score) \geq \Tilde{F}(\score_0).\qedhere
\end{equation*}
\end{proof}

Since we can only compute our lower bound $G^{\M{P}}_{A|XYE}$ on $G_{A|XYE}$ for a finite set of values of $\score$, to form a lower bound that holds for all values of $\score$, we construct a function $\tilde{G}_{A|XYE}$ as follows.  Let $\{\score_i\}_{i=1}^N$ be an ordered set of values in $[\frac{3}{4},\frac{1}{2}+\frac{1}{2 \sqrt{2}}]$ with $\score_1=3/4$ at which we have computed $G^{\M{P}}_{A|XYE}$. We define $\tilde{G}_{A|XYE}(\score)$ to be equal to $G^{\M{P}}_{A|XYE}(\score_i)$ for $\score\in[\score_i,\score_{i+1})$, and equal to $G^{\M{P}}_{A|XYE}(\score_N)$ for $\score\geq\score_N$.  Because $G_{A|XYE}$ is monotonically increasing in $\score$ (see Lemma~\ref{lemm: Monotonicity of one sided rates}), it follows that for $\score\in[\score_i,\score_{i+1})$, $G_{A|XYE}(\score)\geq G_{A|XYE}(\score_i) \geq G^{\M{P}}_{A|XYE}(\score_i)=\Tilde{G}_{A|XYE}(\score)$.

A lower bound $\Tilde{F}_{A|XYE}$ of $F_{A|XYE}$ can then be formed by taking the convex lower bound of $\Tilde{G}^{\M{P}}_{A|XYE}$ (cf.\ Lemma~\ref{lem:FLB}).

\subsection{H(A|E)}
In this case Lemma~\ref{lem:ent_simp} gives
\begin{align*}
H(A|E)&=H(A)+\sum_ap_A(a)H\left(\sum_{bxy}p_{BXY|a}(b,x,y)\tau_E^{abxy}\right)-H\left(\sum_{abxy}p_{AB}(a,b)p_{XY|ab}(x,y)\tau_E^{abxy}\right)\\
      &=1+\frac{1}{2}\sum_aH\left(\sum_{bxy}2p_{ABXY}(a,b,x,y)\tau_E^{abxy}\right)-H(E)\\
      &=1+\frac{1}{2}\sum_aH\left(\sum_{bxy}2p_{XY}(x,y)P_{AB|xy}(a,b)\tau_E^{abxy}\right)-H(E)\\
      &=1+\frac{1}{2}\sum_aH\left(\sum_x2p_X(x)\tr_{A'}\left(\left(\proj{\phi^A_{a|x}}\ot\id_E\right)\rho_{A'E}\right)\right)-H(E)\,.
\end{align*}
Like in the case $H(AB|E)$ the middle term cannot be removed and this term is not independent of $\delta$.

\subsection{H(AB|X=0,Y=0,E)}\label{app:AB00E}
For $H(AB|X=0,Y=0,E)$ we are interested in the state
$$\tau'_{ABE}=\sum_{ab}p_{AB|00}(a,b)\proj{a}\ot\proj{b}\ot\tau_E^{ab00}$$
since $H(AB|X=0,Y=0,E)_\tau=H(AB|E)_{\tau'}$. Note that, as above, $H(E)_{\tau'}=H(E)_\rho$. Using Lemma~\ref{lem:ent_simp} we have
\begin{align*}
H(AB|E)_{\tau'}&=H(AB)_{\tau'}+\sum_{ab}p_{AB}(a,b)H(\tau_E^{ab00})-H\left(\sum_{ab}p_{AB}(a,b)\tau_E^{ab00}\right)\\
  &=H(AB)_{\tau'}+\sum_{ab}p_{AB}(a,b)H(\tau_E^{ab00})-H(E)_{\tau'}\,.
\end{align*}
However, since $\tau_E^{ab00}$ is pure for each $a,b$, $H(\tau_E^{ab00})=0$ and we find
\begin{align*}
  H(AB|E)_{\tau'}&=H(AB)_{\tau'}-H(E)_{\rho}\\
&=H(\{\epsilon_{00},\epsilon_{00},1/2-\epsilon_{00},1/2-\epsilon_{00}\})-H(\{\lambda_0,\lambda_1,\lambda_2,\lambda_3\})\\
  &=1+H_\bin(2\epsilon_{00})-H(\{\lambda_0,\lambda_1,\lambda_2,\lambda_3\})\,.
\end{align*}
Lemma~\ref{lem:delta_opt} shows that $\max_\delta H(\{\lambda_0,\lambda_1,\lambda_2,\lambda_3\})$ is achieved for $\delta=\delta^*=\frac{R^2\cos(2\theta)}{4}$. Since the score is independent of $\delta$ we can take the state to satisfy $\delta=\delta^*$ and remove $\delta$ from the optimization.

\subsubsection{Reparameterizing the optimisation problem}
We introduce some notation for convenience. Let $\mathbf{x} = (R , \theta , \alpha_0 ,\alpha_1 , \beta_0 , \beta_1)$ and define  
\begin{eqnarray}
\hat{\epsilon}_{00}(\mathbf{x})  &:=&  \cos(\theta) \cos(2 \alpha_0 - 2 \beta_0) + \sin(\theta) \cos(2 \alpha_0 + 2 \beta_0) \\ 
\hat{\epsilon}_{10}(\mathbf{x})  &:=&  \cos(\theta) \cos(2 \alpha_1 - 2 \beta_0) + \sin(\theta) \cos(2 \alpha_1 + 2 \beta_0) \\ 
\hat{\epsilon}_{01}(\mathbf{x})  &:=&  \cos(\theta) \cos(2 \alpha_0 - 2 \beta_1) + \sin(\theta) \cos(2 \alpha_0 + 2 \beta_1) \\ 
\hat{\epsilon}_{11}(\mathbf{x})  &:=&  -\cos(\theta) \cos(2 \alpha_1 - 2 \beta_1) - \sin(\theta) \cos(2 \alpha_1 + 2 \beta_1) \\ 
K(\mathbf{x}) &:=& K(R, \theta)\,,
\end{eqnarray}
where $K(R,\theta)$ is given in Corollary~\ref{cor:onesidelower}.
In this notation, the equation for the constraint is 
\begin{eqnarray}\label{eq:epsconstr}
\sum_{ij} \hat{\epsilon}_{i,j} &=& \frac{4(2\score-1)}{R}\,,
\end{eqnarray}
and hence the optimization problem is
\begin{equation}\label{Two-sided optimization}
\begin{aligned}
G_{AB|X=0,Y=0,E}(\score) = \min_{\mathbf{x}\in\cD_\score} \quad & \Big(H_{\bin}\left(\frac{1}{2} + \frac{R}{2} \hat{\epsilon}_{00}(\mathbf{x})\right)  + K(\mathbf{x}) \Big)\\
\textrm{s.t.} \quad & \sum_{ij} \hat{\epsilon}_{ij}(\mathbf{x}) = \frac{4(2\score-1)}{R}\,,
\end{aligned}
\end{equation}
where $\cD_\score=\{R\in[\sqrt{2}(2\score-1),1],\theta\in[0,\pi/4-\cos^{-1}(1/(R\sqrt{2}))],(\alpha_0,\alpha_1,\beta_0,\beta_1)\in\mathbb{R}^4\}$ (see Lemma~\ref{lem:Ag00Escore} for the justification of the range of $R$).

For brevity we use $P(\mathbf{x})$ for the objective function. We call $\mathbf{x}\in\cD_\score$ a solution to the optimization problem~\eqref{Two-sided optimization} if $G_{AB|X=0,Y=0,E}(\score)=P(\mathbf{x})$ and $\mathbf{x}$ satisfies the constraint. For reasons that shall be clear later, we now define the following functions on the extended domain
\begin{eqnarray}
\hat{H}_{\bin}(x) = \begin{cases}
H_{\bin}(x)               & \text{ if } x \in [\frac{1}{2} , 1 ]   \\              1    & \text{ otherwise }
\end{cases}
\end{eqnarray}
and 
\begin{eqnarray}
\hat{K}(R , \theta) = \begin{cases}
K(R , \theta)            & \text{ if } \sqrt{2} (2 \score -1 ) \leq R \leq 1 \text{ and } 0 \leq \theta \leq \frac{\pi}{4} - \cos^{-1} \left(\frac{1}{\sqrt{2}R}\right)    \\               
1    & \text{ otherwise }  
\end{cases}
\end{eqnarray}
Here $\hat{K}(R , \theta)$ and $\hat{H}_{\bin}$ both take the value $1$ when the functions $K(R, \theta)$ and $H_{\bin}(x)$ are outside the stated range. These values are chosen such that upon extension of the domain, the resulting optimization problem still has the same minimum\footnote{That $H_{\bin}(x) \leq 1$ and $K(R, \theta) \leq 1$ whenever defined justifies the choice made for defining $\hat{H}_{\bin}(x)$ and $\hat{K}(R,\theta)$.}.
\begin{lemma}\label{Claim: positive espilon}
Let $\mathbf{X}_\score$ be the set of solutions of~\eqref{Two-sided optimization} for some $\score\in(\frac{3}{4}, \frac{1}{2} + \frac{1}{2 \sqrt{2}}]$. There exists $\mathbf{x} \in \mathbf{X}_\score$ such that 
 $\hat{\epsilon}_{00}(\mathbf{x}) > 0$ and $\displaystyle\hat
 {\epsilon}_{00}(\mathbf{x}) = \max_{i,j} |\epsilon_{ij}(\mathbf{x})|$.
\end{lemma}
\begin{proof}
We first prove that we can choose
\begin{eqnarray}
|\hat{\epsilon}_{00}(\B{x})| = \max_{i,j} |\hat{\epsilon}_{ij}(\B{x})|.
\end{eqnarray}
From the symmetry of the binary entropy, $H_{\bin}(\frac{1}{2} + \frac{y_1}{2}) < H_{\bin}(\frac{1}{2} + \frac{y_2}{2})$ for $|y_1| > |y_2|$. Now consider the following cases
\begin{itemize}
    \item Suppose $|\hat{\epsilon}_{00}(\B{x})| < |\hat{\epsilon}_{10}(\B{x})|$: Perform the transformation $\alpha_0  \leftrightarrow \alpha_1$ and $\beta_1 \rightarrow \beta_1 + \frac{\pi}{2}$. Under this transformation $\hat{\epsilon}_{00}(\B{x}) \leftrightarrow \hat{\epsilon}_{01}(\B{x})$ and $\hat{\epsilon}_{10}(\B{x}) \leftrightarrow \hat{\epsilon}_{11}(\B{x})$. The CHSH score is hence preserved. This transformation also decreases the objective function, so ${\bf x}$ cannot have been an solution to~\eqref{Two-sided optimization} prior to the transformation.
    \item Suppose $|\hat{\epsilon}_{00}(\mathbf{x})| < |\hat{\epsilon}_{01}(\mathbf{x})|$: Perform the transformation $\beta_0  \leftrightarrow \beta_1$ and $\alpha_1 \rightarrow \alpha_1 + \frac{\pi}{2}$. Under this transformation $\hat{\epsilon}_{00}(\B{x}) \leftrightarrow \hat{\epsilon}_{10}(\B{x})$ and $\hat{\epsilon}_{01}(\B{x}) \leftrightarrow \hat{\epsilon}_{11}(\B{x})$. Again, this preserves the CHSH score while reducing the objective function.
    \item Suppose $|\hat{\epsilon}_{00}(\mathbf{x})| < |\hat{\epsilon}_{11}(\mathbf{x})|$: Perform the transformation $\alpha_0 \rightarrow \alpha_1 + \frac{\pi}{2}$ , $\alpha_1 \rightarrow \alpha_0$, $\beta_0 \rightarrow \beta_1$ and $\beta_1 \rightarrow \beta_0 + \frac{\pi}{2}$. Under this transformation $\hat{\epsilon}_{00}(\B{x}) \leftrightarrow \hat{\epsilon}_{11}(\B{x})$ and $\hat{\epsilon}_{01}(\B{x}) \leftrightarrow \hat{\epsilon}_{10}(\B{x})$. Again, this preserves the CHSH score while reducing the objective function.
\end{itemize}
Finally, we can show that $\hat{\epsilon}_{00}(\B{x}) > 0$ by observing that for $\score\in(\frac{3}{4}, \frac{1}{2} + \frac{1}{2 \sqrt{2}}]$ we have
\begin{eqnarray}
   R \sum_{i,j} \hat{\epsilon}_{ij}(\B{x}) &=& 4 (2 \score - 1) > 2.\label{eq:bound1}
   \end{eqnarray}
   In addition, for all $i,j$,
   \begin{eqnarray}
   R \hat{\epsilon}_{ij}(\B{x}) &\leq&R(\cos(\theta)+\sin(\theta))\leq1,\label{eq:bound2}
\end{eqnarray}
where the last inequality follows from~\eqref{param1} and~\eqref{param2} whose sum can be at most $1$.

Now suppose that $|\hat{\epsilon}_{00}(\B{x})|= \displaystyle\max_{i,j} |\hat{\epsilon}_{ij}|$ and $\hat{\epsilon}_{00}(\B{x})<0$. It follows that
\begin{eqnarray}
    R\sum_{i,j} \hat{\epsilon}_{ij} &=& R\big( \hat{\epsilon}_{00} + \hat{\epsilon}_{01} \big) + R\big( \hat{\epsilon}_{10} + \hat{\epsilon}_{11}\big) \nonumber \\ 
    &\leq& R\big( \hat{\epsilon}_{00} + \hat{\epsilon}_{01} \big) + 2 \nonumber \\
    &\leq& 2, \nonumber
\end{eqnarray}
where the first inequality uses~\eqref{eq:bound2}. This is in contradiction with~\eqref{eq:bound1}.
\end{proof}
\begin{lemma}\label{lemm: H(A|X=0,Y=0,E) on grids}
Let $\hat{P}$ be the objective function with extended domain, i.e., $\hat{P}(\mathbf{x}) := \hat{H}_\bin\left( \frac{1}{2} + \frac{R \epsilon_{00}(\B{x})}{2}\right) + \hat{K}(\mathbf{x})$, $\score\in(\frac{3}{4}, \frac{1}{2} + \frac{1}{2 \sqrt{2}}]$, and let $\M{X}$ be a set such that $\M{D}_\score\subseteq\M{X}\subseteq\mathbb{R}^6$. Then,
\begin{eqnarray}\label{Revised two sided}
\begin{aligned}
G_{AB|X=0,Y=0 ,E}(\score ) = \min_{\mathbf{x}\in\M{X}} \quad & \hat{P}(\mathbf{x}) \\
\mathrm{s.t.} \quad & \sum_{ij} \hat{\epsilon}_{ij}(\mathbf{x}) = \frac{4(2 \score - 1 )}{R},
\end{aligned}   
\end{eqnarray}
i.e., optimizing over $\hat{P}$ on an extended domain $\M{X}$ gives the same solution as the original optimization~\eqref{Two-sided optimization}.
Furthermore $\exists \mathbf{x} \in \M{D}_\score$ that is a solution to both optimization problems.
    \end{lemma}
\begin{proof}
Let $\B{x'}\in\M{D}_\score$ achieve the optimal value of $P$ and have $\hat{\epsilon}_{00}(\B{x'})>0$. [From Lemma~\ref{Claim: positive espilon} such an $\B{x'}$ exists.]  Since $\hat{P}(\B{x})=P(\B{x})\leq2$ for all $\B{x}\in\M{D}_\score$, and $\hat{P}(\B{x})=2$ for $\B{x}\in \mathbb{R}^{6}\setminus \M{D}_\score$, $\B{x'}$ must also achieve the optimal value of $\hat{P}$, where it takes the same value.
\end{proof}

\subsubsection{Some simplifications}
\begin{lemma}\label{lem:signs}
Let $\mathbf{X}_\score$ the set of solutions to the optimization problem~\eqref{Two-sided optimization} for some $\score\in(\frac{3}{4}, \frac{1}{2} + \frac{1}{2 \sqrt{2}}]$. There exists $\mathbf{x} = (R ,\theta , \alpha_0 , \alpha_1 , \beta_0 , \beta_1) \in \mathbf{X}_\score$ such that the following hold
\begin{itemize}
    \item $\sin(\beta_0+\beta_1) \geq 0$
    \item $\sin(\beta_0-\beta_1) \leq 0$
\end{itemize}
\end{lemma}
\begin{proof}
The expression for the CHSH score satisfies (cf.~\eqref{eq:scoresimp})
\begin{align*}
\sqrt{2} (2 \score - 1)=&R\cos(\beta_0 - \beta_1)\left[\sin(2\alpha_0)\sin(\beta_0 + \beta_1)\cos(\frac{\pi}{4}+\theta)+\cos(2\alpha_0)\cos(\beta_0 + \beta_1)\sin(\frac{\pi}{4}+\theta)\right]\\
&+R\sin(\beta_0 - \beta_1)\left[\sin(2\alpha_1)\cos(\beta_0+ \beta_1)\cos(\frac{\pi}{4}+\theta)-\cos(2\alpha_1)\sin(\beta_0 + \beta_1)\sin(\frac{\pi}{4}+\theta)\right].
  \end{align*}
Let $\alpha_0,\alpha_1, \beta_0, \beta_1 $ be optimal parameters. Consider performing the following steps sequentially.
\begin{enumerate}
    \item If $\sin(\beta_0 + \beta_1 ) <  0$, then perform the transformations $\beta_i \rightarrow -\beta_i$ and $\alpha_i \rightarrow -\alpha_i$. We get $\sin(\beta_0 + \beta_1) \geq 0$. 
    \item If $\sin(\beta_0 - \beta_1) > 0$ then perform the transformations $\beta_0 \rightarrow \beta_0 + \frac{\pi}{2}$, $\beta_1 \rightarrow \beta_1 - \frac{\pi}{2}$,   $\alpha_i \rightarrow \alpha_i + \frac{\pi}{2}$. This step does not affect $\sin(\beta_0 + \beta_1)$. Thus we ensure that  $\sin(\beta_0 - \beta_1) \leq 0$ and $\sin(\beta_0 + \beta_1) \geq 0$. 
    \end{enumerate}
In each step, the values of $\epsilon_{ij}$ for all $i,j$ remain the same, hence the CHSH score and the objective function remains invariant throughout. Thus, the transformations maintain optimal parameters.
\end{proof}

\subsubsection{Reduction in parameters}\label{app: two sided proof}
To rewrite the optimization in a way that removes the constraint we introduce the following functions 
\begin{eqnarray}
\hat{\alpha}_{0}(\lambda , v ,\theta) &:=&    -2\tan^{-1} \left( \frac{1}{\tan(\lambda) \tan(\frac{\pi}{4} + \theta)}\right) +  \tan^{-1} \left( \frac{1}{\tan(v) \tan(\frac{\pi}{4} + \theta)}\right) \\ 
\tilde{\epsilon}(\lambda, v, \theta) &:=& \cos(\theta) \cos(\hat{\alpha}_0 - 2 v + \lambda  ) + \sin(\theta) \cos(\hat{\alpha}_0 + 2 v - \lambda) \label{two sided function 1} \\ 
\hat{R}(\lambda , v , \theta) &:=& \frac {\sqrt{2} (2 \score - 1) }{\cos(\lambda\!-\!v)\!\left[\sin(\hat{\alpha}_0)\sin(v)\cos(\frac{\pi}{4}\!+\!\theta)\!+\!\cos(\hat{\alpha}_0)\cos(v)\sin(\frac{\pi}{4}\!+\!\theta)\right]\!+\!\frac{\sin(\lambda-v)}{\sqrt{2}} \sqrt{1\!-\!\cos(2 v)\sin(2 \theta)} }. \label{two sided function 2}
\end{eqnarray}

We also state the following small lemma for convenience.
\begin{lemma}\label{lem:arcsin}
Let $a,b\in\mathbb{R}$ with $a\neq0$. The values of $\gamma\in\mathbb{R}$ that form extrema of $a\cos(\gamma)+b\sin(\gamma)$ are
\begin{equation}
\gamma=\tan^{-1}(b/a)+n\pi
\end{equation}
for any $n\in\mathbb{Z}$.
If $a>0$ the maxima occur when $n$ is even and the minima when $n$ is odd, and vice-versa if $a<0$.
\end{lemma}
\begin{proof}
The problem is equivalent to maximizing
\begin{equation*}
\frac{a}{\sqrt{a^2+b^2}}\cos(\gamma)+\frac{b}{\sqrt{a^2+b^2}}\sin(\gamma).
\end{equation*}
Let $\phi$ satisfy $\cos(\phi)=\left(\frac{a}{\sqrt{a^2+b^2}}\right)$ and $\sin(\phi)=\left(\frac{b}{\sqrt{a^2+b^2}}\right)$. Thus, the expression is equivalent to $\cos(\gamma-\phi)$ which has maxima for $\gamma=\phi+2n\pi$ and minima for $\gamma=\phi+\pi+2n\pi$ for $n\in\mathbb{Z}$.

If $a>0$ then this gives maxima for $\gamma=\tan^{-1}(b/a)+2n\pi$ and minima for $\gamma=\tan^{-1}(b/a)+(2n+1)\pi$.

Alternatively, if $a<0$ then this gives maxima for $\gamma=\tan^{-1}(b/a)+(2n+1)\pi$ and minima for $\gamma=\tan^{-1}(b/a)+2n\pi$.
\end{proof}

\begin{lemma} \label{lem: twosided lower bound}
Let $\score\in(\frac{3}{4}, \frac{1}{2} + \frac{1}{2 \sqrt{2}}]$ and $\M{D}_\score' = \left\{(\lambda , v , \theta) \in \mathbb{R}^{3} : \lambda \in [0 , \pi] , v \in [0 , \pi] , \theta \in [0 , \frac{\pi}{4} - \cos^{-1} \left(1/(4 \score -2) \right) ] \right\}$, then
\begin{eqnarray}\label{eq:newopt}
G_{AB|X=0, Y=0, E}(\score) =  \inf_{\M{D}_\score'} \left( \hat{H}_{\bin}\left(\frac{1}{2} +\frac{\hat{R}(\lambda,v,\theta)\tilde{\epsilon}(\lambda,v,\theta)}{2}\right) + \hat{K}(\hat{R} , \theta) \right) 
\end{eqnarray}
\end{lemma}
\begin{proof}
Start from the form of $G$ in Lemma~\ref{lemm: H(A|X=0,Y=0,E) on grids}. The objective function $\hat{P}$ is independent of the parameters $\alpha_1$ and $\beta_1$, and, as shown in Lemma~\ref{lemm: H(A|X=0,Y=0,E) on grids}, the optimum is achieved for some ${\bf x}\in\cD_\score$. Because the function $G_{AB|X=0,Y=0,E}(\score)$ is increasing in $\score$ (see Lemma~\ref{lem:monot:w}), the optimal values of the parameters $\alpha_1$ and $\beta_1$ must maximize the CHSH score. Recall that the score can be related to $\alpha_i$ and $\beta_i$ by (cf.~\eqref{eq:scoresimp})
\begin{align}
\sqrt{2} (2 \score - 1) =&R\cos(\beta_0 - \beta_1)\left[\sin(2\alpha_0)\sin(\beta_0 + \beta_1)\cos(\frac{\pi}{4}+\theta)+\cos(2\alpha_0)\cos(\beta_0 + \beta_0)\sin(\frac{\pi}{4}+\theta)\right]\nonumber\\
&+R\sin(\beta_0 - \beta_1)\left[\sin(2\alpha_1)\cos(\beta_0+ \beta_1)\cos(\frac{\pi}{4}+\theta)-\cos(2\alpha_1)\sin(\beta_0 + \beta_1)\sin(\frac{\pi}{4}+\theta)\right],\label{eq:secn}
  \end{align}
Consider maximizing this over $\alpha_1$. From Lemma~\ref{lem:signs} we can assume $\sin(\beta_0-\beta_1)\leq0$, so we want to minimize the second term in square brackets in~\eqref{eq:secn}. This has the form of the expression in Lemma~\ref{lem:arcsin}.  Since the sine and cosine of $\pi/4+\theta$ are both positive, and from Lemma~\ref{lem:signs} we can assume $\sin(\beta_0+\beta_1)\geq0$, the minima of the square bracket (and hence maxima overall) occur for 
\begin{equation}\label{eq:alpha1}
2\alpha_1=-\tan^{-1}\left(\cot(\beta_0+\beta_1)\cot(\frac{\pi}{4}+\theta)\right)+2n\pi.
\end{equation}

The CHSH score is symmetric in the parameters for Alice and Bob, so we can re-write it as 
\begin{align*}
\sqrt{2} (2 \score - 1) =&  R\cos(\alpha_0 - \alpha_1)\left[\sin(2\beta_0)\sin(\alpha_0 + \alpha_1)\cos(\frac{\pi}{4}+\theta)+\cos(2\beta_0)\cos(\alpha_0 + \alpha_0)\sin(\frac{\pi}{4}+\theta)\right]\\
&+R\sin(\alpha_0 - \alpha_1)\left[\sin(2\beta_1)\cos(\alpha_0+ \alpha_1)\cos(\frac{\pi}{4}+\theta)-\cos(2\beta_1)\sin(\alpha_0 + \alpha_1)\sin(\frac{\pi}{4}+\theta)\right].
  \end{align*}
If we now maximize over $\beta_1$, from Lemma~\ref{lem:signs} the solutions either satisfy 
\begin{align*}
2\beta_1&=-\tan^{-1}\left(\cot(\alpha_0+\alpha_1)\cot(\frac{\pi}{4}+\theta)\right)+2n\pi\quad\text{or}\\
2\beta_1&=-\tan^{-1}\left(\cot(\alpha_0+\alpha_1)\cot(\frac{\pi}{4}+\theta)\right)+(2n+1)\pi
\end{align*}
for $n\in\mathbb{Z}$. (Which one holds depends on the signs of $\sin(\alpha_0-\alpha_1)$ and $\sin(\alpha_0+\alpha_1)$.)  In both cases, $\tan(2\beta_1)=\cot(\alpha_0+\alpha_1)\cot(\frac{\pi}{4}+\theta)$.

By symmetry (and because we can take $\sin(\alpha_0-\alpha_1)\leq0$ and $\sin(\alpha_0+\alpha_1)\geq0$ from Lemma~\ref{lem:signs}) the maxima of this over $\beta_1$ occur for
\begin{equation}
2\beta_1=-\tan^{-1}\left(\cot(\alpha_0+\alpha_1)\cot(\frac{\pi}{4}+\theta)\right)+2n\pi.
\end{equation}
Rearranging gives
\begin{equation}
\tan(\alpha_0+\alpha_1)=-\cot(2\beta_1)\cot(\frac{\pi}{4}+\theta)\,,
\end{equation}
and hence
\begin{eqnarray*}
\alpha_0 &=& -\alpha_1 - \tan^{-1}\left(\cot(2\beta_1) \cot(\frac{\pi}{4}+\theta) \right) + n\pi
\end{eqnarray*}
for $n \in \mathbb{Z}$. Using~\eqref{eq:alpha1} we find
\begin{equation}\label{eq:alpha0}
2\alpha_0=\tan^{-1}\left(\cot(\beta_0+\beta_1)\cot(\frac{\pi}{4}+\theta)\right) - 2\tan^{-1}\left(\cot(2\beta_1) \cot(\frac{\pi}{4}+\theta) \right) + 2n\pi.
\end{equation}

The proof proceeds as follows. We use~\eqref{eq:alpha1} to eliminate $\alpha_1$ from the constraint, noting that the value of $n$ in~\eqref{eq:alpha1} does not change the value so we can take $n=0$. We then use~\eqref{eq:alpha0} to reparameterize the objective function in terms of $\beta_1$ instead of $\alpha_0$ (again the value of $n$ in~\eqref{eq:alpha0} makes no difference and we take $n=0$). The parameters that remain are hence $\beta_0$, $\beta_1$, $R$ and $\theta$. We then reparameterize using $v=\beta_0+\beta_1$ and $\lambda=2\beta_1$, so that both the constraint and objective function are written in terms of $v$, $\lambda$, $R$ and $\theta$. We then use the constraint to write $R$ in terms of the other parameters, reducing the objective function to an unconstrained optimization over $v$, $\lambda$ and $\theta$. 

At this stage $v$ and $\lambda$ range over all reals, which can readily be restricted to $[0,2\pi]$. In fact, we can restrict both to $[0,\pi]$ by noting that Lemma~\ref{lem:signs} shows that it suffices to take $\sin(v)=\sin(\beta_0+\beta_1)\geq0$ hence $v\in[0,\pi]$. We then consider the transformation $(\lambda\mapsto2\pi-\lambda,v\mapsto\pi-v)$.  We find
\begin{eqnarray*}
    \hat{\alpha}_0(2\pi-\lambda,\pi-v,\theta)&=&-\hat{\alpha}_0(\lambda,v,\theta)\\
    \tilde{\epsilon}(2\pi-\lambda,\pi-v,\theta)&=&\tilde{\epsilon}(\lambda,v,\theta)\\
    \hat{R}(2\pi-\lambda,\pi-v,\theta)&=&\hat{R}(\lambda,v,\theta)\,,
\end{eqnarray*}
from which it follows that we can restrict both $\lambda$ and $v$ to the range $[0,\pi]$. Finally, the original range of $\theta$ is $[0,\pi/4-\cos^{-1}(1/(R\sqrt{2})]$, with $R\in[\sqrt{2}(2\score-1),1]$, hence the largest $\theta$ that needs to be considered for a given $\score$ is $\pi/4-\cos^{-1}(1/(4\score-2))$. Since we are using the functions with extended domain, it does not matter that we allow the range of $\theta$ to potentially be incompatible with the value of $\hat{R}$. This gives the optimization claimed in~\eqref{eq:newopt}.
\end{proof}

\subsubsection{Upper bounding the derivatives} 
For brevity in this section we often use $\bar{\theta}=\pi/4+\theta$. We upper-bound the derivatives for the functions $\hat{\alpha}_0, \tilde{\epsilon}, \hat{R}$. We first upper bound the derivatives for $\alpha$ as 
\begin{eqnarray}
\Big| \partial_{\lambda} \hat{\alpha}_0 \Big| &=&  \Big|\frac{2 \cot \left(\bar{\theta}\right) \csc ^2(\lambda )}{\cot ^2\left(\bar{\theta}\right) \cot
   ^2(\lambda )+1}  \Big|\\
\Big| \partial_v \hat{\alpha}_0 \Big| &=& \Big|\frac{\cot \left(\bar{\theta}\right) \csc ^2(v)}{\cot ^2\left(\bar{\theta}\right) \cot ^2(v)+1} \Big|\\ 
\Big| \partial_{\bar{\theta}} \hat{\alpha}_0 \Big| &=& \Big|\frac{ 2 \csc ^2\left(\bar{\theta}\right) \cot (\lambda )}{\cot ^2\left(\bar{\theta}\right) \cot ^2(\lambda )+1} -  \frac{\csc ^2\left(\bar{\theta}\right) \cot (v)}{\left(\cot ^2\left(\bar{\theta}\right) \cot ^2(v)+1\right)} \Big|.
\end{eqnarray}
Observe that for $x\in\mathbb{R}$
\begin{eqnarray}
    \frac{a \csc^2(x)}{a^2 \cot^{2}(x) + 1} &\leq& \max \{a, \frac{1}{a}\}.  
\end{eqnarray}
Noting that $\cot(\bar{\theta}) \leq 1$ for $\theta \in [0,\frac{\pi}{4}]$. This gives us 
\begin{eqnarray}
|\partial_{\lambda}\hat{\alpha}_0 |  &\leq&  2 \tan(\bar{\theta}) =: \alpha_{\lambda}\\
|\partial_v\hat{\alpha}_0 |          &\leq& \tan(\bar{\theta})  =: \alpha_v.
\end{eqnarray}
The identity $\frac{x}{1 + a^2 x^2}\leq \frac{1}{2|a|}$ can be used to get the following upper bound 
\begin{eqnarray*}
    \Big| \partial_{\bar{\theta}} \hat{\alpha}_0 \Big| &\leq& \Big|\frac{2 \csc ^2\left(\bar{\theta}\right) \cot (\lambda )}{\cot ^2\left(\bar{\theta}\right) \cot ^2(\lambda )+1} \Big|  + \Big|\frac{\csc ^2\left(\bar{\theta}\right) \cot (v)}{\left(\cot ^2\left(\bar{\theta}\right) \cot ^2(v)+1\right)}  \Big|\\ 
    &\leq&  2\frac{\csc^{2}(\bar{\theta}) \tan(\bar{\theta})}{2} + \frac{\csc^2(\bar{\theta})\tan(\bar{\theta})}{2}\\ 
    &=&\frac{3}{2\sin(2\bar{\theta})}:=\alpha_{\bar{\theta}}.
\end{eqnarray*}

Define $z(\lambda, v, \theta)$ to be the denominator in~\eqref{two sided function 2}, i.e., \begin{eqnarray}
z(\lambda, v, \theta)\!:=\!\cos(v\!-\!\lambda)\!\left[\sin(\hat{\alpha}_0)\sin(v)\cos(\frac{\pi}{4}\!+\!\theta)\!+\!\cos(\hat{\alpha}_0)\cos(v)\sin(\frac{\pi}{4}\!+\!\theta)\right]\!-\!\frac{\sin(v\!-\!\lambda)}{\sqrt{2}}\sqrt{1\!-\!\cos(2 v)\sin(2 \theta)}.
\end{eqnarray}
We now compute the derivatives of $z$.  For the derivative with respect to $\lambda$, we write $\partial_{\lambda} z = b_1 + b_2 \partial_{\lambda} \hat{\alpha}_0$, where 
\begin{eqnarray*}
b_1 &=& \sin(v-\lambda)\left(\sin(\hat{\alpha}_0)\sin(v)\cos(\bar{\theta})+\cos(\hat{\alpha}_0)\cos(v)\sin(\bar{\theta})\right)+\frac{\cos(v-\lambda)\sqrt{1-\cos(2v)\sin(2\theta)}}{\sqrt{2}}\\
b_2 &=& \cos(v-\lambda)\left(\cos(\hat{\alpha}_0)\sin(v)\cos(\bar{\theta})-\sin(\hat{\alpha}_0)\cos(v)\sin(\bar{\theta})\right).
\end{eqnarray*}
We can then bound these by $b_1 \leq \cos(\bar{\theta}) + \sin(\bar{\theta}) +1/\sqrt{2} \leq \sqrt{2} + 1/\sqrt{2}$ and $b_2 \leq \cos(\bar{\theta}) + \sin(\bar{\theta}) \leq \sqrt{2} $, so that 
\begin{eqnarray}
|\partial_{\lambda} z| \leq \sqrt{2}(3/2+\alpha_{\lambda}) =: z_{\lambda}.
\end{eqnarray}

Note that $\partial_v[\cos(v-\lambda)\sin(v)]=\cos(\lambda-2v)$ and $\partial_v[\cos(v-\lambda)\cos(v)]=\sin(\lambda-2v)$. We can hence write the $v$ derivative as 
\begin{eqnarray}
\partial_vz = a_1 + a_2 + a_3 \partial_{v} \hat{\alpha}_0\text{,\quad where}
\end{eqnarray}
\begin{eqnarray*}
a_1&=&\cos(\lambda-2v)\sin(\hat{\alpha}_0)\cos(\bar{\theta})+\sin(\lambda-2v)\cos(\hat{\alpha}_0)\sin(\bar{\theta}) \leq \cos(\bar{\theta}) + \sin(\bar{\theta}) \leq \sqrt{2}\\
   a_2 &=& -\frac{\cos (v-\lambda ) \sqrt{1-\sin (2
   \theta ) \cos (2 v)}}{\sqrt{2}}-\frac{\sin (2 \theta ) \sin(2 v) \sin (v-\lambda )}{\sqrt{2} \sqrt{1-\sin (2 \theta )\cos (2 v)}} \leq |\cos(v-\lambda)|+|\sin(v-\lambda)|\leq \sqrt{2}\\
   a_3 &=& \cos(v-\lambda)\left(\cos(\bar{\theta})\sin(v)\cos(\hat{\alpha}_0)-\sin(\bar{\theta})\cos(v)\sin(\hat{\alpha}_0)\right) \leq \cos(\bar{\theta}) + \sin(\bar{\theta}) \leq \sqrt{2},
\end{eqnarray*} 
and where we obtained the bound on $a_2$ using $|\frac{\sin(2v)\sin(2\theta)}{\sqrt{1 - \sin(2\theta) \cos(2v))}}| \leq \sqrt{2}$. Hence, we can bound
\begin{eqnarray}
|\partial_vz|\leq\sqrt{2}(2+\alpha_v)=: z_v.
\end{eqnarray}
Finally we compute the $\bar{\theta}$ derivative 
\begin{eqnarray}
\partial_{\bar{\theta}} z &=& c_1 + c_2+c_3 \partial_{\bar{\theta}} \hat{\alpha}_0
\end{eqnarray}
where 
\begin{eqnarray*}
c_1 &=&\cos(v-\lambda)\left(\cos(\hat{\alpha}_0)\cos(v)\cos(\bar{\theta})-\sin (\hat{\alpha}_0)\sin(v)\sin(\bar{\theta})\right) \leq \cos(\bar{\theta}) + \sin(\bar{\theta}) \leq \sqrt{2}\\
c_2&=&\frac{\cos(2\theta)\cos(2v)\sin(v-\lambda)}{\sqrt{2}\sqrt{1-\sin(2\theta)\cos(2v)}}\leq\frac{\cos(2\theta)}{\sqrt{2}\sqrt{1-\sin(2\theta)}}=\sqrt{\frac{1+\sin(2\theta)}{2}}\leq1\\
c_3 &=&\cos(v-\lambda)\left(\cos(\hat{\alpha}_0)\sin(v)\cos(\bar{\theta})-\sin(\hat{\alpha}_0)\cos(v)\sin(\bar{\theta})\right)\leq \cos(\bar{\theta}) + \sin(\bar{\theta}) \leq \sqrt{2}
\end{eqnarray*}
We hence obtain
\begin{eqnarray}
|\partial_{\bar{\theta}} z| &\leq&  \sqrt{2} + 1 + \sqrt{2}\alpha_{\bar{\theta}} =: z_{\theta}.
\end{eqnarray}
We now compute the derivatives of $\tilde{\epsilon}$:
\begin{eqnarray*}
   \partial_{\lambda} \tilde{\epsilon} &=& -\partial_{\lambda}\hat{\alpha}_0 (\cos (\theta ) \sin
   (\hat{\alpha}_0+\lambda-2 v)+\sin (\theta )
   \sin (\hat{\alpha}_0 -\lambda +2 v))+\sin (\theta
   ) \sin (\hat{\alpha}_0-\lambda +2 v)-\cos
   (\theta ) \sin (\hat{\alpha}_0+\lambda-2 v)\\ 
\partial_{v}\tilde{\epsilon} &=& -\partial_{v}\hat{\alpha}_0 (\cos (\theta)\sin(\hat{\alpha}_0+\lambda-2v)+\sin(\theta)
   \sin (\hat{\alpha}_0-\lambda +2 v))-2 \sin
   (\theta ) \sin (\hat{\alpha}_0-\lambda +2 v)+2
   \cos (\theta ) \sin (\hat{\alpha}_0+\lambda-2
   v)\\
   \partial_{\theta}\tilde{\epsilon} &=&-\partial_{\theta} \hat{\alpha}_0 (\cos (\theta ) \sin(\hat{\alpha}_0+\lambda-2 v)+\sin (\theta )
   \sin (\hat{\alpha}_0 -\lambda +2 v))+\cos (\theta
   ) \cos (\hat{\alpha}_0 -\lambda +2 v)-\sin
   (\theta ) \cos (\hat{\alpha}_0+\lambda-2 v).
\end{eqnarray*}
Using the same techniques as above, we find the following bounds 
\begin{eqnarray*}
|\partial_{\lambda}\tilde{\epsilon}| &\leq& \alpha_{\lambda} (\cos(\theta) + \sin(\theta )) + \cos(\theta) + \sin(\theta)  \leq  \sqrt{2}(\alpha_{\lambda}+1)=: \epsilon_{\lambda}\\
|\partial_{v}\tilde{\epsilon}| &\leq& \alpha_{v} (\cos(\theta) + \sin(\theta )) + 2\cos(\theta) + 2\sin(\theta)  \leq  \sqrt{2}(\alpha_{v}+2)=:\epsilon_v\\
|\partial_{\theta}\tilde{\epsilon}| &\leq& \alpha_{\theta} (\cos(\theta) + \sin(\theta )) + \cos(\theta) + \sin(\theta)  \leq  \sqrt{2}(\alpha_{\theta}+1) =: \epsilon_{\theta}.
\end{eqnarray*}

\subsubsection{Lower bounding the function} 
\noindent Consider a partition $\M{P}$ of $\M{D}_\score'$. Let $\M{C}_{i,j,k}$ be a cuboid (with $i$ label corresponding to $\lambda$, $j$ label for $v$ and $k$ label for $\theta$). Let $\Delta z = z_{\lambda}(\lambda_{i+1} - \lambda_i) + z_{v}(v^{(i)}_{j+1} - v^{(i)}_j)+z_{\theta}(\theta^{(i,j)}_{k+1} - \theta^{(i,j)}_k)$, then in $\M{C}_{i,j,k}$ 
\begin{eqnarray}
R_{\min}^{i,j,k} := \frac{\sqrt{2}(2 \score -1 )}{z(\lambda_i, v_j^{(i)}, \theta_k^{(i,j)}) + \Delta z } \leq \hat{R}(\lambda , v , \theta) = \frac{\sqrt{2}(2 \score -1 )}{z(\lambda_i, v_j^{(i)}, \theta_k^{(i,j)})  } \leq \frac{\sqrt{2} (2 \score - 1)}{z(\lambda_i, v_j^{(i)}, \theta_k^{(i,j)}) - \Delta z} =: R_{\max}^{i,j,k}  
\end{eqnarray}
Also let $\Delta \epsilon := \epsilon_{\lambda}(\lambda_{i+1} - \lambda_i) + \epsilon_{v}(v_{j+1}^{(i)} - v_j^{(i)})+\epsilon_{\theta}(\theta_{k+1}^{(i,j)} - \theta_k^{(i,j)})$, then in $\M{C}_{i,j,k}$ we have
\begin{eqnarray}
\tilde{\epsilon}(\lambda,v,\theta)\leq\epsilon_{\max}^{i,j ,k} := \tilde{\epsilon}(\lambda_i, v_j, \theta_k) + \Delta \epsilon  
\end{eqnarray}
For each cuboid we define a continuous function $g_{i,j,k}: \M{C}_{i,j,k} \rightarrow \mathbb{R}$ such that $g_{i,j,k}(\B{x}) \leq \hat{P}(\B{x})$ for all $\B{x} \in \M{C}_{i,j,k}$. Then we lower bound $G_{AB|X=0,Y=0,E}$ by using the following.
\begin{lemma}\label{lemm: Lower bound om H(A|X=0,Y=0,E) on grids}
Let
\begin{eqnarray}\label{eqn: two sided lower bound}
g_{i,j,k} := \hat{H}_{\bin}\left(\frac{1}{2} + \frac{R^{i,j,k}_{\max} \epsilon_{\max}^{i,j,k}}{2}\right) + \hat{K}(R^{i,j,k}_{\min}, \theta_{k}^{(i,j)}).
\end{eqnarray}
Then $\hat{P}(\B{x})\geq g_{i,j,k}$ for all $\B{x}\in\M{C}_{i,j,k}$.
\end{lemma}
\begin{proof}
By definition, we have $R_{\max}^{i,j,k} \epsilon_{\max}^{i,j,k} \geq \hat{R}(\lambda, v, \theta) \tilde{\epsilon}(\lambda, v, \theta)$ for all $\B{x}\in\M{C}_{i,j,k}$. Using the monotonicity of the function $\hat{H}_{\bin}(\frac{1}{2} + \frac{x}{2})$, we obtain $\hat{H}_{\bin}(\frac{1}{2} + \frac{R_{\max}^{i,j,k} \epsilon_{\max}^{i,j,k}}{2}) \leq \hat{H}_{\bin}(\frac{1}{2} + \frac{\hat{R}(\lambda, v, \theta) \tilde{\epsilon}(\lambda, v, \theta)}{2})$. Similarly, the monotonicity of $\hat{K}(R,\theta)$ with respect to $R$ and $\theta$ (see Lemmas~\ref{lem: monotonicity of K(R)} and~\ref{lem: monotonicity of K(th)}) implies $\hat{K}(R(\lambda,v,\theta),\theta)\geq\hat{K}(R_{\min},\theta_{k}^{(i,j)})$ for all $\B{x}\in\M{C}_{i,j,k}$.  These imply the claim.
\end{proof} 
\noindent Combining all the results in this section, we have the following
\begin{corollary}
Let $\score \in (\frac{3}{4}, \frac{1}{2} + \frac{1}{2 \sqrt{2}}]$ be fixed. Let $\M{D}_\score' = \{(\lambda, v, \theta) \in \mathbb{R}^{3} : \lambda \in [0, \pi], v \in [0, \pi], \theta \in [0, \frac{\pi}{4} - \cos^{-1} \left(\frac{1}{2(2 \score -1 )} \right) ] \}$ and $\M{P} = \bigcup_{i,j,k} \M{C}_{i,j,k}$ be a partition of any cuboid $\M{C} \supseteq \M{D}'(\score)$ as specified above. Then 
\begin{eqnarray}
G_{AB|X=0,Y=0,E}(\score) \geq \min_{i,j,k} g_{i,j,k}
\end{eqnarray}
where $g_{i,j,k}$ are defined in~\eqref{eqn: two sided lower bound}.
\end{corollary}
\begin{proof}
This is a direct consequence of Lemmas~\ref{lemm: H(A|X=0,Y=0,E) on grids} and~\ref{lemm: Lower bound om H(A|X=0,Y=0,E) on grids}.
\end{proof}

\subsection{H(AB|XYE)}\label{app:ABgXYE}
In this case we again use Lemma~\ref{lem:ent_simp} to obtain
\begin{align*}
H(AB|XYE)&=H(AB|XY)+\sum_{abxy}p_{ABXY}(a,b,x,y)H(\tau_E^{abxy})-\sum_{xy}p_{XY}(x,y)H\left(\sum_{ab}p_{AB|xy}(a,b)\tau_E^{abxy}\right)\\
  &=H(AB|XY)-H(E)\,,
\end{align*}
where we again use that $H(\tau_E^{abxy})=0$, and note that $\sum_{ab}p_{AB|xy}(a,b)\tau_E^{abxy}=\rho_E$ for all $x,y$.  Note that
\begin{align*}
  H(AB|XY)&=\sum_{xy}p_{XY}(x,y)H(AB|X=x,Y=y)\\
          &=1+\sum_{xy}p_{XY}(x,y)H_\bin(2\epsilon_{xy})\,,
\end{align*}
and so we have
\begin{equation}\label{eq:HABgXYE}
    H(AB|XYE)=1+\sum_{xy}p_{XY}(x,y)H_\bin(2\epsilon_{xy})-H(\{\lambda_0,\lambda_1,\lambda_2,\lambda_3\})\,.
\end{equation}
Again the $\delta$ dependence is all in the last term, so, like in the case of $H(AB|X=0,Y=0,E)$ we can take $\delta=\delta^*$ and remove $\delta$ from the optimization.

\subsection{H(AB|E)}
We first trace out $XY$ to give
$\tau_{ABE}=\sum_{ab}p_{AB}\proj{a}\ot\proj{b}\ot\sum_{xy}p_{XY|ab}(x,y)\tau_E^{abxy}$. For this state, Lemma~\ref{lem:ent_simp} gives
\begin{align*}
H(AB|E)&=H(AB)+\sum_{ab}p_{AB}(a,b)H\left(\sum_{xy}p_{XY|ab}(x,y)\tau_E^{abxy}\right)-H\left(\sum_{abxy}p_{AB}(a,b)p_{XY|ab}(x,y)\tau_E^{abxy}\right)\\
       &=H(AB)+\sum_{ab}p_{AB}(a,b)H\left(\sum_{xy}p_{XY|ab}(x,y)\tau_E^{abxy}\right)-H(E)\\
       &=H(AB)+\sum_{ab}p_{AB}(a,b)H\left(\sum_{xy}\frac{1}{p_{AB}(a,b)}p_{XY}(x,y)p_{AB|xy}(a,b)\tau_E^{abxy}\right)-H(E)\,.
\end{align*}
In this case we cannot remove the middle term, and the middle term is not independent of $\delta$. The optimization in this case is hence significantly more complicated.  Note that
\begin{align*} H(AB)=1+H_\bin\left(2\left(p_{XY}(0,0)\epsilon_{00}+p_{XY}(0,1)\epsilon_{01}+p_{XY}(1,0)\epsilon_{10}+p_{XY}(1,1)\left(\frac{1}{2}-\epsilon_{11}\right)\right)\right)\,.
\end{align*}

\section{Monotonicity}\label{app:mono}
In this section we prove the monotonicity of the functions $G_{A|XYE}(\score)$, $G_{AB|00E}(\score)$ and $G_{AB|XYE}(\score)$. There is a common part to the proofs, which we first establish.
\begin{lemma}\label{Monotonicity claim 2}
Let $\lambda_0(R,\theta), \lambda_1(R, \theta) , \lambda_2(R , \theta)$ and $\lambda_3(R, \theta)$ be the eigenvalues of a Bell-diagonal states $\rho_{A'B'}$ as in~\eqref{param1}--\eqref{param4} in the case where $\delta=\frac{R^2}{4}\cos(2\theta)$. Then
\begin{eqnarray}
\frac{\partial}{\partial R} \left(H_{\bin}(\lambda_0 + \lambda_1) - H(\{\lambda_0,\lambda_1,\lambda_2,\lambda_3\}) \right) > 0. 
\end{eqnarray}
\end{lemma}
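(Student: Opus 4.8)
The plan is to substitute the optimal $\delta=\delta^*=\frac{R^2}{4}\cos(2\theta)$ of Lemma~\ref{lem:delta_opt} into~\eqref{param1}--\eqref{param4} and exploit the key structural fact that, \emph{precisely for this value of $\delta$}, the four eigenvalues factorize into a product of two binary distributions. First I would introduce
\begin{equation*}
P=1+R(\cos\theta+\sin\theta),\qquad Q=1+R(\cos\theta-\sin\theta),
\end{equation*}
whose complements are $2-P=1-R(\cos\theta+\sin\theta)$ and $2-Q=1-R(\cos\theta-\sin\theta)$. Completing the square in each $\lambda_i$, using $\cos(2\theta)=\cos^2\theta-\sin^2\theta$, then yields
\begin{align*}
\lambda_0&=\tfrac14 PQ, &\lambda_1&=\tfrac14 P(2-Q),\\
\lambda_2&=\tfrac14 (2-P)Q, &\lambda_3&=\tfrac14 (2-P)(2-Q).
\end{align*}

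Since $\frac{P}{2}+\frac{2-P}{2}=1$ and $\frac{Q}{2}+\frac{2-Q}{2}=1$, the list $\{\lambda_0,\lambda_1,\lambda_2,\lambda_3\}$ is exactly the product of the two binary distributions $\{\frac{P}{2},1-\frac{P}{2}\}$ and $\{\frac{Q}{2},1-\frac{Q}{2}\}$. By additivity of the Shannon entropy over product distributions,
\begin{equation*}
H(\{\lambda_0,\lambda_1,\lambda_2,\lambda_3\})=H_{\bin}\!\left(\tfrac{P}{2}\right)+H_{\bin}\!\left(\tfrac{Q}{2}\right).
\end{equation*}
Next I would note that the first marginal collapses, $\lambda_0+\lambda_1=\frac14 P\bigl(Q+(2-Q)\bigr)=\frac{P}{2}$, so that $H_{\bin}(\lambda_0+\lambda_1)=H_{\bin}(\frac{P}{2})$ and the entire expression telescopes,
\begin{equation*}
H_{\bin}(\lambda_0+\lambda_1)-H(\{\lambda_0,\lambda_1,\lambda_2,\lambda_3\})=-H_{\bin}\!\left(\tfrac{Q}{2}\right).
\end{equation*}

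It then only remains to differentiate this single binary entropy. With $\frac{Q}{2}=\frac12\bigl(1+R(\cos\theta-\sin\theta)\bigr)$ I have $\frac{\partial}{\partial R}\frac{Q}{2}=\frac12(\cos\theta-\sin\theta)\geq0$, while $H_{\bin}'(\frac{Q}{2})=\log_2\frac{1-Q/2}{Q/2}$. Over the relevant range $0\leq\theta<\pi/4$ --- which is forced whenever $R>1/\sqrt2$, the only regime producing non-classical scores --- one has $\cos\theta>\sin\theta$, hence $\frac{Q}{2}>\frac12$ and $H_{\bin}'(\frac{Q}{2})<0$. Consequently
\begin{equation*}
\frac{\partial}{\partial R}\Bigl(H_{\bin}(\lambda_0+\lambda_1)-H(\{\lambda_0,\lambda_1,\lambda_2,\lambda_3\})\Bigr)=-H_{\bin}'\!\left(\tfrac{Q}{2}\right)\frac{\partial}{\partial R}\frac{Q}{2}>0,
\end{equation*}
as required.

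The only nontrivial ingredient is spotting the product-distribution factorization of the $\lambda_i$; this is exactly what makes $\delta=\delta^*$ special, and it is where I expect the main effort to lie. Once it is in hand, both the entropy decomposition and the sign of the derivative are immediate, and no optimization over $\theta$ is needed because the conclusion holds at every fixed admissible $\theta$.
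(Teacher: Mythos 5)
Your proof is correct, and every algebraic step checks out: with $\delta=\frac{R^2}{4}\cos(2\theta)$ one indeed has $\lambda_0=\tfrac14PQ$, $\lambda_1=\tfrac14P(2-Q)$, $\lambda_2=\tfrac14(2-P)Q$, $\lambda_3=\tfrac14(2-P)(2-Q)$, so the objective collapses to $-H_\bin(Q/2)$ and its $R$-derivative is $\log_2\bigl(\tfrac{Q}{2-Q}\bigr)\tfrac{\cos\theta-\sin\theta}{2}$, which is exactly the final expression $\log\bigl(\tfrac{\lambda_0}{\lambda_1}\bigr)\tfrac{\cos\theta-\sin\theta}{2}$ the paper arrives at. The underlying structural fact is the same in both arguments --- the paper exploits the identities $\tfrac{\lambda_0}{\lambda_0+\lambda_1}=\tfrac{\lambda_2}{\lambda_2+\lambda_3}$ and $\tfrac{\lambda_1}{\lambda_0+\lambda_1}=\tfrac{\lambda_3}{\lambda_2+\lambda_3}$, which are precisely your product-distribution factorization in disguise --- but the order of operations differs: the paper differentiates $H_\bin(\lambda_0+\lambda_1)$ and $H(\{\lambda_i\})$ separately and only then uses the ratio identities to telescope the sum of four logarithmic terms, whereas you simplify the objective to a single binary entropy \emph{before} differentiating. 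Your route buys a closed form for the quantity itself, $H_\bin(\lambda_0+\lambda_1)-H(\{\lambda_i\})=-H_\bin\bigl(\tfrac12(1+R(\cos\theta-\sin\theta))\bigr)$, which is more informative than just the sign of its derivative; the paper's route avoids having to spot the factorization explicitly. One small point in your favour: the paper's own proof ends with a ``$\geq0$'' (equality occurring at $\theta=\pi/4$, which is admissible when $R\leq1/\sqrt2$), so the strict inequality in the lemma statement really does require the restriction to $R>1/\sqrt2$ (hence $\theta<\pi/4$ strictly) that you make explicit and the paper leaves implicit.
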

\begin{proof}
\begin{eqnarray}
\frac{\partial}{\partial R} \Big( H_{\bin}(\lambda_0 + \lambda_1) \Big) &=& -\log(\lambda_0 + \lambda_1) \frac{\partial}{\partial R} (\lambda_0 + \lambda_1)  -\log(\lambda_2 + \lambda_3) \frac{\partial}{\partial R} (\lambda_2 + \lambda_3) 
\end{eqnarray}
The equality above follows from the fact that $1 - \lambda_1 - \lambda_0 = \lambda_2 + \lambda_3$ and thus $H_{\bin}(\lambda_0+ \lambda_1) = -(\lambda_0 + \lambda_1) \log(\lambda_1 + \lambda_0) - (\lambda_2 + \lambda_3) \log(\lambda_2 + \lambda_3)$. We also have that
\begin{eqnarray}
\frac{\partial }{ \partial R} H(\{\lambda_0,\lambda_1,\lambda_2,\lambda_3\}) =- \sum_i\log{\lambda_i}\frac{\partial \lambda_i}{\partial R} .
\end{eqnarray}
Adding the derivatives, we have
\begin{eqnarray}
\frac{\partial}{\partial R} \Big( H_{\bin}(\lambda_0 + \lambda_1) - H(\{\lambda_0,\lambda_1,\lambda_2,\lambda_3\})\Big)  &=& \log\left(\frac{\lambda_0}{\lambda_0 + \lambda_1}\right) \frac{\partial \lambda_0}{\partial R} + \log\left(\frac{\lambda_1}{\lambda_0 + \lambda_1}\right) \frac{\partial \lambda_1}{\partial R}  \nonumber \\& & + \log\left(\frac{\lambda_2}{\lambda_3 + \lambda_2}\right) \frac{\partial \lambda_2}{\partial R} + \log\left(\frac{\lambda_3}{\lambda_2 + \lambda_3}\right) \frac{\partial \lambda_3}{\partial R}  \nonumber\\ 
&=& \log_{2}\Big( \frac{\lambda_0}{\lambda_0 + \lambda_1}\Big) \frac{\partial}{\partial R}(\lambda_0 + \lambda_2) + \log_{2}\Big( \frac{\lambda_1}{\lambda_0 + \lambda_1}\Big) \frac{\partial}{\partial R}(\lambda_1 + \lambda_3) \nonumber  \\  
&=& \log\left(\frac{\lambda_0}{\lambda_1}\right)\frac{\partial}{\partial R}(\lambda_0 + \lambda_2)=\log\left(\frac{\lambda_0}{\lambda_1}\right)\frac{\cos(\theta)-\sin(\theta)}{2}\nonumber\\
&\geq& 0 .
\end{eqnarray}
Where the second equality follows from the fact that for Bell-diagonal states parameterized by $\delta = \frac{R^2}{4}\cos(2\theta)$, the eigenvalues obey 
\begin{eqnarray*}
\frac{\lambda_0}{\lambda_0 + \lambda_1} = \frac{\lambda_2}{\lambda_2 + \lambda_3} \quad\text{and}\quad
\frac{\lambda_1}{\lambda_0 + \lambda_1} = \frac{\lambda_3}{\lambda_2 + \lambda_3}
\end{eqnarray*}
and the inequality comes from the parameterization.
\end{proof}

\begin{lemma}\label{lemm: Monotonicity of one sided rates}
For $\score\in(\frac{3}{4},\frac{1}{2}(1+\frac{1}{\sqrt{2}}))$ and any distribution $p_{XY}$, the function $G_{A|XYE}(\omega,p_{XY})$ is increasing in $\score$.
\end{lemma}

\begin{proof}
Let us fix the score $\omega$. From the analysis in Appendix~\ref{app:AgXYE} we know that the optimum value of $\delta$ is $\frac{R^2}{4}\cos(2\theta)$.  Throughout this proof we take $\delta=\frac{R^2}{4}\cos(2\theta)$ and consider $\rho_{A'B'}$ to depend on two parameters $R$ and $\theta$.  Let $(\M{N}^*, \rho^*) \equiv \rho(R^*, \theta^*)$ be the channel and state that that solves the optimization problem for $G_{A|XYE}(\score,p_{XY})$, i.e., such that $G_{A|XYE}(\score,p_{XY})=H(A|XYE)_{(\cN^*\ot\cI_E)(\rho_{A'B'E}(R^*,\theta^*))}$.
It suffices to show that there exists a curve $\sigma: [-1 , 0] \mapsto \M{S}(\M{H}_{A'}\ot\M{H}_{B'}\ot\M{H}_E)$, such that
\begin{enumerate}
    \item $\sigma(0) = \rho^*$ 
    \item $g(t) := H(A|XYE)_{(\M{N}^*\ot\cI_E)(\sigma(t))}$ is differentiable for all $t \in [-1,0]$.  
    \item $\left.\frac{\dd g(t)}{\dd t}\right|_{t=0} > 0$ 
    \item $\forall t: \frac{\dd}{\dd t}\scorefunction\left((\M{N}^*\ot\cI_E)(\sigma(t))\right)>0$.
\end{enumerate}
Then, if 1--4 hold, using the fact that $g(t)$ is continuous and has a positive derivative at $t = 0$, there exists $t_0 < 0$ such that for $t\in(t_0,0)$, $g(t) < g(0)$. Since the $\scorefunction\left((\M{N}^*\ot\cI_E)(\sigma(t))\right)$ is continuous function, we must have that for any $t\in(t_0,0)$
\begin{eqnarray}
H(A|X Y  E)_{(\M{N}^*\ot\cI_E)(\rho_{A'B'E}^*)} &>& H(A|X Y E)_{(\M{N}^*\ot\cI_E)(\sigma(t))} \\ 
&\geq& G_{A|XYE}\left(\scorefunction((\M{N}^*\ot\cI_E)(\sigma(t)),p_{XY}\right).
\end{eqnarray}
Since $\scorefunction((\M{N}^*\ot\cI_E)(\sigma(t))<\score$ this establishes the claim.

It remains to show that there exists a function $\sigma(t)$ such that 1--4 hold. Recall from Appendix~\ref{app:AgXYE} that we can write
\begin{eqnarray}\label{Just above}
H(A|XYE)=1+p_X(0)H_{\bin}(g(\theta,\alpha_0))+p_X(1)H_{\bin}(g(\theta,\alpha_1) )-H(\{\lambda_0,\lambda_1,\lambda_2,\lambda_3\})
\end{eqnarray}
where $g(\theta, \alpha) := \frac{1}{2}\left(1+R\sqrt{1+\sin(2\theta)\cos(4\alpha)}\right)$. We then set
\begin{eqnarray}
\sigma(t)=\rho(R^*+\kappa t,\theta^*)
\end{eqnarray}
for some positive number $\kappa$ such that $R^*-\kappa>3/4$. Thus, $\sigma(0)=\rho^*$ and differentiability of $g(t)$ can be shown using the form~\eqref{Just above}. We compute the $t$ derivative:
\begin{eqnarray}
\left.\frac{\dd g(t)}{\dd t}\right|_{t=0}=\left.\kappa\frac{\partial}{\partial R} \Big(H(A|X Y E)_{(\M{N^*}\ot\cI_E)(\rho_{A'B'E}(R,\theta))} \Big) \right|_{R=R^*,\theta=\theta^*}
\end{eqnarray}
Note that 
\begin{eqnarray*}
\frac{\partial}{\partial R} H_{\bin} (g(\theta, \alpha)) &=&  H_{\bin}'(g(\theta, \alpha)) \frac{\sqrt{1 + \sin(2 \theta) \cos(4\alpha)}}{2}    \\
&\geq& H_{\bin}'\Big(\frac{1}{2} + \frac{R}{2} (\cos(\theta) + \sin(\theta)) \Big) \frac{\cos(\theta) + \sin(\theta)}{2} \\
&=& H_{\bin}'(\lambda_0 + \lambda_1) \frac{\partial}{\partial R}(\lambda_0 + \lambda_1)\\
&=& \frac{\partial}{\partial R}H_{\bin}(\lambda_0+\lambda_1)\,,
\end{eqnarray*} 
where we have used that $H'_\bin(p)$ is decreasing in $p$ for $p>1/2$, so we take $\alpha=0$ to obtain a bound. It follows that 
\begin{eqnarray*}
\left.\frac{\dd g(t)}{\dd t}\right|_{t=0}=\kappa\frac{\partial}{\partial R} \Big(H_{\bin}(\lambda_0 + \lambda_1) - H(\{\lambda_0,\lambda_1,\lambda_2,\lambda_3\}) \Big) > 0\,,
\end{eqnarray*}
where the inequality is Lemma~\ref{Monotonicity claim 2}.

Finally, the function 
$$\scorefunction((\M{N}^*\ot\cI_E)(\sigma(t))=\frac{1}{2}\sum_{i,j} \epsilon_{ij}$$
increases linearly with $t$ (the score is linear in $R$).
\end{proof}

\begin{lemma}\label{lem:monot:w}
For $\score\in(\frac{3}{4},\frac{1}{2}(1+\frac{1}{\sqrt{2}}))$, the function $G_{AB|X=0,Y=0,E}(\score)$ is increasing in $\score$.
\end{lemma}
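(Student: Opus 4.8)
The plan is to follow the template of the immediately preceding lemma (monotonicity of $G_{A|XYE}$) essentially verbatim, since the only ingredient that changes is the explicit form of the objective. I fix $\delta=\frac{R^2}{4}\cos(2\theta)$ throughout (the optimal value identified in Appendix~\ref{app:AB00E}) and regard $\rho_{A'B'}$ as a function of $R$ and $\theta$ alone. Let $(\cN^*,\rho^*\equiv\rho(R^*,\theta^*))$ solve the optimization defining $G_{AB|X=0,Y=0,E}(\score)$, and set $\sigma(t)=\rho(R^*+\kappa t,\theta^*)$ for a small $\kappa>0$ with $R^*-\kappa>1/\sqrt2$. It then suffices to verify the analogues of conditions 1--4 from the previous proof for $g(t):=H(AB|X=0,Y=0,E)_{(\cN^*\ot\cI_E)(\sigma(t))}$, after which the same continuity-and-contradiction argument yields that $G_{AB|X=0,Y=0,E}$ is increasing. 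Conditions 1 and 2 hold by construction and by the smooth explicit form of the entropy, so the work reduces to establishing $\left.\frac{\dd g}{\dd t}\right|_{t=0}>0$ and that the score increases along $\sigma$.

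For the derivative, I use $H(AB|X=0,Y=0,E)=1+H_\bin(2\epsilon_{00})-H(\{\lambda_0,\lambda_1,\lambda_2,\lambda_3\})$ from Appendix~\ref{app:AB00E}, where $2\epsilon_{00}=\frac12\bigl(1+RW\bigr)$ is affine in $R$ with $W=\cos\theta\cos(2(\alpha_0-\beta_0))+\sin\theta\cos(2(\alpha_0+\beta_0))$, so $\frac{\partial}{\partial R}(2\epsilon_{00})=W/2$. The key observation is that $|W|\leq\cos\theta+\sin\theta$ (each cosine has modulus at most $1$ and $\cos\theta,\sin\theta\geq0$ on $[0,\pi/4]$); using the symmetry $H_\bin(p)=H_\bin(1-p)$ I may assume $W\geq0$, giving $\tfrac12\leq 2\epsilon_{00}\leq\tfrac12(1+R(\cos\theta+\sin\theta))=\lambda_0+\lambda_1$. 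Since $H_\bin'$ is negative and decreasing on $(1/2,1)$, the two factors of $\frac{\partial}{\partial R}H_\bin(2\epsilon_{00})=H_\bin'(2\epsilon_{00})\cdot\frac{W}{2}$ are controlled by $H_\bin'(2\epsilon_{00})\geq H_\bin'(\lambda_0+\lambda_1)$ (both negative) and $0\leq\frac{W}{2}\leq\frac{\cos\theta+\sin\theta}{2}$, whence
\begin{equation*}
\frac{\partial}{\partial R}H_\bin(2\epsilon_{00})\;\geq\;H_\bin'(\lambda_0+\lambda_1)\,\frac{\cos\theta+\sin\theta}{2}\;=\;\frac{\partial}{\partial R}H_\bin(\lambda_0+\lambda_1)\,.
\end{equation*}
Exactly as in the $G_{A|XYE}$ case this reduces the claim to
\begin{equation*}
\left.\frac{\dd g}{\dd t}\right|_{t=0}\;\geq\;\kappa\,\frac{\partial}{\partial R}\bigl(H_\bin(\lambda_0+\lambda_1)-H(\{\lambda_0,\lambda_1,\lambda_2,\lambda_3\})\bigr)\;>\;0\,,
\end{equation*}
where the strict inequality is Lemma~\ref{Monotonicity claim 2}, valid since $\theta^*\in[0,\pi/4]$ and $R^*>1/\sqrt2$ (the latter because $\score^*>3/4$ forces $R^*>1/\sqrt2$ by Lemma~\ref{lem:Ag00Escore}).

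The score condition is immediate: along $\sigma$ only $R$ varies while the measurement angles are held fixed, so by~\eqref{eq:score} the score is affine in $R$, and since at the optimum it exceeds $3/4>1/2$ its $R$-slope is strictly positive; thus $\scorefunction$ increases with $t$. I expect the only genuinely delicate point to be the chain of inequalities for $\frac{\partial}{\partial R}H_\bin(2\epsilon_{00})$ — in particular the reduction to the case $W\geq0$ via the symmetry of $H_\bin$, together with correct tracking of the negative sign of $H_\bin'$. This step is, however, structurally identical to the single $H_\bin(g(\theta,\alpha))$ term handled in the previous lemma, and is in fact simpler here because the objective contains only one such term rather than a $p_{XY}$-weighted sum.
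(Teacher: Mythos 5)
Your proposal is correct and follows essentially the same route as the paper: the paper's proof is precisely ``follow the previous lemma with the entropy changed,'' using $H(AB|X{=}0,Y{=}0,E)=1+H_\bin(2\epsilon_{00})-H(\{\lambda_0,\lambda_1,\lambda_2,\lambda_3\})$ and the key bound $\frac{\partial}{\partial R}H_\bin(2\epsilon_{00})=H_\bin'(2\epsilon_{00})\tfrac{W}{2}\geq H_\bin'(\lambda_0+\lambda_1)\tfrac{\cos\theta+\sin\theta}{2}$, exactly as you derive. Your explicit reduction to $W\geq0$ via $H_\bin(p)=H_\bin(1-p)$ and the careful sign-tracking of $H_\bin'$ are details the paper leaves implicit, so if anything your write-up is slightly more complete.
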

\begin{proof}
The proof follows the same lines as the previous lemma but with the entropy changed. From Appendix~\ref{app:AB00E} we have
$$H(AB|X=0,Y=0,E)=1+H_\bin(2\epsilon_{00})-H(\{\lambda_0,\lambda_1,\lambda_2,\lambda_3\}).$$
We have
\begin{eqnarray}
\frac{\partial}{\partial R}  H_{\bin} (2\epsilon_{00})&=&H'_{\bin}(2\epsilon_{00})\frac{\cos(\theta)\cos(2(\alpha_0-\beta_0))+\sin(\theta)\cos(2(\alpha_0+\beta_0))}{2}\nonumber\\
&\geq&H'_{\bin}\left(\frac{1}{2}+\frac{R}{2}(\cos(\theta)+\sin(\theta))\right)\frac{\cos(\theta)+\sin(\theta)}{2}\label{eq:eps00}
\end{eqnarray}
and the remainder of the argument matches the previous proof.
\end{proof}

\begin{lemma}
For $\score\in(\frac{3}{4},\frac{1}{2}(1+\frac{1}{\sqrt{2}}))$ and any distribution $p_{XY}$, the function $G_{AB|XYE}(\score,p_{XY})$ is increasing in $\score$.
\end{lemma}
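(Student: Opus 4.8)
The plan is to follow the template of the two preceding monotonicity lemmas essentially verbatim, changing only the entropic expression and the per-term derivative bound. As before, I fix the score $\score$, set $\delta=\delta^*=\frac{R^2}{4}\cos(2\theta)$ (which is optimal here: by Appendix~\ref{app:ABgXYE} the $\epsilon_{xy}$ are independent of $\delta$, so minimizing $H(AB|XYE)$ reduces to maximizing $H(\{\lambda_0,\lambda_1,\lambda_2,\lambda_3\})$ over $\delta$, yielding $\delta^*$), and regard $\rho_{A'B'}$ as a function of $R$ and $\theta$ only. I then let $(\M{N}^*,\rho^*\equiv\rho(R^*,\theta^*))$ attain the optimum defining $G_{AB|XYE}(\score,p_{XY})$.

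Next I define the one-parameter family $\sigma(t)=\rho(R^*+\kappa t,\theta^*)$ for $t\in[-1,0]$, with $\kappa>0$ chosen small enough that $R^*-\kappa>1/\sqrt{2}$, so that all associated scores stay above $3/4$. Conditions 1, 2 and 4 of the earlier template hold for exactly the same reasons as in the $G_{A|XYE}$ proof: $\sigma(0)=\rho^*$; the function $g(t):=H(AB|XYE)_{(\M{N}^*\ot\cI_E)(\sigma(t))}$ is differentiable because it is built from $H_\bin$ and the Shannon entropy of the eigenvalues~\eqref{param1}--\eqref{param4}; and $\scorefunction((\M{N}^*\ot\cI_E)(\sigma(t)))$ is linear, hence strictly increasing, in $t$ because the score~\eqref{eq:score} is linear in $R$. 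Granting condition 3, the continuity-plus-positive-derivative argument at the end of the $G_{A|XYE}$ proof then produces scores just below $\score$ whose optimal entropy is strictly below $G_{AB|XYE}(\score,p_{XY})$, which gives the monotonicity.

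The crux is condition 3, namely $\frac{\partial}{\partial R}H(AB|XYE)>0$ at $(R^*,\theta^*)$. Starting from~\eqref{eq:HABgXYE}, $H(AB|XYE)=1+\sum_{xy}p_{XY}(x,y)H_\bin(2\epsilon_{xy})-H(\{\lambda_0,\lambda_1,\lambda_2,\lambda_3\})$, I bound each of the four $R$-derivatives $\frac{\partial}{\partial R}H_\bin(2\epsilon_{xy})$ from below by the single quantity $\frac{\partial}{\partial R}H_\bin(\lambda_0+\lambda_1)$, exactly as in~\eqref{eq:eps00}. Each $2\epsilon_{xy}$ has the form $\frac{1}{2}\pm\frac{R}{2}c_{xy}$ with $|c_{xy}|\le\cos\theta+\sin\theta$, and since $H'_\bin$ is decreasing on $(1/2,1)$ and $H_\bin$ is symmetric about $1/2$, each term obeys $\frac{\partial}{\partial R}H_\bin(2\epsilon_{xy})\ge H'_\bin\!\left(\tfrac12+\tfrac{R}{2}(\cos\theta+\sin\theta)\right)\tfrac{\cos\theta+\sin\theta}{2}=\frac{\partial}{\partial R}H_\bin(\lambda_0+\lambda_1)$, using $\lambda_0+\lambda_1=\tfrac12+\tfrac{R}{2}(\cos\theta+\sin\theta)$. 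Averaging over $x,y$ with $\sum_{xy}p_{XY}(x,y)=1$ collapses the sum to this same lower bound, so $\frac{\partial}{\partial R}H(AB|XYE)\ge\frac{\partial}{\partial R}\bigl(H_\bin(\lambda_0+\lambda_1)-H(\{\lambda_0,\lambda_1,\lambda_2,\lambda_3\})\bigr)>0$ by Lemma~\ref{Monotonicity claim 2}.

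The main obstacle is purely the per-term bound on $\frac{\partial}{\partial R}H_\bin(2\epsilon_{xy})$, and within it the $\epsilon_{11}$ term, whose sign convention differs from the other three (typically $2\epsilon_{11}<1/2$ near the optimal CHSH strategy). This is handled by the reflection $H_\bin(2\epsilon_{11})=H_\bin(1-2\epsilon_{11})$, which restores the same $+$-sign form as $\epsilon_{00},\epsilon_{01},\epsilon_{10}$ so that the common bound applies uniformly. Once every term is reduced to $\frac{\partial}{\partial R}H_\bin(\lambda_0+\lambda_1)$, the remainder is identical to the two preceding proofs and appeals only to Lemma~\ref{Monotonicity claim 2}; no new analytic input is needed beyond monotonicity of $H'_\bin$ and the symmetry of $H_\bin$.
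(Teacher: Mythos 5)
Your proposal is correct and follows essentially the same route as the paper's own proof, which simply observes that the bound~\eqref{eq:eps00} established for $\epsilon_{00}$ holds for every $\epsilon_{xy}$ and then reuses the $G_{A|XYE}$ argument verbatim. Your explicit treatment of the $\epsilon_{11}$ term via the symmetry $H_\bin(2\epsilon_{11})=H_\bin(1-2\epsilon_{11})$ is a detail the paper leaves implicit, and it is handled correctly.
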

\begin{proof}
The proof for this again follows those above, except in this case (see Appendix~\ref{app:ABgXYE})
\begin{eqnarray}
H(AB|XYE) &=& 1 + \sum_{xy} p_{XY}(x,y) H_{\bin}(2\epsilon_{xy}) - H(\{\lambda_0,\lambda_1,\lambda_2,\lambda_3\})\,.
\end{eqnarray}
The bound that holds for $\epsilon_{00}$ in~\eqref{eq:eps00} holds for all $\epsilon_{xy}$, and hence the rest of the argument goes through as before.
\end{proof}

\section{Entropy Accumulation Theorem}\label{app:EAT}
In this section we state the Entropy Accumulation Theorem (EAT). The theorem is phrased in terms of a set of channels $\{\cM_i\}_i$ called EAT channels, where $\cM_i:\cS(R_{i-1})\to\cS(C_iD_iU_iR_i)$.
\begin{definition}[EAT Channels]
Let $\{R_i\}_{i=0}^n$  be arbitrary quantum systems and $\{C_i\}_{i=1}^n$, $\{D_i\}_{i=1}^n$, and $\{U_i\}_{i=1}^n$ be finite dimensional classical systems. Suppose that $U_i$ is a deterministic function of $C_i$, $D_i$ and that $\{\cM_i\}_{i=1}^n$, $\cM_i:\cS(R_{i-1})\to\cS(C_iD_iU_iR_i)$ are a set of quantum channels.  These channels form a set of EAT channels if for all $\rho_{R_0E}\in\cS(R_0E)$ the state $\rho_{{\bf CDU}R_nE}=(\cM_n\circ\ldots\cM_1)(\rho_{R_0E})$ after applying the channels satisfies $I(C_1^{i-1}:D_i|D_1^{i-1}E)=0$, where $I$ is the mutual information, and $C_1^{i-1}$ is shorthand for $C_1C_2\ldots C_{i-1}$.
\end{definition} 
In the context of DI algorithms, the quantum register $R_0$ can be taken to represent the initial state of the devices, which may be entangled with the register $E$. This state updates to $R_1, R_2, \ldots$ as the protocol proceeds. At step $i$ the devices (together with the random number generators $\M{R}_{A}$ and $\M{R}_B$) perform a map $\M{M}_i$ to give the output classical random variables and the random choices generated by the random number generators. The mutual information condition encodes the property that the random number generators are independent of $E$ and the previously generated data. The register $U_i$ records the score in the Bell game for that round. Along with the classical inputs, the channel $\M{M}_i$ also outputs the updated state of the devices represented by the register $R_i$ which may be stored by the device and acted on by the next channel\footnote{In practice, the devices may be sent new states in each round, but there is no loss in generality in assuming that the devices pre-share all the entangled quantum resources they need for the protocol.}. Each EAT channel, therefore, for the DI protocols is a set of maps of the form
\begin{eqnarray}\label{EAT channel}
\cM_i(\rho) = \sum_{c,d} \proj{c}\ot\proj{d}\ot\proj{u(c,d)}\ot\cM_i^{c,d}(\rho)\,,
\end{eqnarray}
where $u(c,d)$ records the score in the Bell game, and each $\M{M}_i^{c,d}$ is a subnormalized quantum channel from $\cS(R_{i-1})$ to $\cS(R_i)$. The joint distribution of the classical variables $C_i$ and $D_i$ is
\begin{equation}
    p_{C_iD_i}(c,d):=\tr(\M{M}_i^{c,d}(\rho))\,.
\end{equation}
\begin{definition}[Frequency distribution function]
Let ${\bf U}=U_1U_2\ldots U_n$ be a string of variables. The associated frequency distribution is 
\begin{equation}
    \Freq_{{\bf U}}(u) := \frac{|\{i\in\{1,\ldots,n\}:U_i=u\}|}{n}\,.
\end{equation}
\end{definition}
\begin{definition}
Given a set of channels $\mathfrak{G}$ whose outputs have a register $U$, the set of achievable score distributions is
\begin{equation}
    \cQ_{\mathfrak{G}}:=\{p_U:\cM(\rho)_U=\sum_up_U(u)\proj{u}\text{ for some }\cM\in\mathfrak{G}\}.
\end{equation}
We also use 
\begin{equation}
    \cQ_{\mathfrak{G}}^\gamma:=\{p_U:p_U(\bot)=(1-\gamma)\text{ and }p_U(u)=\gamma\tilde{p}_U(u)\text{ with }\tilde{p}_U\in\cQ_{\mathfrak{G}}\}.
\end{equation}
\end{definition}

\begin{definition}[Rate function]
Let $\mathfrak{G}$ be a set of EAT channels. A \emph{rate function} $\rate:\cQ_\mathfrak{G}\to\mathbb{R}$ is any function that satisfies
\begin{eqnarray}
\rate(q) \leq \inf_{(\M{M},\rho_{RE})\in\Gamma_{\mathfrak{G}}(q)} H(C|DE)_{(\M{M}\ot\cI_E)(\rho_{RE})}\,,
\end{eqnarray}
where
\begin{equation}
\Gamma_{\mathfrak{G}}(q):=\{(\cM,\rho_{RE}):(\cM\ot\cI_E)(\rho_{RE})_U=\sum_uq(u)\proj{u}\text{ for some }\cM\in\mathfrak{G}\}
\end{equation}
is the set of states and channels that can achieve distribution $q$.
\end{definition}
\begin{definition}[Min-tradeoff function]
A function $f:\cQ_\mathfrak{G}\to\mathbb{R}$ is a \emph{min-tradeoff function} if $f$ is an affine rate function.  Since min-tradeoff functions are affine, we can naturally extend their domain to all probability distributions on $U$, denoted $\cP$.
\end{definition}
The entropy accumulation theorem then can be stated as follows (this is Theorem~2 of~\cite{LLR&}, which is a generalization of the results of~\cite{DF}).

\begin{theorem}\label{thm:EAT}
 Let $f$ be a min-tradeoff function for a set of EAT channels $\mathfrak{G}=\{\cM_i\}_i$ and $\rho_{{\bf CDU}E}$ be the output after applying these channels to initial state $\rho_{RE}$. In addition let $\epsilon_h\in(0,1)$, $\alpha\in(1,2)$ and $r\in\mathbb{R}$ and $\Omega$ be an event on ${\bf U}$ that implies $f(\Freq_{\bf U})\geq r$. We have
 \begin{align}
     H_{\min}^{\epsilon_h}({\bf C}|{\bf D}E)_{\rho_{{\bf CD}E|\Omega}}>&nr-\frac{\alpha}{\alpha-1}\log\left(\frac{1}{p_\Omega(1-\sqrt{1-\epsilon_h^2})}\right)+\nonumber\\
     &n\inf_{p\in\cQ_{\mathfrak{G}}}\left(\Delta(f,p)-(\alpha-1)V(f,p)-(\alpha-1)^2K_{\alpha}(f)\right)\,,
 \end{align}
where $\Delta(f,p)=\rate(p)-f(p)$, and
\begin{align*}
    V(f,p)&=\frac{\ln2}{2}\left(\log(1+2d_C)+\sqrt{2+\Var_p(f)}\right)^2\\
    K_\alpha(f)&=\frac{1}{6(2-\alpha)^3\ln 2}2^{(\alpha-1)(\log(d_C)+\Max(f)-\Min_{\cQ_{\mathfrak{G}}}(f))}\ln^3\left(2^{\log(d_C)+\Max(f)-\Min_{\cQ_{\mathfrak{G}}}(f)}+\e^2\right)\,,
\end{align*}
and we have also used
\begin{align*}
\Max(f)&=\max_{p\in\cP} f(p)\\
\Min_{\cQ_{\mathfrak{G}}}(f)&=\inf_{p\in\cQ_{\mathfrak{G}}}f(p)\\
\Var_p(f)&=\sum_up(u)\left(f(\delta_u)-\mathbb{E}(f(\delta_u))\right)^2\,,
\end{align*}
and $\delta_u$ is the deterministic distribution with outcome $u$.
\end{theorem}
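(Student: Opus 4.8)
This statement is not proved here from first principles: it is quoted as Theorem~2 of~\cite{LLR&}, itself a sharpening of the entropy accumulation theorem of~\cite{DFR} in the spirit of~\cite{DF}. The plan, therefore, is to recall the shape of the argument that underlies it rather than to reconstruct every estimate. Throughout, the natural object to work with is the sandwiched R\'enyi conditional entropy $H_\alpha(\cdot|\cdot)$ at a parameter $\alpha\in(1,2)$, which tends to the von Neumann entropy as $\alpha\to1$ and whose derivative there produces the variance term.

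First I would pass from the smoothed min-entropy conditioned on $\Omega$ to an unconditioned R\'enyi entropy. Combining the standard bound relating $H_{\min}^{\epsilon_h}$ to $H_\alpha$ for $\alpha>1$ with the cost of conditioning on an event of probability $p_\Omega$ yields
\[
H_{\min}^{\epsilon_h}({\bf C}|{\bf D}E)_{\rho_{{\bf CD}E|\Omega}}\geq H_\alpha({\bf C}|{\bf D}E)_{\rho}-\frac{\alpha}{\alpha-1}\log\frac{1}{p_\Omega(1-\sqrt{1-\epsilon_h^2})}\,,
\]
which already isolates the penalty term in the statement; the prefactor $\alpha/(\alpha-1)$ in place of $1/(\alpha-1)$ is precisely the extra price paid for the conditioning.

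The crux is a chain rule for the R\'enyi entropy. Because the channels $\{\cM_i\}$ satisfy the EAT Markov condition $I(C_1^{i-1}:D_i|D_1^{i-1}E)=0$, the global quantity $H_\alpha({\bf C}|{\bf D}E)_\rho$ can be lower bounded by a sum of single-round contributions, each an infimum of $H_\alpha(C_i|D_iR)$ over input states carrying a reference system $R$. Proving this chain rule---that the Markov structure makes the R\'enyi entropies additive to the required order---is the hard part and the technical core of~\cite{DFR}; it rests on delicate manipulations of the sandwiched R\'enyi divergence (data processing together with its duality). I would invoke it as a black box.

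It then remains to control each single-round term and to sum. For a fixed round the single-round R\'enyi entropy is tied to the min-tradeoff function $f$: since $f$ is an affine rate function, $\rate(p)\geq f(p)$ governs the single-round von Neumann entropy, and a perturbative expansion of $H_\alpha$ about $\alpha=1$ converts this into a per-round bound of the form $f(\delta_u)-(\alpha-1)V(f,p)-(\alpha-1)^2K_\alpha(f)$, with the leading correction the variance and the remainder absorbed uniformly into $K_\alpha$. Summing over $i$ and using that $f$ is affine gives $\sum_i f(\delta_{u_i})=nf(\Freq_{\bf U})\geq nr$ on $\Omega$, while the corrections assemble into $n\inf_p(\Delta(f,p)-(\alpha-1)V(f,p)-(\alpha-1)^2K_\alpha(f))$; a final choice of $\alpha\in(1,2)$ produces the displayed inequality. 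Beyond the chain rule, the remaining subtlety is obtaining the sharp constants in $V$ and $K_\alpha$, which is exactly the second-order refinement carried out in~\cite{DF,LLR&}.
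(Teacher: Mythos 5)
Your proposal matches the paper exactly in approach: the paper does not prove this theorem either, but imports it verbatim as Theorem~2 of~\cite{LLR&} (a second-order refinement in the spirit of~\cite{DF} of the original EAT~\cite{DFR}). Your sketch of the underlying argument---reduction from the conditioned smoothed min-entropy to a sandwiched R\'enyi entropy at the cost of the $\frac{\alpha}{\alpha-1}\log\frac{1}{p_\Omega(1-\sqrt{1-\epsilon_h^2})}$ term, the chain rule enabled by the Markov condition $I(C_1^{i-1}:D_i|D_1^{i-1}E)=0$, and the expansion about $\alpha=1$ yielding the variance and $K_\alpha$ corrections---is an accurate account of how the cited result is established, so treating the chain rule as a black box here is entirely consistent with what the paper does.
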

To use this theorem we have to assign the variables $C_i$ and $D_i$ to the parameters in the protocol.

\subsection{Protocol with recycled input randomness (Protocol~\ref{prot:nonspotcheck})}
For this protocol we want to extract randomness from the inputs and outputs. We hence set $C_i=A_iB_iX_iY_i$ and take $D_i$ to be trivial.  When running a protocol, we do not generally know the set of EAT channels being used (these are set by the adversary), but instead only know that they have the no-signalling form, i.e., we have
$$\cM(\rho_{A'B'})=\sum_{abxy}\proj{a}\ot\proj{b}\ot\proj{x}\ot\proj{y}\ot\proj{u(a,b,x,y)}\ot\cM^{abxy}(\rho_{A'B'})\,,$$
where $\cM^{abxy}(\rho_{A'B'})=p_{XY}(x,y)(\cE^{x,a}\ot\cF^{y,b})(\rho_{A'B'})$ and $\{\cE^{x,a}\}_a$ and $\{\cF^{y,b}\}_b$ are instruments on $A'$ and $B'$ respectively (cf.~\eqref{EAT channel}). Henceforth, the set $\mathfrak{G}$ will refer to all channels of this type.

In the CHSH protocol without spot-checking the rate function should be a lower bound on $H(ABXY|E)=2+H(AB|XYE)$ and we can form our rate function via $\rate(\{1-s,s\})=2+F_{AB|XYE}(s)$ or $\rate(\{1-s,s\})=2+F_{A|XYE}(s)$, the former being preferred as it is larger. A min-tradeoff function can then be obtained by taking the tangent at some point. Since $F_{AB|XYE}(s)$ is linear for $3/4\leq s\leq\score_{AB|XYE}^*\approx0.847$, for experimentally relevant scores we can form the min-tradeoff function using the extension of this line to the domain $[0,1]$, i.e., we can take $f(\{1-s,s\})=2+G'_{AB|XYE}(\score^*)(s-3/4)$ in Theorem~\ref{thm:EAT} when applying to Protocol~\ref{prot:nonspotcheck}, and in this case $d_C=d_Ad_Bd_Xd_Y=16$ and we get a bound on $H_{\min}^{\epsilon_h}({\bf ABXY}|E)$. The theorem holds for all $\alpha\in(1,2)$ and we can optimize over $\alpha$ to increase the bound.

\subsection{Spot-checking CHSH protocol (Protocol~\ref{prot:spotcheck})}
To use the EAT in the spot-checking CHSH protocol (Protocol~\ref{prot:spotcheck}) we set $C_i=A_iB_i$ and $D_i=X_iY_i$ in Theorem~\ref{thm:EAT}. The channels again have the no-signalling form mentioned above, and we can use either $F_{AB|00E}$ or $F_{A|00E}$ as the basis of our rate function.  Since the two-sided version is larger, it is better to work with $F_{AB|00E}(s)$, and the related min-tradeoff function based on taking its tangent at some point.  Modification is required to account for the spot-checking structure.  If we let $g_t(\{1-s,s\})$ be the tangent of $F_{AB|00E}(s)$ taken at $t$ then we can form the spot-checking min-tradeoff functions
\begin{align*}
f_t(\delta_u)=\begin{cases}\frac{1}{\gamma}g_t(\delta_u)+(1-\frac{1}{\gamma})g_t(\delta_1)&u\in\{0,1\}\\
    g_t(\delta_1)&u=\bot\end{cases}\,.
\end{align*}
where $t$ can be chosen (see e.g.~\cite[Section~5]{DF} for the argument behind this). Using this construction the following theorem can be derived (this is an adaptation of Theorem~3 in~\cite{LLR&}).
\begin{theorem}[Entropy Accumulation Theorem for spot-checking CHSH protocol]\label{thm:EATspot} Let $\rho_{{\bf ABXYU}E}$ be a CQ state obtained using the spot-checking CHSH protocol (Protocol~\ref{prot:spotcheck}). Let $\Omega$ be the event $\left|\{i:U_i=0\}\right|\leq n\gamma(1-\score_{\exp}+\delta)$ with $p_\Omega$ being the probability of this event in $\rho_{{\bf ABXYU}E}$, and let $\rho_{{\bf ABXYU}E|\Omega}$ be the state conditioned on $\Omega$. Let $\epsilon_h\in(0,1)$ and $\alpha\in(1,2)$. Then for any $r$ such that $f_t(\Freq_{\bf U})\geq r$ for all events in $\Omega$ we have
\begin{eqnarray}
H^{\epsilon_h}_{\min}({\bf AB}|{\bf XY}E)_{\rho_{{\bf ABXY}E|\Omega}} > && n r - \frac{\alpha}{\alpha -1} \log\left(\frac{1}{p_\Omega (1- \sqrt{1 - \epsilon_h^2} )}\right) \\
&&\nonumber+ n\inf_{p \in\cQ^\gamma_{\mathfrak{G}}}(\Delta (f_t , p) - (\alpha - 1) V(f_t,p) - (\alpha -1)^2 K_{\alpha}(f_t))\,,
\end{eqnarray}
where 
\begin{eqnarray}
\Delta(f_t,p)&:=&F_{AB|XYE}(p(1)/\gamma)-f_{t}(p) \\ 
V(f_t, p) &=& \frac{\ln 2}{2} \left(\log(9)+\sqrt{\Var_p(f_t)+2}\right)^2 \\
K_{\alpha}(f_t)&=& \frac{1}{6 \log(2 - \alpha)^3 \ln2 }2^{(\alpha -1)(2 + \Max(f_t) -\Min_{\M{Q}^\gamma_{\mathfrak{G}}}(f_t))} \ln^3(2^{2 + \Max(f_t) - \Min_{\M{Q}^\gamma_{\mathfrak{G}}}(f_t)}+\e^2).
\end{eqnarray}
\end{theorem}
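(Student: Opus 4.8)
The plan is to derive Theorem~\ref{thm:EATspot} as a specialization of the general entropy accumulation theorem (Theorem~\ref{thm:EAT}) to the spot-checking CHSH protocol, following the adaptation of~\cite{LLR&}. First I would cast a single round of Protocol~\ref{prot:spotcheck} as an EAT channel: set $C_i=A_iB_i$, $D_i=X_iY_i$, let $R_i$ carry the internal memory of the two devices, and let $U_i$ record the round's test outcome (so $U_i\in\{0,1\}$ on test rounds and $U_i=\bot$ on generation rounds). Each map has the no-signalling form~\eqref{EAT channel}, with $\cM_i^{c,d}$ factorizing as $p_{XY}(x,y)(\cE^{x,a}\ot\cF^{y,b})$ because the devices act locally on their shares of $R_{i-1}$. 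I then check the Markov condition $I(C_1^{i-1}:D_i|D_1^{i-1}E)=0$: the inputs $X_i,Y_i$ and the test flag $T_i$ come from the generators $R_A$, $R_B$ and $R_T$ using fresh randomness independent of $E$ and of all earlier data, so the current inputs are conditionally independent of the past outputs. Hence $\frak{G}$ is a valid family of EAT channels and Theorem~\ref{thm:EAT} is applicable.

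The central step is constructing a valid min-tradeoff function. I would first argue that the single-round two-sided rate $\rate(p)=F_{AB|XYE}(p(1)/\gamma)$ lower bounds $\inf_{\Gamma_{\frak{G}}(p)}H(AB|XYE)$: any channel achieving the $U$-distribution $p$ wins on test rounds with conditional probability $s=p(1)/\gamma$, and by the single-round results established earlier (together with the reduction of $F$ from $G$ by the convex lower bound) the two-sided entropy conditioned on the inputs is at least this value. Since a min-tradeoff function must be affine while $\rate$ is convex, I take the tangent $g_t$ to the convex curve $F_{AB|00E}$ at a point $t$ and define the spot-checking function
\begin{align*}
f_t(\delta_u)=\begin{cases}\frac{1}{\gamma}g_t(\delta_u)+\left(1-\frac{1}{\gamma}\right)g_t(\delta_1)&u\in\{0,1\}\\ g_t(\delta_1)&u=\bot\end{cases}\,.
\end{align*}
A short computation shows that on any admissible $p\in\cQ^\gamma_{\frak{G}}$ this collapses to $f_t(p)=g_t(\{1-s,s\})\le\rate(p)$, so $f_t$ is an affine lower bound on the rate, i.e.\ a min-tradeoff function, and being affine it extends to all of $\cP$.

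Finally I would invoke Theorem~\ref{thm:EAT} with this $f_t$ and the non-abort event $\Omega=\{\,|\{i:U_i=0\}|\le n\gamma(1-\score_{\exp}+\delta)\,\}$. I translate the abort threshold into a value $r$ with $f_t(\Freq_{\bf U})\ge r$ throughout $\Omega$, using that $f_t$ is affine and monotone in the loss frequency, and then specialize the constants of the general theorem. Here $C_i=A_iB_i$ gives $d_C=4$, so $\log d_C=2$; this yields the exponents $2+\Max(f_t)-\Min_{\cQ^\gamma_{\frak{G}}}(f_t)$ in $K_\alpha(f_t)$ and the term $\log(1+2d_C^2)=\log 33$ in $V(f_t,p)$, while $\Delta(f_t,p)=\rate(p)-f_t(p)=F_{AB|XYE}(p(1)/\gamma)-f_t(p)$. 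Assembling these reproduces the displayed lower bound on $H^{\epsilon_h}_{\min}({\bf AB}|{\bf XY}E)_{\rho_{{\bf ABXY}E|\Omega}}$.

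The main obstacle is the validation of the min-tradeoff function rather than the application of the EAT machinery, which is assumed. One must confirm that the tangent built from the convex curve $F_{AB|00E}$ genuinely lower bounds the true single-round conditional entropy $H(AB|XYE)$ for the spot-checking input distribution over the whole achievable set $\cQ^\gamma_{\frak{G}}$, tracking the $1/\gamma$ rescaling and the slightly different curves ($F_{AB|00E}$ for the tangent versus $F_{AB|XYE}$ entering $\rate$) so that both affinity and the lower-bound property survive the extension from $\cQ^\gamma_{\frak{G}}$ to all of $\cP$. Once this is in place, the remainder is bookkeeping of the constants.
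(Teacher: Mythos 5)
Your proposal takes essentially the same route as the paper: the paper likewise sets $C_i=A_iB_i$, $D_i=X_iY_i$, notes the no-signalling form of the channels, builds the spot-checking min-tradeoff function $f_t$ from the tangent $g_t$ to the convex curve $F_{AB|00E}$ exactly as you write (citing Section~5 of~\cite{DF} for that construction), and then obtains the theorem as an adaptation of Theorem~3 of~\cite{LLR&}, with $d_C=4$ producing the constants $\log 33$ and the exponent $2+\Max(f_t)-\Min_{\cQ^\gamma_{\frak{G}}}(f_t)$. The paper in fact supplies less detail than you do --- it treats the derivation as a citation-level adaptation --- so your more explicit account, including the check that $f_t(p)=g_t(\{1-s,s\})$ on $\cQ^\gamma_{\frak{G}}$ lower bounds the rate, is consistent with and fills in the paper's argument.
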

To use this theorem we can take $r=(F_{AB|00E}(t)+(\score_{\exp}-\delta-t)F'_{AB|00E}(t))$ (cf.\ the discussion in~\cite{LLR&}), and since the theorem holds for any $t$ and $\alpha$ these can be optimized over.

\subsection{Protocol with biased local random numbers (Protocol~\ref{prot:biased})}
To derive the randomness rates, we use Theorem~\ref{thm:EAT} with $C_i=A_iB_i$ and $D_i=X_iY_i$, as in the previous subsection. What remains is to derive the min-tradeoff function and error terms. In this section, we compute these quantities and derive the expression for the completeness error in terms of the biasing parameters $\zeta_A,\zeta_B$ and statistical error $\delta$. 

\subsubsection{Deriving the min-tradeoff function} 
We seek a min-tradeoff function suitable for using with Protocol~\ref{prot:biased}.  To construct it we write the EAT channel in a slightly different way that is explicit in the input distribution $p_{XY}$:
\begin{eqnarray}\label{EAT channel2}
\cM_{p_{XY}}(\rho)=\sum_{abxy}p_{XY}(x,y)\proj{a}_A\ot\proj{b}_B\ot\proj{x}_X\ot\proj{y}_Y\ot\proj{(x,y,w)}_U\ot\cM^{x,y}_{a,b}(\rho)\,,
\end{eqnarray}
where $\cM^{x,y}_{a,b}$ are subnormalized channels. We can also consider the analogous channel where the $U$ register only stores $w$ (we use $\tilde{\cM}$ to indicate this case).  Next consider the entropy $H(AB|X=0,Y=0E)$, this entropy is calculated for the normalization of the state
$$(\proj{0}_X\ot\proj{0}_Y\ot\id_{ABUE})(\cM_{p_{XY}}\ot\cI_E)(\rho_{RE})(\proj{0}_X\ot\proj{0}_Y\ot\id_{ABUE})\,.$$
For fixed $\{\cM^{x,y}_{a,b}\}$, this is independent of $p_{XY}$ (it is defined provided $p_{XY}(0,0)\neq0$).

We next note that for $q$ as the distribution on the score ($U$) register
\begin{eqnarray*}
(\cM_{p_{XY}}\ot\cI_E)(\rho_{RE})_U&=&\sum_{abxyw}p_{XY}\tr(\cM^{x,y}_{a,b}(\rho_R))\proj{(x,y,w)}=\sum_{xyw}q((x,y,w))\proj{(x,y,w)}\\
(\cM_{1/4}\ot\cI_E)(\rho_{RE})_U&=&\sum_{abxyw}\frac{1}{4}\tr(\cM^{x,y}_{a,b}(\rho_R))\proj{(x,y,w)}=\sum_{xyw}\frac{q((x,y,w))}{4p_{XY}}\proj{(x,y,w)}\,,
\end{eqnarray*}
and hence
$$(\tilde{\cM}_{1/4}\ot\cI_E)(\rho_{RE})_U=\sum_{xyw}\frac{q((x,y,w))}{4p_{XY}}\proj{w}\,.$$

It follows that
\begin{align*}
  &\biggl\{H(AB|X=0,Y=0,E)_{(\tilde{\cM}_{1/4}\ot\cI_E)(\rho_{RE})}:(\tilde{\cM}_{1/4}\ot\cI_E)(\rho_{RE})_U=(1-s)\proj{0}+s\proj{1},\  s=\sum_{xy}\frac{q((x,y,1))}{4p_{XY}}\biggr\}=\\
  &\biggl\{H(AB|X=0,Y=0,E)_{(\cM_{p_{XY}\ot\cI_E)}(\rho_{RE})}:(\cM_{p_{XY}}\ot\cI_E)(\rho_{RE})_U=\sum_{xyw}q((x,y,w))\proj{(x,y,w)}\biggr\}\,.
\end{align*}

Let $\mathfrak{G}_{\zeta^A,\zeta^B}$ be the set of channels for which $X$ and $Y$ are independent, $X$ is 1 with probability $\zeta^A$ and $Y$ is 1 with probability $\zeta^B$.

\begin{lemma}
  The function $F_{AB|00E}$ as defined in the main text can be used to define a rate function for $\mathfrak{G}_{\zeta^A,\zeta^B}$ by taking
  $\rate_{\zeta^A,\zeta^B}(q)=F_{AB|00E}(\score(q))$ for $q\in\cQ_{\mathfrak{G}_{\zeta^A,\zeta^B}}$
  where
  \begin{equation}\label{eq:sc}
    \score(q)=\frac{1}{4}\sum_{xy}\frac{1}{p_X(x)p_Y(y)} q((x,y,1))\,.
    \end{equation}
\end{lemma}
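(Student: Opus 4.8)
The plan is to verify the defining inequality of a rate function directly. With $C=AB$ and $D=XY$ the relevant entropy is $H(AB|XYE)$, so it suffices to show that every $(\cM,\rho_{RE})\in\Gamma_{\frak{G}_{\zeta^A,\zeta^B}}(q)$ produces a state $\tau=(\cM\ot\cI_E)(\rho_{RE})$ with $H(AB|XYE)_\tau\geq F_{AB|00E}(\score(q))$; taking the infimum over $\Gamma_{\frak{G}_{\zeta^A,\zeta^B}}(q)$ then gives $\rate_{\zeta^A,\zeta^B}(q)\leq\inf H(AB|XYE)$, which is exactly the rate function condition.

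First I would check that $\score(q)$ really is the CHSH score of the underlying single-round strategy. For a channel in $\frak{G}_{\zeta^A,\zeta^B}$ the inputs are independent with $p_X(1)=\zeta^A$, $p_Y(1)=\zeta^B$, and $q((x,y,1))=p_X(x)p_Y(y)\,p_{\mathrm{win}|xy}$, where $p_{\mathrm{win}|xy}$ is the winning probability on inputs $(x,y)$. Hence $q((x,y,1))/(p_X(x)p_Y(y))=p_{\mathrm{win}|xy}$ and $\score(q)=\tfrac14\sum_{xy}p_{\mathrm{win}|xy}$, which is precisely the CHSH score defined with uniform inputs, i.e.\ the argument that $F_{AB|00E}$ expects. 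I would then decompose $H(AB|XYE)_\tau=\sum_{xy}p_{XY}(x,y)H(AB|X=x,Y=y,E)_\tau$ and bound each term by $F_{AB|00E}(\score(q))$, so that $H(AB|XYE)_\tau\geq F_{AB|00E}(\score(q))\sum_{xy}p_{XY}(x,y)=F_{AB|00E}(\score(q))$.

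I expect the one step that needs real justification, and hence the main obstacle, is the claim that $H(AB|X=x,Y=y,E)\geq F_{AB|00E}(\score)$ for every pair $(x,y)$ and not only for $(0,0)$. Since the CHSH game is not symmetric under a bare permutation of inputs, this requires the score-preserving relabelings of the game: the map sending $p_{AB|xy}$ to $p_{A\oplus X,B\mid X,Y\oplus1}$ (combining $Y\mapsto Y\oplus1$ with the output relabeling $A\mapsto A\oplus X$) leaves the correlator combination $E_{00}+E_{01}+E_{10}-E_{11}$, and hence the score, unchanged while exchanging the roles of $Y=0$ and $Y=1$; the analogous map with $X\mapsto X\oplus1$ and $B\mapsto B\oplus Y$ handles $X$, and composing the two covers all four pairs. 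Because these transformations act on the outputs by deterministic bijections they preserve each $H(AB|X=x,Y=y,E)$, so the minimum of $H(AB|X=x,Y=y,E)$ over strategies of score $\score$ equals $F_{AB|00E}(\score)$ for every $(x,y)$ (this is the content of the footnote that there is nothing special about $X=0,Y=0$).

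Finally I would note that $F_{AB|00E}(\score)$ is defined by an infimum over \emph{all} single-round strategies of CHSH score $\score$, of which $\frak{G}_{\zeta^A,\zeta^B}$ furnishes only a subset, and that the conditional entropies $H(AB|X=x,Y=y,E)$ and the score are both independent of the input distribution $p_{XY}$. Hence the per-input bound applies verbatim to each element of $\Gamma_{\frak{G}_{\zeta^A,\zeta^B}}(q)$, and assembling the pieces establishes the rate function inequality.
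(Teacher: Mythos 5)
Your proof is correct and follows essentially the same route as the paper's: identify $\score(q)$ with the uniform-input CHSH score of the underlying strategy, decompose $H(AB|XYE)$ as $\sum_{xy}p_{XY}(x,y)H(AB|X=x,Y=y,E)$, and lower bound each conditional term by $F_{AB|00E}(\score(q))$ using the input-independence of these quantities. The only difference is presentational: you spell out the score-preserving relabelings (e.g.\ $Y\mapsto Y\oplus1$ with $A\mapsto A\oplus X$) that justify replacing $(0,0)$ by an arbitrary input pair, a step the paper delegates to Eq.~\eqref{Bakchodi} and the footnote stating that nothing is special about $X=0,Y=0$; your explicit check that the correlator combination is preserved is a welcome addition.
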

\begin{proof}
  We have
  \begin{align*}
F_{AB|00E}&(\score(q)):=\!\!\!\inf_{(\tilde{\cM},\rho_{RE})}\!\left\{H(AB|X=0,Y=0,E)_{(\tilde{\cM}_{1/4}\ot\cI_E)(\rho_{RE})}\!:\!(\tilde{\cM}_{1/4}\ot\cI_E)(\rho_{RE})_U=(1-\score(q))\proj{0}+\score(q)\proj{1}\right\}\\
              &=\!\!\!\inf_{(\cM,\rho_{RE})}\!\biggl\{\!H(AB|X=0,Y=0,E)_{(\cM_{p_Xp_Y}\ot\cI_E)(\rho_{RE})}\!:\!(\cM_{p_Xp_Y}\ot\cI_E)(\rho_{RE})_U\!=\!\sum_{xyw}q((x,y,w))\proj{(x,y,w)}\!\biggr\}\\
    &\leq\inf_{(\cM,\rho_{RE})}\left\{H(AB|XYE)_{(\cM_{p_Xp_Y}\ot\cI_E)(\rho_{RE})}:(\cM_{p_Xp_Y}\ot\cI_E)(\rho_{RE})_U=\sum_{xyw}q((x,y,w))\proj{(x,y,w)}\right\},
  \end{align*}
and hence $F_{AB|00E}(\score(q))$ is a rate function for $q\in\cQ_{\mathfrak{G}_{\zeta^A,\zeta^B}}$.
\end{proof}
We can hence form min-tradeoff functions suitable for using with Protocol~\ref{prot:biased} by taking affine lower bounds to $F_{AB|00E}$.  Taking the tangent to $F_{AB|00E}$ at $t$ we have min-tradeoff function
$$f_t(q):=F_{AB|00E}(t)+F'_{AB|00E}(t)\left(\frac{1}{4}\sum_{x,y}\frac{1}{p_X(x)p_Y(y)} q((x,y,1))-t\right),$$
or, in other words, considering deterministic distributions on $U=(x,y,w)$
$$f_t(\delta_{(x,y,w)})=\begin{cases}
    \frac{F'_{AB|00E}(t)}{4p_X(x)p_Y(y)}+F_{AB|00E}(t)-tF'_{AB|00E}(t) & \text{if}\ w=1\\
    F_{AB|00E}(t)-tF'_{AB|00E}(t)& \text{if}\ w=0
    \end{cases}$$

    We have
    \begin{align*}
      \Max(f_t)&=\frac{F'_{AB|00E}(t)}{4\zeta^A\zeta^B}+F_{AB|00E}(t)-tF'_{AB|00E}(t)\\
      \Min_{\cQ_{\mathfrak{G}_{\zeta^A,\zeta^B}}}(f_t)&=F_{AB|00E}(t)-F'_{AB|00E}(t)\left(t-\frac{1}{2}\left(1-\frac{1}{\sqrt{2}}\right)\right)
      \end{align*}
    We now find a bound on $\Var_q(f_t)$ using the Bhatia-Davis bound~\cite{BhatiaDavis}.
\begin{lemma}[Bhatia-Davis bound]
Let $X$ be a real-valued random variable with $\max(X) = M$, $\min(X) = m$ and $\mathbb{E}(X)=\mu$, then
\begin{eqnarray}
\Var_X\leq(M-\mu)(\mu-m)\,.
\end{eqnarray}
\end{lemma}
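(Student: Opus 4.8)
The plan is to exploit the single fact that $X$ takes values in the interval $[m,M]$, which furnishes a pointwise quadratic inequality that integrates directly to the claimed bound. First I would observe that for every realisation of $X$ we have $M-X\geq0$ and $X-m\geq0$, so their product is nonnegative with probability one, i.e.\ $(M-X)(X-m)\geq0$ almost surely. Expanding this product and rearranging gives the pointwise bound $X^2\leq(M+m)X-Mm$.

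The second step is to take expectations of both sides. Since expectation is monotone and linear, and using $\mathbb{E}(X)=\mu$, this yields $\mathbb{E}(X^2)\leq(M+m)\mu-Mm$. The third step is to rewrite the left-hand side using the identity $\Var(X)=\mathbb{E}(X^2)-\mu^2$, producing $\Var(X)\leq(M+m)\mu-Mm-\mu^2$. Finally I would verify the elementary algebraic identity $(M+m)\mu-Mm-\mu^2=(M-\mu)(\mu-m)$ by expanding the right-hand side, which closes the argument.

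As for the main obstacle: there is none of substance. The whole proof rests on the one observation that a variable confined to $[m,M]$ satisfies the nonnegative quadratic $(M-X)(X-m)\geq0$, with the remainder being linearity of expectation and a one-line factorisation. The only point worth flagging is the equality condition: the inequality is saturated exactly when $(M-X)(X-m)=0$ holds almost surely, i.e.\ when $X$ is supported on the two endpoints $\{m,M\}$. This is precisely the regime of interest for the present application, since the extremal score distributions entering the $\Var_q(f_t)$ estimate are concentrated on the boundary values, so the Bhatia--Davis bound is tight where it is needed.
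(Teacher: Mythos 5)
Your proof is correct, and it is the standard argument for the Bhatia--Davis inequality: the pointwise nonnegativity of $(M-X)(X-m)$, linearity of expectation, and the factorisation $(M+m)\mu - Mm - \mu^2 = (M-\mu)(\mu-m)$ all check out. The paper itself offers no proof of this lemma --- it simply states it with a citation to Bhatia and Davis --- so there is nothing to compare against; your argument would serve as a perfectly adequate self-contained justification. The only caveat is your closing remark about tightness in the application: the score distributions $q$ entering the $\Var_q(f_t)$ bound need not be supported only on the two extreme values of $f_t$ (the register $U$ takes values $(x,y,w)$ over several input pairs), so the bound is generally not saturated there; this does not affect the validity of the lemma or its use, since only an upper bound on the variance is required.
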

In our case, $M=\Max(f_t)$, $m=F_{AB|00E}(t)-tF'_{AB|00E}(t)$ and $\mu=\mathbb{E}_q(f_t)=F_{AB|00E}(t)+F'_{AB|00E}(t)\left(\score(q)-t\right)$, where $\score(q)$ is defined in~\eqref{eq:sc}. Thus,
\begin{eqnarray*}
  \Var_q(f_t) &\leq&(F'_{AB|00E}(t))^2\score(q) \left(\frac{1}{4\zeta^A\zeta^B}-\score(q)\right)\\
              &\leq&\begin{cases}(F'_{AB|00E}(t))^2\left(\frac{1}{4\zeta^A\zeta^B}-1\right)&\text{if}\ \zeta^A\zeta^B<1/8\\\left(\frac{F'_{AB|00E}(t)}{8\zeta^A\zeta^B}\right)^2&\text{if}\ \zeta^A\zeta^B\geq1/8\end{cases}
\end{eqnarray*}
where we have optimized over $\score(q)\in[0,1]$ for the second inequality.

\subsubsection{Completeness error}\label{app:comp}
We can form a bound on the completeness error using Hoeffding's inequality~\cite{Hoeffding}.
\begin{lemma}[Hoeffding's inequality] 
Let $X_i$  be $n$ i.i.d.\ random variables with $a \leq X_i \leq b$, $a, b \in \mathbb{R}$. If $S  = \sum_{i} X_{i}$ and $\mu=\mathbb{E}(S)$. Then for $t > 0$
\begin{eqnarray}
\mathbb{P}(S - \mu \geq t) \leq \e^{-\frac{2t^2}{n (b - a)^2 }}\,.
\end{eqnarray}
\end{lemma}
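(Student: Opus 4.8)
The plan is to use the standard Chernoff bounding method together with Hoeffding's lemma on the moment generating function of a bounded, centred random variable. First I would introduce a free parameter $s>0$ and apply the exponential Markov inequality: since $x\mapsto\e^{sx}$ is increasing, $\mathbb{P}(S-\mu\geq t)=\mathbb{P}(\e^{s(S-\mu)}\geq\e^{st})\leq\e^{-st}\,\mathbb{E}(\e^{s(S-\mu)})$. Writing $S-\mu=\sum_i(X_i-\mathbb{E}(X_i))$ and using that the $X_i$ are independent, the expectation factorises as $\mathbb{E}(\e^{s(S-\mu)})=\prod_i\mathbb{E}(\e^{s(X_i-\mathbb{E}(X_i))})$, reducing the problem to a single-variable moment bound.

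The key ingredient is Hoeffding's lemma: for any random variable $Y$ with $\mathbb{E}(Y)=0$ and $a'\leq Y\leq b'$ one has $\mathbb{E}(\e^{sY})\leq\e^{s^2(b'-a')^2/8}$. Each centred variable $Y_i:=X_i-\mathbb{E}(X_i)$ has mean zero and lies in an interval of width $b-a$, so applying the lemma to every factor gives $\mathbb{E}(\e^{s(S-\mu)})\leq\e^{ns^2(b-a)^2/8}$, and hence $\mathbb{P}(S-\mu\geq t)\leq\e^{-st+ns^2(b-a)^2/8}$. I would then optimise over $s$: the exponent is a convex quadratic in $s$ minimised at $s^*=4t/(n(b-a)^2)$, which is positive since $t>0$; substituting yields exponent $-2t^2/(n(b-a)^2)$ and therefore the claimed bound $\mathbb{P}(S-\mu\geq t)\leq\e^{-2t^2/(n(b-a)^2)}$.

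The main obstacle is establishing Hoeffding's lemma itself. I would prove it by convexity of the exponential: for $y\in[a',b']$ we have $\e^{sy}\leq\frac{b'-y}{b'-a'}\e^{sa'}+\frac{y-a'}{b'-a'}\e^{sb'}$, so taking expectations and using $\mathbb{E}(Y)=0$ to kill the linear term gives $\mathbb{E}(\e^{sY})\leq(1-p)\e^{sa'}+p\e^{sb'}=:\e^{\phi(s)}$, where $p=-a'/(b'-a')\in[0,1]$ (note $a'\leq0\leq b'$ because $Y$ is centred and bounded). A short calculation shows $\phi(0)=\phi'(0)=0$, while $\phi''(s)$ is precisely the variance of the two-point distribution placing the exponentially tilted weights on $a'$ and $b'$, which is at most $(b'-a')^2/4$ by the Bhatia--Davis bound stated above. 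Taylor's theorem with the Lagrange remainder then gives $\phi(s)\leq s^2(b'-a')^2/8$, completing the lemma. This uniform estimate on $\phi''$ is the one genuinely delicate point; the remaining steps are routine bookkeeping. I would then apply the lemma with $a'=a-\mathbb{E}(X_i)$ and $b'=b-\mathbb{E}(X_i)$, so that $b'-a'=b-a$ uniformly in $i$, to obtain the factorised bound above.
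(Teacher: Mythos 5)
Your proof is correct: it is the standard Chernoff/Hoeffding's-lemma argument, the optimisation over $s$ is carried out correctly (the exponent $-st+ns^2(b-a)^2/8$ is indeed minimised at $s^*=4t/(n(b-a)^2)$, giving $-2t^2/(n(b-a)^2)$), and your derivation of Hoeffding's lemma via convexity, the vanishing of $\phi(0)$ and $\phi'(0)$, and the variance bound $(b'-a')^2/4$ for a two-point distribution is sound. Note, however, that the paper does not prove this statement at all --- it is quoted as a classical result with a citation to Hoeffding's original article --- so there is no in-paper argument to compare against; your write-up simply supplies the standard proof that the authors took as known.
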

\begin{theorem}
Suppose Protocol~\ref{prot:biased} is run using honest devices that behave in an i.i.d.\ fashion and that have an expected CHSH score $\score_{\exp}$. The probability that the protocol aborts is no greater than
\begin{eqnarray}
 \e^{-32n(\delta\zeta^A\zeta^B)^2}\,.
\end{eqnarray}
\end{theorem}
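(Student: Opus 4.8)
The plan is to recognize the quantity $\score$ computed in Step~\ref{st:omega} as an empirical average of i.i.d.\ bounded random variables, so that the abort event becomes a lower-tail deviation to which Hoeffding's inequality applies directly. Concretely, I would let $W_i$ be the win indicator for round $i$ (so $W_i=1$ when $A_i\oplus B_i=X_iY_i$ and $W_i=0$ otherwise) and set $Z_i=W_i/(4p_X(X_i)p_Y(Y_i))$. Since $U_i=(x,y,1)$ occurs exactly when $X_i=x$, $Y_i=y$ and $W_i=1$, rewriting~\eqref{eq:omega} gives $\score=\frac1n\sum_{i=1}^nZ_i$. Under the stated i.i.d.\ honest-device assumption the $Z_i$ are i.i.d., so no martingale generalization is required and the plain Hoeffding bound stated above suffices.

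Next I would pin down the two quantities Hoeffding requires: the range and the mean of $Z_i$. For the range, note that on any round exactly one input pair is realized, so $Z_i$ equals either $0$ or $1/(4p_X(x)p_Y(y))$; using $\zeta^A,\zeta^B<1/2$, the minimal input probability is $p_X(1)p_Y(1)=\zeta^A\zeta^B$, giving $Z_i\in[0,\tfrac{1}{4\zeta^A\zeta^B}]$ and hence $b-a=\tfrac{1}{4\zeta^A\zeta^B}$. For the mean, conditioning on the inputs collapses the weights to yield $\mathbb{E}(Z_i)=\tfrac14\sum_{x,y}p_{\mathrm{win}|xy}$, which is precisely the CHSH score on uniform inputs and therefore equals $\score_{\exp}$ for the honest strategy.

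With $S=\sum_iZ_i$ and $\mu=n\score_{\exp}$, the abort condition $\score<\score_{\exp}-\delta$ is exactly the event $\mu-S>n\delta$, so I would apply Hoeffding's inequality to the variables $-Z_i$ with $t=n\delta$; substituting $b-a=1/(4\zeta^A\zeta^B)$ into $\e^{-2t^2/(n(b-a)^2)}$ then produces $\e^{-32n(\delta\zeta^A\zeta^B)^2}$, as claimed. The only delicate step is the range computation: one must invoke $\zeta^A,\zeta^B<1/2$ to conclude that the rarest input pair $(1,1)$ carries the dominant weight $1/(4\zeta^A\zeta^B)$, since any other ordering of the input probabilities would change the constant in the exponent. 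The mean computation and the reduction to i.i.d.\ variables are routine, so correctly identifying this range is the crux of obtaining the sharp constant $32$.
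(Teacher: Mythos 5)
Your proposal is correct and follows essentially the same route as the paper: the paper defines $J_i=Z_i/n$ (zero on a loss, $1/(4np_X(x)p_Y(y))$ on a win), identifies the mean of $\sum_iJ_i$ with $\score_{\exp}$ via $\sum_{xy}\score_{xy}=4\score_{\exp}$, and applies Hoeffding to $-J_i$ with range $1/(4n\zeta^A\zeta^B)$ coming from the rarest input pair $(1,1)$, exactly matching your range and mean computations up to the trivial rescaling by $n$.
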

\begin{proof}
Recall the abort condition in the protocol, which states that $\score<\score_{\exp}-\delta$ where
$$\score=\frac{1}{4}\sum_{x,y} \frac{|\{i:U_i=(x,y,1)\}|}{n p_{X}(x) p_Y(y)}\,.$$
We can write this as $\sum_iJ_i$, where 
\begin{eqnarray}
J_i(x,y,w) &=& \begin{cases} 0& \text{ if } w=0\\
1/(4np_X(x)p_Y(y)) & \text{ if } w=1
\end{cases}
\end{eqnarray}
This construction gives $\mathbb{E}\left[\sum_i J_i\right]=n\mathbb{E}[J_i]=\sum_{xy}\frac{1}{4p_X(x)p_Y(y)}\mathbb{P}(U=(x,y,1))$.
In an honest implementation of the protocol, the distribution on the register $U$ takes the form
\begin{eqnarray}
\mathbb{P}(U=(x,y,w)) = \begin{cases} p_X(x)p_Y(y) (1-\score_{xy}) & \text{ if } w=0 \\
p_X(x)p_Y(y) \score_{xy} & \text{ if } w=1
\end{cases}
\end{eqnarray}
where $\sum_{xy} \score_{xy} = 4 \score_{\exp}$, and hence $\mathbb{E}\left[\sum_i J_i\right]=\score_{\exp}$.
The abort condition can be expressed as $\score_{\exp}-\sum_i J_i>\delta$. We have
\begin{align*}
    \mathbb{P}(\score_{\exp}-\sum_i J_i>\delta)&=\mathbb{P}(\sum_i(-J_i)-(-\score_{\exp})>\delta)\\
    &\leq\e^{-32n(\delta\zeta^A\zeta^B)^2}\,,
\end{align*}
where we have used Hoeffding's inequality for the random variable $-J_i$ with $a=-1/(4n\zeta^A\zeta^B)$ and $b=0$.
\end{proof}

\subsection{Error parameters}

Both Theorems~\ref{thm:EAT} and~\ref{thm:EATspot} are stated in terms of the probability that the protocol does not abort, $p_\Omega$, which is unknown to the users of the protocol. However, if we replace $p_\Omega$ by $\epsilon_{\text{EAT}}$, then if $p_{\Omega}\geq\epsilon_{\text{EAT}}$ we have a correct bound on the entropy. On the other hand, if $p_{\Omega}<\epsilon_{\text{EAT}}$ then the protocol aborts with probability greater than $1-\epsilon_{\text{EAT}}$.  In other words, prior to running the protocol the probability that it will both not abort and that the entropy is not valid is at most $\epsilon_{\text{EAT}}$.  The soundness error of the protocol is $\epsilon_S=\max(\epsilon_{\text{EAT}},2\epsilon_h+\epsilon_{\text{EXT}})$, where $\epsilon_{\text{EXT}}$ is the extractor error (essentially the probability that the extraction fails). A summary of the aspects of extraction relevant to the present discussion and in the same notation as used here can be found in~\cite[Supplementary Information I~C]{LLR&}.

\subsection{Application to $H(AB|E)$ and $H(A|E)$}
Note that the EAT as stated in Theorem~\ref{thm:EAT} cannot be directly used in conjunction with $H(AB|E)$ and $H(A|E)$. The basic reason is that the event $\Omega$ should be an event on ${\bf U}$, which in turn should be a deterministic function of ${\bf C}$ and ${\bf D}$.  To use $H(AB|E)$ and $H(A|E)$ we need ${\bf D}$ to be empty and ${\bf C}$ to be ${\bf AB}$. This means the register ${\bf U}$ cannot depend on the inputs, ${\bf XY}$, but without a score that depends on the inputs we cannot certify non-classicality let alone randomness.

Since we do not have strong use cases for $H(AB|E)$ and $H(A|E)$ (cf.\ Section~\ref{sec:signif}), we do not consider possible extensions of the EAT in this work.

An alternative, which loses tightness, is to use an idea from~\cite[Appendix~B.3]{MDRHW}. Applying to the present case this would mean taking ${\bf D}$ to be empty and ${\bf C}$ to be either ${\bf ABV}$ or ${\bf AV}$, where $V_i$ records whether the CHSH game was won on the $i$th round, with $U_i=V_i$.  Then, proceeding with the former, because $H(ABV|E)\geq H(AB|E)$ we can base our min-tradeoff function on $H(AB|E)$, and we can use a chain rule to recover a bound on the smooth min entropy of ${\bf AB}$ given $E$ from that of ${\bf ABV}$ given $E$.  The bounds used in this approach are tightest when $V$ has low entropy, so we expect better performance with spot-checking protocols. 

\section{Discussion of composability}\label{app:compos}
Throughout this work we consider a composable security definition. These involve a distinguisher who tries to guess whether the real protocol or a hypothetical ideal protocol is being run. This distinguisher is allowed access to all the systems an eavesdropper has access to and is also assumed to learn whether or not the protocol was successful. The idea is that no matter what strategy the distinguisher uses, before the protocol is run the probability is at most $1/2+\epsilon_S$ that they can correctly guess whether the real protocol or the ideal is being run.

The main purpose of this appendix is to make a few remarks on composability for protocols that recycle the input randomness. In general, input randomness (the strings ${\bf X}$ and ${\bf Y}$) is not directly reusable without processing~\cite{CK2}. For instance, the devices could be set up such that the protocol aborts unless $X_1=0$ and so if the protocol passes it is known that $X_1=0$.  If ${\bf X}$ directly forms part of the output, then with probability $1/2$ one bit of the final output is known, which contradicts the security statement which implies that the {\it a priori} probability (i.e., the probability before the protocol is run) of being able to distinguish the protocol from an ideal one that either aborts or gives out perfect randomness is at most the soundness error. Hence, in order to recycle the input randomness, it also has to undergo extraction to remove possible information that may have leaked about it.

Because we are working with device-independent protocols, the ongoing security of any randomness generated can be compromised if the devices used for one instance of the protocol are subsequently reused~\cite{bckone}. Hence, our discussion of security assumes devices are not reused (possible modifications to protocols that aim to allow restricted reuse are also discussed in~\cite{bckone}).


\begin{thebibliography}{35}%
\makeatletter
\providecommand \@ifxundefined [1]{%
 \@ifx{#1\undefined}
}%
\providecommand \@ifnum [1]{%
 \ifnum #1\expandafter \@firstoftwo
 \else \expandafter \@secondoftwo
 \fi
}%
\providecommand \@ifx [1]{%
 \ifx #1\expandafter \@firstoftwo
 \else \expandafter \@secondoftwo
 \fi
}%
\providecommand \natexlab [1]{#1}%
\providecommand \enquote  [1]{``#1''}%
\providecommand \bibnamefont  [1]{#1}%
\providecommand \bibfnamefont [1]{#1}%
\providecommand \citenamefont [1]{#1}%
\providecommand \href@noop [0]{\@secondoftwo}%
\providecommand \href [0]{\begingroup \@sanitize@url \@href}%
\providecommand \@href[1]{\@@startlink{#1}\@@href}%
\providecommand \@@href[1]{\endgroup#1\@@endlink}%
\providecommand \@sanitize@url [0]{\catcode `\\12\catcode `\$12\catcode
  `\&12\catcode `\#12\catcode `\^12\catcode `\_12\catcode `\%12\relax}%
\providecommand \@@startlink[1]{}%
\providecommand \@@endlink[0]{}%
\providecommand \url  [0]{\begingroup\@sanitize@url \@url }%
\providecommand \@url [1]{\endgroup\@href {#1}{\urlprefix }}%
\providecommand \urlprefix  [0]{URL }%
\providecommand \Eprint [0]{\href }%
\providecommand \doibase [0]{http://dx.doi.org/}%
\providecommand \selectlanguage [0]{\@gobble}%
\providecommand \bibinfo  [0]{\@secondoftwo}%
\providecommand \bibfield  [0]{\@secondoftwo}%
\providecommand \translation [1]{[#1]}%
\providecommand \BibitemOpen [0]{}%
\providecommand \bibitemStop [0]{}%
\providecommand \bibitemNoStop [0]{.\EOS\space}%
\providecommand \EOS [0]{\spacefactor3000\relax}%
\providecommand \BibitemShut  [1]{\csname bibitem#1\endcsname}%
\let\auto@bib@innerbib\@empty
\bibitem [{\citenamefont {Colbeck}(2007)}]{ColbeckThesis}%
  \BibitemOpen
  \bibfield  {author} {\bibinfo {author} {\bibfnamefont {R.}~\bibnamefont
  {Colbeck}},\ }\emph {\bibinfo {title} {Quantum and Relativistic Protocols For
  Secure Multi-Party Computation}},\ \href@noop {} {Ph.D. thesis},\ \bibinfo
  {school} {University of Cambridge} (\bibinfo {year} {2007}),\ \bibinfo {note}
  {also available as
  \href{https://arxiv.org/abs/0911.3814}{arXiv:0911.3814}.}\BibitemShut {Stop}%
\bibitem [{\citenamefont {Colbeck}\ and\ \citenamefont {Kent}(2011)}]{CK2}%
  \BibitemOpen
  \bibfield  {author} {\bibinfo {author} {\bibfnamefont {R.}~\bibnamefont
  {Colbeck}}\ and\ \bibinfo {author} {\bibfnamefont {A.}~\bibnamefont {Kent}},\
  }\bibfield  {title} {\enquote {\bibinfo {title} {Private randomness expansion
  with untrusted devices},}\ }\href {\doibase 10.1088/1751-8113/44/9/095305}
  {\bibfield  {journal} {\bibinfo  {journal} {Journal of Physics A}\ }\textbf
  {\bibinfo {volume} {44}},\ \bibinfo {pages} {095305} (\bibinfo {year}
  {2011})}\BibitemShut {NoStop}%
\bibitem [{\citenamefont {Mayers}\ and\ \citenamefont {Yao}(1998)}]{MayersYao}%
  \BibitemOpen
  \bibfield  {author} {\bibinfo {author} {\bibfnamefont {D.}~\bibnamefont
  {Mayers}}\ and\ \bibinfo {author} {\bibfnamefont {A.}~\bibnamefont {Yao}},\
  }\bibfield  {title} {\enquote {\bibinfo {title} {Quantum cryptography with
  imperfect apparatus},}\ }in\ \href {\doibase 10.1109/SFCS.1998.743501} {\emph
  {\bibinfo {booktitle} {Proceedings of the 39th Annual Symposium on
  Foundations of Computer Science}}},\ \bibinfo {series and number} {FOCS '98}\
  (\bibinfo  {publisher} {IEEE Computer Society},\ \bibinfo {address} {Los
  Alamitos, CA, USA},\ \bibinfo {year} {1998})\ pp.\ \bibinfo {pages}
  {503--509}\BibitemShut {NoStop}%
\bibitem [{\citenamefont {Li}\ \emph {et~al.}(2021)\citenamefont {Li},
  \citenamefont {Zhang}, \citenamefont {Liu}, \citenamefont {Zhao},
  \citenamefont {Bai}, \citenamefont {Liu}, \citenamefont {Zhao}, \citenamefont
  {Peng}, \citenamefont {Zhang}, \citenamefont {Zhang}, \citenamefont {Munro},
  \citenamefont {Ma}, \citenamefont {Zhang}, \citenamefont {Fan},\ and\
  \citenamefont {Pan}}]{LZL&}%
  \BibitemOpen
  \bibfield  {author} {\bibinfo {author} {\bibfnamefont {M.-H.}\ \bibnamefont
  {Li}}, \bibinfo {author} {\bibfnamefont {X.}~\bibnamefont {Zhang}}, \bibinfo
  {author} {\bibfnamefont {W.-Z.}\ \bibnamefont {Liu}}, \bibinfo {author}
  {\bibfnamefont {S.-R.}\ \bibnamefont {Zhao}}, \bibinfo {author}
  {\bibfnamefont {B.}~\bibnamefont {Bai}}, \bibinfo {author} {\bibfnamefont
  {Y.}~\bibnamefont {Liu}}, \bibinfo {author} {\bibfnamefont {Q.}~\bibnamefont
  {Zhao}}, \bibinfo {author} {\bibfnamefont {Y.}~\bibnamefont {Peng}}, \bibinfo
  {author} {\bibfnamefont {J.}~\bibnamefont {Zhang}}, \bibinfo {author}
  {\bibfnamefont {Y.}~\bibnamefont {Zhang}}, \bibinfo {author} {\bibfnamefont
  {W.~J.}\ \bibnamefont {Munro}}, \bibinfo {author} {\bibfnamefont
  {X.}~\bibnamefont {Ma}}, \bibinfo {author} {\bibfnamefont {Q.}~\bibnamefont
  {Zhang}}, \bibinfo {author} {\bibfnamefont {J.}~\bibnamefont {Fan}}, \ and\
  \bibinfo {author} {\bibfnamefont {J.-W.}\ \bibnamefont {Pan}},\ }\bibfield
  {title} {\enquote {\bibinfo {title} {Experimental realization of
  device-independent quantum randomness expansion},}\ }\href {\doibase
  10.1103/PhysRevLett.126.050503} {\bibfield  {journal} {\bibinfo  {journal}
  {Physical Review Letters}\ }\textbf {\bibinfo {volume} {126}},\ \bibinfo
  {pages} {050503} (\bibinfo {year} {2021})}\BibitemShut {NoStop}%
\bibitem [{\citenamefont {Shalm}\ \emph {et~al.}(2021)\citenamefont {Shalm},
  \citenamefont {Zhang}, \citenamefont {Bienfang}, \citenamefont {Schlager},
  \citenamefont {Stevens}, \citenamefont {Mazurek}, \citenamefont
  {Abell{\'a}n}, \citenamefont {Amaya}, \citenamefont {Mitchell}, \citenamefont
  {Alhejji}, \citenamefont {Fu}, \citenamefont {Ornstein}, \citenamefont
  {Mirin}, \citenamefont {Nam},\ and\ \citenamefont {Knill}}]{Shalm_rand}%
  \BibitemOpen
  \bibfield  {author} {\bibinfo {author} {\bibfnamefont {L.~K.}\ \bibnamefont
  {Shalm}}, \bibinfo {author} {\bibfnamefont {Y.}~\bibnamefont {Zhang}},
  \bibinfo {author} {\bibfnamefont {J.~C.}\ \bibnamefont {Bienfang}}, \bibinfo
  {author} {\bibfnamefont {C.}~\bibnamefont {Schlager}}, \bibinfo {author}
  {\bibfnamefont {M.~J.}\ \bibnamefont {Stevens}}, \bibinfo {author}
  {\bibfnamefont {M.~D.}\ \bibnamefont {Mazurek}}, \bibinfo {author}
  {\bibfnamefont {C.}~\bibnamefont {Abell{\'a}n}}, \bibinfo {author}
  {\bibfnamefont {W.}~\bibnamefont {Amaya}}, \bibinfo {author} {\bibfnamefont
  {M.~W.}\ \bibnamefont {Mitchell}}, \bibinfo {author} {\bibfnamefont {M.~A.}\
  \bibnamefont {Alhejji}}, \bibinfo {author} {\bibfnamefont {H.}~\bibnamefont
  {Fu}}, \bibinfo {author} {\bibfnamefont {J.}~\bibnamefont {Ornstein}},
  \bibinfo {author} {\bibfnamefont {R.~P.}\ \bibnamefont {Mirin}}, \bibinfo
  {author} {\bibfnamefont {S.~W.}\ \bibnamefont {Nam}}, \ and\ \bibinfo
  {author} {\bibfnamefont {E.}~\bibnamefont {Knill}},\ }\bibfield  {title}
  {\enquote {\bibinfo {title} {Device-independent randomness expansion with
  entangled photons},}\ }\href {\doibase 10.1038/s41567-020-01153-4} {\bibfield
   {journal} {\bibinfo  {journal} {Nature Physics}\ }\textbf {\bibinfo {volume}
  {17}},\ \bibinfo {pages} {452--456} (\bibinfo {year} {2021})}\BibitemShut
  {NoStop}%
\bibitem [{\citenamefont {Liu}\ \emph {et~al.}(2021)\citenamefont {Liu},
  \citenamefont {Li}, \citenamefont {Ragy}, \citenamefont {Zhao}, \citenamefont
  {Bai}, \citenamefont {Liu}, \citenamefont {Brown}, \citenamefont {Zhang},
  \citenamefont {Colbeck}, \citenamefont {Fan}, \citenamefont {Zhang},\ and\
  \citenamefont {Pan}}]{LLR&}%
  \BibitemOpen
  \bibfield  {author} {\bibinfo {author} {\bibfnamefont {W.-Z.}\ \bibnamefont
  {Liu}}, \bibinfo {author} {\bibfnamefont {M.-H.}\ \bibnamefont {Li}},
  \bibinfo {author} {\bibfnamefont {S.}~\bibnamefont {Ragy}}, \bibinfo {author}
  {\bibfnamefont {S.-R.}\ \bibnamefont {Zhao}}, \bibinfo {author}
  {\bibfnamefont {B.}~\bibnamefont {Bai}}, \bibinfo {author} {\bibfnamefont
  {Y.}~\bibnamefont {Liu}}, \bibinfo {author} {\bibfnamefont {P.~J.}\
  \bibnamefont {Brown}}, \bibinfo {author} {\bibfnamefont {J.}~\bibnamefont
  {Zhang}}, \bibinfo {author} {\bibfnamefont {R.}~\bibnamefont {Colbeck}},
  \bibinfo {author} {\bibfnamefont {J.}~\bibnamefont {Fan}}, \bibinfo {author}
  {\bibfnamefont {Q.}~\bibnamefont {Zhang}}, \ and\ \bibinfo {author}
  {\bibfnamefont {J.-W.}\ \bibnamefont {Pan}},\ }\bibfield  {title} {\enquote
  {\bibinfo {title} {Device-independent randomness expansion against quantum
  side information},}\ }\href {\doibase 10.1038/s41567-020-01147-2} {\bibfield
  {journal} {\bibinfo  {journal} {Nature Physics}\ }\textbf {\bibinfo {volume}
  {17}},\ \bibinfo {pages} {448--451} (\bibinfo {year} {2021})}\BibitemShut
  {NoStop}%
\bibitem [{\citenamefont {Pironio}\ \emph {et~al.}(2010)\citenamefont
  {Pironio}, \citenamefont {Acin}, \citenamefont {Massar}, \citenamefont
  {{Boyer de la Giroday}}, \citenamefont {Matsukevich}, \citenamefont {Maunz},
  \citenamefont {Olmschenk}, \citenamefont {Hayes}, \citenamefont {Luo},
  \citenamefont {Manning},\ and\ \citenamefont {Monroe}}]{PAMBMMOHLMM}%
  \BibitemOpen
  \bibfield  {author} {\bibinfo {author} {\bibfnamefont {S.}~\bibnamefont
  {Pironio}}, \bibinfo {author} {\bibfnamefont {A.}~\bibnamefont {Acin}},
  \bibinfo {author} {\bibfnamefont {S.}~\bibnamefont {Massar}}, \bibinfo
  {author} {\bibfnamefont {A.}~\bibnamefont {{Boyer de la Giroday}}}, \bibinfo
  {author} {\bibfnamefont {D.~N.}\ \bibnamefont {Matsukevich}}, \bibinfo
  {author} {\bibfnamefont {P.}~\bibnamefont {Maunz}}, \bibinfo {author}
  {\bibfnamefont {S.}~\bibnamefont {Olmschenk}}, \bibinfo {author}
  {\bibfnamefont {D.}~\bibnamefont {Hayes}}, \bibinfo {author} {\bibfnamefont
  {L.}~\bibnamefont {Luo}}, \bibinfo {author} {\bibfnamefont {T.~A.}\
  \bibnamefont {Manning}}, \ and\ \bibinfo {author} {\bibfnamefont
  {C.}~\bibnamefont {Monroe}},\ }\bibfield  {title} {\enquote {\bibinfo {title}
  {Random numbers certified by {B}ell's theorem},}\ }\href {\doibase
  10.1038/nature09008} {\bibfield  {journal} {\bibinfo  {journal} {Nature}\
  }\textbf {\bibinfo {volume} {464}},\ \bibinfo {pages} {1021--1024} (\bibinfo
  {year} {2010})}\BibitemShut {NoStop}%
\bibitem [{\citenamefont {Bierhorst}\ \emph {et~al.}(2018)\citenamefont
  {Bierhorst}, \citenamefont {Knill}, \citenamefont {Glancy}, \citenamefont
  {Zhang}, \citenamefont {Mink}, \citenamefont {Jordan}, \citenamefont
  {Rommal}, \citenamefont {Liu}, \citenamefont {Christensen}, \citenamefont
  {Nam}, \citenamefont {Stevens},\ and\ \citenamefont {Shalm}}]{BKGZM&}%
  \BibitemOpen
  \bibfield  {author} {\bibinfo {author} {\bibfnamefont {P.}~\bibnamefont
  {Bierhorst}}, \bibinfo {author} {\bibfnamefont {E.}~\bibnamefont {Knill}},
  \bibinfo {author} {\bibfnamefont {S.}~\bibnamefont {Glancy}}, \bibinfo
  {author} {\bibfnamefont {Y.}~\bibnamefont {Zhang}}, \bibinfo {author}
  {\bibfnamefont {A.}~\bibnamefont {Mink}}, \bibinfo {author} {\bibfnamefont
  {S.}~\bibnamefont {Jordan}}, \bibinfo {author} {\bibfnamefont
  {A.}~\bibnamefont {Rommal}}, \bibinfo {author} {\bibfnamefont {Y.-K.}\
  \bibnamefont {Liu}}, \bibinfo {author} {\bibfnamefont {B.}~\bibnamefont
  {Christensen}}, \bibinfo {author} {\bibfnamefont {S.~W.}\ \bibnamefont
  {Nam}}, \bibinfo {author} {\bibfnamefont {M.~J.}\ \bibnamefont {Stevens}}, \
  and\ \bibinfo {author} {\bibfnamefont {L.~K.}\ \bibnamefont {Shalm}},\
  }\bibfield  {title} {\enquote {\bibinfo {title} {Experimentally generated
  randomness certified by the impossibility of superluminal signals},}\ }\href
  {\doibase 10.1038/s41586-018-0019-0} {\bibfield  {journal} {\bibinfo
  {journal} {Nature}\ }\textbf {\bibinfo {volume} {556}},\ \bibinfo {pages}
  {223--227} (\bibinfo {year} {2018})}\BibitemShut {NoStop}%
\bibitem [{\citenamefont {Liu}\ \emph {et~al.}(2018)\citenamefont {Liu},
  \citenamefont {Zhao}, \citenamefont {Li}, \citenamefont {Guan}, \citenamefont
  {Zhang}, \citenamefont {Bai}, \citenamefont {Zhang}, \citenamefont {Liu},
  \citenamefont {Wu}, \citenamefont {Yuan}, \citenamefont {Li}, \citenamefont
  {Munro}, \citenamefont {Wang}, \citenamefont {You}, \citenamefont {Zhang},
  \citenamefont {Ma}, \citenamefont {Fan}, \citenamefont {Zhang},\ and\
  \citenamefont {Pan}}]{LZL&gen}%
  \BibitemOpen
  \bibfield  {author} {\bibinfo {author} {\bibfnamefont {Y.}~\bibnamefont
  {Liu}}, \bibinfo {author} {\bibfnamefont {Q.}~\bibnamefont {Zhao}}, \bibinfo
  {author} {\bibfnamefont {M.-H.}\ \bibnamefont {Li}}, \bibinfo {author}
  {\bibfnamefont {J.-Y.}\ \bibnamefont {Guan}}, \bibinfo {author}
  {\bibfnamefont {Y.}~\bibnamefont {Zhang}}, \bibinfo {author} {\bibfnamefont
  {B.}~\bibnamefont {Bai}}, \bibinfo {author} {\bibfnamefont {W.}~\bibnamefont
  {Zhang}}, \bibinfo {author} {\bibfnamefont {W.-Z.}\ \bibnamefont {Liu}},
  \bibinfo {author} {\bibfnamefont {C.}~\bibnamefont {Wu}}, \bibinfo {author}
  {\bibfnamefont {X.}~\bibnamefont {Yuan}}, \bibinfo {author} {\bibfnamefont
  {H.}~\bibnamefont {Li}}, \bibinfo {author} {\bibfnamefont {W.~J.}\
  \bibnamefont {Munro}}, \bibinfo {author} {\bibfnamefont {Z.}~\bibnamefont
  {Wang}}, \bibinfo {author} {\bibfnamefont {L.}~\bibnamefont {You}}, \bibinfo
  {author} {\bibfnamefont {J.}~\bibnamefont {Zhang}}, \bibinfo {author}
  {\bibfnamefont {X.}~\bibnamefont {Ma}}, \bibinfo {author} {\bibfnamefont
  {J.}~\bibnamefont {Fan}}, \bibinfo {author} {\bibfnamefont {Q.}~\bibnamefont
  {Zhang}}, \ and\ \bibinfo {author} {\bibfnamefont {J.-W.}\ \bibnamefont
  {Pan}},\ }\bibfield  {title} {\enquote {\bibinfo {title} {Device-independent
  quantum random-number generation},}\ }\href {\doibase
  10.1038/s41586-018-0559-3} {\bibfield  {journal} {\bibinfo  {journal}
  {Nature}\ }\textbf {\bibinfo {volume} {562}},\ \bibinfo {pages} {548--551}
  (\bibinfo {year} {2018})}\BibitemShut {NoStop}%
\bibitem [{\citenamefont {Vazirani}\ and\ \citenamefont {Vidick}(2012)}]{VV}%
  \BibitemOpen
  \bibfield  {author} {\bibinfo {author} {\bibfnamefont {U.}~\bibnamefont
  {Vazirani}}\ and\ \bibinfo {author} {\bibfnamefont {T.}~\bibnamefont
  {Vidick}},\ }\bibfield  {title} {\enquote {\bibinfo {title} {Certifiable
  quantum dice or, testable exponential randomness expansion},}\ }in\ \href
  {\doibase 10.1145/2213977.2213984} {\emph {\bibinfo {booktitle} {Proceedings
  of the 44th Annual ACM Symposium on Theory of Computing}}},\ \bibinfo {series
  and number} {STOC '12}\ (\bibinfo {year} {2012})\ pp.\ \bibinfo {pages}
  {61--76}\BibitemShut {NoStop}%
\bibitem [{\citenamefont {Miller}\ and\ \citenamefont {Shi}(2014)}]{MS1}%
  \BibitemOpen
  \bibfield  {author} {\bibinfo {author} {\bibfnamefont {C.~A.}\ \bibnamefont
  {Miller}}\ and\ \bibinfo {author} {\bibfnamefont {Y.}~\bibnamefont {Shi}},\
  }\bibfield  {title} {\enquote {\bibinfo {title} {Robust protocols for
  securely expanding randomness and distributing keys using untrusted quantum
  devices},}\ }in\ \href {\doibase 10.1145/2591796.2591843} {\emph {\bibinfo
  {booktitle} {Proceedings of the 46th Annual ACM Symposium on Theory of
  Computing}}},\ \bibinfo {series and number} {STOC '14}\ (\bibinfo
  {publisher} {ACM},\ \bibinfo {address} {New York, NY, USA},\ \bibinfo {year}
  {2014})\ pp.\ \bibinfo {pages} {417--426}\BibitemShut {NoStop}%
\bibitem [{\citenamefont {Miller}\ and\ \citenamefont {Shi}(2017)}]{MS2}%
  \BibitemOpen
  \bibfield  {author} {\bibinfo {author} {\bibfnamefont {C.~A.}\ \bibnamefont
  {Miller}}\ and\ \bibinfo {author} {\bibfnamefont {Y.}~\bibnamefont {Shi}},\
  }\bibfield  {title} {\enquote {\bibinfo {title} {Universal security for
  randomness expansion from the spot-checking protocol},}\ }\href {\doibase
  10.1137/15M1044333} {\bibfield  {journal} {\bibinfo  {journal} {Siam Journal
  of Computing}\ }\textbf {\bibinfo {volume} {46}},\ \bibinfo {pages}
  {1304--1335} (\bibinfo {year} {2017})}\BibitemShut {NoStop}%
\bibitem [{\citenamefont {Zhang}\ \emph {et~al.}(2020)\citenamefont {Zhang},
  \citenamefont {Fu},\ and\ \citenamefont {Knill}}]{ZFK}%
  \BibitemOpen
  \bibfield  {author} {\bibinfo {author} {\bibfnamefont {Y.}~\bibnamefont
  {Zhang}}, \bibinfo {author} {\bibfnamefont {H.}~\bibnamefont {Fu}}, \ and\
  \bibinfo {author} {\bibfnamefont {E.}~\bibnamefont {Knill}},\ }\bibfield
  {title} {\enquote {\bibinfo {title} {Efficient randomness certification by
  quantum probability estimation},}\ }\href {\doibase
  10.1103/physrevresearch.2.013016} {\bibfield  {journal} {\bibinfo  {journal}
  {Physical Review Research}\ }\textbf {\bibinfo {volume} {2}},\ \bibinfo
  {pages} {013016} (\bibinfo {year} {2020})}\BibitemShut {NoStop}%
\bibitem [{\citenamefont {Arnon-Friedman}\ \emph {et~al.}(2019)\citenamefont
  {Arnon-Friedman}, \citenamefont {Renner},\ and\ \citenamefont
  {Vidick}}]{ARV}%
  \BibitemOpen
  \bibfield  {author} {\bibinfo {author} {\bibfnamefont {R.}~\bibnamefont
  {Arnon-Friedman}}, \bibinfo {author} {\bibfnamefont {R.}~\bibnamefont
  {Renner}}, \ and\ \bibinfo {author} {\bibfnamefont {T.}~\bibnamefont
  {Vidick}},\ }\bibfield  {title} {\enquote {\bibinfo {title} {Simple and tight
  device-independent security proofs},}\ }\href {\doibase 10.1137/18M1174726}
  {\bibfield  {journal} {\bibinfo  {journal} {SIAM Journal on Computing}\
  }\textbf {\bibinfo {volume} {48}},\ \bibinfo {pages} {181--225} (\bibinfo
  {year} {2019})}\BibitemShut {NoStop}%
\bibitem [{\citenamefont {Dupuis}\ \emph {et~al.}(2020)\citenamefont {Dupuis},
  \citenamefont {Fawzi},\ and\ \citenamefont {Renner}}]{DFR}%
  \BibitemOpen
  \bibfield  {author} {\bibinfo {author} {\bibfnamefont {F.}~\bibnamefont
  {Dupuis}}, \bibinfo {author} {\bibfnamefont {O.}~\bibnamefont {Fawzi}}, \
  and\ \bibinfo {author} {\bibfnamefont {R.}~\bibnamefont {Renner}},\
  }\bibfield  {title} {\enquote {\bibinfo {title} {Entropy accumulation},}\
  }\href {\doibase 10.1007/s00220-020-03839-5} {\bibfield  {journal} {\bibinfo
  {journal} {Communications in Mathematical Physics}\ }\textbf {\bibinfo
  {volume} {379}},\ \bibinfo {pages} {867--913} (\bibinfo {year}
  {2020})}\BibitemShut {NoStop}%
\bibitem [{\citenamefont {Dupuis}\ and\ \citenamefont {Fawzi}(2019)}]{DF}%
  \BibitemOpen
  \bibfield  {author} {\bibinfo {author} {\bibfnamefont {F.}~\bibnamefont
  {Dupuis}}\ and\ \bibinfo {author} {\bibfnamefont {O.}~\bibnamefont {Fawzi}},\
  }\bibfield  {title} {\enquote {\bibinfo {title} {Entropy accumulation with
  improved second-order term},}\ }\href {\doibase 10.1109/tit.2019.2929564}
  {\bibfield  {journal} {\bibinfo  {journal} {IEEE Transactions on Information
  Theory}\ }\textbf {\bibinfo {volume} {65}},\ \bibinfo {pages} {7596--7612}
  (\bibinfo {year} {2019})}\BibitemShut {NoStop}%
\bibitem [{\citenamefont {Slofstra}(2019)}]{Slofstra}%
  \BibitemOpen
  \bibfield  {author} {\bibinfo {author} {\bibfnamefont {W.}~\bibnamefont
  {Slofstra}},\ }\bibfield  {title} {\enquote {\bibinfo {title} {The set of
  quantum correlations is not closed},}\ }\href {\doibase 10.1017/fmp.2018.3}
  {\bibfield  {journal} {\bibinfo  {journal} {Forum of Mathematics, Pi}\
  }\textbf {\bibinfo {volume} {7}},\ \bibinfo {pages} {e1} (\bibinfo {year}
  {2019})}\BibitemShut {NoStop}%
\bibitem [{\citenamefont {P\'al}\ and\ \citenamefont {V\'ertesi}(2010)}]{PV}%
  \BibitemOpen
  \bibfield  {author} {\bibinfo {author} {\bibfnamefont {K.~F.}\ \bibnamefont
  {P\'al}}\ and\ \bibinfo {author} {\bibfnamefont {T.}~\bibnamefont
  {V\'ertesi}},\ }\bibfield  {title} {\enquote {\bibinfo {title} {Maximal
  violation of a bipartite three-setting, two-outcome {B}ell inequality using
  infinite-dimensional quantum systems},}\ }\href {\doibase
  10.1103/PhysRevA.82.022116} {\bibfield  {journal} {\bibinfo  {journal}
  {Physical Review A}\ }\textbf {\bibinfo {volume} {82}},\ \bibinfo {pages}
  {022116} (\bibinfo {year} {2010})}\BibitemShut {NoStop}%
\bibitem [{\citenamefont {Jordan}(1875)}]{Jordan}%
  \BibitemOpen
  \bibfield  {author} {\bibinfo {author} {\bibfnamefont {C.}~\bibnamefont
  {Jordan}},\ }\bibfield  {title} {\enquote {\bibinfo {title} {Essai sur la
  g\'eom\'etrie \`a n dimensions},}\ }\href {\doibase 10.24033/bsmf.90}
  {\bibfield  {journal} {\bibinfo  {journal} {Bulletin de la S. M. F.}\
  }\textbf {\bibinfo {volume} {3}},\ \bibinfo {pages} {103--174} (\bibinfo
  {year} {1875})}\BibitemShut {NoStop}%
\bibitem [{\citenamefont {Pironio}\ \emph {et~al.}(2009)\citenamefont
  {Pironio}, \citenamefont {Acin}, \citenamefont {Brunner}, \citenamefont
  {Gisin}, \citenamefont {Massar},\ and\ \citenamefont {Scarani}}]{PABGMS}%
  \BibitemOpen
  \bibfield  {author} {\bibinfo {author} {\bibfnamefont {S.}~\bibnamefont
  {Pironio}}, \bibinfo {author} {\bibfnamefont {A.}~\bibnamefont {Acin}},
  \bibinfo {author} {\bibfnamefont {N.}~\bibnamefont {Brunner}}, \bibinfo
  {author} {\bibfnamefont {N.}~\bibnamefont {Gisin}}, \bibinfo {author}
  {\bibfnamefont {S.}~\bibnamefont {Massar}}, \ and\ \bibinfo {author}
  {\bibfnamefont {V.}~\bibnamefont {Scarani}},\ }\bibfield  {title} {\enquote
  {\bibinfo {title} {Device-independent quantum key distribution secure against
  collective attacks},}\ }\href {\doibase 10.1088/1367-2630/11/4/045021}
  {\bibfield  {journal} {\bibinfo  {journal} {New Journal of Physics}\ }\textbf
  {\bibinfo {volume} {11}},\ \bibinfo {pages} {045021} (\bibinfo {year}
  {2009})}\BibitemShut {NoStop}%
\bibitem [{\citenamefont {Woodhead}\ \emph {et~al.}(2021)\citenamefont
  {Woodhead}, \citenamefont {Ac\'in},\ and\ \citenamefont {Pironio}}]{WAP}%
  \BibitemOpen
  \bibfield  {author} {\bibinfo {author} {\bibfnamefont {E.}~\bibnamefont
  {Woodhead}}, \bibinfo {author} {\bibfnamefont {A.}~\bibnamefont {Ac\'in}}, \
  and\ \bibinfo {author} {\bibfnamefont {S.}~\bibnamefont {Pironio}},\
  }\bibfield  {title} {\enquote {\bibinfo {title} {Device-independent quantum
  key distribution with asymmetric {CHSH} inequalities},}\ }\href {\doibase
  10.22331/q-2021-04-26-443} {\bibfield  {journal} {\bibinfo  {journal}
  {Quantum}\ }\textbf {\bibinfo {volume} {5}},\ \bibinfo {pages} {443}
  (\bibinfo {year} {2021})}\BibitemShut {NoStop}%
\bibitem [{\citenamefont {Navascu\'es}\ \emph {et~al.}(2008)\citenamefont
  {Navascu\'es}, \citenamefont {Pironio},\ and\ \citenamefont {Ac\'in}}]{NPA2}%
  \BibitemOpen
  \bibfield  {author} {\bibinfo {author} {\bibfnamefont {M.}~\bibnamefont
  {Navascu\'es}}, \bibinfo {author} {\bibfnamefont {S.}~\bibnamefont
  {Pironio}}, \ and\ \bibinfo {author} {\bibfnamefont {A.}~\bibnamefont
  {Ac\'in}},\ }\bibfield  {title} {\enquote {\bibinfo {title} {A convergent
  hierarchy of semidefinite programs characterizing the set of quantum
  correlations},}\ }\href {\doibase 10.1088/1367-2630/10/7/073013} {\bibfield
  {journal} {\bibinfo  {journal} {New Journal of Physics}\ }\textbf {\bibinfo
  {volume} {10}},\ \bibinfo {pages} {073013} (\bibinfo {year}
  {2008})}\BibitemShut {NoStop}%
\bibitem [{\citenamefont {Brown}\ \emph {et~al.}(2020)\citenamefont {Brown},
  \citenamefont {Ragy},\ and\ \citenamefont {Colbeck}}]{BRC}%
  \BibitemOpen
  \bibfield  {author} {\bibinfo {author} {\bibfnamefont {P.~J.}\ \bibnamefont
  {Brown}}, \bibinfo {author} {\bibfnamefont {S.}~\bibnamefont {Ragy}}, \ and\
  \bibinfo {author} {\bibfnamefont {R.}~\bibnamefont {Colbeck}},\ }\bibfield
  {title} {\enquote {\bibinfo {title} {A framework for quantum-secure
  device-independent randomness expansion},}\ }\href {\doibase
  10.1109/tit.2019.2960252} {\bibfield  {journal} {\bibinfo  {journal} {IEEE
  Transactions on Information Theory}\ }\textbf {\bibinfo {volume} {66}},\
  \bibinfo {pages} {2964--2987} (\bibinfo {year} {2020})}\BibitemShut {NoStop}%
\bibitem [{\citenamefont {Sekatski}\ \emph {et~al.}(2020)\citenamefont
  {Sekatski}, \citenamefont {Bancal}, \citenamefont {Valcarce}, \citenamefont
  {Tan}, \citenamefont {Renner},\ and\ \citenamefont {Sangouard}}]{SBVTRS}%
  \BibitemOpen
  \bibfield  {author} {\bibinfo {author} {\bibfnamefont {P.}~\bibnamefont
  {Sekatski}}, \bibinfo {author} {\bibfnamefont {J.-D.}\ \bibnamefont
  {Bancal}}, \bibinfo {author} {\bibfnamefont {X.}~\bibnamefont {Valcarce}},
  \bibinfo {author} {\bibfnamefont {E.~Y.-Z.}\ \bibnamefont {Tan}}, \bibinfo
  {author} {\bibfnamefont {R.}~\bibnamefont {Renner}}, \ and\ \bibinfo {author}
  {\bibfnamefont {N.}~\bibnamefont {Sangouard}},\ }\href@noop {} {\enquote
  {\bibinfo {title} {Device-independent quantum key distribution from
  generalized {CHSH} inequalities},}\ }\bibinfo {howpublished} {e-print
  \href{https://arxiv.org/abs/2009.01784}{arXiv:2009.01784}} (\bibinfo {year}
  {2020})\BibitemShut {NoStop}%
\bibitem [{\citenamefont {Brown}\ \emph
  {et~al.}(2021{\natexlab{a}})\citenamefont {Brown}, \citenamefont {Fawzi},\
  and\ \citenamefont {Fawzi}}]{BFF}%
  \BibitemOpen
  \bibfield  {author} {\bibinfo {author} {\bibfnamefont {P.~J.}\ \bibnamefont
  {Brown}}, \bibinfo {author} {\bibfnamefont {H.}~\bibnamefont {Fawzi}}, \ and\
  \bibinfo {author} {\bibfnamefont {O.}~\bibnamefont {Fawzi}},\ }\bibfield
  {title} {\enquote {\bibinfo {title} {Computing conditional entropies for
  quantum correlations},}\ }\href {\doibase 10.1038/s41467-020-20018-1}
  {\bibfield  {journal} {\bibinfo  {journal} {Nature Communications}\ }\textbf
  {\bibinfo {volume} {12}},\ \bibinfo {pages} {575} (\bibinfo {year}
  {2021}{\natexlab{a}})}\BibitemShut {NoStop}%
\bibitem [{\citenamefont {Brown}\ \emph
  {et~al.}(2021{\natexlab{b}})\citenamefont {Brown}, \citenamefont {Fawzi},\
  and\ \citenamefont {Fawzi}}]{BFF2022}%
  \BibitemOpen
  \bibfield  {author} {\bibinfo {author} {\bibfnamefont {P.~J.}\ \bibnamefont
  {Brown}}, \bibinfo {author} {\bibfnamefont {H.}~\bibnamefont {Fawzi}}, \ and\
  \bibinfo {author} {\bibfnamefont {O.}~\bibnamefont {Fawzi}},\ }\href@noop {}
  {\enquote {\bibinfo {title} {Device-independent lower bounds on the
  conditional von {N}eumann entropy},}\ }\bibinfo {howpublished} {e-print
  \href{https://arxiv.org/abs/2106.13692}{arXiv:2106.13692}} (\bibinfo {year}
  {2021}{\natexlab{b}})\BibitemShut {NoStop}%
\bibitem [{\citenamefont {Barrett}\ \emph {et~al.}(2013)\citenamefont
  {Barrett}, \citenamefont {Colbeck},\ and\ \citenamefont {Kent}}]{bckone}%
  \BibitemOpen
  \bibfield  {author} {\bibinfo {author} {\bibfnamefont {J.}~\bibnamefont
  {Barrett}}, \bibinfo {author} {\bibfnamefont {R.}~\bibnamefont {Colbeck}}, \
  and\ \bibinfo {author} {\bibfnamefont {A.}~\bibnamefont {Kent}},\ }\bibfield
  {title} {\enquote {\bibinfo {title} {Memory attacks on device-independent
  quantum cryptography},}\ }\href {\doibase 10.1103/PhysRevLett.110.010503}
  {\bibfield  {journal} {\bibinfo  {journal} {Physical Review Letters}\
  }\textbf {\bibinfo {volume} {106}},\ \bibinfo {pages} {010503} (\bibinfo
  {year} {2013})}\BibitemShut {NoStop}%
\bibitem [{\citenamefont {{Clauser}}\ \emph {et~al.}(1969)\citenamefont
  {{Clauser}}, \citenamefont {{Horne}}, \citenamefont {{Shimony}},\ and\
  \citenamefont {{Holt}}}]{CHSH}%
  \BibitemOpen
  \bibfield  {author} {\bibinfo {author} {\bibfnamefont {J.~F.}\ \bibnamefont
  {{Clauser}}}, \bibinfo {author} {\bibfnamefont {M.~A.}\ \bibnamefont
  {{Horne}}}, \bibinfo {author} {\bibfnamefont {A.}~\bibnamefont {{Shimony}}},
  \ and\ \bibinfo {author} {\bibfnamefont {R.~A.}\ \bibnamefont {{Holt}}},\
  }\bibfield  {title} {\enquote {\bibinfo {title} {Proposed experiment to test
  local hidden-variable theories},}\ }\href {\doibase
  10.1103/PhysRevLett.23.880} {\bibfield  {journal} {\bibinfo  {journal}
  {Physical Review Letters}\ }\textbf {\bibinfo {volume} {23}},\ \bibinfo
  {pages} {880--884} (\bibinfo {year} {1969})}\BibitemShut {NoStop}%
\bibitem [{\citenamefont {Tan}\ \emph {et~al.}(2021)\citenamefont {Tan},
  \citenamefont {Schwonnek}, \citenamefont {Goh}, \citenamefont {Primaatmaja},\
  and\ \citenamefont {Lim}}]{TSGPL}%
  \BibitemOpen
  \bibfield  {author} {\bibinfo {author} {\bibfnamefont {E.~Y.-Z.}\
  \bibnamefont {Tan}}, \bibinfo {author} {\bibfnamefont {R.}~\bibnamefont
  {Schwonnek}}, \bibinfo {author} {\bibfnamefont {K.~T.}\ \bibnamefont {Goh}},
  \bibinfo {author} {\bibfnamefont {I.~W.}\ \bibnamefont {Primaatmaja}}, \ and\
  \bibinfo {author} {\bibfnamefont {C.~C.-W.}\ \bibnamefont {Lim}},\ }\bibfield
   {title} {\enquote {\bibinfo {title} {Computing secure key rates for quantum
  key distribution with untrusted devices},}\ }\href {\doibase
  10.1038/s41534-021-00494-z} {\bibfield  {journal} {\bibinfo  {journal} {npj
  Quantum Information}\ }\textbf {\bibinfo {volume} {7}},\ \bibinfo {pages}
  {158} (\bibinfo {year} {2021})}\BibitemShut {NoStop}%
\bibitem [{\citenamefont {Hao}\ and\ \citenamefont {Hoshi}(1997)}]{HaoHoshi}%
  \BibitemOpen
  \bibfield  {author} {\bibinfo {author} {\bibfnamefont {T.~S.}\ \bibnamefont
  {Hao}}\ and\ \bibinfo {author} {\bibfnamefont {M.}~\bibnamefont {Hoshi}},\
  }\bibfield  {title} {\enquote {\bibinfo {title} {Interval algorithm for
  random number generation},}\ }\href {\doibase 10.1109/18.556116} {\bibfield
  {journal} {\bibinfo  {journal} {IEEE Transactions on Information Theory}\
  }\textbf {\bibinfo {volume} {43}},\ \bibinfo {pages} {599--611} (\bibinfo
  {year} {1997})}\BibitemShut {NoStop}%
\bibitem [{\citenamefont {Tan}\ \emph {et~al.}(2022)\citenamefont {Tan},
  \citenamefont {Sekatski}, \citenamefont {Bancal}, \citenamefont {Schwonnek},
  \citenamefont {Renner}, \citenamefont {Sangouard},\ and\ \citenamefont
  {Lim}}]{TSBSRSL}%
  \BibitemOpen
  \bibfield  {author} {\bibinfo {author} {\bibfnamefont {E.~Y.-Z.}\
  \bibnamefont {Tan}}, \bibinfo {author} {\bibfnamefont {P.}~\bibnamefont
  {Sekatski}}, \bibinfo {author} {\bibfnamefont {J.-D.}\ \bibnamefont
  {Bancal}}, \bibinfo {author} {\bibfnamefont {R.}~\bibnamefont {Schwonnek}},
  \bibinfo {author} {\bibfnamefont {R.}~\bibnamefont {Renner}}, \bibinfo
  {author} {\bibfnamefont {N.}~\bibnamefont {Sangouard}}, \ and\ \bibinfo
  {author} {\bibfnamefont {C.~C.-W.}\ \bibnamefont {Lim}},\ }\bibfield  {title}
  {\enquote {\bibinfo {title} {Improved {DIQKD} protocols with finite-size
  analysis},}\ }\href {\doibase 10.22331/q-2022-12-22-880} {\bibfield
  {journal} {\bibinfo  {journal} {Quantum}\ }\textbf {\bibinfo {volume} {6}},\
  \bibinfo {pages} {880} (\bibinfo {year} {2022})}\BibitemShut {NoStop}%
\bibitem [{\citenamefont {Nielsen}\ and\ \citenamefont
  {Chuang}(2000)}]{Nielsen&Chuang}%
  \BibitemOpen
  \bibfield  {author} {\bibinfo {author} {\bibfnamefont {M.~A.}\ \bibnamefont
  {Nielsen}}\ and\ \bibinfo {author} {\bibfnamefont {I.~L.}\ \bibnamefont
  {Chuang}},\ }\href {\doibase 10.1017/CBO9780511976667} {\emph {\bibinfo
  {title} {Quantum Computation and Quantum Information}}}\ (\bibinfo
  {publisher} {Cambridge University Press},\ \bibinfo {year}
  {2000})\BibitemShut {NoStop}%
\bibitem [{\citenamefont {Bhatia}\ and\ \citenamefont
  {Davis}(2000)}]{BhatiaDavis}%
  \BibitemOpen
  \bibfield  {author} {\bibinfo {author} {\bibfnamefont {R.}~\bibnamefont
  {Bhatia}}\ and\ \bibinfo {author} {\bibfnamefont {C.}~\bibnamefont {Davis}},\
  }\bibfield  {title} {\enquote {\bibinfo {title} {A better bound on the
  variance},}\ }\href {\doibase 10.1080/00029890.2000.12005203} {\bibfield
  {journal} {\bibinfo  {journal} {The American Mathematical Monthly}\ }\textbf
  {\bibinfo {volume} {107}},\ \bibinfo {pages} {353--357} (\bibinfo {year}
  {2000})}\BibitemShut {NoStop}%
\bibitem [{\citenamefont {Hoeffding}(1963)}]{Hoeffding}%
  \BibitemOpen
  \bibfield  {author} {\bibinfo {author} {\bibfnamefont {W.}~\bibnamefont
  {Hoeffding}},\ }\bibfield  {title} {\enquote {\bibinfo {title} {Probability
  inequalities for sums of bounded random variables},}\ }\href {\doibase
  10.1080/01621459.1963.10500830} {\bibfield  {journal} {\bibinfo  {journal}
  {Journal of the American Statistical Association}\ }\textbf {\bibinfo
  {volume} {58}},\ \bibinfo {pages} {13--30} (\bibinfo {year}
  {1963})}\BibitemShut {NoStop}%
\bibitem [{\citenamefont {Murta}\ \emph {et~al.}(2019)\citenamefont {Murta},
  \citenamefont {van Dam}, \citenamefont {Ribeiro}, \citenamefont {Hanson},\
  and\ \citenamefont {Wehner}}]{MDRHW}%
  \BibitemOpen
  \bibfield  {author} {\bibinfo {author} {\bibfnamefont {G.}~\bibnamefont
  {Murta}}, \bibinfo {author} {\bibfnamefont {S.}~\bibnamefont {van Dam}},
  \bibinfo {author} {\bibfnamefont {J.}~\bibnamefont {Ribeiro}}, \bibinfo
  {author} {\bibfnamefont {R.}~\bibnamefont {Hanson}}, \ and\ \bibinfo {author}
  {\bibfnamefont {S.}~\bibnamefont {Wehner}},\ }\bibfield  {title} {\enquote
  {\bibinfo {title} {Towards a realization of device-independent quantum key
  distribution},}\ }\href {\doibase 10.1088/2058-9565/ab2819} {\bibfield
  {journal} {\bibinfo  {journal} {Quantum Science and Technology}\ }\textbf
  {\bibinfo {volume} {4}},\ \bibinfo {pages} {035011} (\bibinfo {year}
  {2019})}\BibitemShut {NoStop}%
\end{thebibliography}

\end{document}